\newcolumntype{d}[1]{D{.}{.}{#1}}
\newcolumntype{Y}{>{\raggedleft\arraybackslash}X}
\newcolumntype{Z}{>{\centering\arraybackslash}X}
\newtheorem{lemma}{Lemma}[section]
\newtheorem{cor}{Corollary}[section]
\newtheorem{assum}{Assumption}
\newtheorem{thm}{Theorem}[section]
\newtheoremstyle{remark}
{2ex}
{2ex}
{}
{}
{\bfseries}
{.}
{.5em}
{}
\theoremstyle{remark}
\newtheorem{remark}{Remark}[section]
\DeclareMathOperator*{\argmin}{argmin}
\DeclareMathOperator*{\var}{var}
\newcommand{\bm}[1]{\mbox{\boldmath{$#1$}}}
\renewcommand{\arraystretch}{0.85}
\numberwithin{equation}{section}
\title{Quantile autoregressive conditional heteroscedasticity}
\author{Qianqian Zhu$^{a}$, Songhua Tan$^{a}$, Yao Zheng$^{b}$ and Guodong Li$^{c}$\\
	\textit{$^{a}$Shanghai University of Finance and Economics,}\\
	\textit{$^{b}$University of Connecticut and $^{c}$University of Hong Kong}}
\date{}
\begin{document}
	\newcolumntype{L}[1]{>{\raggedright\arraybackslash}p{#1}}
	\newcolumntype{C}[1]{>{\centering\arraybackslash}p{#1}}
	\newcolumntype{R}[1]{>{\raggedleft\arraybackslash}p{#1}}
	
	\maketitle
	\begin{abstract}
		This paper proposes a novel conditional heteroscedastic time series model by applying the framework of quantile regression processes to the ARCH$(\infty)$ form of the GARCH model. This model can provide varying structures for conditional quantiles of the time series across different quantile levels, while including the commonly used GARCH model as a special case. The strict stationarity of the model is discussed. For robustness against heavy-tailed distributions,  a self-weighted quantile regression (QR) estimator is proposed. 
		While  QR performs satisfactorily at intermediate quantile levels, its accuracy deteriorates at high quantile levels due to data scarcity. 
		As a remedy,  a self-weighted composite quantile regression (CQR) estimator is further introduced and,  based on an approximate GARCH model with a flexible Tukey-lambda distribution for the innovations, we can extrapolate the  high quantile levels by borrowing information from intermediate ones.
		Asymptotic properties for the proposed estimators are established.
		Simulation experiments are carried out to access the finite sample performance of the proposed methods, and an empirical example is presented to illustrate the usefulness of the new model.
	\end{abstract}
	{\it Key words:} Composite quantile regression; Conditional quantile estimation; GARCH model; Strict stationarity; Tukey-lambda distribution.
	
	\newpage
	\section{Introduction}
	
	Since the appearance of autoregressive conditional heteroscedastic (ARCH) \citep{Engle1982} and generalized  ARCH (GARCH) models \citep{Bollerslev1986}, GARCH-type models have become popular and powerful tools to capture the volatility of financial time series; see \cite{Francq_Zakoian2010} for an overview. 
	Volatility modeling plays an important role in financial risk management. In particular, it is a key ingredient for the calculation of quantile-based risk measures such as the value-at-risk (VaR) and expected shortfall. As estimating these measures is essentially a quantile estimation problem  \citep{Artzner_Delbaen_Eber_Heath1999,Wu_Xiao2002,Francq_Zakoian2015}, considerable research has been devoted to the development of quantile regression (QR) methods for GARCH-type models, such as \citeauthor{Taylor2008}'s \citeyearpar{Taylor2008} linear ARCH \citep{Koenker_Zhao1996} and linear GARCH models \citep{Xiao_Koenker2009}, \citeauthor{Bollerslev1986}'s \citeyearpar{Bollerslev1986} GARCH model \citep{Lee_Noh2013, Zheng_Zhu_Li_Xiao2018}, and asymmetric power GARCH model \citep{Wang_Zhu_Li_Li2022}.
	
	A common feature of the above research is that  the global structure of the volatility process is captured by a parametric GARCH-type model with distribution-free innovations. This implies that the conditional quantile process will be the product of the volatility process and the quantile of the innovation. 
	Consider the following linear GARCH($1,1$) model \citep{Taylor2008}:
	\begin{equation}\label{linearGARCH}
		y_{t}=\varepsilon_{t}h_{t},\hspace{5mm} h_{t}=a_0+a_1|y_{t-1}|+b_1h_{t-1},
	\end{equation}
	where $\{y_t\}$ is the observed series, and $\{\varepsilon_{t}\}$ are independent and identically distributed ($i.i.d.$) innovations with mean zero. 
	The $\tau$th conditional quantile function of $y_t$ is
	\begin{equation*}
		Q_{\tau}(y_t|y_{t-1}, y_{t-2}, \dots)=\left(a_0+a_1|y_{t-1}|+b_1h_{t-1}\right)Q_{\tau}(\varepsilon_{t})=\bm\theta_{\tau}^{\prime}\bm z_t,
	\end{equation*}
	where $Q_{\tau}(\varepsilon_{t})$ is the $\tau$th quantile of $\varepsilon_t$, $\bm\theta_{\tau}=(a_{0},a_{1},b_{1})^{\prime}Q_{\tau}(\varepsilon_{t})$, and $\bm z_t=(1,|y_{t-1}|,h_{t-1})^{\prime}$. Thus, $Q_{\tau}(y_t|y_{t-1}, y_{t-2}, \dots)$ can be estimated by replacing $\bm\theta_{\tau}$ and the volatility $h_{t}$ with their estimates; see \cite{Xiao_Koenker2009} and \cite{Zheng_Zhu_Li_Xiao2018}. 
	Note that $Q_{\tau}(y_t|y_{t-1}, y_{t-2}, \dots)$ is dependent on $\tau$ only through $Q_{\tau}(\varepsilon_{t})$, whereas the GARCH parameters remain invariant across different $\tau$.
	However,  in practice  the GARCH parameters may vary across quantile levels. The above framework would fail to capture this phenomenon, potentially resulting in poor forecast accuracy; see Section \ref{Sec-realdata} for empirical evidence. 
	To address this limitation, a natural idea is to allow the GARCH parameters to be $\tau$-dependent.
	
	Recently random-coefficient time series models  built upon  quantile regression have attracted growing attention. 
	By assuming that the AR coefficients are functions of a standard uniform random variable, the quantile AR model in \cite{Koenker_Xiao2006} allows for asymmetric dynamic structures across quantile levels; 
	see, e.g., \cite{Ferreira2011} and \cite{Baur_Dimpfl_Jung2012} for various empirical applications of this model. There have been many extensions of the quantile AR model, such as the quantile self-exciting threshold AR model \citep{Cai_Stander2008}, the threshold quantile AR model \citep{Galvao2011}, and the quantile double AR model  \citep{Zhu_Li2022}.  
	However, as far as we know, the approach of \cite{Koenker_Xiao2006} has not been explored for GARCH-type models.  
	To fill this gap, this paper proposes the quantile GARCH model, where the GARCH parameters are allowed to vary across  quantile levels.

	Our main contributions are threefold. 
	First, we develop a more flexible QR framework for conditional heteroscedastic time series, namely the quantile GARCH model, and establish a sufficient condition for its strict stationarity. As the volatility process of the GARCH model is latent and defined recursively, a direct extension of \cite{Koenker_Xiao2006} would be infeasible. Instead, by exploiting the ARCH($\infty$) form \citep{Zaffaroni_2004} of the GARCH model, we introduce a random-coefficient GARCH process, where the GARCH parameters are functions of a standard uniform random variable. It can be written as a weighted sum of past information across all lags, where the weights are exponentially decaying random-coefficient functions. 
	The proposed model can capture asymmetric dynamic structures and varying persistence across different quantile levels, while including the linear GARCH model as a special case.

	Secondly, for the proposed quantile GARCH model, we introduce the self-weighted QR estimator. The uniform convergence theory of the estimator, including uniform consistency and weak convergence, is established for the quantile process with respect to the quantile level $\tau$. Note that the weak convergence of the unweighted QR estimator would require $E(|y_t|^3)< \infty$. By contrast, the self-weighted estimator only requires  $E(|y_t|^s)< \infty$ for an arbitrarily small $s>0$ and thus is applicable to very heavy-tailed financial data. 
	The major theoretical difficulty comes from the non-convex and non-differentiable  objective function of self-weighted QR estimator. To overcome it, we adopt the bracketing method in \cite{Pollard1985} to derive the pointwise Bahadur representation of the self-weighted QR estimator for each fixed $\tau$, hence the pointwise $\sqrt{n}$-consistency and asymptotic normality. Then, we strengthen  the pointwise convergence to uniform convergence for all $\tau$, by deriving the  Bahadur representation uniformly in $\tau$ and proving the asymptotic tightness of its leading term. In addition,  to check whether the persistence coefficient is $\tau$-independent, we construct a Cram\'{e}r-von Misses (CvM) test. Based on the weak convergence result, we obtain the limiting null distribution of the CvM test statistic and propose a feasible subsampling method to calculate its critical values.
	
	Finally, to remedy the possible inefficiency of the QR at high quantile levels due to data scarcity, we further introduce the self-weighted composite quantile regression (CQR) estimator. High quantile levels are of great interest in financial risk management. 
	A common approach to extremal QR \citep{Chernozhukov2005} is to  estimate the   quantiles  at multiple intermediate levels and then  extrapolate those at high levels \citep{Wang_Li_He2012, Li_Wang2019}. 
	We adopt such an   approach for the quantile GARCH model. Since this model is similar to  \citeauthor{Taylor2008}'s \citeyearpar{Taylor2008}  GARCH model, we can conveniently make use of the latter for the extrapolation under a chosen innovation distribution such that an explicit quantile function is available. We choose the  Tukey-lambda distribution \citep{Joiner_Rosenblatt1971}, since it not only has an explicit quantile function, but is flexible in fitting heavy tails and approximating many common distributions such as the Gaussian distribution \citep{Gilchrist2000}. 
	For the proposed weighted CQR estimator, we  derive asymptotic properties  under possible model misspecification and provide practical suggestions for computational issues.
	In addition, our simulation studies and empirical analysis indicate that the CQR outperforms the QR at high quantile levels. 
	
	The rest of this paper is organized as follows. Section \ref{Sec-QGARCH11} introduces the quantile GARCH$($1,1$)$ model and studies its strict stationarity. 
	Section \ref{Sec-estimation} proposes the self-weighted QR estimator, together with the convergence theory for the corresponding quantile process  and a CvM test for checking the constancy of the persistence coefficient across all quantile levels. Section \ref{Sec-CQR} introduces the CQR estimator and derives its asymptotic properties.
	Simulation studies and an empirical example are provided in 
	Sections \ref{Sec-simulation} and \ref{Sec-realdata}, respectively. 
	Conclusion and discussion are given in Section \ref{Sec-conclusion}. 
	A section on the generalization to the quantile GARCH($p,q$) model, all technical proofs, and additional numerical results are given in  the Appendix. 
	Throughout the paper, $\rightarrow_d$ denotes the convergence in distribution, $\rightsquigarrow$ denotes weak convergence, and $o_p(1)$ denotes the convergence in probability. Moreover, $\|\cdot\|$ denotes the norm of a matrix or column vector, defined as $\|A\|=\sqrt{\text{tr}(AA^\prime)}=\sqrt{\sum_{i,j}a_{ij}^2}$. 
	In addition, $\ell^{\infty}(\mathcal{T})$ denotes the space of all uniformly bounded functions on $\mathcal{T}$.
	The dataset in Section \ref{Sec-realdata} and computer programs for the analysis are available at https://github.com/Tansonghua-sufe/QGARCH.
	
	\section{Proposed quantile GARCH$($1,1$)$ model}\label{Sec-QGARCH11}
	
	\subsection{Motivation}
	
	For succinctness, we restrict our attention to the quantile GARCH$($1,1$)$ model in the main paper, while the generalization to the quantile GARCH$(p,q)$ model is detailed in the Appendix.
	
	To motivate the proposed model, first consider a strictly stationary GARCH($1,1$) process in the form of
	\begin{equation}\label{garch11}
		x_t=\eta_t h_t^{1/2}, \hspace{5mm} h_t=a_0+a_1x_{t-1}^2+b_1 h_{t-1},
	\end{equation}
	where $a_0>0$, $a_1\geq 0$, $b_1\geq 0$, and the innovations $\{\eta_t\}$ are $i.i.d.$ random variables with mean zero and variance one. The ARCH($\infty$) representation \citep{Zaffaroni_2004} of model \eqref{garch11} can be written as 
	\begin{equation}\label{garch11b}
		x_t = \eta_t\left(\frac{a_0}{1-b_1}+a_1\sum_{j=1}^{\infty} b_1^{j-1}x_{t-j}^2\right)^{1/2}.
	\end{equation}
	Then, the $\tau$th conditional quantile function of $x_t$ in model \eqref{garch11b} is given by
	\begin{equation}\label{garch11q}
		Q_\tau(x_t|x_{t-1}, x_{t-2}, \dots)=Q_\tau(\eta_t)\left(\frac{a_0}{1-b_1}+a_1\sum_{j=1}^{\infty} b_1^{j-1}x_{t-j}^2\right)^{1/2}, \quad \tau\in(0, 1),
	\end{equation}
	where $Q_\tau(\eta_t)$ denotes the $\tau$th quantile of $\eta_t$. The parameters $a_0, a_1$ and $b_1$, which are independent of the specified quantile level $\tau$, control the scale of the conditional distribution of $x_t$, while the distribution of $\eta_t$ determines its shape.
	As a result, if the GARCH coefficients are allowed to vary with $\tau$ and thus capable of altering both the scale and shape of the conditional distribution, we will have a more flexible model that can accommodate asymmetric dynamic structures across different quantile levels.
	
	However, note that \eqref{garch11q} is nonlinear in the coefficients of the $x_{t-j}^2$'s.
	Consequently, a direct extension from \eqref{garch11} to a varying-coefficient model is undesirable, since it will result in a nonlinear conditional quantile function whose estimation  is computationally challenging. 
	Alternatively, we will consider the linear GARCH($1,1$) model in \eqref{linearGARCH}, in which case \eqref{garch11b} is revised to
	\begin{equation}\label{lgarch11}
		y_t = \varepsilon_t\left(\frac{a_0}{1-b_1}+a_1\sum_{j=1}^{\infty} b_1^{j-1}|y_{t-j}|\right).
	\end{equation}
	Then, its corresponding conditional quantile function has the following linear form:
	\begin{equation}\label{qlgarch11}
		Q_\tau(y_t|y_{t-1}, y_{t-2}, \dots)=Q_\tau(\varepsilon_t)\left(\frac{a_0}{1-b_1}+a_1\sum_{j=1}^{\infty} b_1^{j-1}|y_{t-j}|\right), \quad \tau\in(0, 1).
	\end{equation}
	We will adopt \eqref{qlgarch11} to formulate the proposed quantile GARCH model.
	\begin{remark}

		As shown in \cite{Zheng_Zhu_Li_Xiao2018}, the traditional GARCH($1,1$) model in \eqref{garch11} has an equivalent form of the linear GARCH($1,1$) model in \eqref{linearGARCH} up to a one-to-one transformation $T(\cdot)$. Specifically, for any $x_t$ following model \eqref{garch11b}, if we take the transformation $y_t=T(x_t)=x_t^2 \text{sgn}(x_t)$, then it can be shown that $y_t$ satisfies  \eqref{lgarch11} with $\varepsilon_t=T(\eta_t)=\eta_t^2 \text{sgn}(\eta_t)$.
		Note that $E(\varepsilon_t)$ may not be zero although $E(\eta_t)=0$, and this will not affect our derivation since the conditional quantile function at \eqref{qlgarch11} depends on $Q_\tau(\varepsilon_t)$ rather than $E(\varepsilon_t)$.
	\end{remark}
	
	\subsection{The proposed model}\label{sec:model}
	
	Let $\mathcal{F}_{t}$ be the $\sigma$-field generated by $\{y_{t}, y_{t-1}, \dots\}$. 
	To allow the GARCH parameters to vary with $\tau$, we extend model \eqref{qlgarch11} to the following conditional quantile model:
	\begin{equation}\label{QGARCH11-quantile}
		Q_\tau(y_t|\mathcal{F}_{t-1})=\omega(\tau)+\alpha_1(\tau)\sum_{j=1}^{\infty}[\beta_1(\tau)]^{j-1}|y_{t-j}|,  \quad \tau\in(0, 1),
	\end{equation}
	where $\omega: (0, 1)\rightarrow\mathbb{R}$ and $\alpha_1: (0, 1)\rightarrow\mathbb{R}$ are unknown monotonic increasing functions, and $\beta_1: (0, 1)\rightarrow [0, 1)$ is a non-negative real-valued function. 
	Note that both the scale and shape of the conditional distribution of $y_t$ can be altered by the past information $|y_{t-j}|$. 
	Assuming that the right hand side of \eqref{QGARCH11-quantile} is monotonic increasing in $\tau$, then \eqref{QGARCH11-quantile} is equivalent to the following random-coefficient process:
	\begin{equation}\label{qgarch11}
		y_t=\omega(U_t)+\alpha_1(U_t)\sum_{j=1}^{\infty}[\beta_1(U_t)]^{j-1}|y_{t-j}|,
	\end{equation}
	where $\{U_t\}$ is a sequence of $i.i.d.$ standard uniform random variables; see a discussion on the  monotonicity of $Q_\tau(y_t|\mathcal{F}_{t-1})$ with respect to $\tau$ in Remark \ref{remark-monotonicity}.
	We call model \eqref{QGARCH11-quantile} or \eqref{qgarch11} the quantile GARCH($1,1$) model.

	Similar to the GARCH model which requires the innovations to have mean zero, the quantile GARCH model also needs a location constraint. For the conditional quantile function \eqref{QGARCH11-quantile}, we may impose that
	\begin{equation}\label{location11}
		Q_{0.5}(y_t|\mathcal{F}_{t-1})=0.
	\end{equation}
	Since $\beta_1(\cdot)$ is non-negative, condition \eqref{location11} holds if and only if
	\begin{equation}\label{constraint11}
		\omega(0.5)=\alpha_1(0.5)=0.
	\end{equation}
	For the quantile GARCH($1,1$) model, we impose condition \eqref{constraint11} throughout this paper. 
	
	Recall that the functions $\omega(\cdot)$ and $\alpha_1(\cdot)$ are monotonic increasing and $\beta_1(\cdot)$ is non-negative. Under \eqref{constraint11} the quantile GARCH($1,1$) model \eqref{qgarch11} can be rewritten into
	\begin{align*}
		y_t &= \text{sgn}(U_t-0.5)|y_t|,\\
		|y_t|&= |\omega(U_t)|+\sum_{j=1}^{\infty}|\alpha_1(U_t)|[\beta_1(U_t)]^{j-1}|y_{t-j}|,
	\end{align*}
	where $y_t$, $U_t-0.5$, $\omega(U_t)$ and $\alpha_1(U_t)$ have the same sign at each time $t$. 
	For simplicity, denote $\phi_{0, t}=|\omega(U_t)|$ and $\phi_{j,t}= |\alpha_1(U_t)|[\beta_1(U_t)]^{j-1}$ for $j\geq1$. 
	Then the quantile GARCH($1,1$) model \eqref{qgarch11} is equivalent to 
	\begin{align}\label{equ-qgarch11}
		y_t = \text{sgn}(U_t-0.5)|y_t|, \quad |y_t|= \phi_{0, t} + \sum_{j=1}^{\infty}\phi_{j,t}|y_{t-j}|, \quad j\geq1.
	\end{align}
	This enables us to establish a sufficient condition for the existence of a strictly stationary solution of the quantile GARCH($1,1$) model in the following theorem. 
	\begin{thm}\label{thm-stationarity11}
		Suppose that condition \eqref{constraint11} holds. If there exists $s\in(0,1]$ such that 
		\begin{equation}\label{StationaryCondition11_fractional}
			E(\phi_{0, t}^s)<\infty \quad\text{and}\quad \sum_{j=1}^{\infty}E(\phi_{j,t}^s)<1,
		\end{equation}
		or $s>1$ such that 
		\begin{equation}\label{StationaryCondition11_integer}
			E(\phi_{0, t}^s)<\infty \quad\text{and}\quad \sum_{j=1}^{\infty}[E(\phi_{j,t}^s)]^{1/s}<1,
		\end{equation}
		then there exists a strictly stationary solution of the quantile GARCH($1,1$) equations in \eqref{equ-qgarch11}, and the process $\{y_t\}$ defined by
		\begin{equation}\label{solution}
			y_t=\textup{sgn}(U_t-0.5)\left(\phi_{0,t}+\sum_{\ell=1}^{\infty}\sum_{j_1, \dots, j_\ell=1}^{\infty}	\phi_{0,  t-j_1-\cdots-j_\ell}\phi_{j_1, t}\phi_{j_2, t-j_1}\cdots\phi_{j_\ell, t-j_1-\cdots-j_{\ell-1}}\right)
		\end{equation}
		is the unique strictly stationary and $\mathcal{F}_{t}^{U}$-measurable solution to \eqref{equ-qgarch11} such that $E|y_t|^s<\infty$, where $\mathcal{F}_{t}^{U}$ is the $\sigma$-field generated by $\{U_{t}, U_{t-1}, \dots\}$.
	\end{thm}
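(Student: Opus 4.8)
The plan is to adapt the standard Volterra-series construction for $\mathrm{ARCH}(\infty)$-type recursions to the present random-coefficient setting. The sign factor in $y_t=\textup{sgn}(U_t-0.5)|y_t|$ is measurable with respect to $U_t$ and plays no role in the existence question, so I would focus on the nonnegative recursion $|y_t|=\phi_{0,t}+\sum_{j\ge1}\phi_{j,t}|y_{t-j}|$ and reinstate the sign at the end. Throughout, write $\|X\|_s=(E|X|^s)^{1/s}$.

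First I would show that the series in \eqref{solution} is well defined. Iterating the recursion formally produces precisely the nested sums there, and the crucial structural point is that, within any one product $\phi_{0,t-j_1-\cdots-j_\ell}\phi_{j_1,t}\phi_{j_2,t-j_1}\cdots\phi_{j_\ell,t-j_1-\cdots-j_{\ell-1}}$, the time indices $t>t-j_1>t-j_1-j_2>\cdots>t-j_1-\cdots-j_\ell$ are strictly decreasing (because each $j_i\ge1$) and hence distinct; since each $\phi_{j,s}$ is a measurable function of $U_s$ alone and $\{U_s\}$ is $i.i.d.$, the factors in every product are mutually independent, so that expectations of products factorize and are invariant under time shifts. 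In the regime $s\in(0,1]$ I would use the subadditivity $(\sum_i a_i)^s\le\sum_i a_i^s$ for nonnegative $a_i$ to bound the $s$-th moment of the candidate $|y_t|$ term by term, obtaining after factorization
\begin{equation*}
E|y_t|^s\le E(\phi_{0,t}^s)\sum_{\ell=0}^{\infty}\Bigl(\sum_{j=1}^{\infty}E(\phi_{j,t}^s)\Bigr)^{\ell}=\frac{E(\phi_{0,t}^s)}{1-\sum_{j=1}^{\infty}E(\phi_{j,t}^s)}<\infty
\end{equation*}
under \eqref{StationaryCondition11_fractional}; in the regime $s>1$ I would instead invoke Minkowski's inequality together with $\|\prod_i X_i\|_s=\prod_i\|X_i\|_s$ for independent $X_i$, obtaining
\begin{equation*}
\bigl\||y_t|\bigr\|_s\le\|\phi_{0,t}\|_s\sum_{\ell=0}^{\infty}\Bigl(\sum_{j=1}^{\infty}\|\phi_{j,t}\|_s\Bigr)^{\ell}=\frac{\|\phi_{0,t}\|_s}{1-\sum_{j=1}^{\infty}[E(\phi_{j,t}^s)]^{1/s}}<\infty
\end{equation*}
under \eqref{StationaryCondition11_integer}. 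In either case the partial sums over $\ell$ are Cauchy in $L^s$, so \eqref{solution} converges absolutely almost surely and in $L^s$, and defines a process with $E|y_t|^s<\infty$.

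Next I would verify that this process actually solves the equations and has the stated properties: substituting \eqref{solution} into the right-hand side of the nonnegative recursion, re-indexing the nested sums and collecting terms reproduces \eqref{solution} (the rearrangement being justified by the absolute convergence just established), and reinstating $\textup{sgn}(U_t-0.5)$ gives a solution of \eqref{equ-qgarch11}; since each summand of \eqref{solution} is a fixed measurable function of $(U_t,U_{t-1},\dots)$ and $\{U_t\}$ is $i.i.d.$, the process is $\mathcal{F}_t^U$-measurable and strictly stationary. For uniqueness, I would take any strictly stationary, $\mathcal{F}_t^U$-measurable solution $\{\tilde y_t\}$ with $E|\tilde y_t|^s<\infty$; then $|\tilde y_t|$ satisfies the same nonnegative recursion, so $\bigl||y_t|-|\tilde y_t|\bigr|\le\sum_{j\ge1}\phi_{j,t}\bigl||y_{t-j}|-|\tilde y_{t-j}|\bigr|$, and iterating this bound $m$ times and taking $s$-th moments (using subadditivity for $s\le1$ and Minkowski for $s>1$, together with the independence and stationarity noted above) yields $E\bigl||y_t|-|\tilde y_t|\bigr|^s\le C\rho^m$ with $\rho<1$ and $C=E\bigl||y_0|-|\tilde y_0|\bigr|^s<\infty$; letting $m\to\infty$ forces $|y_t|=|\tilde y_t|$ a.s., whence $y_t=\textup{sgn}(U_t-0.5)|y_t|=\tilde y_t$ a.s.

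I expect the main obstacle to be organizational rather than conceptual: one must run the two moment regimes $s\in(0,1]$ and $s>1$ in parallel, and in each correctly exploit the strict monotonicity (hence distinctness) of the time indices inside the Volterra products so that the relevant expectations factorize; once this is set up, the geometric-series summation in the existence part and the contraction estimate in the uniqueness part are routine.
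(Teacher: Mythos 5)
Your proposal is correct and follows the same overall strategy as the paper: construct the candidate solution as the Volterra-type series, bound its $s$-th moment by exploiting the strict decrease (hence distinctness and independence) of the time indices inside each product, verify the recursion by re-indexing the absolutely convergent sums, and prove uniqueness by iterating the recursion and letting the $m$-fold term vanish. The one place where you genuinely deviate is the moment bound for $s>1$: you apply Minkowski's inequality term by term and use $\|\prod_i X_i\|_s=\prod_i\|X_i\|_s$ for independent nonnegative factors, which yields the geometric bound $\|\,|y_t|\,\|_s\le\|\phi_{0,t}\|_s\bigl(1-\sum_j\|\phi_{j,t}\|_s\bigr)^{-1}$ uniformly for all real $s>1$; the paper instead combines Minkowski with a H\"older expansion of the $s$-th power of the double series, carried out explicitly only for $s=2$ and asserted "similarly" for integer $s\ge3$. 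Your route is cleaner and covers the non-integer case directly, so it buys a little generality at no cost. Your uniqueness step is phrased as a contraction on $\bigl||y_t|-|\tilde y_t|\bigr|$ rather than the paper's bound on the remainder $R_{t,m}$ after $m$-fold substitution, but the two are essentially the same estimate (both rely on causality of the competing solution to factor out the $\phi$'s), and your requirement that the competing solution be $\mathcal{F}_t^U$-measurable with $E|\tilde y_t|^s<\infty$ matches the uniqueness class stated in the theorem.
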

	
	Theorem \ref{thm-stationarity11} gives a sufficient condition for the existence of a unique strictly stationary solution satisfying $E|y_t|^s<\infty$. The proof relies on a method similar to that of Theorem 1 in \cite{Douc_Roueff_Soulier2008}; see also \cite{Giraitis_Kokoszka_Leipus_2000} and \cite{Royer2022}.

	\begin{remark}[Monotonicity conditions for quantile and coefficient functions]\label{remark-monotonicity}
		As discussed in \cite{Koenker_Xiao2006} and \cite{Phillips2015}, it is very difficult to derive a necessary and sufficient condition on random-coefficient functions to ensure the monotonicity of $Q_{\tau}(y_{t}|\mathcal{F}_{t-1})$ in $\tau$ for the quantile GARCH($1,1$) model in \eqref{QGARCH11-quantile}.
		Given that $\omega(\cdot)$ and $\alpha_1(\cdot)$ are monotonic increasing,  a sufficient condition for monotonicity of $Q_{\tau}(y_{t}|\mathcal{F}_{t-1})$  is that the non-negative function $\beta_1(\cdot)$ is monotonic decreasing on $(0, 0.5)$ and monotonic increasing on $(0.5, 1)$. However, since $Q_{\tau}(y_{t}|\mathcal{F}_{t-1})$ could be monotonic increasing even if $\beta_1(\cdot)$ does not satisfy the above constraint (e.g., if $\beta_1(\tau)$ is constant over $\tau$), we refrain from imposing any monotonicity constraint on $\beta_1(\cdot)$  in order to avoid overly restricting the function space. 
	\end{remark}

	\begin{remark}[Special cases of Theorem \ref{thm-stationarity11}] \label{2nd-orderGARCH11}
		When   $\omega(U_t)=a_0\varepsilon_{t}/(1-b_1), \alpha_1(U_t)=a_1\varepsilon_{t}$, and $\beta_1(U_t)=b_1$,  the quantile GARCH($1,1$) model in \eqref{qgarch11} reduces to  the linear GARCH($1,1$) model in \eqref{lgarch11}. Then, \eqref{StationaryCondition11_fractional} can be simply written as $a_1^sE|\varepsilon_t|^s+b_1^s<1$ for $s\in(0,1]$, while \eqref{StationaryCondition11_integer} reduces to $a_1(E|\varepsilon_t|^s)^{1/s}+b_1<1$ with $E|\varepsilon_t|^s<\infty$ for $s>1$.	In particular, when $s=1$, the stationarity condition becomes $a_1+b_1<1$, which is exactly the necessary and sufficient condition for the existence of a second-order stationary solution to the GARCH($1,1$) model in \eqref{garch11}. 	If $s=2$, then the condition becomes $a_1[E(\eta_t^4)]^{1/2}+b_1<1$ with $E(\eta_t^4)<\infty$, which is slightly stronger than the necessary and sufficient condition for the existence of a fourth-order stationary solution to the GARCH($1,1$) model in  \eqref{garch11}; see also \cite{Bollerslev1986} and \cite{Zaffaroni_2004}.
	\end{remark}

	\begin{remark}[Extension to asymmetric quantile GARCH models]\label{remark-asymmetric-extension}	
		There are numerous variants of the GARCH model, such as the exponential GARCH \citep{nelson1991conditional} and threshold GARCH \citep{zakoian1994threshold} models. The quantile GARCH model in this paper can be extended along the lines of these variants. For example, to capture leverage effects in quantile dynamics, as the quantile counterpart of the threshold GARCH model \citep{zakoian1994threshold}, the threshold quantile GARCH($1,1$)  model can be defined  as 
		\[Q_\tau(y_t|\mathcal{F}_{t-1})=\omega(\tau)+\alpha_1^{+}(\tau)\sum_{j=1}^{\infty}[\beta_1(\tau)]^{j-1}y_{t-j}^{+}-\alpha_1^{-}(\tau)\sum_{j=1}^{\infty}[\beta_1(\tau)]^{j-1}y_{t-j}^{-},\]
		where $\omega: (0,1)\rightarrow\mathbb{R}$ and $\alpha_1^{+}, \alpha_1^{-}: (0,1)\rightarrow\mathbb{R}$ are monotonic increasing, $\beta_1: (0,1)\rightarrow [0,1)$, $y_{t-j}^{-}=\min\{y_{t-j}, 0\}$, and $y_{t-j}^{+}=\max\{y_{t-j}, 0\}$. 
		We leave this interesting extension for future research.
	\end{remark}

	\section{Quantile regression}\label{Sec-estimation}
	
	\subsection{Self-weighted estimation}\label{subsec-WCQE}
	Let $\bm\theta=(\omega, \alpha_{1},\beta_{1})^{\prime}\in \Theta$ be the parameter vector of the quantile GARCH($1,1$) model, which belongs to the parameter space $\Theta \subset \mathbb{R}^{2}\times [0,1)$. From \eqref{QGARCH11-quantile}, we can define the conditional quantile function below,
	\[
	q_t(\bm\theta) =\omega + \alpha_{1}\sum_{j=1}^\infty \beta_{1}^{j-1}|y_{t-j}|.
	\]
	Since the function $q_t(\bm\theta)$ depends on  observations in the infinite past, initial values are required in practice. In this paper, we set $y_t=0$ for $t\leq 0$, and denote the resulting function by $\widetilde{q}_t(\bm\theta)$, that is, $\widetilde{q}_t(\bm\theta)=\omega + \alpha_{1}\sum_{j=1}^{t-1} \beta_{1}^{j-1}|y_{t-j}|$.  We will prove that the effect of the initial values on the estimation and inference is asymptotically negligible.

	Let  $\psi_{\tau}(x)=\tau-I(x<0)$, where the indicator function $I(\cdot)=1$ if the condition is true and 0 otherwise.
	For any $\tau\in \mathcal{T}\subset (0,1)$, 
	we propose the self-weighted quantile regression (QR) estimator as follows,
	\begin{align}\label{PairwiseWCQE}
		\widetilde{\bm\theta}_{wn}(\tau)&= (\widetilde{\omega}_{wn}(\tau), \widetilde{\alpha}_{1wn}(\tau), \widetilde{\beta}_{1wn}(\tau))^{\prime} =\argmin_{\bm\theta\in\Theta}\sum_{t=1}^n w_t\rho_{\tau} \left(y_t-\widetilde{q}_t(\bm\theta)\right),
	\end{align} 
	where $\{w_t\}$ are nonnegative random weights, and $\rho_{\tau}(x)=x\psi_{\tau}(x)=x[\tau-I(x<0)]$ is the check function; see also  \cite{Ling2005}, \cite{Zhu_Ling2011}, and \cite{Zhu_Zheng_Li2018}.

	When $w_t=1$ for all $t$, \eqref{PairwiseWCQE} reduces to the unweighted QR estimator.  In this case,  the consistency and asymptotic normality of the estimator would require $E|y_t|<\infty$ and $E|y_t|^3<\infty$, respectively. A sufficient condition for the existence of these moments is provided in Theorem \ref{thm-stationarity11}. 
	However, higher order moment conditions will make the stationarity region much narrower. 
	Moreover, financial time series are usually  heavy-tailed, so these moment conditions can be easily violated. By contrast, using the self-weighting approach \citep{Ling2005}, we  only need a finite fractional moment of $|y_t|$.

	Denote the true parameter vector by $\bm\theta(\tau)=(\omega(\tau), \alpha_1(\tau), \beta_1(\tau))^{\prime}$. 
	Let $F_{t-1}(\cdot)$ and $f_{t-1}(\cdot)$ be the distribution and density functions of $y_t$ conditional on $\mathcal{F}_{t-1}$, respectively. 
	To establish the asymptotic properties of $\widetilde{\bm\theta}_{wn}(\tau)$, we need the following assumptions. 
	
	\begin{assum}\label{assum-Process}
		$\{y_t\}$ is strictly stationary and ergodic. 
	\end{assum}
	
	\begin{assum}\label{assum-Space}
		(i) The parameter space $\Theta$ is compact; 
		(ii) $\bm\theta(\tau)$ is an interior point of $\Theta$.
	\end{assum}
	
	\begin{assum}\label{assum-ConditionalDensity}
		With probability one, $f_{t-1}(\cdot)$ and its derivative function $\dot{f}_{t-1}(\cdot)$ are uniformly bounded, and $f_{t-1}(\cdot)$ is positive on the support $\{x:0<F_{t-1}(x)<1\}$.	
	\end{assum}
	
	\begin{assum}\label{assum-RandomWeight}
		$\{w_{t}\}$ is strictly stationary and ergodic, and $w_{t}$ is nonnegative and measurable with respect to $\mathcal{F}_{t-1}$ such that $E(w_{t})<\infty$ and $E(w_{t}|y_{t-j}|^3)<\infty$ for $j\geq 1$.
	\end{assum}
	\begin{assum}\label{assum-Tightness}
		The functions $\omega(\cdot),\alpha_1(\cdot)$ and $\beta_1(\cdot)$ are Lipschitz continuous. 
	\end{assum}		
	
	Theorem \ref{thm-stationarity11} provides a sufficient condition for Assumption \ref{assum-Process}.  In Assumption \ref{assum-Space}, condition (i) is standard for the consistency of estimator, while condition (ii) is needed for the asymptotic normality; see also \cite{Francq_Zakoian2010} and \cite{Zheng_Zhu_Li_Xiao2018}.
	Assumption \ref{assum-ConditionalDensity} is commonly required for QR processes whose coefficients are functions of a uniform random variable; see Assumption A.3 in \cite{Koenker_Xiao2006} for quantile AR models and Assumption 4 in \cite{Zhu_Li2022} for quantile double AR models. 
	Specifically, the positiveness and continuity of $f_{t-1}(\cdot)$ are required to show the uniform consistency of $\widetilde{\bm\theta}_{wn}(\tau)$ in Theorem \ref{thm-WCQE-uniform-consistency}, while the boundedness of $f_{t-1}(\cdot)$ and $\dot{f}_{t-1}(\cdot)$ is needed for the weak convergence in Theorem \ref{thm-WCQE-weak-convergence}.
	In the special case where the quantile GARCH($1,1$) model in \eqref{qgarch11} reduces to model \eqref{lgarch11}, Assumption \ref{assum-ConditionalDensity} can be simplified to conditions similar to Assumption (A2) in \cite{Lee_Noh2013} and Assumption 4 in \cite{Zhu_Li_Xiao2021QRGARCHX}.  
	Assumption \ref{assum-RandomWeight} on the self-weights $\{w_t\}$ is used to reduce the moment requirement on $\{y_t\}$ in establishing asymptotic properties of $\widetilde{\bm\theta}_{wn}(\tau)$; see more discussions on  $\{w_t\}$ in Remark \ref{remark-self-weighting}.
	Assumption \ref{assum-Tightness} is required to establish the stochastic equicontinuity for weak convergence in Theorem \ref{thm-WCQE-weak-convergence}.

	Let $\bm T_n(\tau)=n^{-1/2}\sum_{t=1}^{n}w_t\dot{q}_t(\bm\theta(\tau))\psi_{\tau}(y_t-q_t(\bm\theta(\tau)))$ and $\Sigma_{w}(\tau_1, \tau_2)=(\min\{\tau_1, \tau_2\}-\tau_1\tau_2)\Omega_{1w}^{-1}(\tau_1)\Omega_{0w}(\tau_1, \tau_2)\Omega_{1w}^{-1}(\tau_2)$, where $\Omega_{0w}(\tau_1, \tau_2)=E\left[w_t^2\dot{q}_t(\bm\theta(\tau_1))\dot{q}_t^{\prime}(\bm\theta(\tau_2))\right]$ and $\Omega_{1w}(\tau)=E\left[f_{t-1}(F_{t-1}^{-1}(\tau))w_{t}\dot{q}_t(\bm\theta(\tau))\dot{q}_t^{\prime}(\bm\theta(\tau))\right]$. Theorems \ref{thm-WCQE-uniform-consistency} and \ref{thm-WCQE-weak-convergence} below establish the uniform consistency and weak convergence for the QR process $\widetilde{\bm\theta}_{wn}(\cdot)$, respectively. 
	
	\begin{thm}\label{thm-WCQE-uniform-consistency}
		For $\{y_t\}$ generated by model \eqref{qgarch11} with condition \eqref{constraint11}, suppose $E|y_t|^s<\infty$ for some $s\in (0,1)$. If Assumptions \ref{assum-Process}, \ref{assum-Space}(i), \ref{assum-ConditionalDensity} and \ref{assum-RandomWeight} hold, then $\sup_{\tau\in\mathcal{T}}\|\widetilde{\bm\theta}_{wn}(\tau)-\bm\theta(\tau)\| \rightarrow_p 0$ as $n\rightarrow\infty$.	 
	\end{thm}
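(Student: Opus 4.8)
The plan is to follow the classical consistency argument for extremum estimators indexed by a function: (i) show that replacing the truncated criterion (with $\widetilde q_t$) by the non‑truncated one (with $q_t$) is asymptotically negligible, uniformly in $\bm\theta$ and $\tau$; (ii) establish a uniform law of large numbers identifying the population criterion; (iii) show the population criterion is uniquely minimized at $\bm\theta(\tau)$ with a gap bounded away from zero uniformly in $\tau$; and (iv) combine these via the usual sandwich. Throughout, let $\bar\beta=\sup\{\beta_1:(\omega,\alpha_1,\beta_1)^\prime\in\Theta\}$, which is strictly less than $1$ by Assumption \ref{assum-Space}(i), so that on $\Theta$ the infinite sums in $q_t$ have geometrically decaying tails. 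Since $E|y_t|^s<\infty$ for some $s\in(0,1)$, subadditivity of $x\mapsto x^s$ gives $E\big(\sum_{j\ge1}\bar\beta^{\,j-1}|y_{t-j}|\big)^s\le\big(1-\bar\beta^{\,s}\big)^{-1}E|y_t|^s<\infty$, so $q_t(\bm\theta)$ is finite almost surely for every $\bm\theta\in\Theta$ and $\widetilde q_t(\bm\theta)\rightarrow q_t(\bm\theta)$.

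For step (i), since $\rho_\tau$ is Lipschitz with constant at most $1$, one has $w_t\big|\rho_\tau(y_t-\widetilde q_t(\bm\theta))-\rho_\tau(y_t-q_t(\bm\theta))\big|\le w_t|\widetilde q_t(\bm\theta)-q_t(\bm\theta)|\le C\,\bar\beta^{\,t-1}w_t\sum_{i\ge0}\bar\beta^{\,i}|y_{-i}|$ uniformly in $\tau\in\mathcal{T}$ and $\bm\theta\in\Theta$; since $\sum_{i\ge0}\bar\beta^{\,i}|y_{-i}|<\infty$ a.s. and $\sum_{t\ge1}\bar\beta^{\,t-1}w_t<\infty$ a.s. (its mean is $E(w_1)/(1-\bar\beta)<\infty$ by Assumption \ref{assum-RandomWeight}), the corresponding averaged difference is of order $n^{-1}$ a.s. For step (ii), put $L_n(\bm\theta,\tau)=n^{-1}\sum_{t=1}^n w_t\{\rho_\tau(y_t-q_t(\bm\theta))-\rho_\tau(y_t-q_t(\bm\theta(\tau)))\}$ and $L(\bm\theta,\tau)=E[w_t\{\rho_\tau(y_t-q_t(\bm\theta))-\rho_\tau(y_t-q_t(\bm\theta(\tau)))\}]$, the subtracted term being $\bm\theta$‑free. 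The summand is dominated by $C\,w_t\big(1+\sum_{j\ge1}\bar\beta^{\,j-1}|y_{t-j}|\big)$, which is integrable: by H\"older, $E(w_t|y_{t-j}|)\le(Ew_t)^{2/3}(Ew_t|y_{t-j}|^3)^{1/3}<\infty$ by Assumption \ref{assum-RandomWeight}, and these are summable against $\bar\beta^{\,j-1}$. Moreover $(\bm\theta,\tau)\mapsto w_t\{\rho_\tau(y_t-q_t(\bm\theta))-\rho_\tau(y_t-q_t(\bm\theta(\tau)))\}$ is a.s. continuous — Lipschitz in $\tau$ with slope bounded by that envelope, and continuous in $\bm\theta$ because $\bm\theta\mapsto q_t(\bm\theta)$ is continuous on $\Theta$ where $\beta_1\le\bar\beta$. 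Covering the compact set $\Theta\times\mathcal{T}$ by finite nets, applying the ergodic theorem (Assumptions \ref{assum-Process} and \ref{assum-RandomWeight}) at the net points, and controlling the oscillation by this equicontinuity yields $\sup_{\bm\theta\in\Theta,\,\tau\in\mathcal{T}}|L_n(\bm\theta,\tau)-L(\bm\theta,\tau)|\rightarrow_p0$.

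For step (iii), condition on $\mathcal{F}_{t-1}$ and write $e_t(\tau)=y_t-q_t(\bm\theta(\tau))$ — whose conditional $\tau$‑quantile is $0$, so $F_{t-1}(q_t(\bm\theta(\tau)))=\tau$ — and $\delta_t=q_t(\bm\theta)-q_t(\bm\theta(\tau))$, which is $\mathcal{F}_{t-1}$‑measurable. Using the identity $\rho_\tau(u-v)-\rho_\tau(u)=-v\psi_\tau(u)+\int_0^v\{I(u\le s)-I(u\le0)\}\,ds$ together with $E[\psi_\tau(e_t(\tau))\mid\mathcal{F}_{t-1}]=0$,
\[
E\big[w_t\{\rho_\tau(y_t-q_t(\bm\theta))-\rho_\tau(y_t-q_t(\bm\theta(\tau)))\}\mid\mathcal{F}_{t-1}\big]=w_t\int_0^{\delta_t}\{F_{t-1}(q_t(\bm\theta(\tau))+s)-\tau\}\,ds\ge0,
\]
and by positivity of $f_{t-1}$ (Assumption \ref{assum-ConditionalDensity}) the right‑hand side is strictly positive whenever $\delta_t\ne0$ and $w_t>0$. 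Hence $L(\bm\theta,\tau)=0$ forces $q_t(\bm\theta)=q_t(\bm\theta(\tau))$ a.s.; equating the coefficients of $|y_{t-1}|,|y_{t-2}|,\dots$ — valid because, by the model equations \eqref{equ-qgarch11}, the conditional law of each $|y_{t-j}|$ given the remoter past is non‑degenerate, inheriting the non‑atomicity of $\omega(U_{t-j})$ — gives $\alpha_1=\alpha_1(\tau)$, then $\alpha_1\beta_1=\alpha_1(\tau)\beta_1(\tau)$, hence $\beta_1=\beta_1(\tau)$ provided $\alpha_1(\tau)\ne0$ (true for $\tau\ne0.5$, since $\alpha_1$ is strictly increasing with $\alpha_1(0.5)=0$), and finally $\omega=\omega(\tau)$. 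Thus $\bm\theta(\tau)$ is the unique minimizer of $L(\cdot,\tau)$, with minimum value $0$; and $\bm\theta(\cdot)$ is continuous, since $\tau\mapsto F_{t-1}^{-1}(\tau)$ is continuous under Assumption \ref{assum-ConditionalDensity} and $\bm\theta\mapsto(q_t(\bm\theta))_t$ is injective by the identification just established. For step (iv), continuity of $\bm\theta(\cdot)$ and of $L$, with compactness, make $\delta_\epsilon:=\inf\{L(\bm\theta,\tau):\tau\in\mathcal{T},\ \bm\theta\in\Theta,\ \|\bm\theta-\bm\theta(\tau)\|\ge\epsilon\}$ strictly positive for every $\epsilon>0$; writing $\widetilde L_n$ for the recentred truncated criterion minimized by $\widetilde{\bm\theta}_{wn}(\tau)$ (so $\sup_{\bm\theta,\tau}|\widetilde L_n-L|\rightarrow_p0$ by (i)–(ii)), on the event $\{\sup|\widetilde L_n-L|<\delta_\epsilon/2\}$ one has $L(\widetilde{\bm\theta}_{wn}(\tau),\tau)<\widetilde L_n(\widetilde{\bm\theta}_{wn}(\tau),\tau)+\delta_\epsilon/2\le\widetilde L_n(\bm\theta(\tau),\tau)+\delta_\epsilon/2<\delta_\epsilon$ uniformly in $\tau$, which forces $\sup_{\tau\in\mathcal{T}}\|\widetilde{\bm\theta}_{wn}(\tau)-\bm\theta(\tau)\|<\epsilon$; since $\epsilon$ is arbitrary, the theorem follows.

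I expect the uniform law of large numbers in step (ii) to require the most technical care: the criterion couples the infinite sums $q_t(\bm\theta)$ with the random self‑weights, so one must exhibit an integrable envelope valid uniformly over $\Theta$ (exploiting $\bar\beta<1$ and the moment bound $E(w_t|y_{t-j}|^3)<\infty$) and verify equicontinuity in $(\bm\theta,\tau)$ jointly. The identification in step (iii) is the more delicate conceptual point: it relies on the richness of the support of $(|y_{t-1}|,|y_{t-2}|,\dots)$ and needs separate attention near $\tau=0.5$, where only $\omega(0.5)=\alpha_1(0.5)=0$ is pinned down — not $\beta_1(0.5)$ — so that $\mathcal{T}$ should avoid a neighbourhood of $0.5$ if consistency of the entire vector $\bm\theta(\tau)$ is required.
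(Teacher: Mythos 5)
Your proposal is correct and follows essentially the same route as the paper's proof: (i) the initial-value truncation is killed by the Lipschitz property of $\rho_\tau$ and geometric decay, (ii) a uniform law of large numbers over $\Theta\times\mathcal{T}$ (the paper cites Theorem 3.1 of Ling and McAleer (2003) where you prove it via nets and the ergodic theorem, and your recentring at $\bm\theta(\tau)$ is a harmless variant), (iii) identification through the Knight identity, the conditional-quantile property $E[\psi_\tau(y_t-q_t(\bm\theta(\tau)))\mid\mathcal{F}_{t-1}]=0$ and positivity of $f_{t-1}$, and (iv) the uniform Amemiya/Chernozhukov-type sandwich. Your closing caveat that $\beta_1(\tau)$ is not identified at $\tau=0.5$ (since $\alpha_1(0.5)=0$), so $\mathcal{T}$ must stay away from $0.5$, is a legitimate observation that the paper leaves implicit rather than a defect of your argument.
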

	
	\begin{thm}\label{thm-WCQE-weak-convergence}
		For $\{y_t\}$ generated by model \eqref{qgarch11} with condition \eqref{constraint11}, suppose $E|y_t|^s<\infty$ for some $s\in (0,1)$ and the covariance kernel $\Sigma_w(\tau_1, \tau_2)$ is positive definite uniformly for $\tau_1=\tau_2=\tau\in\mathcal{T}$. 
		If Assumptions \ref{assum-Process}--\ref{assum-Tightness} hold, as $n\rightarrow\infty$, then we have 
		\begin{align}\label{Bahadur-representation}
			\sqrt{n}(\widetilde{\bm\theta}_{wn}(\cdot)-\bm\theta(\cdot)) = \Omega_{1w}^{-1}(\cdot)\bm T_n(\cdot) + o_p(1)
			\rightsquigarrow \mathbb{G}(\bm{\cdot}) \:\: \text{in} \:\: (\ell^{\infty}(\mathcal{T}))^3,
		\end{align}
		where the remainder term is uniform in $\tau\in\mathcal{T}$, and $\mathbb{G}(\bm{\cdot})$ is a zero mean Gaussian process with covariance kernel $\Sigma_w(\tau_1, \tau_2)$. 	
	\end{thm}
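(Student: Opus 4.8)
The plan is to establish the pointwise Bahadur representation at each $\tau$ by Pollard's bracketing, upgrade it to be uniform in $\tau$, and separately prove the weak convergence of the leading term $\bm T_n(\cdot)$ in $(\ell^{\infty}(\mathcal{T}))^3$; the claimed representation and Gaussian limit then follow by the continuous mapping theorem. I would first dispose of the initial values: on the compact $\Theta$ one has $\sup_{\bm\theta\in\Theta}\beta_1\le\bar\rho<1$, so $|\widetilde{q}_t(\bm\theta)-q_t(\bm\theta)|$ is bounded by $C\bar\rho^{\,t-1}\sum_{k\ge0}\bar\rho^{\,k}|y_{-k}|$ and similarly for the gradients; combined with $E(w_t)<\infty$ and the moment bounds in Assumption \ref{assum-RandomWeight} this shows that, after centering at $\bm\theta(\tau)$ and rescaling by $\sqrt{n}$, replacing $\widetilde{q}_t$ by $q_t$ alters the objective and its (sub)gradient by $o_p(1)$ uniformly over an $O(n^{-1/2})$ neighborhood of $\bm\theta(\tau)$ and over $\tau\in\mathcal{T}$, so I may work with $q_t$ throughout.

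For fixed $\tau$, set $\bm u=\sqrt{n}(\bm\theta-\bm\theta(\tau))$, $e_{t,\tau}=y_t-q_t(\bm\theta(\tau))$, $d_{t,n}(\bm u)=q_t(\bm\theta(\tau)+\bm u/\sqrt{n})-q_t(\bm\theta(\tau))$, and apply Knight's identity to $\mathcal{L}_n(\bm u,\tau)=\sum_{t=1}^n w_t[\rho_\tau(y_t-q_t(\bm\theta(\tau)+\bm u/\sqrt{n}))-\rho_\tau(e_{t,\tau})]$. The linear part equals $-\bm u'\bm T_n(\tau)$ up to a term controlled by $E(w_t\|\ddot q_t\|)<\infty$, which is finite because $\dot q_t$ and $\ddot q_t$ carry geometrically decaying weights and by Assumption \ref{assum-RandomWeight}; for the remaining term $\sum_t w_t\int_0^{d_{t,n}(\bm u)}[I(e_{t,\tau}\le s)-I(e_{t,\tau}\le0)]\,ds$, conditioning on $\mathcal{F}_{t-1}$ and using the boundedness of $f_{t-1},\dot f_{t-1}$ and positivity of $f_{t-1}$ (Assumption \ref{assum-ConditionalDensity}) gives conditional mean converging to $\tfrac12\bm u'\Omega_{1w}(\tau)\bm u$ and variance converging to zero. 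Hence $\mathcal{L}_n(\bm u,\tau)=-\bm u'\bm T_n(\tau)+\tfrac12\bm u'\Omega_{1w}(\tau)\bm u+o_p(1)$ pointwise in $\bm u$.

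The crux is to make this expansion hold uniformly over $\|\bm u\|\le M$ and over $\tau\in\mathcal{T}$. Here I would adopt the bracketing approach of Pollard (1985): the remainder is a self-weighted empirical process indexed by the non-convex, non-smooth class of maps $(\bm\theta,\tau)\mapsto w_t[\rho_\tau(y_t-q_t(\bm\theta))-\rho_\tau(y_t-q_t(\bm\theta(\tau)))]$ (and its gradient analogue $w_t\dot q_t(\bm\theta)\psi_\tau(y_t-q_t(\bm\theta))$), whose bracketing numbers I would bound from the Lipschitz dependence of $q_t,\dot q_t$ on $\bm\theta$, the Lipschitz dependence of $\bm\theta(\tau)$ on $\tau$ (Assumption \ref{assum-Tightness}), and the bounded variation of $\tau\mapsto\psi_\tau$; a maximal inequality for bracketed sums then yields the uniform $o_p(1)$. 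Since $\Omega_{1w}(\tau)$ is uniformly positive definite (a consequence of the uniform positive-definiteness of $\Sigma_w$), the quadratic majorant $-\bm u'\bm T_n(\tau)+\tfrac12\bm u'\Omega_{1w}(\tau)\bm u$ has the unique minimizer $\Omega_{1w}^{-1}(\tau)\bm T_n(\tau)=O_p(1)$, and, starting from the consistency of Theorem \ref{thm-WCQE-uniform-consistency} and using the growth of the majorant away from $\bm u=0$ (a peeling argument), the minimizer $\widehat{\bm u}_n(\tau)=\sqrt{n}(\widetilde{\bm\theta}_{wn}(\tau)-\bm\theta(\tau))$ is trapped in a uniformly bounded ball; a standard argmin lemma for convex majorants plus $o_p(1)$ remainder then delivers $\sqrt{n}(\widetilde{\bm\theta}_{wn}(\cdot)-\bm\theta(\cdot))=\Omega_{1w}^{-1}(\cdot)\bm T_n(\cdot)+o_p(1)$ with the remainder uniform in $\tau$.

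It remains to show $\bm T_n(\cdot)\rightsquigarrow\mathbb{G}_0(\cdot)$ in $(\ell^{\infty}(\mathcal{T}))^3$, a zero-mean Gaussian process with covariance kernel $(\min\{\tau_1,\tau_2\}-\tau_1\tau_2)\Omega_{0w}(\tau_1,\tau_2)$. Finite-dimensional convergence follows from the martingale central limit theorem, since $q_t(\bm\theta(\tau))$ is by construction the conditional $\tau$-quantile, so $\{w_t\dot q_t(\bm\theta(\tau_i))\psi_{\tau_i}(e_{t,\tau_i})\}$ are $\mathcal{F}_t$-martingale differences with conditional cross-covariance $(\min\{\tau_i,\tau_j\}-\tau_i\tau_j)\,w_t^2\dot q_t(\bm\theta(\tau_i))\dot q_t'(\bm\theta(\tau_j))$, whose average converges by ergodicity (Assumption \ref{assum-Process}) to $(\min\{\tau_i,\tau_j\}-\tau_i\tau_j)\Omega_{0w}(\tau_i,\tau_j)$, the Lindeberg condition holding under Assumption \ref{assum-RandomWeight}; asymptotic tightness comes from a stochastic-equicontinuity bound obtained by chaining over the $\tau$-indexed class using the same bracketing-entropy estimate. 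Since $\tau\mapsto\Omega_{1w}^{-1}(\tau)$ is continuous and bounded on $\mathcal{T}$, the continuous mapping theorem gives $\Omega_{1w}^{-1}(\cdot)\bm T_n(\cdot)\rightsquigarrow\mathbb{G}(\cdot):=\Omega_{1w}^{-1}(\cdot)\mathbb{G}_0(\cdot)$, a zero-mean Gaussian process whose covariance kernel is exactly $\Sigma_w(\tau_1,\tau_2)$. I expect the decisive obstacle to be precisely this interplay of non-convexity and non-smoothness with the required uniformity in $\tau$: designing the bracketing so that it simultaneously controls the Bahadur remainder (for the rate and the representation) and the tightness of $\bm T_n(\cdot)$, and verifying that the associated entropy integral is finite, is the technical heart of the argument, while the remaining ingredients—Knight's identity, the martingale CLT, the initial-value bound, and the final continuous-mapping step—are comparatively routine.
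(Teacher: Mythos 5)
Your proposal is correct and follows essentially the same route as the paper: Knight's identity plus Pollard's (1985) bracketing for the non-convex, non-smooth self-weighted objective, removal of the initial-value effect, a Bahadur expansion made uniform in $\tau$ via stochastic equicontinuity (exploiting the Lipschitz continuity of the coefficient functions in Assumption \ref{assum-Tightness}), martingale-CLT finite-dimensional convergence of $\bm T_n(\cdot)$, tightness, and the continuous mapping step through $\Omega_{1w}^{-1}(\cdot)$. The only difference is cosmetic: for tightness of $\bm T_n(\cdot)$ you invoke chaining with bracketing entropy, whereas the paper exploits the martingale-difference structure directly to prove the increment moment bound $E\bigl[\bm e_n^{\prime}\bm T_n(\tau_2)-\bm e_n^{\prime}\bm T_n(\tau_1)\bigr]^2\leq C|\tau_2-\tau_1|^{3/2}$ and applies Billingsley's moment criterion, which sidesteps the need for a maximal inequality for dependent, class-indexed sums.
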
	
	
	Owing to the self-weights, the above results hold for very heavy-tailed data with a finite fractional moment. 
	The proof of Theorem \ref{thm-WCQE-weak-convergence} is nontrivial. The first challenge comes from the non-convex and non-differentiable objective function of QR. Specifically, we need to prove the finite dimensional convergence of $\widetilde{\bm\theta}_{wn}(\tau)$, i.e., the $\sqrt{n}$-consistency of $\widetilde{\bm\theta}_{wn}(\tau)$ for each $\tau$ in the form of $\sqrt{n}(\widetilde{\bm\theta}_{wn}(\tau)-\bm\theta(\tau))=O_p(1)$. 
	We overcome this challenge by adopting the bracketing method in \cite{Pollard1985}. The second challenge is to obtain the Bahadur representation uniformly in $\tau\in\mathcal{T}$ and prove the asymptotic tightness of the leading term $\Omega_{1w}^{-1}(\cdot)\bm T_n(\cdot)$ in this representation. The key to accomplishing this is to verify the stochastic equicontinuity for all remainder terms and $\bm T_n(\cdot)$.

	In particular, when a fixed quantile level $\tau\in\mathcal{T}$ is considered, by the martingale central limit theorem (CLT), we can obtain the asymptotic normality of $\widetilde{\bm\theta}_{wn}(\tau)$ without the Lipschitz condition in Assumption \ref{assum-Tightness} as follows. 
	\begin{cor}\label{thm-WCQE}
		For $\{y_t\}$ generated by model \eqref{qgarch11} with condition \eqref{constraint11}, suppose $E|y_t|^s<\infty$ for some $s\in (0,1)$ and $\Sigma_{w}(\tau,\tau)$ is positive definite. 
		If Assumptions \ref{assum-Process}--\ref{assum-RandomWeight} hold, then $\sqrt{n}(\widetilde{\bm\theta}_{wn}(\tau)-\bm\theta(\tau))\rightarrow_d N\left(\bm 0,\Sigma_{w}(\tau,\tau)\right)$ as $n\rightarrow\infty$. 
	\end{cor}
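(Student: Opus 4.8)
The plan is to recognise that Corollary~\ref{thm-WCQE} is exactly the finite-dimensional ($\tau$ held fixed) part of Theorem~\ref{thm-WCQE-weak-convergence}: the Lipschitz condition in Assumption~\ref{assum-Tightness} enters the proof of that theorem only through the stochastic equicontinuity and asymptotic tightness of the quantile process over $\tau\in\mathcal{T}$, and is not needed when $\tau$ is fixed. Accordingly I would (i) establish the pointwise $\sqrt{n}$-consistency together with the Bahadur representation $\sqrt{n}(\widetilde{\bm\theta}_{wn}(\tau)-\bm\theta(\tau))=\Omega_{1w}^{-1}(\tau)\bm T_n(\tau)+o_p(1)$ — this is the same computation that constitutes the first step of the proof of Theorem~\ref{thm-WCQE-weak-convergence}, now invoked as a standalone fact — and then (ii) apply the martingale central limit theorem to the leading term $\bm T_n(\tau)$.

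For step (i), pointwise consistency $\widetilde{\bm\theta}_{wn}(\tau)\rightarrow_p\bm\theta(\tau)$ is already given by Theorem~\ref{thm-WCQE-uniform-consistency}, whose hypotheses are among those assumed here. To upgrade to the $\sqrt{n}$ rate I would first replace $\widetilde q_t(\bm\theta)$ by $q_t(\bm\theta)$, showing the initial-value discrepancy is $o_p(n^{-1/2})$ on a shrinking neighbourhood of $\bm\theta(\tau)$ via the geometric decay of $\beta_1^{j-1}$ and the weighted moments in Assumption~\ref{assum-RandomWeight}. Since $q_t(\bm\theta)$ is nonlinear in $\beta_1$, the objective is non-convex and the usual convexity lemma is unavailable; instead I would follow \cite{Pollard1985}, decomposing the recentred objective into a deterministic quadratic, a linear martingale-difference term, and an empirical-process remainder, and using Pollard's bracketing/chaining bound to show the remainder is $o_p(1)$ uniformly over $\|\bm\theta-\bm\theta(\tau)\|\le\delta$. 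The key structural fact is that $q_t(\bm\theta(\tau))$ is the $\tau$th conditional quantile of $y_t$, so $E[\psi_\tau(y_t-q_t(\bm\theta(\tau)))\mid\mathcal F_{t-1}]=\tau-F_{t-1}(q_t(\bm\theta(\tau)))=0$; a Taylor expansion of $E[\psi_\tau(y_t-q_t(\bm\theta))\mid\mathcal F_{t-1}]$ in $\bm\theta$ around $\bm\theta(\tau)$, using the boundedness of $f_{t-1}$ and $\dot f_{t-1}$ from Assumption~\ref{assum-ConditionalDensity}, produces the Hessian $\Omega_{1w}(\tau)$, which is nonsingular under the stated hypotheses (this is implicit in the positive-definiteness of $\Sigma_w(\tau,\tau)$). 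Since $\bm\theta(\tau)$ is interior (Assumption~\ref{assum-Space}(ii)), minimising the quadratic approximation yields the claimed Bahadur representation and, in particular, $\sqrt{n}(\widetilde{\bm\theta}_{wn}(\tau)-\bm\theta(\tau))=O_p(1)$.

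For step (ii), the sequence $\{w_t\dot q_t(\bm\theta(\tau))\psi_\tau(y_t-q_t(\bm\theta(\tau))),\ \mathcal F_t\}$ is a strictly stationary, ergodic martingale difference sequence, because $w_t$ and $\dot q_t(\bm\theta(\tau))$ are $\mathcal F_{t-1}$-measurable by Assumption~\ref{assum-RandomWeight} while $\psi_\tau(y_t-q_t(\bm\theta(\tau)))$ has zero $\mathcal F_{t-1}$-conditional mean. Its conditional covariance is $E[\psi_\tau(y_t-q_t(\bm\theta(\tau)))^2\mid\mathcal F_{t-1}]\,w_t^2\dot q_t(\bm\theta(\tau))\dot q_t'(\bm\theta(\tau))=\tau(1-\tau)\,w_t^2\dot q_t(\bm\theta(\tau))\dot q_t'(\bm\theta(\tau))$, whose ergodic average converges almost surely to $\tau(1-\tau)\Omega_{0w}(\tau,\tau)$, a finite matrix under Assumptions~\ref{assum-Process} and~\ref{assum-RandomWeight}; a Lindeberg/Lyapunov condition holds by stationarity and this finite second moment. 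Hence, by the Cram\'er--Wold device and the martingale central limit theorem for stationary ergodic sequences, $\bm T_n(\tau)\rightarrow_d N(\bm 0,(\tau-\tau^2)\Omega_{0w}(\tau,\tau))$, and combining with step (i) gives
\[
\sqrt{n}(\widetilde{\bm\theta}_{wn}(\tau)-\bm\theta(\tau))\rightarrow_d\Omega_{1w}^{-1}(\tau)\,N(\bm 0,(\tau-\tau^2)\Omega_{0w}(\tau,\tau))=N(\bm 0,\Sigma_w(\tau,\tau)).
\]

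I expect the main obstacle to be step (i): the non-convex, non-differentiable objective forces the \cite{Pollard1985} bracketing argument in place of a one-line convexity argument, and one must carefully bound both the empirical-process oscillation and the Taylor remainders — which involve $\dot q_t$ and its derivative, controlled through the geometric weights $\beta_1^{j-1}$ and the moment conditions $E(w_t)<\infty$ and $E(w_t|y_{t-j}|^3)<\infty$ — while keeping the analysis local around $\bm\theta(\tau)$. By comparison, the martingale CLT in step (ii) is routine.
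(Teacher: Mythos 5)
Your proposal is correct and follows essentially the same route as the paper: consistency from Theorem \ref{thm-WCQE-uniform-consistency}, a Knight-identity/quadratic expansion of the check-loss objective with the remainder controlled by Pollard's (1985) bracketing argument, separate treatment of the initial values, and then the martingale CLT with the Cramér--Wold device applied to $\bm T_n(\tau)$ to obtain the limit through the Bahadur representation. The only cosmetic difference is that the paper finishes by comparing the objective at $\widetilde{\bm\theta}_{wn}(\tau)-\bm\theta(\tau)$ with its value at $n^{-1/2}\Omega_{1w}^{-1}(\tau)\bm T_n$ to conclude the two are asymptotically equivalent, rather than phrasing this as "minimising the quadratic approximation," but the substance is identical.
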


	To estimate the asymptotic covariance $\Sigma_{w}(\tau,\tau)$ in Corollary \ref{thm-WCQE}, 
	we first estimate  $f_{t-1}(F_{t-1}^{-1}(\tau))$ in $\Omega_{1w}(\tau)$ using the difference quotient method \citep{Koenker2005}. Let $\widetilde{Q}_{\tau}(y_t|\mathcal{F}_{t-1})=\widetilde{q}_t(\widetilde{\bm\theta}_{wn}(\tau))$ be the fitted $\tau$th conditional quantile. We employ the estimator $\widetilde{f}_{t-1}(F_{t-1}^{-1}(\tau))=2\ell[\widetilde{Q}_{\tau+\ell}(y_t|\mathcal{F}_{t-1})-\widetilde{Q}_{\tau-\ell}(y_t|\mathcal{F}_{t-1})]^{-1}$, where $\ell$ is the bandwidth. As in \cite{Koenker_Xiao2006}, we consider two commonly used bandwidths for $\ell$ as follows: 
	\begin{equation}\label{bandwidths}
		\ell_{B}=n^{-1/5}\left\{\dfrac{4.5f_N^4(F_N^{-1}(\tau))}{[2F_N^{-2}(\tau)+1]^2}\right\}^{1/5}
		\quad\text{and}\quad
		\ell_{HS}=n^{-1/3}z_{\alpha}^{2/3}\left\{\dfrac{1.5f_N^2(F_N^{-1}(\tau))}{2F_N^{-2}(\tau)+1}\right\}^{1/3},
	\end{equation}
	where $f_N(\cdot)$ and $F_N(\cdot)$ are the standard normal density and distribution functions, respectively, and $z_{\alpha}=F_N^{-1}(1-\alpha/2)$ with $\alpha=0.05$. 
	Then the matrices $\Omega_{0w}(\tau,\tau)$ and $\Omega_{1w}(\tau)$ can be approximated by the sample averages:
	\begin{align*}
		\widetilde{\Omega}_{0w}(\tau,\tau)&=\dfrac{1}{n}\sum_{t=1}^{n}w_t^2\dot{\widetilde{q}}_t(\widetilde{\bm\theta}_{wn}(\tau))\dot{\widetilde{q}}_t^{\prime}(\widetilde{\bm\theta}_{wn}(\tau)) \quad\text{and} \\
		\widetilde{\Omega}_{1w}(\tau)&=\dfrac{1}{n}\sum_{t=1}^{n}\widetilde{f}_{t-1}(F_{t-1}^{-1}(\tau))w_{t}\dot{\widetilde{q}}_t(\widetilde{\bm\theta}_{wn}(\tau))\dot{\widetilde{q}}_t^{\prime}(\widetilde{\bm\theta}_{wn}(\tau)),
	\end{align*}
	where $\dot{\widetilde{q}}_t(\bm\theta)=(1,\sum_{j=1}^{t-1}\beta^{j-1}_{1}|y_{t-j}|,\alpha_{1}\sum_{j=2}^{t-1}(j-1)\beta^{j-2}_{1}|y_{t-j}|)^{\prime}$. Consequently, a consistent estimator of $\Sigma_{w}(\tau,\tau)$ can be constructed as $\widetilde{\Sigma}_{w}(\tau,\tau)=\tau(1-\tau)\widetilde{\Omega}_{1w}^{-1}(\tau)\widetilde{\Omega}_{0w}(\tau,\tau)\widetilde{\Omega}_{1w}^{-1}(\tau)$.
	
	\begin{remark}[Choices of self-weights]\label{remark-self-weighting}
		The goal of the self-weights $\{w_t\}$ is to relax the moment condition  from $E|y_t|^3<\infty$ to $E|y_t|^s<\infty$ for $s\in(0,1)$. If there is empirical evidence that $E|y_t|^3<\infty$ holds, then we can simply let $w_t=1$ for all $t$. Otherwise, the self-weights are needed. 
		There are many choices of random weights $\{w_t\}$ that satisfy Assumption \ref{assum-RandomWeight}. Note that the main role of $\{w_t\}$ in our technical proofs is to bound the term  $w_ty_{t-j}^{\delta}$ for $\delta\geq 1$ by $O(|y_{t-j}|^{s})$ for some $s\in (0,1)$. Following \cite{He_Hou_Peng_Shen2020}, we may consider 
		\begin{equation}\label{choice1_of_selfweights}
			w_t=\left(\sum_{i=0}^{\infty} e^{-\log^2(i+1)}\left\{I\left[|y_{t-i-1}|\leq c\right]+c^{-1}|y_{t-i-1}|I\left[|y_{t-i-1}|>c\right]\right\}\right)^{-3}
		\end{equation}
		for some given $c>0$, where $y_s$ is set to zero for $s\leq 0$. In our simulation and empirical studies, we take $c$ to be the 95\% sample quantile of $\{y_t\}_{t=1}^n$. 
	\end{remark}

	\begin{remark}[The quantile crossing problem]\label{remark-crossing}
		If we are only interested in estimating $Q_{\tau}(y_{t}|\mathcal{F}_{t-1})$ at a specific quantile level $\tau$, the L-BFGS-B algorithm \citep{zhu_Byrd_Lu_Nocedal1997} can be used to solve \eqref{PairwiseWCQE} with the constraint $\beta_1\in(0, 1)$. Then the estimate $\widetilde{Q}_{\tau}(y_{t}|\mathcal{F}_{t-1})=\widetilde{q}_t(\widetilde{\bm\theta}_{wn}(\tau))$ can be obtained for $Q_{\tau}(y_{t}|\mathcal{F}_{t-1})$. 
		As a more flexible approach, we may study multiple quantile levels simultaneously, say $\tau_1<\tau_2<\cdots<\tau_K$. However, the pointwise estimates $\{\widetilde{Q}_{\tau_k}(y_{t}|\mathcal{F}_{t-1})\}_{k=1}^K$ in practice may not be a monotonic increasing sequence even if $Q_{\tau}(y_{t}|\mathcal{F}_{t-1})$ is monotonic increasing in $\tau$. To overcome the  quantile crossing problem, we adopt the easy-to-implement rearrangement method \citep{Chernozhukov2010} to enforce the monotonicity of pointwise quantile estimates $\{\widetilde{Q}_{\tau_k}(y_{t}|\mathcal{F}_{t-1})\}_{k=1}^K$. 
		By Proposition 4 in \cite{Chernozhukov2010}, it can be shown that the rearranged quantile curve has smaller estimation error than the original one whenever the latter is not monotone; see also the simulation experiment in Section \ref{supp::Simulation rearrangement} of the Appendix.
	\end{remark}
	
	\begin{remark}[Rearranging coefficient functions]
		The proposed model in \eqref{QGARCH11-quantile} assumes that  $\omega(\cdot)$ and $\alpha_1(\cdot)$ are monotonic increasing. In practice, we can apply the method in  \cite{Chernozhukov2009} to  rearrange the estimates $\{\widetilde{\omega}_{wn}(\tau_k)\}_{k=1}^K$ and $\{\widetilde{\alpha}_{1wn}(\tau_k)\}_{k=1}^K$ to ensure the monotonicity of the curves across $\tau_k$'s. It is shown in \cite{Chernozhukov2009} that the rearranged confidence intervals are monotonic and narrower than the original ones.
	\end{remark}

	\subsection{Testing for constant persistence coefficient}\label{subsec-CvM}	   
	
	In this subsection, we present a test to determine if the persistence coefficient $\beta_1(\tau)$ is independent of the quantile level $\tau$ for $\tau\in \mathcal{T}\subset (0,1)$. This problem can be cast as a more general hypothesis testing problem as follows: 
	\begin{equation}\label{Hypotheses_constant}
		H_0:\ \forall\tau\in\mathcal{T}, \; R\bm\theta(\tau)=r  \quad \text{versus} \quad H_1:\ \exists\tau\in\mathcal{T}, \;R\bm\theta(\tau)\neq r,
	\end{equation}
	where $R$ is a predetermined row vector, and  $r \in \Gamma$ denotes a parameter whose specific value is unknown, but it is known to be independent of  $\tau$. Here the parameter space $\Gamma$ contains all values $R\bm\theta(\tau)$ can take under the proposed model.  
	Then, we can write the hypotheses for testing  the constancy of $\beta_1(\tau)$ in the form of \eqref{Hypotheses_constant} by setting $R=(0,0,1)$ and $r=\beta_1\in \Gamma = (0,1)$. In this case,  the null hypothesis in \eqref{Hypotheses_constant} means that $\beta_1(\tau)$ does not vary cross quantiles.

	For generality, we present the result for the general problem in  \eqref{Hypotheses_constant}. Under $H_0$, we can estimate the unknown $r$ using $\widetilde{r}=\int_{\mathcal{T}}R\widetilde{\bm\theta}_{wn}(\tau)d\tau$.  
	Define the inference process 
	$v_{n}(\tau)=R\widetilde{\bm\theta}_{wn}(\tau)-\widetilde{r}=R[\widetilde{\bm\theta}_{wn}(\tau)-\int_{\mathcal{T}}\widetilde{\bm\theta}_{wn}(\tau)d\tau]$.
	To test $H_0$, we construct the Cram\'{e}r-von Misses (CvM) test statistic as follows:
	\begin{align}\label{CvM_constant}
		S_{n}=n\int_{\mathcal{T}}v_{n}^2(\tau)d\tau.
	\end{align}
	
	Let $\sigma(\tau_1,\tau_2)=R[\Sigma_{w}(\tau_1,\tau_2)+\int_{\mathcal{T}}\int_{\mathcal{T}}\Sigma_{w}(\tau,\tau^{\prime})d\tau d\tau^{\prime}-\int_{\mathcal{T}}\Sigma_{w}(\tau_1,\tau)d\tau-\int_{\mathcal{T}}\Sigma_{w}(\tau,\tau_2)d\tau]R^{\prime}$. 
	Denote $v_{0}(\tau)=R[\mathbb{G}(\tau)-\int_{\mathcal{T}}\mathbb{G}(\tau)d\tau]$ with $\mathbb{G}(\tau)$ defined in Theorem \ref{thm-WCQE-weak-convergence}. 
	
	\begin{cor}\label{thm-test1}
		Suppose the conditions of Theorem \ref{thm-WCQE-weak-convergence} hold. 
		Under $H_0$, then we have $S_{n}\to_d S\equiv\int_{\mathcal{T}}v_{0}^2(\tau)d\tau$ as $n\to\infty$.
		If the covariance function of $v_{0}(\bm{\cdot})$ is nondegenerate, that is, $\sigma(\tau,\tau)>0$ uniformly in $\tau\in\mathcal{T}$, then $\text{Pr}(S_{n}>c_{\alpha})\to\text{Pr}(S>c_{\alpha})=\alpha$, where the critical value $c_{\alpha}$ is chosen such that $\text{Pr}(S>c_{\alpha})=\alpha$.
	\end{cor}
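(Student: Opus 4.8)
The plan is to derive the limiting distribution of $S_n$ by combining the weak convergence result in Theorem \ref{thm-WCQE-weak-convergence} with the continuous mapping theorem, and then to deduce the asymptotic size statement from the weak convergence of $S_n$ together with a continuity argument for the limiting random variable $S$.

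First I would establish the weak convergence of the inference process $v_n(\cdot)$. Under $H_0$ we have $R\bm\theta(\tau)=r$ for all $\tau\in\mathcal{T}$, so that $R\int_{\mathcal{T}}\bm\theta(\tau)\,d\tau = r$ as well, and therefore
\begin{align*}
\sqrt{n}\,v_n(\tau)
&= \sqrt{n}\,R\!\left[\widetilde{\bm\theta}_{wn}(\tau)-\bm\theta(\tau)\right]
- R\int_{\mathcal{T}}\sqrt{n}\!\left[\widetilde{\bm\theta}_{wn}(\tau')-\bm\theta(\tau')\right]d\tau'.
\end{align*}
By Theorem \ref{thm-WCQE-weak-convergence}, $\sqrt{n}(\widetilde{\bm\theta}_{wn}(\cdot)-\bm\theta(\cdot))\rightsquigarrow \mathbb{G}(\cdot)$ in $(\ell^{\infty}(\mathcal{T}))^3$. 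The map $\bm g\mapsto R\bm g(\cdot) - R\int_{\mathcal{T}}\bm g(\tau')\,d\tau'$ from $(\ell^{\infty}(\mathcal{T}))^3$ to $\ell^{\infty}(\mathcal{T})$ is linear and continuous (the integral functional is bounded on $\ell^{\infty}(\mathcal{T})$ since $\mathcal{T}$ has finite Lebesgue measure), so the continuous mapping theorem gives $\sqrt{n}\,v_n(\cdot)\rightsquigarrow v_0(\cdot)$ in $\ell^{\infty}(\mathcal{T})$, where $v_0(\tau)=R[\mathbb{G}(\tau)-\int_{\mathcal{T}}\mathbb{G}(\tau')\,d\tau']$ is a zero-mean Gaussian process. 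A routine computation of its covariance, using $\mathrm{cov}(\mathbb{G}(\tau_1),\mathbb{G}(\tau_2))=\Sigma_w(\tau_1,\tau_2)$ and interchanging expectation with the (bounded) integrals, yields exactly $\sigma(\tau_1,\tau_2)$ as defined before the statement. Then, since $\bm g\mapsto \int_{\mathcal{T}} g^2(\tau)\,d\tau$ is continuous from $\ell^{\infty}(\mathcal{T})$ to $\mathbb{R}$ (being dominated by $|\mathcal{T}|\,\|g\|_{\infty}^2$ and Lipschitz on bounded sets), another application of the continuous mapping theorem gives $S_n = \int_{\mathcal{T}}(\sqrt{n}\,v_n(\tau))^2\,d\tau \to_d \int_{\mathcal{T}} v_0^2(\tau)\,d\tau = S$.

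For the second assertion, I would argue that when $\sigma(\tau,\tau)>0$ uniformly on $\mathcal{T}$, the law of $S$ is continuous on $(0,\infty)$, so that $\mathrm{Pr}(S>c_\alpha)=\alpha$ has a solution and the convergence $S_n\to_d S$ upgrades to convergence of the tail probabilities at $c_\alpha$. Concretely, $S=\int_{\mathcal{T}}v_0^2(\tau)\,d\tau$ is an (almost surely finite) weighted sum of chi-square-type contributions from a nondegenerate Gaussian process; by a Karhunen–Loève expansion $S=\sum_k \lambda_k \xi_k^2$ with $\xi_k$ i.i.d.\ standard normal and $\sum_k\lambda_k=\int_{\mathcal{T}}\sigma(\tau,\tau)\,d\tau>0$, which admits a density on $(0,\infty)$ whenever at least one $\lambda_k>0$; hence its distribution function is continuous and strictly increasing on its support, so $c_\alpha$ is well defined and is a continuity point of the limiting c.d.f. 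Then $\mathrm{Pr}(S_n>c_\alpha)\to\mathrm{Pr}(S>c_\alpha)=\alpha$, as claimed.

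The main obstacle I anticipate is not the continuous mapping steps, which are standard, but verifying that the integral and squared-integral functionals are genuinely continuous on the relevant subset of $\ell^{\infty}(\mathcal{T})$ and that the limiting variable $S$ has a continuous distribution. The former requires care because $\ell^{\infty}(\mathcal{T})$ contains non-measurable functions, so one should restrict to the separable support of $\mathbb{G}$ (which, by the Lipschitz-continuity Assumption \ref{assum-Tightness} and the structure of $\bm T_n(\cdot)$, consists of continuous sample paths) and invoke the continuous mapping theorem in the form valid for weak convergence in the Hoffmann–Jørgensen sense; the latter relies on the nondegeneracy condition $\sigma(\tau,\tau)>0$ to rule out the atom of $S$ at zero and to guarantee that $c_\alpha$ is a continuity point of the c.d.f.\ of $S$.
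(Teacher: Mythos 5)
Your proposal is correct and follows essentially the same route as the paper: under $H_0$ you make the same decomposition $\sqrt{n}v_n(\tau)=R\sqrt{n}(\widetilde{\bm\theta}_{wn}(\tau)-\bm\theta(\tau))-R\int_{\mathcal{T}}\sqrt{n}(\widetilde{\bm\theta}_{wn}(\tau)-\bm\theta(\tau))d\tau$ and apply the continuous mapping theorem to the weak convergence in Theorem \ref{thm-WCQE-weak-convergence}, first for the centering map and then for the squared-integral functional, exactly as in the paper's proof. The only difference is a sub-step: where the paper invokes Theorem 11.1 of \cite{Davydov1998} for the absolute continuity of the law of $S$ under the nondegeneracy condition, you substitute a Karhunen--Lo\`{e}ve/chi-square-mixture argument, which is a valid self-contained alternative and in fact makes the final claim $\text{Pr}(S_n>c_\alpha)\to\text{Pr}(S>c_\alpha)=\alpha$ (existence of $c_\alpha$ and continuity at it) more explicit than the paper does.
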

	Corollary \ref{thm-test1} indicates that we can reject $H_0$ if $S_{n}>c_{\alpha}$ at the significance level $\alpha$. In practice, we can use a grid of values $\mathcal{T}_n$ in place of $\mathcal{T}$. Similar to Corollary 3 in \cite{Chernozhukov2006}, we can verify that Corollary \ref{thm-test1} still holds for the discretization if the largest cell size of $\mathcal{T}_n$, denoted as $\delta_n$, satisfies $\delta_n\to 0$ as $n\to\infty$.

	Note that the CvM test in \eqref{CvM_constant} is not asymptotically distribution-free due to the estimation of $r$, which is commonly known as the Durbin problem \citep{Durbin1973}. This complicates the approximation of the limiting null distribution of $S_{n}$ and the resulting critical value $c_{\alpha}$. We suggest approximating the limiting null distribution by subsampling the linear approximation of the inference process $v_{n}(\tau)$; see also \cite{Chernozhukov2006}. This approach is computationally efficient as it avoids the repeated estimation  over the resampling steps for many values of $\tau$. 	
	Specifically, by Theorem \ref{thm-WCQE-weak-convergence}, under $H_0$ we have
	\begin{align}
		\sqrt{n}v_{n}(\tau)= \dfrac{1}{\sqrt{n}}\sum_{t=1}^n z_{t}(\tau) + o_p(1),
	\end{align}
	where $z_{t}(\tau)=R[m_t(\tau)-\int_{\mathcal{T}}m_t(\tau)d\tau]$, with $m_t(\tau)=w_t\Omega_{1w}^{-1}(\tau)\dot{q}_t(\bm\theta(\tau))\psi_{\tau}(y_t-q_t(\bm\theta(\tau)))$.	
	By the consistency of $\widetilde{\bm\theta}_{wn}(\tau)$ in  Theorem \ref{thm-WCQE-uniform-consistency}, we can estimate $z_{t}(\tau)$ using $\widetilde{z}_{t}(\tau)=R[\widetilde{m}_t(\tau)-\int_{\mathcal{T}}\widetilde{m}_t(\tau)d\tau]$, where $\widetilde{m}_t(\tau)=w_t\widetilde{\Omega}_{1w}^{-1}(\tau)\dot{\widetilde{q}}_t(\widetilde{\bm\theta}_{wn}(\tau))\psi_{\tau}(y_t-\widetilde{q}_t(\widetilde{\bm\theta}_{wn}(\tau)))$. Thus, a sample of estimated scores $\{\widetilde{z}_{t}(\tau), \tau\in\mathcal{T}, 1\leq t\leq n\}$ is obtained, where $n$ is the sample size. Then a subsampling procedure is conducted as follows. Given a block size $b_n$, we consider $L_n=n-b_n+1$ overlapping blocks  of the sample, indexed by $B_k=\{k, k+1,\ldots,k+b_n-1\}$ for $k=1,\ldots, L_n$. For each block $B_k$, we compute the inference process $v_{k,b_n}(\tau)=b_n^{-1}\sum_{t\in B_k}\widetilde{z}_{t}(\tau)$ and define $S_{k,b_n}=b_n\int_{\mathcal{T}}v_{k,b_n}^2(\tau)d\tau$. Then the critical value $c_{\alpha}$ can be calculated as the $(1-\alpha)$th empirical quantile of $\{S_{k,b_n}\}_{k=1}^{L_n}$.
	
	To establish the asymptotic validity of the subsampling procedure above, we can use a method similar to the proof of Theorem 5 in \cite{Chernozhukov2006}. This is possible under the conditions of Theorem \ref{thm-WCQE-weak-convergence} and an $\alpha$-mixing condition on ${y_t}$, provided that $L_n\to\infty$, $b_n\to\infty$, and $b_n/n\to 0$ as $n\to\infty$. However, we leave the rigorous proof for future research.
	Following \cite{Shao2011bootstrap}, we consider $b_n=\lfloor cn^{1/2}\rfloor$ with a positive constant $c$, where $\lfloor x \rfloor$ stands for the integer part of $x$. Our simulation study shows that the CvM test has reasonable size and power when  $c=0.5,1$ or 2.

	\section{Composite quantile regression}\label{Sec-CQR}
	
	\subsection{Self-weighted estimation}\label{subsec-CQR}
	
	It is well known that the QR can be unstable when $\tau$ is very close to zero or one due to data scarcity \citep{Li_Wang2019}.   
	However, estimating high conditional quantiles is of great interest in financial risk management. As a remedy, this section proposes the composite quantile regression (CQR).  To estimate the conditional quantile at a target level $\tau_0 \in \mathcal{T}\subset (0,0.01]\cup[0.99,1)$, the main idea is to conduct extrapolation based on estimation results of intermediate quantile levels at the one-sided neighbourhood of $\tau_0$.
	
	Suppose that $\{y_t\}$ follows the quantile GARCH($1,1$) model in \eqref{qgarch11}.
	Note that the conditional quantile function $Q_{\tau}(y_t|\mathcal{F}_{t-1})$  cannot be extrapolated directly due to the unknown nonparametric coefficient functions. To develop a feasible and easy-to-use extrapolation approach, we leverage the close connection between the linear GARCH($1,1$) process in \eqref{lgarch11} and quantile GARCH($1,1$) process in \eqref{qgarch11}. First, we approximate $y_t$ in \eqref{qgarch11} by the linear GARCH($1,1$) model in \eqref{lgarch11}. 
	Then, the $\tau$th conditional quantile of $y_t$ in \eqref{QGARCH11-quantile} can be approximated  by that of the linear GARCH($1,1$) model in \eqref{qlgarch11}:
	\begin{equation}\label{approxqlgarch11}
		Q_{\tau}(y_t|\mathcal{F}_{t-1})\approx Q_{\tau}(\varepsilon_t)\left(\frac{a_0}{1-b_1}+a_1\sum_{j=1}^{\infty} b_1^{j-1}|y_{t-j}|\right), 
	\end{equation}
	where $\varepsilon_t$'s are the $i.i.d.$ innovations of the linear GARCH($1,1$) model. 
	If the quantile function $Q_{\tau}(\varepsilon_t)$ has an explicit parametric form, then \eqref{approxqlgarch11} will be fully parametric and hence can be easily used for extrapolation of conditional quantiles of $y_t$ at high levels. 
	While this parametric approximation will induce a  bias, the gain is greater estimation efficiency at high quantile levels; see more discussions on the bias-variance trade-off in Section \ref{remark-bandwidth}. 
	
	Next we need  a suitable distribution of $\varepsilon_t$ such that the tail behavior can be flexibly captured. 
	There are many choices such that $Q_{\tau}(\varepsilon_t)$ has an explicit form,  including  distributions in lambda and Burr families \citep{Gilchrist2000}. We choose the Tukey-lambda distribution since it provides a wide range of shapes. It can not only approximate Gaussian and Logistic distributions but also fit heavy Pareto tails  well. 
	Given that $\varepsilon_t$ follows the Tukey-lambda distribution with shape parameter $\lambda \neq 0$ \citep{Joiner_Rosenblatt1971}, $Q_{\tau}(\varepsilon_t)$ has a simple explicit form given by
	\begin{equation}\label{TukeyLambda}
		Q_{\tau}(\lambda) := Q_{\tau}(\varepsilon_t;\lambda) = \dfrac{\tau^{\lambda}-(1-\tau)^{\lambda}}{\lambda}. 
	\end{equation}
	Combining \eqref{approxqlgarch11} and \eqref{TukeyLambda}, we can approximate the conditional quantile $Q_{\tau}(y_t|\mathcal{F}_{t-1})$ by 
	\[
	q_{t,\tau}(\bm\varphi) = Q_{\tau}(\lambda) \left(\frac{a_0}{1-b_1}+a_1\sum_{j=1}^{\infty} b_1^{j-1}|y_{t-j}|\right):= Q_{\tau}(\lambda)h_t(\bm\phi), 
	\]
	where $\bm\varphi=(\bm\phi^{\prime}, \lambda)^{\prime}=(a_0, a_1, b_1, \lambda)^{\prime}$ is the parameter vector of linear GARCH($1,1$) model with $\varepsilon_t$ following the Tukey-lambda distribution.  
	Note that $Q_{0.5}(\lambda)=0$ for any $\lambda$. Thus, $q_{t,0.5}(\bm\varphi)=0$ holds for any $\bm\varphi$, i.e.,  the location constraint on $Q_{\tau}(y_t|\mathcal{F}_{t-1})$ in \eqref{location11} is satisfied.

	Since $q_{t,\tau}(\bm\varphi)$ depends on unobservable values of $y_t$ in the infinite past, in practice we initialize $y_t=0$ for $t\leq 0$ and define its feasible counterpart as 
	\[\widetilde{q}_{t,\tau}(\bm\varphi) = Q_{\tau}(\lambda)\left(\frac{a_0}{1-b_1}+a_1\sum_{j=1}^{t-1} b_1^{j-1}|y_{t-j}|\right):=Q_{\tau}(\lambda)\widetilde{h}_t(\bm\phi).\]
	The initialization effect is asymptotically negligible, as we verify in our technical proofs.
	Note that $\widetilde{q}_{t,\tau}(\bm\varphi)$ is fully parametric. Since $\bm\varphi$ is independent of  $\tau$, we can approximate the nonparametric  function $Q_{\tau_0}(y_t|\mathcal{F}_{t-1})$ by the parametric function $\widetilde{q}_{t,\tau_0}(\bm\varphi)$, where we replace  $\bm\varphi$ with an estimator obtained by fitting the above Tukey-lambda linear GARCH($1,1$) model at lower quantile levels.

	Let $\Phi \subset (0,\infty)\times[0,\infty)\times [0,1)\times \Lambda$ be the parameter space of $\bm\varphi$, where $\Lambda=(-\infty,0)\cup (0,\infty)$ is the parameter space of $\lambda$. 
	To estimate $\bm\varphi$ locally for the target level $\tau_0$, we utilize the information at lower quantile levels  in the one-sided neighborhood of $\tau_0$, namely $\mathcal{T}_h=[\tau_0,\tau_0+h]\subset (0,0.5)$ if $\tau_0$ is close to zero and $\mathcal{T}_h=[\tau_0-h,\tau_0]\subset (0.5,1)$ if $\tau_0$ is close to one, where $h>0$ is a fixed bandwidth; see Section \ref{remark-bandwidth} for discussions on the selection of bandwidth $h$.
	If $Q_{\tau}(y_t|\mathcal{F}_{t-1})$ is well approximated by $q_{t,\tau}(\bm\varphi)$ for $\tau\in\mathcal{T}_h$, then we can estimate $\bm\varphi$ by the weighted CQR as follows:
	\begin{equation}\label{CQR} 
		\check{\bm\varphi}_{wn} = (\check{\bm\phi}_{wn}^{\prime}, \check{\lambda}_{wn})^{\prime} =\argmin_{\bm\varphi\in\Phi}\sum_{t=1}^n \sum_{k=1}^K w_t\rho_{\tau_k} \left(y_t-\widetilde{q}_{t,\tau_k}(\bm\varphi)\right),
	\end{equation} 
	where $\{w_t\}$ are the  self-weights defined as in \eqref{PairwiseWCQE}, and $\tau_1 < \cdots < \tau_K$ are fixed quantile levels with $\tau_k \in \mathcal{T}_h$ for all $1\leq k\leq K$; see also \cite{Zou_Yuan2008}. 
	In practice, equally spaced levels are typically used. That is, $\tau_k=\tau_0+h(k-1)/(K-1)$ if $\tau_0$ is close to zero, whereas $\tau_k=\tau_0-h(k-1)/(K-1)$ if $\tau_0$ is close to one. 
	As a result, the conditional quantile $Q_{\tau_0}(y_t|\mathcal{F}_{t-1})$ can be approximated by $\widetilde{q}_{t,\tau_0}(\check{\bm\varphi}_{wn})$.
	
	\subsection{Asymptotic properties}
	Note that the approximate conditional quantile function $q_{t,\tau}(\bm\varphi)$ can be rewritten using the true conditional quantile function $q_{t}(\cdot)$ as follows:
	\begin{equation}\label{eq:qt}
		q_{t,\tau}(\bm\varphi) = \dfrac{a_{0}Q_{\tau}(\lambda)}{1-b_{1}}+a_{1}Q_{\tau}(\lambda)\sum_{j=1}^{\infty} b_{1}^{j-1}|y_{t-j}| := q_{t}(\bm\theta_{\tau}^*), 
	\end{equation}
	where $\bm\theta_{\tau}^*=g_{\tau}(\bm\varphi)=(a_{0}Q_{\tau}(\lambda)/(1-b_{1}),a_{1}Q_{\tau}(\lambda),b_{1})^{\prime}$, and  $g_{\tau}: \mathbb{R}^4\to\mathbb{R}^3$ is a measurable function such that $q_{t,\tau}=q_{t} \circ g_{\tau}$. 
	Let $\check{\bm\theta}_{wn}^*(\tau):=g_{\tau}(\check{\bm\varphi}_{wn})$ be the transformed CQR estimator. In view of \eqref{eq:qt} and the fact that  $Q_{\tau}(y_t|\mathcal{F}_{t-1})=q_t(\bm\theta(\tau))$,  $\check{\bm\theta}_{wn}^*(\tau)$ can be used as an estimator of $\bm\theta(\tau)$; see \eqref{QGARCH11-quantile} and the definition of $q_t(\cdot)$ in Section \ref{subsec-WCQE}. 
	The pseudo-true parameter vector $\bm\varphi_0^*=(\bm\phi_0^{\prime}, \lambda_0)^{\prime}=(a_{00}, a_{10}, b_{10}, \lambda_0)^{\prime}$ is defined as
	\begin{equation}\label{pseudo-true}
		\bm\varphi_0^*=\argmin_{\bm\varphi\in\Phi}\sum_{k=1}^K E[w_t\rho_{\tau_k} \left(y_t-q_{t,\tau_k}(\bm\varphi)\right)], \quad \tau_k\in\mathcal{T}_h.
	\end{equation}
	In other words, for $\tau\in\mathcal{T}_h$, the best approximation of the nonparametric function $Q_{\tau}(y_t|\mathcal{F}_{t-1})=q_{t}(\bm\theta(\tau))$ via the fully parametric function $q_{t,\tau}(\cdot)$ is given by $q_{t,\tau}(\bm\varphi_0^*)=q_{t}(g_{\tau}(\bm\varphi_0^*))$.

	In general, $Q_{\tau}(y_t|\mathcal{F}_{t-1})$ may be misspecified by $q_{t,\tau}(\bm\varphi_0^*)$, and $\bm\theta(\tau)=g_{\tau}(\bm\varphi_0^*)$ may not hold for all $\tau$. Thus, asymptotic properties of the CQR estimator $\check{\bm\varphi}_{wn}$ and its transformation $\check{\bm\theta}_{wn}^*(\tau)=g_{\tau}(\check{\bm\varphi}_{wn})$ should be established under possible model misspecification. The following assumptions will be required.
	
	\begin{assum}\label{assum-Process-Mixing} 
		$\{y_t\}$ is a strictly stationary and $\alpha$-mixing time series with the mixing coefficient $\alpha(n)$ satisfying $\sum_{n\geq 1}[\alpha(n)]^{1-2/\delta}<\infty$ for some $\delta>2$.
	\end{assum}
	
	\begin{assum}\label{assum-SpaceTukey}
		(i) The parameter space $\Phi$ is compact and $\bm\varphi_0^*$ is unique; 
		(ii) $\bm\varphi_0^*$ is an interior point of $\Phi$.
	\end{assum}
	
	Note that Assumption \ref{assum-Process} is insufficient for the asymptotic normality of $\check{\bm\varphi}_{wn}$ under model misspecification, since $E[\psi_{\tau}(y_t-q_{t,\tau}(\bm\varphi_0^*))|\mathcal{F}_{t-1}]\neq 0$  in this case, which renders the martingale CLT no longer applicable. Instead, we rely on Assumption \ref{assum-Process-Mixing} to  ensure  the ergodicity of $\{y_t\}$ and enable the use of the CLT for $\alpha$-mixing sequences;  see \cite{Fan_Yao2003} and more discussions in Remark \ref{mixing}.
	Assumption \ref{assum-SpaceTukey} is analogous to Assumption \ref{assum-Space}, which is standard in the literature on  GARCH models \citep{Francq_Zakoian2010, Zheng_Zhu_Li_Xiao2018}.
	If there is no model misspecification, i.e. $Q_{\tau}(y_t|\mathcal{F}_{t-1})$ is correctly specified by $q_{t,\tau}(\bm\varphi_0^*)$ for all $\tau\in\mathcal{T}_h$, then the uniqueness of $\bm\varphi_0^*$ can be guaranteed for $K\geq 3$ and $\lambda<1$.

	Let $\dot{q}_{t,\tau}(\bm\varphi)$ and $\ddot{q}_{t,\tau}(\bm\varphi)$ be the first and second derivatives of $q_{t,\tau}(\bm\varphi)$ with respect to $\bm\varphi$, respectively, given by
	\[\dot{q}_{t,\tau}(\bm\varphi)=(Q_{\tau}(\lambda)\dot{h}_t^{\prime}(\bm\phi),\dot{Q}_{\tau}(\lambda)h_t(\bm\phi))^{\prime}
	\quad\text{and}\quad
	\ddot{q}_{t,\tau}(\bm\varphi)=\begin{pmatrix}
		Q_{\tau}(\lambda)\ddot{h}_t(\bm\phi) & \dot{Q}_{\tau}(\lambda)\dot{h}_t(\bm\phi) \\ 
		\dot{Q}_{\tau}(\lambda)\dot{h}_t^{\prime}(\bm\phi) & \ddot{Q}_{\tau}(\lambda)h_t(\bm\phi)
	\end{pmatrix},
	\] 
	where $\dot{Q}_{\tau}(\lambda)$ and $\dot{h}_t(\bm\phi)$ (or $\ddot{Q}_{\tau}(\lambda)$ and $\ddot{h}_t(\bm\phi)$) are the first (or second) derivatives of $Q_{\tau}(\lambda)$ and $h_t(\bm\phi)$, respectively.    
	Denote $\bm X_t=\sum_{k=1}^K w_t\dot{q}_{t,\tau_k}(\bm\varphi_0^*)\psi_{\tau_k}(y_t-q_{t,\tau_k}(\bm\varphi_0^*))$ and $\Omega_{0w}^*=E(\bm X_t\bm X_t^{\prime})+n^{-1}\sum_{t\neq s}^nE(\bm X_t\bm X_s^{\prime})$. Define the matrices
	\[
	\Omega_{11}^*=\sum_{k=1}^KE\left[w_t\ddot{q}_{t,\tau_k}(\bm\varphi_0^*)\psi_{\tau_k}(y_t-q_{t,\tau_k}(\bm\varphi_0^*))\right]
	\;\;\text{and}\;\;
	\Omega_{12}^*=\sum_{k=1}^KE\left[w_tf_{t-1}(q_{t,\tau_k}(\bm\varphi_0^*))\dot{q}_{t,\tau_k}(\bm\varphi_0^*)\dot{q}_{t,\tau_k}^{\prime}(\bm\varphi_0^*)\right].
	\]
	Let $\Omega_{1w}^*=\Omega_{12}^*-\Omega_{11}^*$ and $\Sigma_w^*=\Omega_{1w}^{*-1}\Omega_{0w}^*\Omega_{1w}^{*-1}$.  
	
	\begin{thm}\label{thm-WCQR}
		For $\{y_t\}$ generated by model \eqref{qgarch11} with condition \eqref{constraint11}, suppose $E|y_t|^s<\infty$ for some $s\in (0,1)$ and $\Sigma_w^*$ is positive definite. 
		If Assumptions \ref{assum-ConditionalDensity}, \ref{assum-RandomWeight}, \ref{assum-Process-Mixing}, \ref{assum-SpaceTukey}(i) hold, then as $n\rightarrow\infty$, we have (i) $\check{\bm\varphi}_{wn} \to_p \bm\varphi_0^*$. Moreover, if Assumption \ref{assum-SpaceTukey}(ii) further holds, then (ii) $\sqrt{n}(\check{\bm\varphi}_{wn}-\bm\varphi_0^*)\rightarrow_d N\left(\bm 0,\Sigma_w^*\right)$; and (iii) $\sqrt{n}(\check{\bm\theta}_{wn}^*(\tau)-\bm\theta(\tau)-B(\tau))\rightarrow_d N\left(\bm 0,g_{\tau}(\bm\varphi_0^*)\Sigma_w^*g_{\tau}^{\prime}(\bm\varphi_0^*)\right)$, where $B(\tau)=g_{\tau}(\bm\varphi_0^*)-\bm\theta(\tau)$ is a systematic bias. 
	\end{thm}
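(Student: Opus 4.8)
\noindent\textbf{Proof strategy for Theorem \ref{thm-WCQR}.}
The plan is to treat \eqref{CQR} as a non-convex, non-differentiable M-estimation problem and run the usual three-stage argument---consistency, a local quadratic expansion of the recentred and rescaled criterion, and a central limit theorem for the score---while keeping track of the two features specific to this setting: the criterion depends on the infinite past, so the initialisation $\widetilde q_{t,\tau_k}$ must be shown asymptotically negligible, and $q_{t,\tau_k}(\bm\varphi_0^*)$ need not equal the true conditional quantile, so the score $\bm X_t$ is not a martingale difference. The recurring technical device is that on the compact $\Phi$ one has $b_1\le\bar b<1$ and $\lambda$ bounded away from $0$, so that $h_t(\bm\phi)$, $\dot h_t(\bm\phi)$, $Q_\tau(\lambda)$ and their derivatives are dominated by absolutely summable geometrically weighted sums of the $|y_{t-j}|$; together with Assumption \ref{assum-RandomWeight} and the elementary bound $|y|\le 1+|y|^3$, this controls all moments of $w_t\sup_{\bm\varphi\in\Phi}\|\dot q_{t,\tau_k}(\bm\varphi)\|$ and $w_t\sup_{\bm\varphi\in\Phi}\|\ddot q_{t,\tau_k}(\bm\varphi)\|$ that appear below, and it makes the initialisation error $w_t\|\dot{\widetilde q}_{t,\tau_k}(\bm\varphi)-\dot q_{t,\tau_k}(\bm\varphi)\|$ of exact order $O_p(\bar b^{\,t})$, so that its contributions to any $\sqrt n$-scaled sum are $o_p(1)$.

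For part (i), I would first establish the uniform law of large numbers $\sup_{\bm\varphi\in\Phi}\big|n^{-1}\sum_{t=1}^n\sum_{k=1}^K w_t\rho_{\tau_k}(y_t-\widetilde q_{t,\tau_k}(\bm\varphi))-L(\bm\varphi)\big|\to_p 0$, where $L(\bm\varphi)=\sum_{k=1}^K E[w_t\rho_{\tau_k}(y_t-q_{t,\tau_k}(\bm\varphi))]$: ergodicity, which follows from the $\alpha$-mixing Assumption \ref{assum-Process-Mixing}, gives pointwise convergence; the Lipschitz continuity of $\rho_\tau$ and of $\bm\varphi\mapsto q_{t,\tau_k}(\bm\varphi)$ together with an integrable envelope gives equicontinuity; and a finite net over the compact $\Phi$ upgrades this to uniformity. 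Assumption \ref{assum-ConditionalDensity} ensures $L$ is differentiable, so the pseudo-true value in \eqref{pseudo-true} is characterised by the first-order condition $E\bm X_t=\bm 0$. Since $L$ is continuous and, by Assumption \ref{assum-SpaceTukey}(i), uniquely minimised at $\bm\varphi_0^*$, the standard argmin argument then yields $\check{\bm\varphi}_{wn}\to_p\bm\varphi_0^*$.

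For part (ii), consistency together with Assumption \ref{assum-SpaceTukey}(ii) makes $\check{\bm\varphi}_{wn}$ eventually interior, so with $\bm u=\sqrt n(\bm\varphi-\bm\varphi_0^*)$ I would analyse $Z_n(\bm u)=\sum_{t=1}^n\sum_{k=1}^K w_t\{\rho_{\tau_k}(y_t-\widetilde q_{t,\tau_k}(\bm\varphi_0^*+\bm u/\sqrt n))-\rho_{\tau_k}(y_t-\widetilde q_{t,\tau_k}(\bm\varphi_0^*))\}$. Applying Knight's identity $\rho_\tau(r-v)-\rho_\tau(r)=-v\psi_\tau(r)+\int_0^v[I(r\le s)-I(r\le 0)]\,ds$ together with a second-order Taylor expansion of $\widetilde q_{t,\tau_k}$ in $\bm u/\sqrt n$, the $-v\psi_\tau$ part splits into $-\bm u'\bm W_n$, with $\bm W_n=n^{-1/2}\sum_{t=1}^n\bm X_t$, plus the term $-\tfrac12\bm u'\big(n^{-1}\sum_{t,k}w_t\ddot q_{t,\tau_k}(\bm\varphi_0^*)\psi_{\tau_k}(y_t-q_{t,\tau_k}(\bm\varphi_0^*))\big)\bm u\to-\tfrac12\bm u'\Omega_{11}^*\bm u$ coming from the curvature of $q_{t,\tau_k}$ paired with the now non-centred check function---the misspecification signature, which vanishes when the model is correct; while the Knight remainder, after replacing the indicators by their $\mathcal F_{t-1}$-conditional probabilities (a martingale $L^2$ argument using the bounded density of Assumption \ref{assum-ConditionalDensity}) and a Taylor expansion of $F_{t-1}$, converges to $+\tfrac12\bm u'\Omega_{12}^*\bm u$ by the ergodic theorem. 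Altogether $Z_n(\bm u)=-\bm u'\bm W_n+\tfrac12\bm u'\Omega_{1w}^*\bm u+o_p(1)$ uniformly over compact $\bm u$-sets, with $\Omega_{1w}^*=\Omega_{12}^*-\Omega_{11}^*$. The uniform control of the stochastic remainders, and the $\sqrt n$-rate $\sqrt n(\check{\bm\varphi}_{wn}-\bm\varphi_0^*)=O_p(1)$ that permits restriction to compact $\bm u$-sets, I would obtain by the bracketing and chaining method of \cite{Pollard1985}, exactly as for the QR estimator in Theorem \ref{thm-WCQE-weak-convergence}; this is the step I expect to be the main obstacle, precisely because $Z_n$ is non-convex in $\bm\varphi$ and its summands are non-smooth. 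It then remains to prove $\bm W_n\to_d N(\bm 0,\Omega_{0w}^*)$: since $E[\psi_{\tau_k}(y_t-q_{t,\tau_k}(\bm\varphi_0^*))\mid\mathcal F_{t-1}]\ne 0$ under misspecification, $\{\bm X_t\}$ is not a martingale difference, so I would approximate $\bm X_t$ by its lag-$m$ truncation $\bm X_t^{(m)}$ (geometrically accurate because $b_1\le\bar b<1$), which is a finite-window functional of the $\alpha$-mixing process $\{y_t\}$, apply the central limit theorem for $\alpha$-mixing sequences under the mixing rate in Assumption \ref{assum-Process-Mixing} and the moment bounds above \citep{Fan_Yao2003}, and let $m\to\infty$. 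Finally, since $\Sigma_w^*$ is assumed positive definite, $\Omega_{1w}^*$ is nonsingular and the quadratic limit of $Z_n$ has the unique minimiser $\Omega_{1w}^{*-1}\bm W_n$, so the argmin continuous-mapping theorem gives $\sqrt n(\check{\bm\varphi}_{wn}-\bm\varphi_0^*)=\Omega_{1w}^{*-1}\bm W_n+o_p(1)\to_d N(\bm 0,\Sigma_w^*)$.

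Part (iii) follows by the delta method. On $\Phi$ the map $g_\tau(\bm\varphi)=(a_0Q_\tau(\lambda)/(1-b_1),\,a_1Q_\tau(\lambda),\,b_1)'$ is continuously differentiable (there $\lambda$ stays away from $0$ and $b_1$ away from $1$), hence $\sqrt n(\check{\bm\theta}_{wn}^*(\tau)-g_\tau(\bm\varphi_0^*))=\dot g_\tau(\bm\varphi_0^*)\sqrt n(\check{\bm\varphi}_{wn}-\bm\varphi_0^*)+o_p(1)\to_d N(\bm 0,\dot g_\tau(\bm\varphi_0^*)\Sigma_w^*\dot g_\tau(\bm\varphi_0^*)')$, where $\dot g_\tau(\bm\varphi_0^*)$ is the Jacobian written $g_\tau(\bm\varphi_0^*)$ in the statement; substituting $g_\tau(\bm\varphi_0^*)=\bm\theta(\tau)+B(\tau)$, which holds by the definition of $B(\tau)$, gives the stated limit for $\sqrt n(\check{\bm\theta}_{wn}^*(\tau)-\bm\theta(\tau)-B(\tau))$.
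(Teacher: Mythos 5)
Your proposal is correct and follows essentially the same route as the paper's proof: a uniform law of large numbers (with the initialisation error controlled by the geometric decay in $b_1$) plus the unique pseudo-true minimiser for part (i); Knight's identity with a second-order Taylor expansion, Pollard-type bracketing for the non-smooth remainders, the extra curvature term $\Omega_{11}^*$ arising from misspecification, and a CLT for $\alpha$-mixing sequences (rather than a martingale CLT) for the score, yielding $\sqrt{n}(\check{\bm\varphi}_{wn}-\bm\varphi_0^*)=\Omega_{1w}^{*-1}\bm T_n^*+o_p(1)$ for part (ii); and the delta method with the Jacobian of $g_\tau$ for part (iii). The only minor differences are cosmetic: the paper expands the criterion at an unscaled perturbation $\bm u_n=o_p(1)$ and compares its values at $\check{\bm\varphi}_{wn}-\bm\varphi_0^*$ and at $\Omega_{1w}^{*-1}\bm T_n^*/\sqrt{n}$ instead of your local-parameter quadratic-limit/argmin argument, and it invokes White and Domowitz (1984) for the mixing property of $\bm X_t$ where you propose an $m$-lag truncation --- both are routine variants of the same argument.
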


	Theorem \ref{thm-WCQR}(iii) reveals that $\check{\bm\theta}_{wn}^*(\tau)$ is a biased estimator of $\bm\theta(\tau)$ if $g_{\tau}(\bm\varphi_0^*)\neq\bm\theta(\tau)$ i.e., when $Q_{\tau}(y_t|\mathcal{F}_{t-1})$ is misspecified by $q_{t,\tau}(\bm\varphi_0^*)$. Moreover, the systematic bias $B(\tau)$ depends on the bandwidth $h$, which balances the bias and variance of $\check{\bm\theta}_{wn}^*(\tau)$; see Section \ref{remark-bandwidth} for details. However, at the cost of introducing the systematic bias, the proposed CQR method can greatly improve the estimation efficiency  at high quantile levels, as it overcomes the inefficiency due to data scarcity  at tails. Similar to Theorem \ref{thm-WCQE}, we employ the bracketing method in \cite{Pollard1985}  to tackle the non-convexity and non-differentiability of the objective function. However, due to the possible model misspecification, the mixing CLT  is used instead of the  martingale CLT; see Assumption \ref{assum-Process-Mixing}.  
	We will discuss the estimation of the covariance matrix $\Sigma_w^*$ in the Appendix. 
	
	\begin{remark}[Mixing properties]\label{mixing}
		The proof of the mixing property in Assumption \ref{assum-Process-Mixing} is challenging. For a stationary Markovian process, a common approach to proving that it is geometrically $\beta$-mixing and thus $\alpha$-mixing  is to establish its geometric ergodicity \citep{Doukhan1994,Francq_Zakoian2006}. Note that the proposed quantile GARCH process  can be regarded as a random-coefficient ARCH($\infty$) process. However, ARCH($\infty$) processes are not Markovian in general \citep{Fryzlewicz_SubbaRao2011}. Thus, the above approach is not feasible.
		\cite{Fryzlewicz_SubbaRao2011} provides an alternative method to establish mixing properties. By deriving explicit bounds for mixing coefficients using conditional densities of the process, they obtain mixing properties of stationary ARCH($\infty$) processes and show that the bound on the mixing rate depends on the decay rate of ARCH($\infty$) parameters. This method potentially can be applied to the quantile GARCH process. However, it is challenging to derive the conditional density of $y_{k+s}$ given $\{\ldots,y_0,U_1,\ldots,U_{k-1},y_k,\ldots,y_{k+s-1}\}$ due to the random functional coefficients driven by $U_t$. Thus, we leave this for future research.
	\end{remark}

	\subsection{Selection of the bandwidth $h$}\label{remark-bandwidth}
	
	As shown in Theorem \ref{thm-WCQR}(iii), the bandwidth $h$ plays an important role in  balancing the bias and efficiency of the  estimator $\check{\bm\theta}_{wn}^*(\tau)$. 
	In the extreme case that $h=0$, \eqref{CQR} will become a weighted quantile regression at the fixed quantile level $\tau_0$, and $\check{\bm\theta}_{wn}^*(\tau_0)$ will be equivalent to the QR estimator $\widetilde{\bm\theta}_{wn}(\tau_0)$. Then we have $g_{\tau_0}(\bm\varphi_0^*)=\bm\theta(\tau_0)$ and $B(\tau_0)=0$. 
	Although $B(\tau)$ does not have an explicit form with respect to $h$, our simulation studies show that a larger $h$ usually leads to larger biases but smaller variances of $\check{\bm\theta}_{wn}^*(\tau)$ when the true model is misspecified; see Section \ref{Sec-simulation-WCQR} for details. 
	
	In practice, we can treat $h$ as a hyperparameter and  search for $h$ that achieves the best forecasting performance from a grid of values  via cross-validation.  	
	Specifically, we can divide the dataset into training and validation sets, and choose the value of $h$ that minimizes the check loss in the validation set for the target quantile level $\tau_0$: 
	\begin{equation}\label{optimal-bandwidth-check}
		h^{opt}=\argmin_{h\in(0,d)}\sum_{t=n_0+1}^{n_0+n_1}\rho_{\tau} \left(y_t-\widetilde{q}_{t,\tau_0}(\check{\bm\varphi}_{wn}(h))\right),
	\end{equation}
	where $n_0$ and $n_1$ are the sample sizes of the training and validation sets, respectively, $\check{\bm\varphi}_{wn}(h)$ is the CQR estimator calculated by \eqref{CQR} with bandwidth $h$, and $d>0$ determines the range of the grid search. Usually we take $d$ to be a small value such as 0.1 to avoid large biases. 
	The chosen bandwidth $h^{opt}$ will be used to conduct CQR for rolling forecasting of the conditional quantile at time $t=n_0+n_1+i$ for any $i\geq 1$.

	\section{Simulation studies}\label{Sec-simulation}
	
	\subsection{Data generating processes}
	
	This section conducts simulation experiments to examine the finite sample performance of the proposed  estimators and CvM test. The data generating process (DGP) is
	\begin{align}\label{DGP}
		y_t=\omega(U_t)+\alpha_1(U_t)\sum_{j=1}^{\infty}[\beta_1(U_t)]^{j-1}|y_{t-j}|,
	\end{align}
	where $\{U_t\}$ are $i.i.d.$ standard uniform random variables. For evaluation of the QR and CQR estimators, we consider two sets of coefficient functions  as follows: 
	\begin{align} \label{sim1coef1}
		\omega(\tau)=0.1F^{-1}(\tau), \; \alpha_1(\tau)=0.1F^{-1}(\tau), \;  \beta_1(\tau)=0.8,
	\end{align}
	and
	\begin{align}\label{sim1coef2}
		\omega(\tau)=0.1F^{-1}(\tau), \; \alpha_1(\tau)=\tau-0.5+0.1F^{-1}(\tau), \; \beta_1(\tau)=0.3+0.6|\tau-0.5|,
	\end{align}
	where $F(\cdot)$ is the distribution function of the standard normal distribution or Tukey-lambda distribution in \eqref{TukeyLambda} with the shape parameter $\lambda = -0.2$, denoted by $F_N(\cdot)$ and $F_T(\cdot)$ respectively. Note that $F_T$ has heavy Pareto tails and does not have the finite fifth moment \citep{Karian_Dudewicz_Mcdonald1996}.
	For coefficient functions in \eqref{sim1coef2}, the strict stationarity condition \eqref{StationaryCondition11_fractional} with $s=1$ in Theorem \ref{thm-stationarity11} can be verified for $F=F_N$ or $F_T$ by direct calculation or simulating $10^5$ random numbers for $U_t$, respectively.
	Note that the DGP with coefficient functions in \eqref{sim1coef1} is simply the following GARCH($1,1$) process:
	\begin{align*}
		y_t = \varepsilon_t\left(0.1+0.1\sum_{j=1}^{\infty} 0.8^{j-1}|y_{t-j}|\right),
	\end{align*}
	where $\varepsilon_t$ follows the distribution $F$. 
	As a result, the model is correctly specified for the CQR under \eqref{sim1coef1} with $F$ being the Tukey-lambda distribution (i.e. $F=F_T$), whereas it is misspecified under all other settings. 
	Two sample sizes, $n=1000$ and 2000, are considered, and 1000 replications are generated for each sample size. 
	
	In addition, for the CvM test in \eqref{CvM_constant}, we  consider the following coefficient functions:  
	\begin{align}\label{sim1coef3}
		\omega(\tau)=0.1F^{-1}(\tau), \; \alpha_1(\tau)=0.1F^{-1}(\tau), \; \beta_1(\tau)=0.3+d(\tau-0.5)^{2},
	\end{align}
	where $d=0, 1$ or 1.6, and all other settings are the same as those for  \eqref{sim1coef1}. We can similarly verify that the strict stationarity condition holds with  $s=1$ under this setting. Note that the case of $d=0$ corresponds to the size of the test, whereas the case of $d = 1$ or 1.6 corresponds to the power.  
	
	The computation of QR and CQR estimators and the CvM test involves a infinite sum. For computational efficiency, we adopt an exact algorithm based on the fast Fourier transform instead of the standard linear convolution algorithm; see \cite{Nielsen_Noel2021} for details. 
	
	\subsection{Self-weighted QR estimator}\label{Sec-simulation-WCQE}
	
	The first experiment focuses on the self-weighted QR estimator $\widetilde{\bm\theta}_{wn}(\tau)$ in Section \ref{subsec-WCQE}. 
	For the estimation of the asymptotic standard deviation (ASD) of $\widetilde{\bm\theta}_{wn}(\tau)$, we employ the two bandwidths \eqref{bandwidths}. The resulting ASDs with respect to  bandwidths $\ell_B$ and $\ell_{HS}$ are denoted by ASD$_1$ and ASD$_2$, respectively. 
	Tables \ref{tab.estimation.CQE.w2.DGP1} and \ref{tab.estimation.CQE.w2.DGP2} display the biases, empirical standard deviations (ESDs) and ASDs of $\widetilde{\bm\theta}_{wn}(\tau)$ at quantile level $\tau=0.5\%,1\%$ or 5\% for \eqref{sim1coef1} and \eqref{sim1coef2} with $F$ being the standard normal distribution $F_N$ or Tukey-lambda distribution $F_T$, respectively. We have the following findings. First, as the sample size increases, most of the biases, ESDs and ASDs decrease, and the ESDs get closer to the corresponding ASDs.
	Secondly, the ASDs calculated using $\ell_{HS}$ are marginally smaller than those using $\ell_B$ and closer to the ESDs.  Thus, we use the bandwidth $\ell_{HS}$ in the following for stabler performance.
	Thirdly, when $\tau$ is closer to zero, the performance of $\widetilde{\bm\theta}_{wn}(\tau)$ gets worse with larger biases, ESDs and ASDs, which indicates that the self-weighted QR estimator tends to deteriorate as the target quantile becomes more extreme.  
	
	The above results are obtained based on the self-weights in \eqref{choice1_of_selfweights} with $c$ being the 95\% sample quantile of $\{y_t\}_{t=1}^n$.  We have also considered the 90\% sample quantile for the value of $c$, and the above findings are unchanged. In addition,
	simulation results for the unweighted QR estimator are given in the Appendix. It is shown that  the unweighted estimator is less efficient than the self-weighted one when $E|y_t|^3=\infty$.

	\subsection{The CvM test}\label{Sec-simulation-CvM}
	
	The second experiment evaluates the performance of the CvM test in Section \ref{subsec-CvM}. 
	Since we are particularly interested in the behavior of persistence coefficient function $\beta_1(\tau)$ at tails, we consider $\mathcal{T}=[0.7,0.995]$ and $[0.8,0.995]$. 	
	To calculate $S_{n}$ in \eqref{CvM_constant}, we use a grid $\mathcal{T}_n$ with equal cell size $\delta_n=0.005$ in place of $\mathcal{T}$.
	Moreover, $\ell_{HS}$ in \eqref{bandwidths} is employed to calculate $\widetilde{z}_{t}(\tau)$ in the subsampling procedure.
	The rejection rates of $S_n$ at $5\%$ significance level are summarized in Table \ref{tab_const_test}. 
	Firstly, observe that the size is close to the nominal rate when $b_n=\lfloor n^{1/2}\rfloor$. The case with $b_n=\lfloor 0.5n^{1/2}\rfloor$ tends to be undersized, while that with $b_n=\lfloor 2n^{1/2}\rfloor$ tends to be oversized. Secondly, the power generally increases as the sample size $n$ or departure level $d$ increases. Thirdly, a larger subsampling block size $b_n$ or wider interval $\mathcal{T}$ tends to result in a greater power.
	Hence,  we recommend using $b_n=\lfloor n^{1/2}\rfloor$ since it leads to reasonable size and power. For a fixed $\mathcal{T}$, we have also considered other settings for $\mathcal{T}_n$, and the above findings are unchanged. This indicates that the CvM test is not sensitive to the choice of the grid.

	\subsection{Self-weighted CQR estimator}\label{Sec-simulation-WCQR}
	
	In the third experiment, we examine the performance of the proposed CQR method in Section \ref{Sec-CQR} via the transformed estimator $g_{\tau}(\check{\bm\varphi}_{wn})=\check{\bm\theta}_{wn}^*(\tau)=(\check{\omega}_{wn}^*(\tau),\check{\alpha}_{1wn}^*(\tau),\check{\beta}_{1wn}^*(\tau))^{\prime}$. The DGP is preserved from the first experiment. 
	To obtain the weighted CQR estimator $\check{\bm\varphi}_{wn}$ in \eqref{CQR}, we let $\mathcal{T}_h=\{\tau_k: \tau_k=\tau_0+h(k-1)/(K-1)\}_{k=1}^K$, where $K=19$,  $\tau_0=0.5\%,1\%$ or 5\% is the target quantile level, and $h>0$ is the bandwidth.  
	
	To investigate the influence of bandwidth $h$ on the CQR, we obtain the estimator $g_{\tau}(\check{\bm\varphi}_{wn})$ for each $h\in\{0.01,0.02,\ldots,0.10\}$ at quantile level $\tau=0.5\%,1\%$ or $5\%$ for the DGP in \eqref{DGP} with \eqref{sim1coef1} or \eqref{sim1coef2}, $F=F_T$, and  sample size $n=2000$. 
	Figures \ref{fig_CQR_h_DGP1} and \ref{fig_CQR_h_DGP2} illustrate the empirical squared bias, variance and mean squared error (MSE) of $g_{\tau}(\check{\bm\varphi}_{wn})$ versus $h$ for  coefficient functions in  \eqref{sim1coef1} and \eqref{sim1coef2}, respectively.
	Note that the model is correctly specified under coefficient functions in  \eqref{sim1coef1} with $F=F_T$ and misspecified under \eqref{sim1coef2} with $F=F_T$. 
	Figure \ref{fig_CQR_h_DGP1} shows that the squared bias is close to zero, which is because the model is correctly specified. Meanwhile,  as $h$ increases, the variance and MSE get smaller, indicating  the efficiency gain from using  more data for the estimation. On the other hand, Figure \ref{fig_CQR_h_DGP2} shows that a larger $h$ leads to larger biases but smaller variances  under model misspecification.  Consequently, as $h$ increases, the MSE first  decreases and then increases. 
	Moreover, it can be observed that the CQR estimator can have much smaller MSE than the QR estimator (i.e., the case with $h=0$) especially for the high quantiles. This corroborates the usefulness of the CQR for high quantile levels.
	
	Next we verify the asymptotic results of the CQR estimator by focusing on a fixed bandwidth $h=0.1$. 
	The ASD of $g_\tau(\check{\bm\varphi}_{wn})$ is calculated based on $\dot{g}_{\tau}(\check{\bm\varphi}_{wn})\check{\Sigma}_w^*\dot{g}_{\tau}^{\prime}(\check{\bm\varphi}_{wn})$, where $\check{\Sigma}_w^*$ is obtained as in Section \ref{remark-WCQR_cov} of the Appendix. 
	Specifically, to estimate $\Omega_{1w}^*$, the bandwidth $\ell_k$ for quantile level $\tau_k$ is set to $\ell_{HS}$ defined in \eqref{bandwidths} with $\tau$ replaced by $\tau_k$.   
	To obtain the kernel estimator $\check{\Omega}_{0w}^*$ in  \eqref{kernel_estimator}, we consider the QS kernel in \eqref{QS-kernel} with the automatic bandwidth $\widehat{B}_n=1.3221[n\widehat{\alpha}(2)]^{1/5}$, $0.1\widehat{B}_n$ or $10\widehat{B}_n$ for $B_n$, where the latter two choices of $B_n$ correspond to under- or over-smoothing in comparison to $\widehat{B}_n$, respectively. The resulting ASDs with respect to $\widehat{B}_n, 0.1\widehat{B}_n$ and $10\widehat{B}_n$ are denoted as ASD$_{a}$, ASD$_{b}$ and ASD$_{c}$, respectively.  
	Tables \ref{tab.estimation.CQR.DGP1} and \ref{tab.estimation.CQR.DGP2} report the biases, ESDs and ASDs of $g_\tau(\check{\bm\varphi}_{wn})$  for the DGP with coefficient functions in \eqref{sim1coef1} and \eqref{sim1coef2},  respectively.  The  quantile levels $\tau=0.5\%,1\%$ and $5\%$ and distributions $F=F_N$ and $F_T$ are considered.
	
	We first examine the results in Table \ref{tab.estimation.CQR.DGP1}, which corresponds to the DGP with \eqref{sim1coef1} and covers two scenarios: correctly specified  (when $F=F_T$) and missspecified  (when $F=F_N$) models. For both scenarios, we have  three main findings as follows. Firstly, as the sample size increases, most of the biases, ESDs and ASDs become smaller, and the ESDs get closer to the corresponding ASDs. Secondly, as  $\tau$ approaches zero, the biases, ESDs and ASDs of $\check{\omega}_{wn}^*(\tau)$ and $\check{\alpha}_{1wn}^*(\tau)$ get larger, while that of $\check{\beta}_{1wn}^*(\tau)$ is almost unchanged. This is expected since $\check{\omega}_{wn}^*(\tau)$ and $\check{\alpha}_{1wn}^*(\tau)$ are $\tau$-dependent, and their true values have larger absolute values as $\tau$ goes to zero. However, $\check{\beta}_{1wn}^*(\tau)=\check{b}_{1wn}$ is independent of $\tau$.
	Thirdly, the results of ASD$_{a}$, ASD$_{b}$ and ASD$_{c}$ are very similar, which suggests that the kernel estimator in \eqref{kernel_estimator} is insensitive to the selection of bandwidth $B_n$. 
	
	It is also interesting to compare the results under the two scenarios in Table \ref{tab.estimation.CQR.DGP1}. However, it is worth noting that
	the true values of $\omega(\tau)$ and $\alpha_1(\tau)$ for the correctly specified model (i.e., when $F=F_T$) are larger than those for the misspecified model (i.e., when $F=F_N$) in absolute value. As a result,  the absolute biases, ESDs and ASDs of $\widetilde{\omega}_{wn}(\tau)$ and $\widetilde{\alpha}_{1wn}(\tau)$ are much smaller for $F_N$ than that for $F_T$ in  Table \ref{tab.estimation.CQR.DGP1}. On the other hand,  note that the true values of $\beta_1(\tau)$ are the same for $F_N$ and $F_T$. Thus, the comparison of the results for $\widetilde{\beta}_{1wn}(\tau)$  under $F_N$ and $F_T$ can directly reveal the effect of model misspecification. Indeed,  Table \ref{tab.estimation.CQR.DGP1} shows that the absolute biases, ESDs and ASDs of $\widetilde{\beta}_{1wn}(\tau)$ for $F_T$ are much smaller than those for $F_N$. This confirms that the CQR performs better under  correct specification  (i.e.,  $F=F_T$) than misspecification (i.e.,  $F=F_N$).
	
	Note that the above misspecification is only due to the misspecified innovation distribution $F$, whereas the coefficient function (i.e., model structure) is correctly specified via  \eqref{sim1coef1}. By contrast, the DGP with \eqref{sim1coef2} have  a misspecified model structure, which is more severe than the former. As a result, Table \ref{tab.estimation.CQR.DGP2} shares the three main findings from Table \ref{tab.estimation.CQR.DGP1} for the ESDs and ASDs but not for the biases. In particular, most biases do not decrease as the sample size increases. This is consistent with Theorem \ref{thm-WCQR}(iii), which shows that $g_{\tau}(\check{\bm\varphi}_{wn})$ is  in general a biased estimator of $\bm\theta(\tau)$ under model misspecification. It also indicates that the misspecification in the model structure is systematic and has greater impact on the bias than that in the innovation distribution $F$.

	We have also considered other choices of the number of quantile levels $K$ and the kernel function $K(\cdot)$. The above findings are unchanged. To save space, these results are omitted.

	\subsection{Comparison between QR and CQR estimators}\label{Sec-simulation-comparison}
	
	We aim to compare the in-sample and out-of-sample performance of QR and CQR in predicting conditional quantiles. The self-weights $\{w_t\}$ in \eqref{choice1_of_selfweights} are employed for both QR and CQR, and the set $\mathcal{T}_h$ with $K=19$ and $h=0.1$ is used for CQR as in the third experiment. 
	
	For evaluation of the prediction performance, we use $\widetilde{q}_t(\bm\theta(\tau))$ as the true value of the conditional quantile $Q_\tau(y_t|\mathcal{F}_{t-1})$. 
	Based on the QR estimator $\widetilde{\bm\theta}_{wn}(\tau)$ and the transformed CQR estimator $g_{\tau}(\check{\bm\varphi}_{wn})$, $Q_\tau(y_t|\mathcal{F}_{t-1})$ can be predicted by $\widetilde{q}_t(\widetilde{\bm\theta}_{wn}(\tau))$ and $\widetilde{q}_t(g_{\tau}(\check{\bm\varphi}_{wn}))$, respectively.  
	Note that estimates of $Q_\tau(y_t|\mathcal{F}_{t-1})$ for $t=1,\ldots,n$  are  in-sample predictions, and that of $Q_\tau(y_{n+1}|\mathcal{F}_{n})$ is the out-of-sample forecast. 
	We measure the in-sample and out-of-sample prediction performance separately, using the biases and RMSEs of conditional quantile estimates by averaging individual values over all time points and replications as follows:
	\begin{align*}
		\text{Bias}_{In}(\bm\theta_{\tau})&=\dfrac{1}{Mn}\sum_{k=1}^{M}\sum_{t=1}^{n}[\widetilde{q}_t^{(k)}(\bm\theta_{\tau})-\widetilde{q}_t^{(k)}(\bm\theta(\tau))], \\ \text{Bias}_{Out}(\bm\theta_{\tau})&=\dfrac{1}{M}\sum_{k=1}^{M}[\widetilde{q}_{n+1}^{(k)}(\bm\theta_{\tau})-\widetilde{q}_{n+1}^{(k)}(\bm\theta(\tau))], \\
		\text{RMSE}_{In}(\bm\theta_{\tau})&=\left\{\dfrac{1}{Mn}\sum_{k=1}^{M}\sum_{t=1}^{n}[\widetilde{q}_t^{(k)}(\bm\theta_{\tau})-\widetilde{q}_t^{(k)}(\bm\theta(\tau))]^2\right\}^{1/2}, \\ \text{RMSE}_{Out}(\bm\theta_{\tau})&=\left\{\dfrac{1}{M}\sum_{k=1}^{M}[\widetilde{q}_{n+1}^{(k)}(\bm\theta_{\tau})-\widetilde{q}_{n+1}^{(k)}(\bm\theta(\tau))]^2\right\}^{1/2},
	\end{align*}
	where $M=1000$ is the total number of replications, $\widetilde{q}_t^{(k)}(\bm\theta_{\tau})$ represents the conditional quantile estimate at time  $t$ in the $k$th replication, and $\bm\theta_{\tau}$ is  the QR estimator $\widetilde{\bm\theta}_{wn}(\tau)$ or the transformed CQR estimator $g_{\tau}(\check{\bm\varphi}_{wn})$.
	
	Table \ref{tab.qt.forecast.comparison} reports the above measures for the DGP in \eqref{DGP} with coefficient functions in \eqref{sim1coef1} and \eqref{sim1coef2}. Firstly, note that most of the biases and RMSEs decrease as the sample size increases. Secondly, the QR and CQR   perform  similarly for  \eqref{sim1coef1} and \eqref{sim1coef2} with $F=F_N$ in terms of the bias and RMSE. However, when $F=F_T$, obviously the CQR outperforms the QR in biases and RMSEs especially for high quantiles. This confirms that the CQR can be more favorable than the QR at high quantile levels if the data is heavy-tailed, yet can be comparable to the latter if otherwise. This is also consistent with the findings in Figures \ref{fig_CQR_h_DGP1} and \ref{fig_CQR_h_DGP2}. Lastly, although the CQR estimator is biased under model misspecification, the biases of its conditional quantile predictions are very close to or even smaller than those of the QR. This suggests that the CQR can provide satisfactory approximation of conditional quantiles,  possibly owing to the flexibility of the Tukey-lambda distribution.
	
	In the Appendix, we also provide a simulation experiment to investigate the effect of quantile rearrangement on the prediction performance.
	
	\section{An empirical example}\label{Sec-realdata}
	
	This section analyzes  daily log returns of the S\&P500 Index based on the proposed quantile GARCH model. 
	The daily closing prices from July 1, 2015 to December 30, 2021, denoted by $\{p_t\}$, are downloaded from the website of Yahoo Finance. 
	Let $y_t=100\left( \ln p_t-\ln p_{t-1} \right)$ be the log return in percentage, which has $n=1637$ observations in total.
	The time plot of $\{y_t\}$ suggests that the series exhibits  volatility clustering, and it is very volatile at the beginning of 2020 due to COVID-19 pandemic; see  Figure \ref{fig_real_data}.
	Table \ref{table_real_data_stat} displays summary statistics of $\{y_t\}$, where the sample skewness with value $-1.053$ and kurtosis with value $23.721$ indicate that the data are left-skewed and very heavy-tailed. 
	The above findings motivate us to fit $\{y_t\}$ by our proposed quantile GARCH model to capture the conditional heteroscedasticity of the return series and possible asymmetric dynamics over its different quantiles.
	
	We fit a quantile GARCH($1,1$) model to $\{y_t\}$. 
	Since the data are very heavy-tailed, the self-weighted QR estimator in \eqref{PairwiseWCQE} is used to obtain estimates of $\bm\theta(\tau)=(\omega(\tau),\alpha_1(\tau),\beta_1(\tau))^{\prime}$, where the self-weights in \eqref{choice1_of_selfweights} are employed with $c$ being the 95\% sample quantile of $\{y_t\}$. 
	The estimates of $\bm\theta(\tau)$ for $\tau\in (0.7,1)$ together with their 95\% pointwise confidence intervals are plotted against the quantile level in Figure \ref{fig_real_data}. 
	Note that $\bm\theta(\tau)$ of our model corresponds to $\bm\theta_{\tau}=(a_0Q_{\tau}(\varepsilon_t)/(1-b_1),a_1Q_{\tau}(\varepsilon_t),b_1)$ in the linear GARCH($1,1$) model in  \eqref{lgarch11}.   
	To compare the fitted coefficients of our model with those of model \eqref{qlgarch11}, we also provide estimates of $\bm\theta_{\tau}$ using the filtered historical simulation (FHS)  method \citep{Kuester_Mittnik_Paolella2006} based on the  Gaussian quasi-maximum likelihood estimation (QMLE). Specifically, $a_0,a_1$ and $b_1$ are estimated by Gaussian QMLE of the linear GARCH($1,1$) model in \eqref{lgarch11}, and then $Q_{\tau}(\varepsilon_t)$ is estimated by the empirical quantile of resulting residuals $\{\widehat{\varepsilon}_t\}$.
	
	From Figure \ref{coef_comparison_ARCHinfty}, we can see that the confidence intervals of $\omega(\tau)$, $\alpha_1(\tau)$ and $\beta_1(\tau)$ do not include the FHS estimates of $\bm\theta_{\tau}$ for $\tau\in (0.7,0.8), (0.9,1)$ and $(0.9,1)$ respectively. Since the quantile GARCH model includes the linear GARCH model as a special case, this  indicates that the model with constant coefficients fails to capture the asymmetric dynamic structures across different quantiles. In addition, we apply the CvM test in Section \ref{subsec-CvM} to check whether $\beta_1(\tau)$ is constant for  $\tau\in \mathcal{T}_1=[0.700,0.850]$, $\tau\in \mathcal{T}_2=[0.850,0.950]$, $\tau\in \mathcal{T}_3=[0.950,0.980]$, $\tau\in \mathcal{T}_4=[0.980,0.995]$, and $\tau\in \mathcal{T}=[0.700,0.995]=\cup_{i=1}^4\mathcal{T}_i$. The CvM test statistic $S_{n}$ is calculated using a grid $\mathcal{T}_n$ with equal cell size $\delta_n=0.005$. Its critical value is approximated using the proposed subsampling procedure with $b_n=\lfloor n^{1/2}\rfloor$. The $p$-values of $S_{n}$ for $\mathcal{T}_1,\dots, \mathcal{T}_4$, and $\mathcal{T}$ are  $0.585$, $0.054$, $0.555$, $0.017$, and $0.150$, respectively. 
	Therefore, it is likely that $\beta_1(\tau)$ is varying over $[0.850,0.950]$ and $[0.980,0.995]$.

	Since the 5\% VaR is of common interest in practice, we report the fitted quantile GARCH model at $\tau=0.05$ as follows: 
	\begin{equation}\label{fittedmodel}
		\widetilde{Q}_{0.05}(y_t| \mathcal{F}_{t-1})=-0.380_{0.100}-0.341_{0.075}\sum_{j=1}^{\infty}0.790_{0.033}^{j-1}|y_{t-j}|,
	\end{equation}
	where the standard errors are given in the corresponding subscripts of the estimated coefficients. We divide the dataset into a training set ($\mathcal{S}_{\textrm{train}}$) with size $n_0=1000$ and a test set ($\mathcal{S}_{\textrm{test}}$) with size $n-n_0=637$. Then we conduct a rolling forecast procedure at level $\tau=0.05$ (i.e. negative 5\% VaR) with a fixed moving window of size $n_0$  from the forecast origin $t_0=n_0+1$ (June 24, 2019). That is, we first obtain the  one-step-ahead conditional quantile forecast for $t_0$ (i.e., the first time point in $\mathcal{S}_{\textrm{test}}$) based on data from $t=1$ to $t=n_0$, using the formula  $\widetilde{Q}_{0.05}(y_{t_0}|\mathcal{F}_{n_0})=\widetilde{\omega}_{wn_0}(0.05)+\widetilde{\alpha}_{1wn_0}(0.05)\sum_{j=1}^{n_0}[\widetilde{\beta}_{1wn_0}(0.05)]^{j-1}|y_{t_0-j}|$. Then for each $i=1,\dots, n-n_0-1$, we set the forecast origin to $t_0+i$ and conduct the forecast based on data from $t=1+i$ to $t=n_0+i$. 
	These forecasts are displayed in the time plot in Figure \ref{fig_real_data}. It is clear that the VaR forecasts keep in step with the returns closely, and the return falls below the corresponding negative 5\% VaR forecasts occasionally.

	We also thoroughly  compare the forecasting performance of the proposed model with that of existing conditional quantile estimation methods as follows:
	\begin{itemize}
		\item FHS: The FHS method \citep{Kuester_Mittnik_Paolella2006} based on the linear GARCH($1,1$) model in  \eqref{lgarch11}, where the coefficients are estimated by the Gaussian QMLE, and the residual empirical quantiles are used to approximate the innovation quantiles.
		\item XK: The two-step estimation method QGARCH2 of \cite{Xiao_Koenker2009} based on linear GARCH(1,1) model \eqref{linearGARCH}. Specifically, the initial estimates of $\{h_t\}$ are obtained by combining the conditional quantile estimates of sieve ARCH approximation $h_t=\gamma_0+\sum_{j=1}^m\gamma_j|y_{t-j}|$ over multiple quantile levels, $\tau_{k}=k/20$ for $k=1,2,\ldots,19$, via the minimum distance estimation. Here we set $m=3n^{1/4}$ as in their paper. 
		\item Hybrid: The hybrid estimation method proposed in \cite{Zheng_Zhu_Li_Xiao2018} based on Bollerslev's GARCH($1,1$) model in \eqref{garch11} with $x_t=y_t$. 
		\item CAViaR: The indirect GARCH($1,1$)-based CAViaR method in \cite{Engle_Manganelli2004}, where we use the same code and settings for the optimization as in their paper.
	\end{itemize}
	We consider the lower and upper $1\%, 2.5\%$ and $5\%$ quantiles and conduct the above rolling forecast procedure for all competing methods. The forecasting performance is evaluated via the empirical coverage rate (ECR), prediction error (PE), and VaR backtests. The ECR is calculated as the percentage of observations in the test set $\mathcal{S}_{\textrm{test}}$ that fall below the corresponding fitted conditional quantiles. 
	The PE is calculated as follows:  
	\[PE=\dfrac{1}{\sqrt{\tau(1-\tau)/(n-n_0)}}\left|\dfrac{1}{n-n_0}\sum_{t=n_0+1}^nI\{y_t<\widehat{Q}_\tau(y_t | \mathcal{F}_{t-1})\}-\tau\right|,\]
	where $n-n_0$ is the size of $\mathcal{S}_{\textrm{test}}$, and $\widehat{Q}_\tau(y_t | \mathcal{F}_{t-1})$ is the one-step-ahead conditional quantile forecast based on each estimation method. 
	
	We conduct two VaR backtests: the likelihood ratio test for correct conditional coverage (CC) in \cite{Christoffersen1998} and the dynamic quantile (DQ) test in \cite{Engle_Manganelli2004}. 
	The null hypothesis of the CC test is that, conditional on $\mathcal{F}_{t-1}$, $\{H_t\}$ are $i.i.d.$ Bernoulli random variables with the success probability being $\tau$, where $H_t=I(y_t<Q_{\tau}(y_t | \mathcal{F}_{t-1}))$ is the hit series.
	For the DQ test in \cite{Engle_Manganelli2004}, we consider the regression  of $H_t$ on a constant and four lagged hits $H_{t-\ell}$ with $1\leq \ell \leq 4$. The null hypothesis is that the intercept equals to $\tau$ and the regression coefficients are zero. 
	If we fail to reject the null hypotheses of the VaR backtests, then the  forecasting method is satisfactory.   
	Table \ref{tabForecasting1} reports the ECRs, PEs and $p$-values of  VaR backtests for the one-step-ahead forecasts. 
	In terms of  ECRs and backtests, all methods perform reasonably well, since the ECRs  are close to the corresponding nominal levels, and at least one backtest is not rejected at the 5\% significance level. 
	However, it is clear that the proposed QR estimator has the smallest PEs in most cases. 
	
	Furthermore,  we compare the performance of the proposed self-weighted QR and CQR estimators at high quantile levels, including the lower and upper $0.1\%, 0.25\%$ and $0.5\%$ quantiles. 
	For a more accurate evaluation, we enlarge the S\&P500 dataset to cover the period from February 23, 2000 to December 30, 2021, which includes $n=5500$ observations in total.
	Moreover, since  the self-weighted CQR requires a predetermined bandwidth $h$, we divide the dataset into a training set ($\mathcal{S}_{\textrm{train}}$) with size $n_0=1000$, a validation set ($\mathcal{S}_{\textrm{val}}$) with size $n_1=500$, and a test set ($\mathcal{S}_{\textrm{test}}$) with size $n_2=n-n_0-n_1$. We choose the optimal $h$ that minimizes the check loss in \eqref{optimal-bandwidth-check} for $\mathcal{S}_{\textrm{val}}$; see Section \ref{remark-bandwidth} for details. Then based on the chosen $h$, we conduct a moving-window rolling forecast procedure similar to the previous one. The window size is $n_0$, and the forecast origin is $t_0=n_0+n_1+1=1501$. That is, we first obtain the conditional quantile forecast for $t_0$ (i.e., the first time point in $\mathcal{S}_{\textrm{test}}$) based on data from $t=t_0-n_0=501$ to $t=t_0-1=1500$  (i.e., the last 500 observations in $\mathcal{S}_{\textrm{train}}$ and all observations in $\mathcal{S}_{\textrm{val}}$). We repeat this procedure by  advancing the forecast origin and moving window  until the end of $\mathcal{S}_{\textrm{test}}$ is reached.
	Table \ref{tabForecasting2} displays the results for the proposed QR, CQR and other competing methods. 
	Notably, the CQR method has the smallest PE and the most accurate ECR at almost all quantile levels, while the  QR method is generally competitive among the other methods. 
	In summary, for the S\&P 500 dataset, the proposed quantile GARCH model has superior forecasting performance than the original GARCH model, and the proposed CQR  estimator outperforms the QR estimator at high quantile levels.

	Finally, to remedy the quantile crossing problem, we have further conducted the quantile rearrangement \citep{Chernozhukov2010} for the proposed QR method. There are only inconsequential changes to  Tables \ref{tabForecasting1} and \ref{tabForecasting2}, while all main findings summarized earlier remain the same. In addition, for Figure \ref{coef_comparison_ARCHinfty}, we can also  rearrange the self-weighted QR estimates $\{\widetilde{\omega}_{wn}(\tau_k)\}_{k=1}^K$ and $\{\widetilde{\alpha}_{1wn}(\tau_k)\}_{k=1}^K$ to ensure the monotonicity of the curves.  After the  rearrangement, the curves for  $\omega(\cdot)$ and $\alpha_1(\cdot)$  become smoother than those in  Figure \ref{coef_comparison_ARCHinfty}. The corresponding confidence intervals are slightly narrower than the original ones; see Section \ref{Sec-rearrangement} of the Appendix for details.

	\section{Conclusion and discussion}\label{Sec-conclusion}
	
	This paper proposes  the quantile GARCH model, a new conditional heteroskedastic model whose coefficients are functions of a standard uniform random variable. 
	A sufficient condition for the strict stationarity of this model is derived. To estimate the unknown coefficient functions without any moment restriction on the data, we develop the self-weighted QR and CQR methods. By efficiently borrowing information from intermediate quantile levels via a  flexible parametric approximation, the CQR method is more favorable than the QR at high quantile levels. 
	Our empirical analysis shows that the proposed approach can provide more accurate conditional quantile forecasts at high or even extreme quantile levels than existing  ones.
	
	The proposed approach can be improved and extended in the following directions. 
	Firstly, the estimation of the asymptotic covariance matrices for the QR and CQR estimator are complicated  due to the unknown conditional density function. As an alternative to the kernel density estimation, an easy-to-use bootstrap method such as the block bootstrap and random-weight bootstrap may be developed, and asymptotically valid bootstrap inference for the estimated coefficient functions and conditional quantiles can be further studied. Secondly, it is worth investigating whether it is possible to construct a debiased CQR estimator that is provably no less efficient than the proposed biased estimator at high quantile levels.
	Thirdly, the expected shortfall, defined as the expectation of the loss that exceeds the VaR, is another important risk measure. It is also of interest to forecast the ES based on the proposed quantile GARCH model.  
	Lastly, the parametric  method to model the tails based on the flexible Tukey-lambda distribution is  efficient and computationally simple. It can be generalized to other high quantile estimation problems for various data settings. 
	
%
%
	
	\bibliography{quantile}
	
	\newpage
	\begin{table}[htp]
		\caption{Biases, ESDs and ASDs of the self-weighted QR estimator $\widetilde{\bm\theta}_{wn}(\tau)$ at quantile level $\tau=0.5\%,1\%$ or $5\%$ for DGP \eqref{DGP} with Setting \eqref{sim1coef1}. ASD$_{1}$ and ASD$_{2}$ correspond to the bandwidths $\ell_{B}$ and $\ell_{HS}$, respectively. $F$ is the standard normal distribution $F_N$ or Tukey-lambda distribution $F_T$.}
		\label{tab.estimation.CQE.w2.DGP1}
		\centering
		\begin{tabular}{lrrrrrrrrrrrrrr}
			\hline
			&&   && \multicolumn{5}{c}{$F=F_{N}$} && \multicolumn{5}{c}{$F=F_{T}$}\\
			\cline{5-9}\cline{11-15}
			&& $n$ &&True & Bias & ESD & ASD$_{\text{1}}$ & ASD$_{\text{2}}$&& True & Bias & ESD & ASD$_{\text{1}}$ & ASD$_{\text{2}}$\\ 
			\hline
			&&&&\multicolumn{11}{c}{$\tau=0.5\%$}\\
			$\omega$
			&& 1000&&-0.258 & -0.002 & 0.073 & 0.099 & 0.066 &  & -0.942 & -0.391 & 1.011 & 1.573 & 1.207 \\ 
			&& 2000&&-0.258 & -0.002 & 0.058 & 0.066 & 0.049 &  & -0.942 & -0.204 & 0.721 & 1.055 & 0.787 \\ 
			$\alpha_1$
			&& 1000&&-0.258 & -0.031 & 0.151 & 0.212 & 0.151 &  & -0.942 & -0.159 & 0.609 & 0.808 & 0.598 \\ 
			&& 2000&&-0.258 & -0.027 & 0.128 & 0.144 & 0.109 &  & -0.942 & -0.104 & 0.421 & 0.530 & 0.376 \\ 
			$\beta_1$
			&& 1000&&0.800 & -0.052 & 0.141 & 0.257 & 0.169 &  & 0.800 & -0.043 & 0.114 & 0.169 & 0.118 \\ 
			&& 2000&&0.800 & -0.043 & 0.129 & 0.159 & 0.114 &  & 0.800 & -0.027 & 0.082 & 0.103 & 0.071 \\ 
			\hline
			&&&&\multicolumn{11}{c}{$\tau=1\%$}\\
			$\omega$
			&& 1000&&-0.233 & -0.007 & 0.062 & 0.069 & 0.052 &  & -0.755 & -0.270 & 0.726 & 0.911 & 0.712 \\ 
			&& 2000&&-0.233 & -0.004 & 0.049 & 0.050 & 0.037 &  & -0.755 & -0.178 & 0.510 & 0.629 & 0.495 \\ 
			$\alpha_1$
			&& 1000&&-0.233 & -0.023 & 0.135 & 0.160 & 0.122 &  & -0.755 & -0.095 & 0.377 & 0.472 & 0.332 \\ 
			&& 2000&&-0.233 & -0.016 & 0.098 & 0.111 & 0.084 &  & -0.755 & -0.060 & 0.271 & 0.306 & 0.234 \\ 
			$\beta_1$
			&& 1000&&0.800 & -0.056 & 0.145 & 0.232 & 0.169 &  & 0.800 & -0.033 & 0.093 & 0.118 & 0.084 \\ 
			&& 2000&&0.800 & -0.042 & 0.127 & 0.140 & 0.101 &  & 0.800 & -0.020 & 0.068 & 0.075 & 0.057 \\ 
			\hline
			&&&&\multicolumn{11}{c}{$\tau=5\%$}\\
			$\omega$
			&& 1000&&-0.164 & -0.008 & 0.038 & 0.040 & 0.033 &  & -0.405 & -0.130 & 0.315 & 0.305 & 0.257 \\ 
			&& 2000&&-0.164 & -0.004 & 0.030 & 0.030 & 0.027 &  & -0.405 & -0.063 & 0.202 & 0.218 & 0.185 \\ 
			$\alpha_1$
			&& 1000&&-0.164 & -0.015 & 0.085 & 0.090 & 0.077 &  & -0.405 & -0.029 & 0.135 & 0.144 & 0.118 \\ 
			&& 2000&&-0.164 & -0.008 & 0.060 & 0.063 & 0.057 &  & -0.405 & -0.016 & 0.090 & 0.098 & 0.083 \\ 
			$\beta_1$
			&& 1000&&0.800 & -0.063 & 0.156 & 0.178 & 0.150 &  & 0.800 & -0.022 & 0.065 & 0.069 & 0.055 \\ 
			&& 2000&&0.800 & -0.033 & 0.109 & 0.105 & 0.093 &  & 0.800 & -0.011 & 0.042 & 0.046 & 0.038 \\ 
			\hline
		\end{tabular}
	\end{table}

	\begin{table}[htp]
		\caption{Biases, ESDs and ASDs of the self-weighted QR estimator $\widetilde{\bm\theta}_{wn}(\tau)$ at quantile level $\tau=0.5\%,1\%$ or $5\%$ for DGP \eqref{DGP} with Setting \eqref{sim1coef2}. ASD$_{1}$ and ASD$_{2}$ correspond to the bandwidths $\ell_{B}$ and $\ell_{HS}$, respectively. $F$ is the standard normal distribution $F_N$ or Tukey-lambda distribution $F_T$.}
		\label{tab.estimation.CQE.w2.DGP2}
		\centering
		\begin{tabular}{lrrrrrrrrrrrrrr}
			\hline
			&&   && \multicolumn{5}{c}{$F=F_{N}$} && \multicolumn{5}{c}{$F=F_{T}$}\\
			\cline{5-9}\cline{11-15}
			&& $n$ &&True & Bias & ESD & ASD$_{\text{1}}$ & ASD$_{\text{2}}$&& True & Bias & ESD & ASD$_{\text{1}}$ & ASD$_{\text{2}}$\\ 
			\hline
			&&&&\multicolumn{11}{c}{$\tau=0.5\%$}\\
			$\omega$
			&& 1000&&-0.258 & -0.016 & 0.058 & 0.063 & 0.042 &  & -0.942 & -0.142 & 0.496 & 0.689 & 0.485 \\ 
			&& 2000&&-0.258 & -0.007 & 0.040 & 0.038 & 0.028 &  & -0.942 & -0.100 & 0.365 & 0.440 & 0.321 \\ 
			$\alpha_1$
			&& 1000&&-0.753 & -0.002 & 0.120 & 0.145 & 0.103 &  & -1.437 & -0.053 & 0.476 & 0.621 & 0.477 \\ 
			&& 2000&&-0.753 & -0.006 & 0.088 & 0.096 & 0.074 &  & -1.437 & -0.033 & 0.346 & 0.431 & 0.306 \\ 
			$\beta_1$
			&& 1000&&0.597 & -0.024 & 0.079 & 0.086 & 0.061 &  & 0.597 & -0.028 & 0.112 & 0.149 & 0.109 \\ 
			&& 2000&&0.597 & -0.013 & 0.053 & 0.053 & 0.040 &  & 0.597 & -0.019 & 0.085 & 0.100 & 0.070 \\ 
			\hline
			&&&&\multicolumn{11}{c}{$\tau=1\%$}\\
			$\omega$
			&& 1000&&-0.233 & -0.012 & 0.046 & 0.043 & 0.033 &  & -0.755 & -0.110 & 0.334 & 0.396 & 0.288 \\
			&& 2000&&-0.233 & -0.006 & 0.032 & 0.031 & 0.024 &  & -0.755 & -0.058 & 0.247 & 0.278 & 0.211 \\ 
			$\alpha_1$
			&& 1000&&-0.723 & -0.006 & 0.106 & 0.114 & 0.089 &  & -1.245 & -0.041 & 0.330 & 0.402 & 0.285 \\ 
			&& 2000&&-0.723 & -0.003 & 0.077 & 0.083 & 0.067 &  & -1.245 & -0.018 & 0.240 & 0.266 & 0.202 \\ 
			$\beta_1$
			&& 1000&&0.594 & -0.021 & 0.070 & 0.069 & 0.053 &  & 0.594 & -0.025 & 0.096 & 0.110 & 0.078 \\ 
			&& 2000&&0.594 & -0.010 & 0.049 & 0.048 & 0.038 &  & 0.594 & -0.011 & 0.068 & 0.071 & 0.053 \\ 
			\hline
			&&&&\multicolumn{11}{c}{$\tau=5\%$}\\
			$\omega$
			&& 1000&&-0.164 & -0.006 & 0.033 & 0.032 & 0.029 &  & -0.405 & -0.034 & 0.147 & 0.157 & 0.130 \\ 
			&& 2000&&-0.164 & -0.002 & 0.022 & 0.023 & 0.021 &  & -0.405 & -0.013 & 0.100 & 0.109 & 0.096 \\ 
			$\alpha_1$
			&& 1000&&-0.614 & -0.000 & 0.093 & 0.096 & 0.087 &  & -0.855 & -0.009 & 0.150 & 0.159 & 0.136 \\ 
			&& 2000&&-0.614 & -0.003 & 0.066 & 0.068 & 0.063 &  & -0.855 & -0.006 & 0.105 & 0.110 & 0.098 \\ 
			$\beta_1$
			&& 1000&&0.570 & -0.012 & 0.071 & 0.073 & 0.065 &  & 0.570 & -0.011 & 0.069 & 0.072 & 0.061 \\ 
			&& 2000&&0.570 & -0.006 & 0.049 & 0.051 & 0.046 &  & 0.570 & -0.005 & 0.047 & 0.050 & 0.044 \\ 
			\hline
		\end{tabular}
	\end{table} 
	
	\begin{table}
		\caption{\label{tab_const_test} Rejection rates of the CvM test at the 5\% significance level for $\mathcal{T}=[0.7,0.995]$ and $[0.8,0.995]$, where $b_1$, $b_2$ and $b_3$ correspond to $\lfloor cn^{1/2}\rfloor$ with $c=0.5, 1$ and 2, respectively. $F$ is the standard normal distribution $F_N$ or Tukey-lambda distribution $F_T$.}
		\centering
		\begin{tabular}{cllcccccccc}
			\hline
			&&&& \multicolumn{3}{c}{$F=F_N$} & & \multicolumn{3}{c}{$F=F_T$}\\
			\cline{5-7}\cline{9-11}
			$\mathcal{T}$& $n$&$d$&& $b_1$ & $b_2$ & $b_3$& & $b_1$ & $b_2$ & $b_3$ \\ 
			\hline
			&&   0 && 0.045 & 0.058 & 0.084 &  & 0.028 & 0.041 & 0.056 \\
			&1000&   1 && 0.101 & 0.116 & 0.140 & & 0.098 & 0.122 & 0.167 \\
			$[0.7,0.995]$&&   1.6 && 0.236 & 0.278 & 0.332 & & 0.259 & 0.325 & 0.403 \\ 
			&&   0 && 0.047 & 0.055 & 0.064 &  & 0.034 & 0.049 & 0.062 \\
			&2000&   1 && 0.190 & 0.214 & 0.236 &  & 0.240 & 0.274 & 0.334 \\ 
			&&   1.6 && 0.571 & 0.597 & 0.656 &  & 0.689 & 0.739 & 0.784 \\ 
			\hline
			&&   0 &&  0.036 & 0.046 & 0.071 &  & 0.026 & 0.040 & 0.054 \\ 
			&1000&   1 && 0.071 & 0.093 & 0.124 &  & 0.061 & 0.073 & 0.121 \\
			$[0.8,0.995]$&&   1.6 && 0.169 & 0.223 & 0.284 &  & 0.147 & 0.191 & 0.274 \\
			&&   0 && 0.028 & 0.037 & 0.056 &  & 0.027 & 0.038 & 0.055 \\  
			&2000&   1 && 0.143 & 0.161 & 0.202 &  & 0.131 & 0.173 & 0.213 \\
			&&   1.6 && 0.481 & 0.558 & 0.601 &  & 0.473 & 0.530 & 0.610 \\
			\hline
		\end{tabular}
	\end{table}
	
	\begin{table}[htp]
		\caption{Biases, ESDs and ASDs of the transformed CQR estimator  $\check{\bm\theta}_{wn}^*(\tau)$ with bandwidth $h=0.1$, at quantile level $\tau=0.5\%,1\%$ or $5\%$ for DGP \eqref{DGP} with Setting \eqref{sim1coef1}. ASD$_{a}$, ASD$_{b}$ and ASD$_{c}$ correspond to the optimal, under-smoothing and over-smoothing bandwidths $\widehat{B}_n, 0.1\widehat{B}_n$ and $10\widehat{B}_n$, respectively. $F$ is the standard normal distribution $F_N$ or Tukey-lambda distribution $F_T$.}
		\label{tab.estimation.CQR.DGP1}
		\renewcommand\arraystretch{0.8}
		\centering
		\begin{tabular}{L{0.1cm}R{0cm}R{0.55cm}R{0cm}rrrrrrR{0cm}rrrrrr}
			\hline
			&&   && \multicolumn{6}{c}{$F=F_{N}$} && \multicolumn{6}{c}{$F=F_{T}$}\\
			\cline{5-10}\cline{12-17}
			&& $n$ && True & Bias & ESD & ASD$_{a}$ & ASD$_{b}$ & ASD$_{c}$ && True & Bias & ESD & ASD$_{a}$ & ASD$_{b}$ & ASD$_{c}$\\ 
			\hline
			&&&&\multicolumn{13}{c}{$\tau=0.5\%$}\\
			$\omega$
			&& 1000&&-0.258 &-0.009 & 0.054 & 0.061 & 0.061 & 0.060 &  &-0.942 & -0.331 & 0.697 & 0.642 & 0.640 & 0.639 \\ 
			&& 2000&&-0.258 &-0.004 & 0.041 & 0.045 & 0.045 & 0.045 &  &-0.942 & -0.151 & 0.441 & 0.437 & 0.437 & 0.436 \\ 
			$\alpha_1$
			&& 1000&&-0.258 &-0.020 & 0.122 & 0.127 & 0.127 & 0.125 &  &-0.942 & -0.047 & 0.303 & 0.315 & 0.316 & 0.307 \\ 
			&& 2000&&-0.258 &-0.008 & 0.085 & 0.088 & 0.088 & 0.087 &  &-0.942 & -0.023 & 0.198 & 0.208 & 0.208 & 0.206 \\ 
			$\beta_1$
			&& 1000&&0.800 &-0.059 & 0.144 & 0.153 & 0.153 & 0.150 &  &0.800 & -0.020 & 0.055 & 0.058 & 0.058 & 0.057 \\ 
			&& 2000&&0.800 &-0.029 & 0.098 & 0.095 & 0.095 & 0.095 &  &0.800 & -0.010 & 0.036 & 0.038 & 0.038 & 0.037 \\ 
			\hline
			&&&&\multicolumn{13}{c}{$\tau=1\%$}\\
			$\omega$
			&& 1000&&-0.233 &-0.008 & 0.048 & 0.053 & 0.053 & 0.053 &  &-0.755 & -0.256 & 0.542 & 0.499 & 0.498 & 0.496 \\ 
			&& 2000&&-0.233 &-0.004 & 0.037 & 0.041 & 0.041 & 0.041 &  &-0.755 & -0.117 & 0.348 & 0.343 & 0.343 & 0.342 \\ 
			$\alpha_1$
			&& 1000&&-0.233 &-0.016 & 0.110 & 0.112 & 0.112 & 0.111 &  &-0.755 & -0.037 & 0.229 & 0.238 & 0.239 & 0.232 \\ 
			&& 2000&&-0.233 &-0.008 & 0.078 & 0.080 & 0.080 & 0.079 &  &-0.755 & -0.019 & 0.152 & 0.159 & 0.159 & 0.157 \\ 
			$\beta_1$
			&& 1000&&0.800 &-0.054 & 0.140 & 0.134 & 0.134 & 0.133 &  &0.800 & -0.019 & 0.055 & 0.057 & 0.057 & 0.056 \\ 
			&& 2000&&0.800 &-0.029 & 0.100 & 0.093 & 0.093 & 0.092 &  &0.800 & -0.009 & 0.036 & 0.037 & 0.038 & 0.037 \\ 
			\hline
			&&&&\multicolumn{13}{c}{$\tau=5\%$}\\
			$\omega$
			&& 1000&&-0.164 &-0.007 & 0.036 & 0.053 & 0.053 & 0.052 &  &-0.405 & -0.114 & 0.281 & 0.249 & 0.249 & 0.246 \\ 
			&& 2000&&-0.164 &-0.004 & 0.028 & 0.030 & 0.030 & 0.030 &  &-0.405 & -0.051 & 0.182 & 0.175 & 0.175 & 0.174 \\ 
			$\alpha_1$
			&& 1000&&-0.164 &-0.015 & 0.086 & 0.121 & 0.119 & 0.117 &  &-0.405 & -0.016 & 0.112 & 0.114 & 0.114 & 0.110 \\ 
			&& 2000&&-0.164 &-0.007 & 0.060 & 0.060 & 0.060 & 0.059 &  &-0.405 & -0.009 & 0.078 & 0.078 & 0.078 & 0.077 \\ 
			$\beta_1$
			&& 1000&&0.800 &-0.063 & 0.160 & 0.304 & 0.298 & 0.292 &  &0.800 & -0.016 & 0.055 & 0.054 & 0.054 & 0.052 \\ 
			&& 2000&&0.800 &-0.033 & 0.112 & 0.100 & 0.100 & 0.099 &  &0.800 & -0.008 & 0.035 & 0.036 & 0.036 & 0.035 \\ 
			\hline
		\end{tabular}
	\end{table}
	
	\begin{table}[htp]
		\caption{Biases, ESDs and ASDs of the transformed CQR estimator $\check{\bm\theta}_{wn}^*(\tau)$ with bandwidth $h=0.1$, at quantile level $\tau=0.5\%,1\%$ or $5\%$ for DGP \eqref{DGP} with Setting \eqref{sim1coef2}. ASD$_{a}$, ASD$_{b}$ and ASD$_{c}$ correspond to the optimal, under-smoothing, and over-smoothing bandwidths $\widehat{B}_n, 0.1\widehat{B}_n$ and $10\widehat{B}_n$, respectively. $F$ is the standard normal distribution $F_N$ or Tukey-lambda distribution $F_T$.}
		\label{tab.estimation.CQR.DGP2}
		\renewcommand\arraystretch{0.8}
		\centering
		\begin{tabular}{L{0.1cm}R{0cm}R{0.55cm}R{0cm}rrrrrrR{0cm}rrrrrr}
			\hline
			&&   && \multicolumn{6}{c}{$F=F_{N}$} && \multicolumn{6}{c}{$F=F_{T}$}\\
			\cline{5-10}\cline{12-17}
			&& $n$ && True & Bias & ESD & ASD$_{a}$ & ASD$_{b}$ & ASD$_{c}$ && True & Bias & ESD & ASD$_{a}$ & ASD$_{b}$ & ASD$_{c}$\\ 
			\hline
			&&&&\multicolumn{13}{c}{$\tau=0.5\%$}\\
			$\omega$
			&& 1000&&-0.258&0.017 & 0.043 & 0.040 & 0.040 & 0.040 &  &-0.942& 0.126 & 0.260 & 0.244 & 0.243 & 0.240 \\ 
			&& 2000&&-0.258&0.022 & 0.029 & 0.028 & 0.028 & 0.028 &  &-0.942& 0.162 & 0.178 & 0.169 & 0.169 & 0.167 \\ 
			$\alpha_1$
			&& 1000&&-0.753&-0.115 & 0.122 & 0.115 & 0.115 & 0.114 &  &-1.437& -0.174 & 0.295 & 0.277 & 0.277 & 0.270 \\ 
			&& 2000&&-0.753&-0.119 & 0.085 & 0.081 & 0.081 & 0.081 &  &-1.437& -0.177 & 0.202 & 0.192 & 0.192 & 0.190 \\ 
			$\beta_1$
			&& 1000&&0.597&-0.045 & 0.067 & 0.062 & 0.062 & 0.061 &  & 0.597& -0.042 & 0.063 & 0.059 & 0.059 & 0.058 \\ 
			&& 2000&&0.597&-0.038 & 0.045 & 0.043 & 0.043 & 0.042 &  & 0.597& -0.037 & 0.042 & 0.041 & 0.041 & 0.040 \\ 
			\hline
			&&&&\multicolumn{13}{c}{$\tau=1\%$}\\
			$\omega$
			&& 1000&&-0.233&0.009 & 0.040 & 0.037 & 0.037 & 0.037 &  &-0.755& 0.062 & 0.217 & 0.202 & 0.202 & 0.199 \\ 
			&& 2000&&-0.233&0.014 & 0.027 & 0.026 & 0.026 & 0.026 &  &-0.755& 0.093 & 0.148 & 0.140 & 0.140 & 0.139 \\ 
			$\alpha_1$
			&& 1000&&-0.723&-0.094 & 0.114 & 0.110 & 0.110 & 0.109 &  &-1.245& -0.146 & 0.239 & 0.228 & 0.228 & 0.222 \\ 
			&& 2000&&-0.723&-0.096 & 0.080 & 0.078 & 0.078 & 0.077 &  &-1.245& -0.147 & 0.163 & 0.158 & 0.158 & 0.156 \\ 
			$\beta_1$
			&& 1000&& 0.594&-0.044 & 0.069 & 0.063 & 0.063 & 0.062 &  & 0.594& -0.041 & 0.063 & 0.059 & 0.059 & 0.058 \\ 
			&& 2000&& 0.594&-0.037 & 0.046 & 0.044 & 0.044 & 0.043 &  & 0.594& -0.036 & 0.043 & 0.041 & 0.041 & 0.040 \\ 
			\hline
			&&&&\multicolumn{13}{c}{$\tau=5\%$}\\
			$\omega$
			&& 1000&&-0.164&-0.003 & 0.034 & 0.032 & 0.032 & 0.032 &  &-0.405& -0.009 & 0.138 & 0.125 & 0.125 & 0.123 \\ 
			&& 2000&&-0.164&0.000 & 0.023 & 0.023 & 0.023 & 0.023 &  &-0.405& 0.007 & 0.092 & 0.087 & 0.088 & 0.087 \\ 
			$\alpha_1$
			&& 1000&&-0.614&-0.047 & 0.104 & 0.105 & 0.105 & 0.103 &  &-0.855& -0.070 & 0.145 & 0.147 & 0.148 & 0.143 \\ 
			&& 2000&&-0.614&-0.049 & 0.073 & 0.074 & 0.074 & 0.074 &  &-0.855& -0.072 & 0.102 & 0.104 & 0.104 & 0.102 \\ 
			$\beta_1$
			&& 1000&& 0.570&-0.043 & 0.085 & 0.076 & 0.076 & 0.075 &  & 0.570& -0.038 & 0.069 & 0.063 & 0.063 & 0.061 \\ 
			&& 2000&& 0.570&-0.034 & 0.055 & 0.053 & 0.053 & 0.052 &  & 0.570& -0.033 & 0.046 & 0.043 & 0.043 & 0.043 \\ 
			\hline
		\end{tabular}
	\end{table}
	
	\begin{figure}[htp]
		\centering
		\includegraphics[width=6.4in]{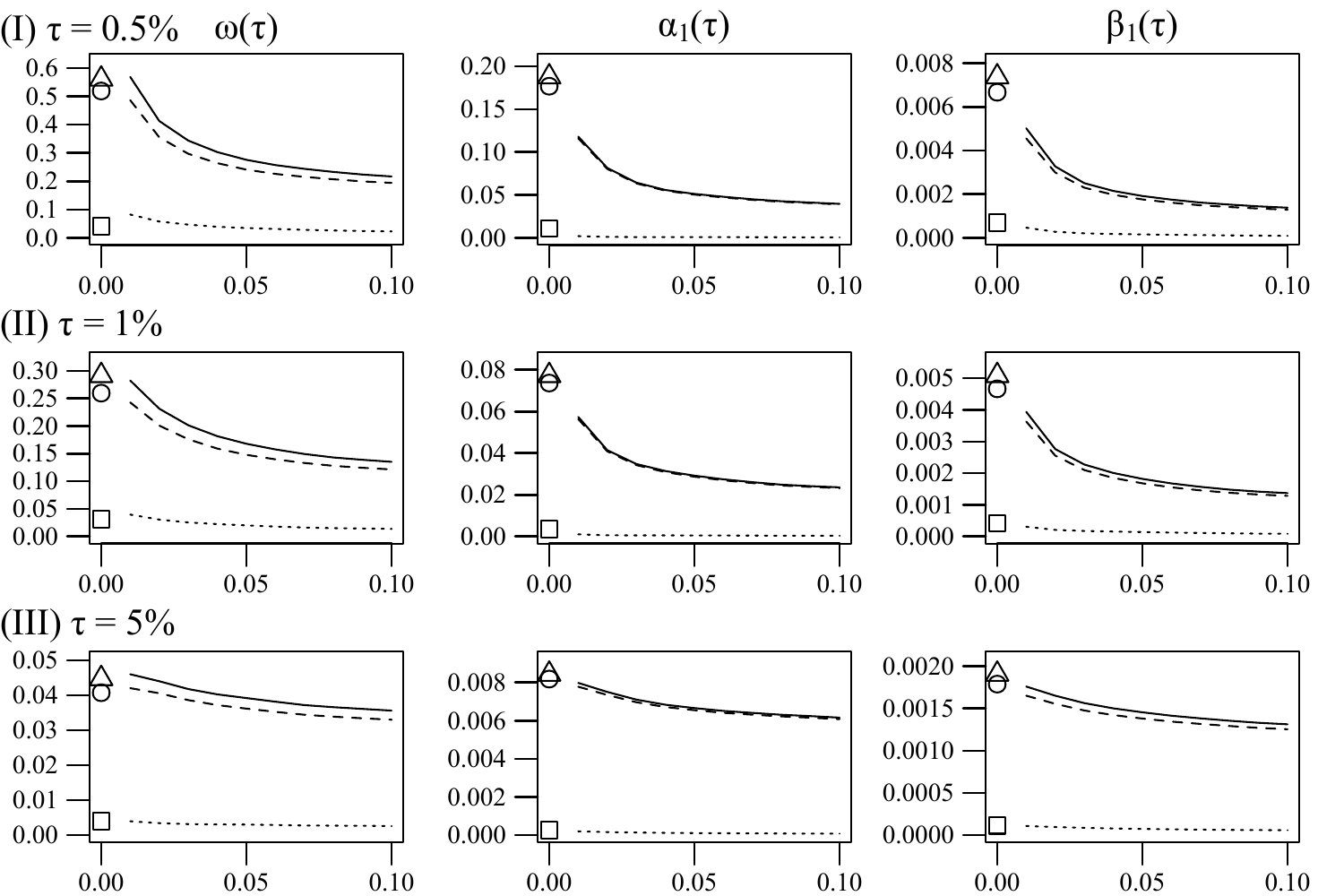}
		\caption{\label{fig_CQR_h_DGP1} Empirical squared bias (dotted line), variance (dashed line) and MSE (solid line) of the transformed CQR estimator $\check{\bm\theta}_{wn}^*(\tau)$ versus the bandwidth $h$ at quantile level $\tau=0.5\%,1\%$ or $5\%$ for DGP \eqref{DGP} with Setting \eqref{sim1coef1} and $F$ being the Tukey-lambda distribution $F_T$. Empirical squared bias (square), variance (circle) and MSE (triangle) of the QR estimator are also labeled at $h=0$ for comparison.}
	\end{figure}

	\begin{figure}[htp]
		\centering
		\includegraphics[width=6.4in]{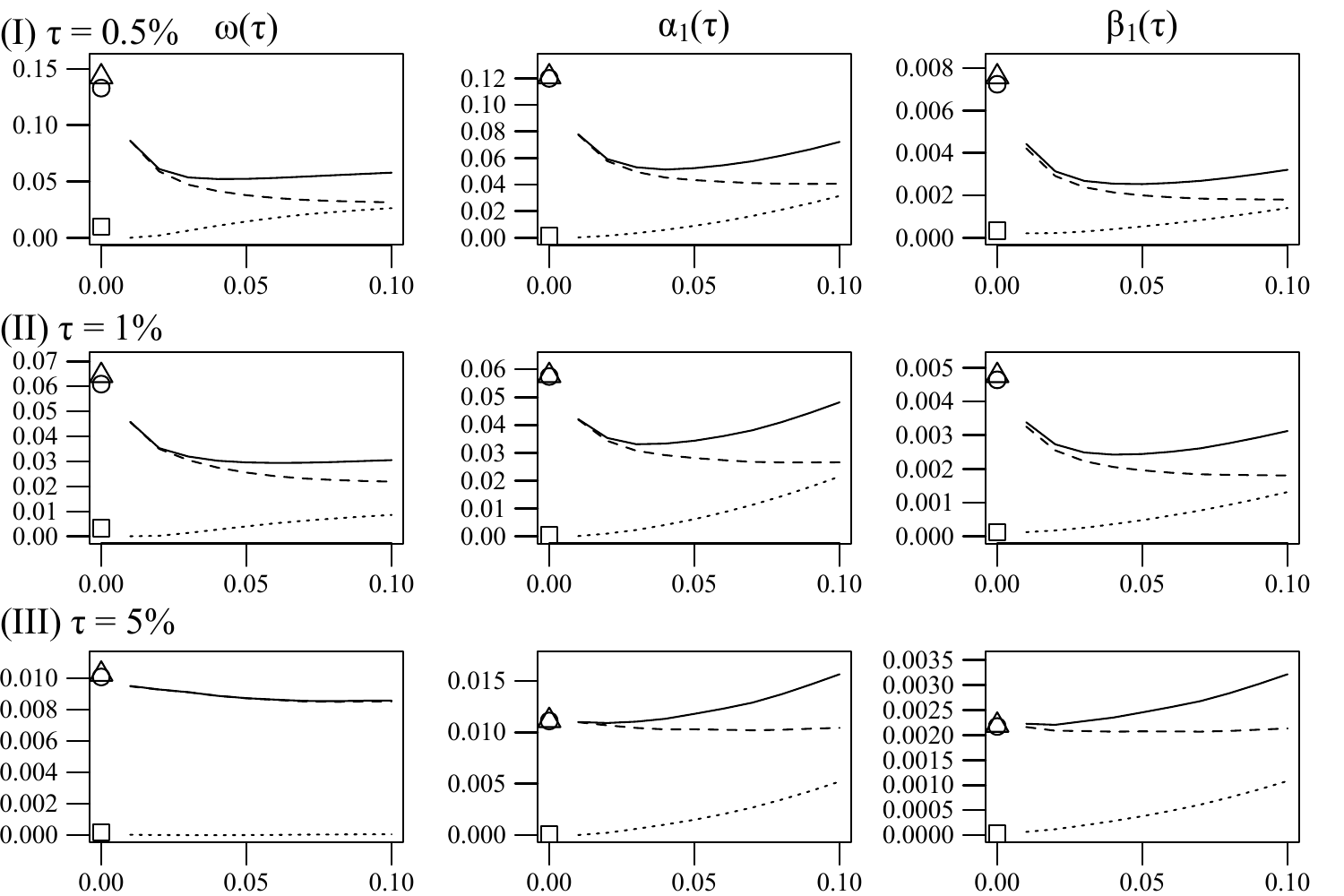}
		\caption{\label{fig_CQR_h_DGP2} Empirical squared bias (dotted line), variance (dashed line) and MSE (solid line) of the transformed CQR estimator $\check{\bm\theta}_{wn}^*(\tau)$ versus the bandwidth $h$ at quantile level $\tau=0.5\%,1\%$ or $5\%$ for DGP \eqref{DGP} with Setting \eqref{sim1coef2} and $F$ being the Tukey-lambda distribution $F_T$. Empirical squared bias (square), variance (circle) and MSE (triangle) of the QR estimator are also labeled at $h=0$ for comparison.}
	\end{figure}

	\begin{table}[htp]
		\caption{\label{tab.qt.forecast.comparison}Biases and RMSEs for conditional quantile estimates of the QR and CQR with bandwidth $h=0.1$, at quantile level $\tau=0.5\%,1\%$ or $5\%$ for DGP \eqref{DGP} with Settings \eqref{sim1coef1} and \eqref{sim1coef2}. $F$ is the standard normal distribution $F_N$ or Tukey-lambda distribution $F_T$.}		
		\renewcommand\arraystretch{0.8}
		\centering
		\begin{tabular}{llllllrrrrrrrrrrr}
			\hline
			&&   &&	&& \multicolumn{5}{c}{DGP1} && \multicolumn{5}{c}{DGP2}\\
			\cline{7-11}\cline{13-17}	
			&&&&&&	\multicolumn{2}{c}{Bias}	&&	\multicolumn{2}{c}{RMSE}	&&	\multicolumn{2}{c}{Bias}	&&	\multicolumn{2}{c}{RMSE}\\	
			\cline{7-8}\cline{10-11}\cline{13-14}\cline{16-17}
			$F$	&&	$n$	&&	Method	&&	In &	Out	&&	In &	Out&&	In &	Out&&	In &	Out\\
			\hline
			&&&&&&\multicolumn{11}{c}{$\tau=0.5\%$}\\
			$F_N$	
			&& 1000&&QR&  & 0.000 & -0.001 &  & 0.039 & 0.038 &  & 0.003 & 0.001 &  & 0.057 & 0.062 \\ 
			&& 1000&&CQR&  & 0.006 & 0.005 &  & 0.034 & 0.034 &  & -0.004 & -0.005 &  & 0.060 & 0.062 \\ 
			&& 2000&&QR&  & 0.000 & 0.000 &  & 0.029 & 0.029 &  & 0.001 & 0.000 &  & 0.040 & 0.038 \\ 
			&& 2000&&CQR&  & 0.003 & 0.003 &  & 0.024 & 0.023 &  & -0.007 & -0.005 &  & 0.049 & 0.053 \\ 
			$F_T$	
			&& 1000&&QR&  & -0.423 & -0.544 &  & 14.818 & 10.154 &  & -0.007 & 0.358 &  & 24.762 & 13.228 \\ 
			&& 1000&&CQR&  & -0.265 & -0.101 &  & 9.607 & 3.810 &  & -0.052 & 0.169 &  & 8.168 & 6.090 \\ 
			&& 2000&&QR&  & -0.240 & -0.181 &  & 8.028 & 6.216 &  & 0.020 & 0.038 &  & 16.072 & 1.978 \\ 
			&& 2000&&CQR&  & -0.089 & -0.082 &  & 5.500 & 2.782 &  & -0.073 & -0.014 &  & 4.051 & 2.115 \\ 
			\hline
			&&&&&&\multicolumn{11}{c}{$\tau=1\%$}\\
			$F_N$	
			&& 1000&&QR&  & 0.000 & -0.001 &  & 0.031 & 0.032 &  & 0.001 & 0.001 &  & 0.048 & 0.050 \\ 
			&& 1000&&CQR&  & 0.004 & 0.003 &  & 0.029 & 0.028 &  & -0.004 & -0.004 &  & 0.053 & 0.053 \\ 
			&& 2000&&QR&  & 0.000 & -0.000 &  & 0.023 & 0.022 &  & 0.000 & 0.001 &  & 0.034 & 0.034 \\ 
			&& 2000&&CQR&  & 0.002 & 0.001 &  & 0.020 & 0.019 &  & -0.005 & -0.004 &  & 0.042 & 0.046 \\ 
			$F_T$	
			&& 1000&&QR&  & -0.338 & -0.401 &  & 6.027 & 7.031 &  & -0.022 & 0.052 &  & 17.539 & 6.121 \\ 
			&& 1000&&CQR&  & -0.235 & -0.096 &  & 6.139 & 2.938 &  & -0.105 & 0.059 &  & 7.412 & 3.842 \\ 
			&& 2000&&QR&  & -0.157 & -0.061 &  & 4.591 & 2.476 &  & -0.013 & 0.048 &  & 9.131 & 1.705 \\ 
			&& 2000&&CQR&  & -0.067 & -0.059 &  & 4.096 & 2.003 &  & -0.107 & -0.064 &  & 3.838 & 1.690 \\
			\hline
			&&&&&&\multicolumn{11}{c}{$\tau=5\%$}\\
			$F_N$	
			&& 1000&&QR&  & -0.001 & -0.001 &  & 0.019 & 0.019 &  & -0.000 & 0.001 &  & 0.038 & 0.040 \\ 
			&& 1000&&CQR&  & 0.001 & 0.001 &  & 0.020 & 0.019 &  & -0.000 & 0.001 &  & 0.042 & 0.043 \\ 
			&& 2000&&QR&  & -0.000 & -0.001 &  & 0.014 & 0.014 &  & -0.000 & -0.001 &  & 0.027 & 0.031 \\ 
			&& 2000&&CQR&  & 0.000 & -0.000 &  & 0.014 & 0.014 &  & -0.002 & -0.001 &  & 0.031 & 0.033 \\ 
			$F_T$	
			&& 1000&&QR&  & -0.151 & -0.019 &  & 2.454 & 1.652 &  & -0.042 & 0.052 &  & 4.947 & 1.792 \\ 
			&& 1000&&CQR&  & -0.079 & 0.035 &  & 1.922 & 1.853 &  & -0.052 & 0.007 &  & 7.302 & 1.268 \\ 
			&& 2000&&QR&  & -0.057 & -0.052 &  & 1.604 & 0.925 &  & -0.029 & -0.020 &  & 2.204 & 0.913 \\ 
			&& 2000&&CQR&  & -0.038 & -0.015 &  & 1.387 & 0.871 &  & -0.060 & -0.035 &  & 3.092 & 0.889 \\ 
			\hline
		\end{tabular}
	\end{table}
	
	\begin{figure}[htp]
		\centering
		\includegraphics[width=6.4in]{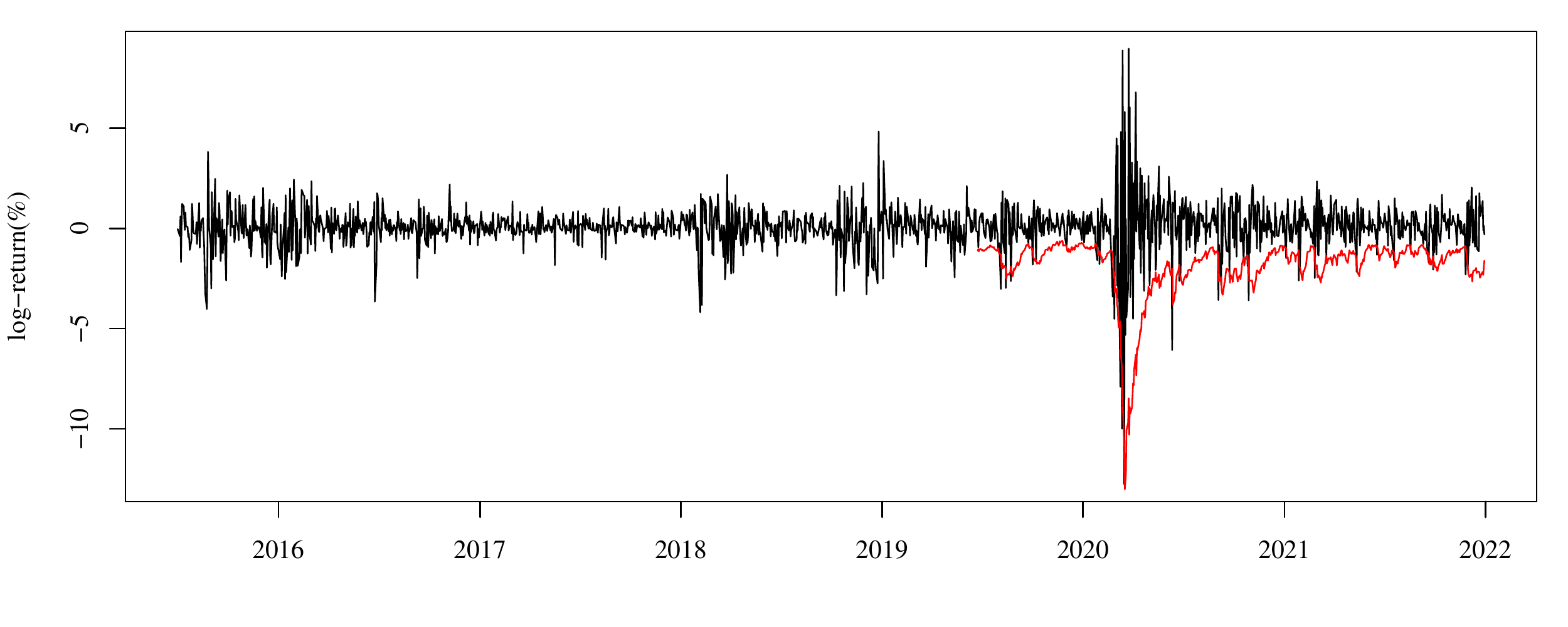}
		\caption{\label{fig_real_data}  Time plot for daily log returns in percentage (black line) of S\&P500 Index from July 2, 2015 to December 30, 2021, with negative 5\% VaR forecasts (red line) from June 24, 2019 to December 30, 2021.}
	\end{figure}
	
	\begin{table}
		\caption{\label{table_real_data_stat} Summary statistics for S\&P500 returns.}
		\begin{center}
			\begin{tabular}{ccccccc}
				\hline
				Mean& Median & Std.Dev. & Skewness & Kurtosis & Min & Max\\
				\hline
				0.051 & 0.074 & 1.161 & -1.053 & 23.721 &-12.765 & 8.968\\
				\hline
			\end{tabular}
		\end{center}
	\end{table}
	
	\begin{figure}[htp]
		\centering
		\includegraphics[width=6.0in]{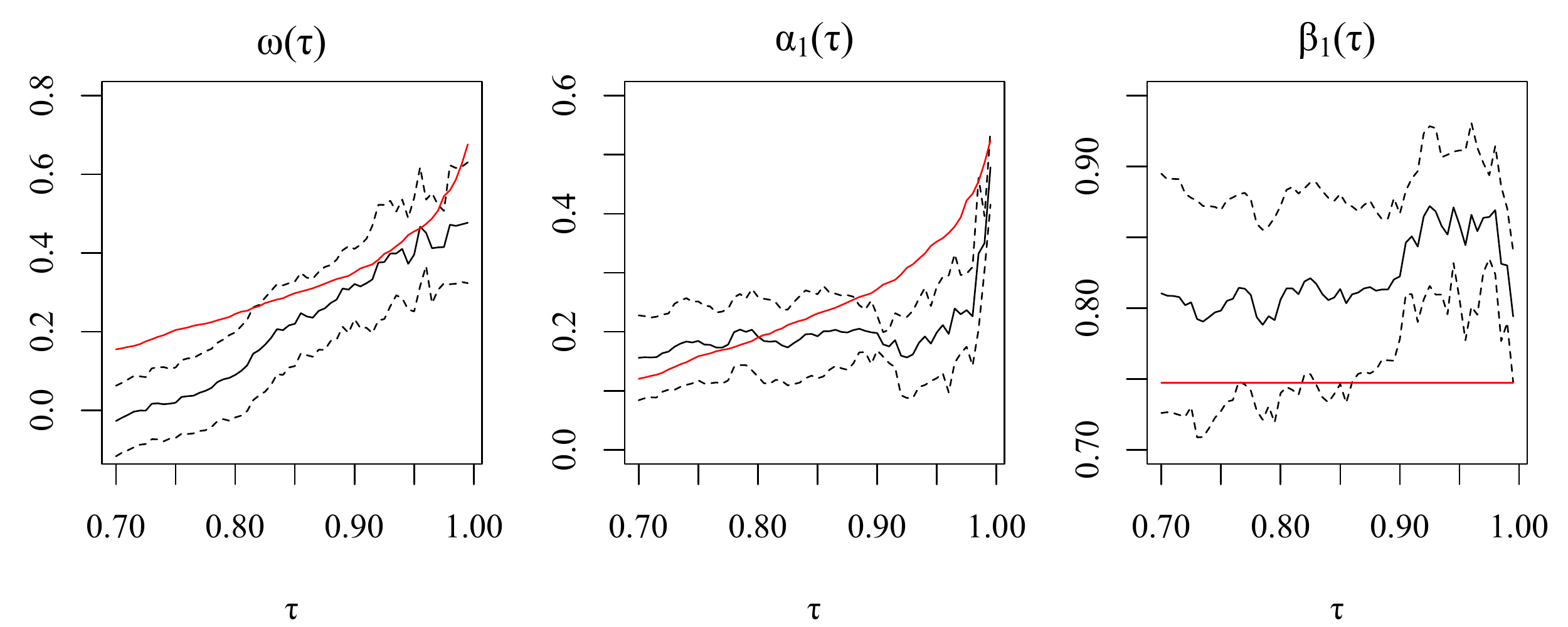}
		\caption{\label{coef_comparison_ARCHinfty}  Self-weighted QR estimates of $\bm\theta(\tau)=(\omega(\tau),\alpha_1(\tau),\beta_1(\tau))^{\prime}$ (black solid), together with their 95\% confidence intervals (black dotted) at $\tau_k=k/200$ with $140\leq k \leq 199$, and estimates of $\bm\theta_{\tau}=(a_0Q_{\tau}(\varepsilon_t)/(1-b_1),a_1Q_{\tau}(\varepsilon_t),b_1)$ (red solid) for the linear ARCH($\infty$) model in \eqref{lgarch11} using the FHS method. }
	\end{figure}
	
	\begin{table}
		\centering
		\caption{\label{tabForecasting1} Empirical coverage rates (ECRs) in percentage, prediction errors (PEs) and $p$-values for correct conditional coverage (CC) and the dynamic quantile (DQ) tests for five estimation methods at lower and upper $1\%, 2.5\%, 5\%$ quantile levels.
			The ECR closest to the nominal level $\tau$ and the smallest PE are marked in bold.}
		\begin{tabular}{llllR{1.2cm}R{1.2cm}R{1.2cm}R{1.2cm}R{1.2cm}}
			\hline
			$\tau$ &&	&&	\multicolumn{1}{c}{QR}	&	\multicolumn{1}{c}{FHS}&	\multicolumn{1}{c}{XK}	&	\multicolumn{1}{c}{Hybrid}	&	\multicolumn{1}{c}{CAViaR}	\\
			\hline
			$1\%$	
			&&	ECR 	&& 1.26 & \textbf{1.10} & 1.26 & 1.26 & 1.26\\
			&&	PE		&& 0.65 & \textbf{0.25} & 0.65 & 0.65 & 0.65\\
			&&	CC	test&& 0.74 & 0.90 & 0.74 & 0.74 & 0.74\\
			&&	DQ	test&& 0.96 & 0.99 & 0.05 & 0.96 & 0.96\\
			\hline
			$2.5\%$
			&&	ECR  	&&  \textbf{2.98} & 3.30 & 3.14 & 3.14 & 3.45\\
			&&	PE		&&  \textbf{0.78} & 1.29 & 1.03 & 1.03 & 1.54\\
			&&	CC	test&&  0.42 & 0.44 & 0.32 & 0.32 & 0.33\\
			&&	DQ	test&&  0.75 & 0.12 & 0.26 & 0.72 & 0.72\\
			\hline
			$5\%$		
			&&	ECR 	&&  6.12 & 6.12 & 6.12 & \textbf{5.65} & 5.97\\
			&&	PE		&&  1.30 & 1.30 & 1.30 & \textbf{0.75} & 1.12\\
			&&	CC	test&&  0.42 & 0.27 & 0.27 & 0.32 & 0.30\\
			&&	DQ	test&&  0.01 & 0.05 & 0.00 & 0.36 & 0.58\\
			\hline
			$95\%$		
			&&	ECR 	&&  94.51 & 95.92 & 92.94 & 94.19 & \textbf{95.45}\\
			&&	PE		&&  0.57 & 1.06 & 2.39 & 0.94 & \textbf{0.52}\\
			&&	CC	test&&  0.11 & 0.18 & 0.06 & 0.07 & 0.22\\
			&&	DQ	test&& 0.62 & 0.70 & 0.08 & 0.44 & 0.67\\
			\hline
			$97.5\%$	
			&&	ECR 	&& \textbf{97.65} & 97.96 & 96.86 & 97.80 & \textbf{97.65}\\
			&&	PE		&& \textbf{0.23} & 0.74 & 1.03 & 0.49 & \textbf{0.23}\\
			&&	CC	test&& 0.68 & 0.57 & 0.55 & 0.64 & 0.68\\
			&&	DQ	test&& 0.68 & 0.72 & 0.64 & 0.79 & 0.90\\
			\hline
			$99\%$		
			&&	ECR 	&& \textbf{99.06} & 98.90 & 99.22 & 98.90 & 98.74\\
			&&	PE		&& \textbf{0.15} & 0.25 & 0.55 & 0.25 & 0.65\\
			&&	CC	test&& 0.93 & 0.90 & 0.82 & 0.90 & 0.74\\
			&&	DQ	test&& 1.00 & 0.99 & 0.99 & 0.99 & 0.96\\
			\hline
		\end{tabular}
	\end{table}

	\begin{table}
		\centering
		\caption{\label{tabForecasting2} Empirical coverage rates (ECRs) in percentage and prediction errors (PEs) for six estimation methods at lower and upper $0.1\%, 0.25\%, 0.5\%$  quantile levels. CQR represents the composite quantile regression with the optimal $h$ by minimizing the check loss in \eqref{optimal-bandwidth-check} for the validation set.
			The ECR closest to the nominal level $\tau$ and the smallest PE are marked in bold.}
		\begin{tabular}{llllR{1.2cm}R{1.2cm}R{1.2cm}R{1.2cm}R{1.2cm}R{1.2cm}}
			\hline
			$\tau$&&	&&	\multicolumn{1}{c}{CQR}	&	\multicolumn{1}{c}{QR}&	\multicolumn{1}{c}{FHS} &	\multicolumn{1}{c}{XK}	&	\multicolumn{1}{c}{Hybrid}	&	\multicolumn{1}{c}{CAViaR}	\\
			\hline
			$0.1\%$	
			&&	ECR 	&& \textbf{0.15} & 0.27 & 0.32 & 0.70 & 0.62 & 0.45\\
			&&	PE		&& \textbf{1.00} & 3.50 & 4.50 & 12.01 & 10.51 & 7.00\\
			\hline
			$0.25\%$	
			&&	ECR		&& \textbf{0.35} & 0.55 & 0.52 & 0.90 & 0.80 & 0.62\\
			&&	PE		&& \textbf{1.27} & 3.80 & 3.48 & 8.23 & 6.97 & 4.75\\
			\hline
			$0.5\%$	
			&&	ECR		&& \textbf{0.75} & 0.90 & 0.78 & 1.23 & 1.12 & 0.88\\
			&&	PE		&& \textbf{2.24} & 3.59 & 2.47 & 6.50 & 5.60 & 3.36\\
			\hline
			$99.5\%$	
			&&	ECR		&& 99.48 & 99.42 & 99.40 & 99.33 & \textbf{99.47} & 99.35\\
			&&	PE		&& \textbf{0.22} & 0.67 & 0.90 & 1.57 & \textbf{0.22} & 1.34\\
			\hline
			$99.75\%$	
			&&	ECR		&& \textbf{99.70} & 99.60 & \textbf{99.70} & 99.45 & 99.62 & 99.62\\
			&&	PE		&& \textbf{0.63} & 1.90 & \textbf{0.63} & 3.80 & 1.58 & 1.58\\
			\hline
			$99.9\%$	
			&&	ECR		&& \textbf{99.85}  & 99.78 & 99.83 & 99.60 & 99.80 & 99.72\\
			&&	PE		&& \textbf{1.00} & 2.50 & 1.50 & 6.00 & 2.00 & 3.50\\  
			\hline
		\end{tabular}
	\end{table}
	
	\clearpage
	\appendix
	\setcounter{figure}{0}
	\setcounter{table}{0}
	\renewcommand{\theassum}{A.\arabic{assum}}
	\renewcommand{\thethm}{A.\arabic{thm}}
	\renewcommand{\thelemma}{A.\arabic{lemma}}
	\renewcommand{\thefigure}{A.\arabic{figure}}
	\renewcommand{\thetable}{A.\arabic{table}}
	
	\section*{Appendix}
	
	This appendix presents the generalization of results for the quantile GARCH($1,1$) model to the quantile GARCH($p,q$) model and the estimation of $\Sigma_w^*$ in Theorem \ref{thm-WCQR}. 
	It also provides notation, technical details for Theorems \ref{thm-stationarity11}--\ref{thm-WCQR} as well as Corollary \ref{thm-WCQE}, and introduces Lemmas \ref{lem00}--\ref{lem-Tightness2} which give some preliminary results for proving Theorems \ref{thm-WCQE-uniform-consistency}--\ref{thm-WCQE-weak-convergence} and Corollary \ref{thm-WCQE}, and Lemmas \ref{lem0-Tukey}--\ref{lem3-Tukey} for Theorem \ref{thm-WCQR}. 
	Moreover, additional results for simulation and empirical analysis are also included in this appendix. 
	Throughout the appendix, the notation $C$ is a generic constant which may take different values in different locations, and $\rho \in (0,1)$ is a generic constant which may take different values at its different occurrences. 

	\section{General results for quantile GARCH($p,q$) models}\label{Sec-QGARCHpq}
	
	\subsection{Proposed quantile GARCH$(p,q)$ model}
	
	This section extends the quantile GARCH$($1,1$)$ model and the methods in Sections \ref{subsec-WCQE} and \ref{Sec-CQR} to the general GARCH$(p,q)$ setting. The linear GARCH($p, q$) model \citep{Taylor2008} is given by
	\begin{equation*}
		y_t= h_t\varepsilon_t, \hspace{5mm} h_t=a_0+\sum_{i=1}^qa_i|y_{t-i}|+\sum_{j=1}^pb_j h_{t-j},
	\end{equation*}
	where $a_0>0$, $a_i\geq 0$ for $1\leq i \leq q$, $b_j\geq 0$ for $1\leq j \leq p$, and the innovations $\{\varepsilon_t\}$ are $i.i.d.$ random variables with mean zero and variance one. If $\{y_t\}$ is strictly stationary, then $\sum_{j=1}^{p}b_j<1$, and the process has the linear ARCH($\infty$) representation,
	\begin{equation}\label{lgarch1}
		y_t=\varepsilon_t\left[\omega+\sum_{j=1}^{\infty}\gamma_j(a_1, \dots, a_q, b_1, \dots, b_p)|y_{t-j}|\right],
	\end{equation}
	where $\omega=a_0(1-\sum_{j=1}^{p}b_j)^{-1}$, the functions $\gamma_j(\cdot)$'s are defined on $\mathbb{R}^q\times D$ with $D=\{(d_1, \dots, d_p)^\prime\in \mathbb{R}^p: \sum_{j=1}^{p}d_j<1, \min_{1\leq j \leq p}d_j\geq 0\}$, such that for any $(c_1, \dots, c_q)^\prime \in\mathbb{R}^q$ and $(d_1, \dots, d_p)^\prime\in D$, it holds that
	\begin{equation}\label{gamma}
		\sum_{j=1}^{\infty}\gamma_j(c_1, \dots, c_q, d_1, \dots, d_p)z^j=\frac{\sum_{i=1}^{q}{c_i z^i}}{1-\sum_{j=1}^{p}d_jz^j}, \quad |z| \leq 1.
	\end{equation}
	
	Motivated by \eqref{lgarch1}, we define the quantile GARCH($p,q$) model as follows:
	\begin{equation}\label{qgarch4}
		Q_\tau(y_t|\mathcal{F}_{t-1})= \omega(\tau)+ \sum_{j=1}^\infty \gamma_j (\alpha_1(\tau), \dots, \alpha_q(\tau), \beta_1(\tau), \dots, \beta_p(\tau))|y_{t-j}|,
	\end{equation}
	or equivalently,
	\begin{equation}\label{qgarch}
		y_t= \omega(U_t) + \sum_{j=1}^\infty \gamma_j (\alpha_1(U_t), \dots, \alpha_q(U_t), \beta_1(U_t), \dots, \beta_p(U_t))|y_{t-j}|,
	\end{equation}
	where $\omega: (0, 1)\rightarrow\mathbb{R}$ and $\alpha_i: (0, 1)\rightarrow\mathbb{R}$ are unknown monotonic increasing functions, and $\beta_k: (0, 1)\rightarrow [0, 1)$ is a non-negative real-valued function, for $1\leq i\leq q$ and $1\leq k\leq p$, with $\sum_{j=1}^{p}\beta_j(\cdot)< 1$.
	In particular, the quantile GARCH($1,1$) model has the form of \eqref{qgarch11}.
	
	The quantile GARCH($p,q$) model in \eqref{qgarch4} or \eqref{qgarch} also requires condition \eqref{location11} for its identifiability. By \eqref{gamma} and Lemma 2.1 of \cite{BerkesHorvarthKokoszka2003}, we can verify that condition \eqref{location11} holds if and only if
	\begin{equation}\label{loc2}
		\omega(0.5)=\alpha_1(0.5)=\cdots=\alpha_q(0.5)=0.
	\end{equation}
	Hence, we impose \eqref{loc2} for the quantile GARCH($p,q$) model. As in Section \ref{sec:model}, we refrain from imposing  any monotonicity constraint on $\beta_j(\cdot)$'s  to avoid restricting the flexibility of the functions. Nonetheless,  the monotonicity of the right side of \eqref{qgarch4} in $\tau$ is guaranteed if  $\beta_1(\cdot), \ldots, \beta_p(\cdot)$ are monotonic decreasing on $(0, 0.5)$ and monotonic increasing on $(0.5, 1)$; see Remark \ref{remark-monotonicity}.
	
	Moreover, we can write the quantile GARCH($p,q$) model in the form of
	\begin{align*}
		y_t &= \text{sgn}(U_t-0.5)|y_t|,\\
		|y_t|&= |\omega(U_t) |+ \sum_{j=1}^\infty \gamma_j (|\alpha_1(U_t)|, \dots, |\alpha_q(U_t)|, \beta_1(U_t), \dots, \beta_p(U_t)) |y_{t-j}|.
	\end{align*}
	Then the quantile GARCH($p,q$) model is equivalent to 
	\begin{align}\label{equ-qgarch-general}
		y_t = \text{sgn}(U_t-0.5)|y_t|, \quad |y_t|= \phi_{0, t} + \sum_{j=1}^{\infty}\phi_{j,t}|y_{t-j}|, \quad j\geq1,
	\end{align}
	where $\phi_{0, t}=|\omega(U_t)|$ and $\phi_{j, t}=\gamma_j (|\alpha_1(U_t)|, \dots, |\alpha_q(U_t)|, \beta_1(U_t), \dots, \beta_p(U_t))$ for $j\geq1$. 
	This enables us to establish a sufficient condition for the existence of a strictly stationary solution of the quantile GARCH($p,q$) model in the following theorem.
	
	\begin{thm}
		Suppose condition \eqref{loc2} holds. If there exists $s\in(0,1]$ such that 
		\begin{equation*}
			E(\phi_{0, t}^s)<\infty \quad\text{and}\quad \sum_{j=1}^{\infty}E(\phi_{j,t}^s)<1,
		\end{equation*}
		or $s>1$ such that 
		\begin{equation*}
			E(\phi_{0, t}^s)<\infty \quad\text{and}\quad \sum_{j=1}^{\infty}[E(\phi_{j,t}^s)]^{1/s}<1,
		\end{equation*}
		then there exists a strictly stationary solution of the quantile GARCH($p,q$) equations in \eqref{equ-qgarch-general}, and the process $\{y_t\}$ defined by
		\begin{equation}\label{solution}
			y_t=\textup{sgn}(U_t-0.5)\left(\phi_{0,t}+\sum_{\ell=1}^{\infty}\sum_{j_1, \dots, j_\ell=1}^{\infty}	\phi_{0,  t-j_1-\cdots-j_\ell}\phi_{j_1, t}\phi_{j_2, t-j_1}\cdots\phi_{j_\ell, t-j_1-\cdots-j_{\ell-1}}\right)
		\end{equation}
		is the unique strictly stationary and $\mathcal{F}_{t}^{U}$-measurable solution to \eqref{equ-qgarch-general} such that $E|y_t|^s<\infty$, where $\mathcal{F}_{t}^{U}$ is the $\sigma$-field generated by $\{U_{t}, U_{t-1}, \dots\}$.
	\end{thm}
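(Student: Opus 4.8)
The plan is to reduce the statement to Theorem~\ref{thm-stationarity11}: the system \eqref{equ-qgarch-general} is the \emph{same} abstract stochastic recursion as \eqref{equ-qgarch11}, namely $|y_t|=\phi_{0,t}+\sum_{j\ge1}\phi_{j,t}|y_{t-j}|$ coupled with $y_t=\sgn(U_t-0.5)|y_t|$, the only change being that the nonnegative random coefficients are now $\phi_{0,t}=|\omega(U_t)|$ and $\phi_{j,t}=\gamma_j(|\alpha_1(U_t)|,\dots,|\alpha_q(U_t)|,\beta_1(U_t),\dots,\beta_p(U_t))$ for $j\ge1$. So the first task is to verify that these coefficients retain the structural properties used in the proof of Theorem~\ref{thm-stationarity11}. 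Measurability of $\phi_{j,t}$ as a function of $U_t$ (hence strict stationarity of the coefficient field $\{(\phi_{j,t})_{j\ge0}\}_{t}$, since $\{U_t\}$ is $i.i.d.$) is clear. Nonnegativity of $\gamma_j$ on the relevant domain follows from \eqref{gamma}: for $|z|\le1$ and $(d_1,\dots,d_p)\in D$ we have $|\sum_{j}d_jz^j|\le\sum_jd_j<1$, so $(1-\sum_jd_jz^j)^{-1}=\sum_{m\ge0}(\sum_jd_jz^j)^m$, a power series with nonnegative coefficients; multiplying by $\sum_i c_iz^i$ with $c_i\ge0$ (a Cauchy product) keeps all coefficients nonnegative, whence $\gamma_j(c_1,\dots,c_q,d_1,\dots,d_p)\ge0$. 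Since $\beta_k(\cdot)\ge0$ with $\sum_j\beta_j(\cdot)<1$, the vector $(\beta_1(U_t),\dots,\beta_p(U_t))$ lies in $D$ a.s., so $\phi_{j,t}\ge0$. With these facts, the hypotheses on $\phi_{0,t}$ and $\sum_j\phi_{j,t}^s$ in the statement are precisely conditions \eqref{StationaryCondition11_fractional}--\eqref{StationaryCondition11_integer} for this recursion.

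I would then repeat the argument used for Theorem~\ref{thm-stationarity11}, which follows the scheme of Theorem~1 in \cite{Douc_Roueff_Soulier2008}. Take the candidate process given by the series \eqref{solution}. All summands are nonnegative, so the partial sums increase, and a.s.\ convergence together with $L^s$-convergence is controlled by one moment bound. In the term indexed by $(j_1,\dots,j_\ell)$ the indices $t>t-j_1>\cdots>t-j_1-\cdots-j_\ell$ are strictly decreasing, so the factors $\phi_{j_1,t},\phi_{j_2,t-j_1},\dots,\phi_{j_\ell,\,t-j_1-\cdots-j_{\ell-1}},\phi_{0,\,t-j_1-\cdots-j_\ell}$ are mutually independent. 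For $s\in(0,1]$, subadditivity of $x\mapsto x^s$ on $[0,\infty)$, independence, and stationarity give
\[
E\big(|y_t|^s\big)\le E(\phi_{0,t}^s)\sum_{\ell\ge0}\Big(\sum_{j\ge1}E(\phi_{j,t}^s)\Big)^{\ell}=\frac{E(\phi_{0,t}^s)}{1-\sum_{j\ge1}E(\phi_{j,t}^s)}<\infty,
\]
and for $s>1$ the same computation with Minkowski's inequality and the identity $\|XY\|_s=\|X\|_s\|Y\|_s$ for independent nonnegative $X,Y$ yields $\||y_t|\|_s\le\|\phi_{0,t}\|_s/(1-\sum_{j\ge1}[E(\phi_{j,t}^s)]^{1/s})<\infty$. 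Hence $|y_t|$ is a.s.\ finite with $E|y_t|^s<\infty$, and $y_t=\sgn(U_t-0.5)|y_t|$ is $\mathcal{F}_t^U$-measurable and strictly stationary (being a fixed measurable functional of $\{U_t,U_{t-1},\dots\}$). That $\{y_t\}$ solves \eqref{equ-qgarch-general} is a routine re-indexing: substituting the series for each $|y_{t-j}|$ into $\phi_{0,t}+\sum_{j\ge1}\phi_{j,t}|y_{t-j}|$ and rearranging (legitimate since all terms are nonnegative) reproduces \eqref{solution}.

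For uniqueness, let $\{y_t'\}$ be any strictly stationary, $\mathcal{F}_t^U$-measurable solution of \eqref{equ-qgarch-general} with $E|y_t'|^s<\infty$. Iterating $|y_t'|=\phi_{0,t}+\sum_{j\ge1}\phi_{j,t}|y_{t-j}'|$ exactly $N$ times writes $|y_t'|$ as the partial sum of \eqref{solution} up to level $\ell=N$ plus a remainder built from products $\phi_{j_1,t}\cdots\phi_{j_{N+1},\,t-j_1-\cdots-j_N}|y'_{t-j_1-\cdots-j_{N+1}}|$; using the independence structure, the stationarity of $\{y_t'\}$ (so $E|y'_{t-k}|^s=E|y'_t|^s$), and the moment bound on $|y_t'|$, this remainder tends to $0$ in $L^s$ geometrically fast, exactly as in the bound above. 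Hence $|y_t'|$ equals the series \eqref{solution} a.s., and since every solution of \eqref{equ-qgarch-general} obeys $y_t'=\sgn(U_t-0.5)|y_t'|$, we obtain $y_t'=y_t$ a.s. I do not expect a genuine obstacle here: the $(p,q)$ case adds nothing beyond bookkeeping for the coefficients $\gamma_j$, and the only point requiring a little care is the nonnegativity and summability of $\gamma_j(|\alpha_1(\cdot)|,\dots,|\alpha_q(\cdot)|,\beta_1(\cdot),\dots,\beta_p(\cdot))$, which the generating-function identity \eqref{gamma} and the constraint $\sum_j\beta_j(\cdot)<1$ supply.
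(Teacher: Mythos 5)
Your proposal is correct and follows essentially the same route as the paper, which proves the $(1,1)$ case (Theorem \ref{thm-stationarity11}) by exactly this Volterra-series construction---subadditivity of $x\mapsto x^s$ plus independence of the factors for $s\in(0,1]$, a Minkowski-type bound for $s>1$, and uniqueness via $N$-fold iteration of the recursion with a geometrically vanishing remainder---and leaves the $(p,q)$ case as the identical argument applied to the same abstract recursion $|y_t|=\phi_{0,t}+\sum_{j\ge1}\phi_{j,t}|y_{t-j}|$; your extra verification that $\gamma_j\ge 0$ via the generating-function identity \eqref{gamma} is a sensible point of care the paper takes for granted. The one place you deviate is the $s>1$ moment bound, where you use Minkowski together with $\|XY\|_s=\|X\|_s\|Y\|_s$ for independent nonnegative factors, which is cleaner than the paper's H\"older expansion (carried out explicitly only for integer $s\in\{2,3,\dots\}$) and covers all real $s>1$.
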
	
	
	\subsection{Proposed estimation methods}
	
	The proposed QR and CQR estimators can be extended to the quantile GARCH($p,q$) model \eqref{qgarch} with minor adjustments in notations and assumptions. 
	
	For the QR method,  denote the parameter vector of model \eqref{qgarch4} by $\bm\theta=(\omega, \bm\vartheta^{\prime})^{\prime}=(\omega, \bm\alpha^{\prime},\bm\beta^{\prime})^{\prime}$, where $\bm\alpha=(\alpha_{1},\ldots,\alpha_{q})^{\prime}$, $\bm\beta=(\beta_{1},\ldots,\beta_{p})^{\prime}$, and the parameter space is $\Theta \subset \mathbb{R}^{q+1}\times [0,1)^p$.
	Then the conditional quantile functions $q_t(\bm\theta)$ and $\widetilde{q}_t(\bm\theta)$ are given by
	\[
	q_t(\bm\theta) =\omega + \sum_{j=1}^\infty \gamma_j (\bm\vartheta)|y_{t-j}| \quad\text{and}\quad 
	\widetilde{q}_t(\bm\theta)=\omega + \sum_{j=1}^{t-1} \gamma_j (\bm\vartheta)|y_{t-j}|.
	\]
	Accordingly, the self-weighted QR estimator for quantile GARCH($p,q$) model can be defined as $\widetilde{\bm\theta}_{wn}(\tau)$ in \eqref{PairwiseWCQE}. 
	Let the true value of the parameter vector  be $\bm\theta(\tau)=(\omega(\tau), \bm\vartheta^{\prime}(\tau))^{\prime}=(\omega(\tau), \bm\alpha^{\prime}(\tau), \bm\beta^{\prime}(\tau))^{\prime}$, where $\bm\alpha(\tau)=(\alpha_{1}(\tau),\ldots,\alpha_{q}(\tau))^{\prime}$ and  $\bm\beta(\tau)=(\beta_{1}(\tau),\ldots,\beta_{p}(\tau))^{\prime}$.
	Denote the first derivative of $q_t(\bm\theta)$ by $\dot{q}_t(\bm\theta)=(1,\sum_{j=1}^{\infty}\dot{\gamma}_j^{\prime}(\bm\vartheta)|y_{t-j}|)^{\prime}$. 
	
	\begin{assum}\label{assum-FunctionDerivatives}
		For each $j\geq 1$, $\gamma_j(\cdot)$'s are twice differentiable functions, with derivatives of first and second orders,  $\dot{\gamma}_j(\cdot)$ and $\ddot{\gamma}_j(\cdot)$, satisfying that  
		(i) $\sup_{\|\bm \nu\|\leq r}|\gamma_j(\bm \nu+\bm\vartheta(\tau))|\leq c_1\rho^j$; 
		(ii) $\sup_{\|\bm \nu\|\leq r}\|\dot{\gamma}_j(\bm \nu+\bm\vartheta(\tau))\|\leq c_2\rho^j$; 
		(iii) $\sup_{\|\bm \nu\|\leq r}\|\ddot{\gamma}_j(\bm \nu+\bm\vartheta(\tau))\|\leq c_3\rho^j$ for some constants $c_1, c_2, c_3>0$, 
		where $\bm \nu\in \mathbb{R}^{p+q}$, $r>0$ is a fixed small value, and $0< \rho <1$. 
	\end{assum}
	
	\begin{thm}
		For $\{y_t\}$ generated by model \eqref{qgarch} under condition \eqref{loc2}, suppose $E|y_t|^s<\infty$ for some $s\in (0,1)$ and $\Sigma_{w}(\tau,\tau)$ is positive definite. 
		If Assumptions \ref{assum-Process}, \ref{assum-Space}(i), \ref{assum-ConditionalDensity}, \ref{assum-RandomWeight} and \ref{assum-FunctionDerivatives} hold, then as $n\rightarrow\infty$, we have (i) $\widetilde{\bm\theta}_{wn}(\tau) \to_p \bm\theta(\tau)$;  (ii) $\sqrt{n}(\widetilde{\bm\theta}_{wn}(\tau)-\bm\theta(\tau))\rightarrow_d N\left(\bm 0,\Sigma_{w}(\tau,\tau)\right)$  if Assumption \ref{assum-Space}(ii) is further satisfied. 
	\end{thm}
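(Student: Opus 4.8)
The plan is to follow the argument used for the quantile GARCH($1,1$) model in Corollary \ref{thm-WCQE} almost verbatim, with Assumption \ref{assum-FunctionDerivatives} taking over the role played in the $(1,1)$ case by the explicit geometric weights $[\beta_1(\tau)]^{j-1}$. Throughout, write $\bm T_n(\tau)=n^{-1/2}\sum_{t=1}^n w_t\dot q_t(\bm\theta(\tau))\psi_\tau(y_t-q_t(\bm\theta(\tau)))$. As in Corollary \ref{thm-WCQE}, only pointwise results are claimed, so the Lipschitz condition is not needed. For part (i), I would show that $n^{-1}\sum_{t=1}^n w_t\rho_\tau(y_t-\widetilde q_t(\bm\theta))$ converges, uniformly on the compact set $\Theta$, to $L(\bm\theta):=E[w_t\rho_\tau(y_t-q_t(\bm\theta))]$: the ergodic theorem (Assumption \ref{assum-Process}) handles the in-probability limit, the initial-value gap $|\widetilde q_t(\bm\theta)-q_t(\bm\theta)|\le c_1\sum_{j\ge t}\rho^{j}|y_{t-j}|$ from Assumption \ref{assum-FunctionDerivatives}(i) is asymptotically negligible because $E|y_t|^s<\infty$ and $\rho<1$, and uniform integrability of $w_t\sup_{\bm\theta\in\Theta}|\rho_\tau(y_t-\widetilde q_t(\bm\theta))|$ follows from Assumption \ref{assum-RandomWeight} together with $|q_t(\bm\theta)|\le|\omega|+c_1\sum_{j\ge1}\rho^{j}|y_{t-j}|$. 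Uniqueness of the minimizer of $L$ at $\bm\theta(\tau)$ uses the convexity of $\rho_\tau$ and the positivity of $f_{t-1}$ on its support (Assumption \ref{assum-ConditionalDensity}), so that $L(\bm\theta)-L(\bm\theta(\tau))>0$ whenever $\mathrm{Pr}(q_t(\bm\theta)\ne q_t(\bm\theta(\tau)))>0$, which by \eqref{gamma} and the identifiability of the ARCH($\infty$) coefficients holds for every $\bm\theta\ne\bm\theta(\tau)$. Standard $M$-estimation arguments then give $\widetilde{\bm\theta}_{wn}(\tau)\to_p\bm\theta(\tau)$.

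For part (ii), the obstacle is that the criterion is neither convex in $\bm\theta$ (because the $\gamma_j(\bm\vartheta)$ are nonlinear in $\bm\beta$) nor differentiable (because of the check function), so I would invoke the bracketing device of \cite{Pollard1985}, as in the proof of Theorem \ref{thm-WCQE-weak-convergence}. Reparametrize $\bm u=\sqrt n(\bm\theta-\bm\theta(\tau))$ and expand $\sum_{t=1}^n w_t[\rho_\tau(y_t-\widetilde q_t(\bm\theta))-\rho_\tau(y_t-q_t(\bm\theta(\tau)))]$ via Knight's identity into a linear term $-\bm u^\prime\bm T_n(\tau)$ plus a nonnegative term whose conditional mean, using the boundedness of $f_{t-1}$ and $\dot f_{t-1}$ (Assumption \ref{assum-ConditionalDensity}), equals $\tfrac12\bm u^\prime\Omega_{1w}(\tau)\bm u+o_p(1)$ uniformly on compacts in $\bm u$. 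Here the bounds $\|\dot\gamma_j\|\le c_2\rho^{j}$ and $\|\ddot\gamma_j\|\le c_3\rho^{j}$ from Assumption \ref{assum-FunctionDerivatives}(ii)--(iii), combined with $E(w_t|y_{t-j}|^3)<\infty$, guarantee $E(w_t\|\dot q_t(\bm\theta(\tau))\|^2)<\infty$, control the second-order remainder, and (as above) absorb the initial-value error; a maximal inequality over the brackets shows the martingale-difference remainder is $o_p(\|\bm u\|)$ on compacts. This yields the pointwise Bahadur representation $\sqrt n(\widetilde{\bm\theta}_{wn}(\tau)-\bm\theta(\tau))=\Omega_{1w}^{-1}(\tau)\bm T_n(\tau)+o_p(1)$, where Assumption \ref{assum-Space}(ii) ensures the minimizer is interior for large $n$ and $\Omega_{1w}(\tau)$ is invertible by the positive-definiteness hypothesis.

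Finally, since $q_t(\bm\theta(\tau))$ is the true conditional $\tau$-quantile of $y_t$ and both $w_t$ and $\dot q_t(\bm\theta(\tau))$ are $\mathcal F_{t-1}$-measurable, $\{w_t\dot q_t(\bm\theta(\tau))\psi_\tau(y_t-q_t(\bm\theta(\tau))),\ \mathcal F_t\}$ is a stationary, ergodic, square-integrable martingale difference sequence, so the martingale central limit theorem gives $\bm T_n(\tau)\to_d N(\bm 0,\tau(1-\tau)\Omega_{0w}(\tau,\tau))$; combining with the Bahadur representation and the continuous mapping theorem yields $\sqrt n(\widetilde{\bm\theta}_{wn}(\tau)-\bm\theta(\tau))\to_d N(\bm 0,\Sigma_w(\tau,\tau))$. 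The main effort, exactly as in the $(1,1)$ case, is the non-convex bracketing argument underlying the Bahadur representation; the only genuinely new ingredient is that every tail and moment bound that was explicit through $[\beta_1(\tau)]^{j-1}$ in the $(1,1)$ model must now be routed through the uniform geometric decay supplied by Assumption \ref{assum-FunctionDerivatives}.
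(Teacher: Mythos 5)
Your proposal matches the paper's own route: the paper proves the $(1,1)$ case in detail (uniform LLN plus a Knight-identity uniqueness argument for consistency; Knight's identity, Pollard's bracketing lemma for the stochastic-differentiability remainder, a quadratic expansion, and the martingale CLT for asymptotic normality, with initial values controlled by geometric decay) and then states that the $(p,q)$ theorem follows by the same arguments with Assumption \ref{assum-FunctionDerivatives} supplying the bounds $|\gamma_j|,\|\dot\gamma_j\|,\|\ddot\gamma_j\|\le C\rho^j$ in place of the explicit weights $[\beta_1(\tau)]^{j-1}$, which is exactly what you do. The only minor caveat is that the $\sqrt{n}$-rate is not assumed in the paper but extracted from the expansion valid for any $o_p(1)$ sequence (via $\widetilde{H}_n(\widetilde{\bm u}_n)\le 0$), whereas your phrasing "on compacts in $\bm u$" after rescaling slightly presupposes it; Pollard's bracketing bound, which you invoke, covers this as in the paper.
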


	For the CQR method, let
	$\bm\varphi=(\bm\phi^{\prime}, \lambda)^{\prime}=(a_0,\bm\psi^{\prime},\lambda)^{\prime}=(a_0, a_1,\ldots, a_q, b_1, \ldots, b_p, \lambda)^{\prime}$ be the parameter vector of the linear GARCH($p,q$) model in  \eqref{lgarch1} with the innovation $\varepsilon_t$ following the Tukey-lambda distribution in \eqref{TukeyLambda}. 
	Let $\Phi \subset (0,\infty)\times[0,\infty)^{q}\times [0,1)^p\times \Lambda$ be the parameter space of $\bm\varphi$. 
	The conditional quantile functions $q_{t,\tau}(\bm\varphi)$ and $\widetilde{q}_{t,\tau}(\bm\varphi)$ are defined as in Section \ref{subsec-CQR} with $h_t(\bm\phi)=a_0(1-\sum_{j=1}^{p}b_j)^{-1}+\sum_{j=1}^{\infty}\gamma_j(\bm\psi)|y_{t-j}|$ and $\widetilde{h}_t(\bm\phi)=a_0(1-\sum_{j=1}^{p}b_j)^{-1}+\sum_{j=1}^{t-1}\gamma_j(\bm\psi)|y_{t-j}|$, respectively. 
	Then the self-weighted CQR estimator for the quantile GARCH($p,q$) model is given by $\check{\bm\varphi}_{wn}$ in \eqref{CQR}, and the transformed CQR estimator is $\check{\bm\theta}_{wn}^*(\tau)=g_{\tau}(\check{\bm\varphi}_{wn})$, where $g_{\tau}(\cdot): \mathbb{R}^{p+q+2}\to\mathbb{R}^{p+q+1}$ is   the measurable  function defined as $g_{\tau}(\bm\varphi)=(a_0Q_{\tau}(\lambda)/(1-\sum_{j=1}^pb_j),a_1Q_{\tau}(\lambda),\ldots,a_qQ_{\tau}(\lambda),b_1,\ldots,b_p)^{\prime}$. 
	Let $\bm\varphi_0^*=(\bm\phi_0^{\prime}, \lambda_0)^{\prime}=(a_{00},\bm\psi_0^{\prime},\lambda_0)^{\prime}=(a_{00}, a_{10}, \ldots, a_{q0}, b_{10}, \ldots,b_{p0}, \lambda_0)^{\prime}$ be the pseudo-true parameter defined as in \eqref{pseudo-true}. Define the first derivative of $q_{t,\tau}(\bm\varphi)$ as $\dot{q}_{t,\tau}(\bm\varphi)=(Q_{\tau}(\lambda)\dot{h}_t^{\prime}(\bm\phi),\dot{Q}_{\tau}(\lambda)h_t(\bm\phi))^{\prime}$, where $\dot{h}_t(\bm\phi)$ and $\dot{Q}_{\tau}(\lambda)$ are the first derivatives of $h_t(\bm\phi)$ and $Q_{\tau}(\lambda)$, respectively.

	\begin{assum}\label{assum-FunctionDerivativesTukey}
		For each $j\geq 1$, $\gamma_j(\cdot)$'s are twice differentiable functions, with derivatives of first and second orders,  $\dot{\gamma}_j(\cdot)$ and $\ddot{\gamma}_j(\cdot)$, satisfying that  
		(i) $\sup_{\|\bm \nu\|\leq r}|\gamma_j(\bm \nu+\bm\psi_0)|\leq c_1\rho^j$; 
		(ii) $\sup_{\|\bm \nu\|\leq r}\|\dot{\gamma}_j(\bm \nu+\bm\psi_0)\|\leq c_2\rho^j$; 
		(iii) $\sup_{\|\bm \nu\|\leq r}\|\ddot{\gamma}_j(\bm \nu+\bm\psi_0)\|\leq c_3\rho^j$ for some constants $c_1, c_2, c_3>0$, 
		where $\bm \nu\in \mathbb{R}^{p+q}$, $r>0$ is a fixed small value, and $0< \rho <1$. 
	\end{assum}
	
	\begin{thm}
		For $\{y_t\}$ generated by model \eqref{qgarch} under condition \eqref{loc2}, suppose $E|y_t|^s<\infty$ for some $s\in (0,1)$ and $\Sigma_w^*$ is positive definite. 
		If Assumptions \ref{assum-ConditionalDensity}, \ref{assum-RandomWeight}, \ref{assum-Process-Mixing}, \ref{assum-SpaceTukey}(i) and \ref{assum-FunctionDerivativesTukey} hold, then as $n\rightarrow\infty$, we have (i) $\check{\bm\varphi}_{wn} \to_p \bm\varphi_0^*$. Moreover, if Assumption \ref{assum-SpaceTukey}(ii) further holds, then (ii) $\sqrt{n}(\check{\bm\varphi}_{wn}-\bm\varphi_0^*)\rightarrow_d N\left(\bm 0,\Sigma_w^*\right)$; and (iii) $\sqrt{n}(\check{\bm\theta}_{wn}^*(\tau)-\bm\theta(\tau)-B(\tau))\rightarrow_d N\left(\bm 0,g_{\tau}(\bm\varphi_0^*)\Sigma_w^*g_{\tau}^{\prime}(\bm\varphi_0^*)\right)$, where $B(\tau)=g_{\tau}(\bm\varphi_0^*)-\bm\theta(\tau)$.
	\end{thm}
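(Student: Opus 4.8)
The plan is to follow the three-step route used for Theorem~\ref{thm-WCQR} in the quantile GARCH($1,1$) case, the only substantive change being that the explicit geometric weights $b_1^{j-1}$ are replaced by the implicit coefficient functions $\gamma_j(\bm\psi)$, whose behaviour near $\bm\psi_0$ is now supplied by Assumption~\ref{assum-FunctionDerivativesTukey}. First I would record that, on a neighbourhood of $\bm\psi_0$, the functions $\gamma_j$, $\dot\gamma_j$ and $\ddot\gamma_j$ obey the uniform geometric bounds $c_i\rho^j$, so that $h_t(\bm\phi)$ and its derivatives $\dot h_t(\bm\phi)$, $\ddot h_t(\bm\phi)$ are well defined, continuous, and dominated there by series of the form $C\sum_{j\ge1}\rho^j|y_{t-j}|$; in particular the initialization error satisfies $\sup_{\bm\phi}|\widetilde h_t(\bm\phi)-h_t(\bm\phi)|\le C\sum_{j\ge t}\rho^j|y_{t-j}|$, which together with $E|y_t|^s<\infty$ makes the effect of the initial values asymptotically negligible exactly as in the $(1,1)$ case. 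Given these bounds, ergodicity from Assumption~\ref{assum-Process-Mixing}, compactness and uniqueness from Assumption~\ref{assum-SpaceTukey}(i), and Assumptions~\ref{assum-ConditionalDensity} and \ref{assum-RandomWeight}, a uniform law of large numbers shows that $n^{-1}\sum_t\sum_k w_t\rho_{\tau_k}(y_t-\widetilde q_{t,\tau_k}(\bm\varphi))$ converges uniformly over $\Phi$ to $\sum_k E[w_t\rho_{\tau_k}(y_t-q_{t,\tau_k}(\bm\varphi))]$, whose unique minimizer is $\bm\varphi_0^*$; the usual argmin argument then gives part (i), $\check{\bm\varphi}_{wn}\to_p\bm\varphi_0^*$.

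For part (ii), to deal with the non-convex, non-differentiable criterion I would adopt the bracketing method of \cite{Pollard1985}, examining the centred normalized objective $u\mapsto\sum_t\sum_k w_t[\rho_{\tau_k}(y_t-q_{t,\tau_k}(\bm\varphi_0^*+u/\sqrt n))-\rho_{\tau_k}(y_t-q_{t,\tau_k}(\bm\varphi_0^*))]$ and showing that, uniformly on compact sets of $u$, it equals $\tfrac12 u'\Omega_{1w}^*u-u'W_n+o_p(1)$ with $W_n=n^{-1/2}\sum_t\bm X_t$. The curvature splits into the second-derivative term of $q_{t,\tau_k}$ paired with $\psi_{\tau_k}$, which produces $\Omega_{11}^*$ and which does \emph{not} vanish because under misspecification $E[\psi_{\tau_k}(y_t-q_{t,\tau_k}(\bm\varphi_0^*))\mid\mathcal{F}_{t-1}]\neq0$, and the Jacobian term, which produces $\Omega_{12}^*$, so the limiting curvature is $\Omega_{1w}^*=\Omega_{12}^*-\Omega_{11}^*$. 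Because $\bm\varphi_0^*$ is interior by Assumption~\ref{assum-SpaceTukey}(ii), its defining first-order condition gives $E(\bm X_t)=\bm0$; but $\{\bm X_t\}$ is no longer a martingale difference sequence, so the decisive point is to use a central limit theorem for $\alpha$-mixing sequences \citep{Fan_Yao2003} --- noting that $\bm X_t$ is a measurable function of $\{\ldots,y_{t-1},y_t\}$ and hence inherits the mixing coefficients of $\{y_t\}$ at the rate in Assumption~\ref{assum-Process-Mixing} --- to obtain $W_n\to_d N(\bm0,\Omega_{0w}^*)$ with the long-run covariance $\Omega_{0w}^*=E(\bm X_t\bm X_t')+n^{-1}\sum_{t\ne s}E(\bm X_t\bm X_s')$. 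The standard argmin-of-a-quadratic argument then yields the Bahadur representation $\sqrt n(\check{\bm\varphi}_{wn}-\bm\varphi_0^*)=\Omega_{1w}^{*-1}W_n+o_p(1)\to_d N(\bm0,\Sigma_w^*)$. The moment bounds on $w_t\dot q_{t,\tau_k}$ and $w_t\ddot q_{t,\tau_k}$ needed for the bracketing and the mixing CLT follow by combining the geometric decay of Assumption~\ref{assum-FunctionDerivativesTukey}(ii)--(iii) with $E(w_t|y_{t-j}|^3)<\infty$ from Assumption~\ref{assum-RandomWeight} and the boundedness of $f_{t-1}$ and $\dot f_{t-1}$ from Assumption~\ref{assum-ConditionalDensity}.

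Part (iii) then follows from the delta method: $g_\tau$ is continuously differentiable and $\bm\varphi_0^*$ is interior, so $\sqrt n(g_\tau(\check{\bm\varphi}_{wn})-g_\tau(\bm\varphi_0^*))\to_d N(\bm0,g_\tau(\bm\varphi_0^*)\Sigma_w^*g_\tau'(\bm\varphi_0^*))$, and substituting $g_\tau(\bm\varphi_0^*)=\bm\theta(\tau)+B(\tau)$ gives the stated limit. The main obstacle I anticipate is in part (ii): verifying the stochastic-equicontinuity and bracketing-entropy conditions of \cite{Pollard1985} when each summand depends on infinitely many lags through the implicit functions $\gamma_j$ rather than a closed form, and matching the moment order $\delta>2$ of the mixing condition in Assumption~\ref{assum-Process-Mixing} with the moments actually available from Assumptions~\ref{assum-RandomWeight} and \ref{assum-FunctionDerivativesTukey}; propagating the geometric bounds from $\gamma_j$ to $h_t$, $\dot h_t$, $\ddot h_t$ and thence to $\dot q_{t,\tau_k}$, $\ddot q_{t,\tau_k}$, together with the uniform control of the initialization error, is the key device that makes the $(1,1)$ argument carry over to the general GARCH($p,q$) setting.
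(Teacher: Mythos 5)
Your proposal is correct and follows essentially the same route as the paper: the paper proves the GARCH($1,1$) case (consistency via a uniform law of large numbers after controlling initial values, asymptotic normality via Knight's identity, Pollard's bracketing method, the curvature split $\Omega_{1w}^*=\Omega_{12}^*-\Omega_{11}^*$, and the $\alpha$-mixing CLT applied to $\bm X_t$ with $E(\bm X_t)=\bm 0$, then the delta method for $g_\tau$), and handles the general ($p,q$) case exactly as you do, by propagating the geometric bounds of Assumption \ref{assum-FunctionDerivativesTukey} on $\gamma_j$, $\dot\gamma_j$, $\ddot\gamma_j$ to $h_t$, $\dot h_t$, $\ddot h_t$ and the initialization error, so that the ($1,1$) lemmas carry over with only notational changes.
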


	\section{Estimation of $\Sigma_w^*$ in Theorem \ref{thm-WCQR}}\label{remark-WCQR_cov}

	To approximate the asymptotic variance $\Sigma_w^*$ in Theorem \ref{thm-WCQR}, it suffices to estimate $\Omega_{0w}^*$ and $\Omega_{1w}^*$. Note that $\Omega_{11}^*$ and $\Omega_{12}^*$ can be consistently estimated by the sample averages:
	\begin{align*}
		\check{\Omega}_{11}^*&=\dfrac{1}{n}\sum_{t=1}^n\sum_{k=1}^Kw_t\ddot{\widetilde{q}}_{t,\tau_k}(\check{\bm\varphi}_{wn})\psi_{\tau_k}(y_t-\widetilde{q}_{t,\tau_k}(\check{\bm\varphi}_{wn})) \quad \text{and} \\
		\check{\Omega}_{12}^*&=\dfrac{1}{n}\sum_{t=1}^n\sum_{k=1}^Kw_t\check{f}_{t-1}(\widetilde{q}_{t,\tau_k}(\check{\bm\varphi}_{wn}))\dot{\widetilde{q}}_{t,\tau_k}(\check{\bm\varphi}_{wn})\dot{\widetilde{q}}_{t,\tau_k}^{\prime}(\check{\bm\varphi}_{wn}),
	\end{align*}
	where $\check{f}_{t-1}(\widetilde{q}_{t,\tau_k}(\check{\bm\varphi}_{wn}))=2\ell_k[\widetilde{q}_{t,\tau_k+\ell_k}(\check{\bm\varphi}_{wn})-\widetilde{q}_{t,\tau_k-\ell_k}(\check{\bm\varphi}_{wn})]^{-1}$, with $\ell_k$ representing the bandwidth defined as in \eqref{bandwidths} for quantile level $\tau_k$, and $\dot{\widetilde{q}}_{t,\tau}(\check{\bm\varphi}_{wn})$ and $\ddot{\widetilde{q}}_{t,\tau}(\check{\bm\varphi}_{wn})$ are obtained from $\dot{q}_{t,\tau}(\check{\bm\varphi}_{wn})$ and $\ddot{q}_{t,\tau}(\check{\bm\varphi}_{wn})$ by setting the initial values $y_t=0$ for $t\leq 0$, respectively. 
	Then $\Omega_{1w}^*$ can be consistently estimated by $\check{\Omega}_{1w}^*=\check{\Omega}_{12}^*-\check{\Omega}_{11}^*$.

	As the population covariance matrix of $n^{-1/2}\sum_{t=1}^n\bm X_t$, the matrix $\Omega_{0w}^*$ cannot be consistently estimated by the corresponding sample covariance matrix \citep{Wu_Pourahmadi2009}. 
	Alternatively, we adopt the following kernel estimator of spectral density matrix \citep{Andrews1991}:
	\begin{align}\label{kernel_estimator}
		\check{\Omega}_{0w}^*=\dfrac{n}{n-d}\sum_{\ell=-n+1}^{n-1}K\left(\dfrac{\ell}{B_n}\right)\check{\Gamma}(\ell),
	\end{align} 
	where $n/(n-d)$ with $d=4$ is a small sample degrees of freedom adjustment to offset the effect of estimating $\bm\varphi_0^*\in\mathbb{R}^d$ using $\check{\bm\varphi}_{wn}$, $\check{\Gamma}(\ell)=I(\ell\geq 0)n^{-1}\sum_{t=\ell+1}^n\check{\bm X}_t\check{\bm X}_{t-\ell}^{\prime}+I(\ell< 0)n^{-1}\sum_{t=-\ell+1}^n\check{\bm X}_{t+\ell}\check{\bm X}_{t}^{\prime}$ with $\check{\bm X}_t=\sum_{k=1}^K w_t\dot{\widetilde{q}}_{t,\tau_k}(\check{\bm\varphi}_{wn})\psi_{\tau_k}(y_t-\widetilde{q}_{t,\tau_k}(\check{\bm\varphi}_{wn}))$, $B_n$ is a bandwidth, and $K(\cdot): \mathbb{R}\to [-1,1]$ is a real-valued kernel function satisfying
	\begin{align}\label{assum-kernel}
		K(0)=1,~ K(x)=K(-x),~\int_{-\infty}^{\infty}K^2(x)dx<\infty,~\text{and}~K(\cdot)~\text{is continuous}.
	\end{align} 
	Under Assumption \ref{assum-Process-Mixing}, if $B_n\to\infty$, $B_n^2/n\to 0$, $E(\|\bm X_t\|^{2\delta})<\infty$, and $\sum_{n=1}^{\infty}n^2[\alpha(n)]^{1-2/\delta}$ for some $\delta>2$, \cite{Andrews1991} showed that $\check{\Omega}_{0w}^*\to_p \Omega_{0w}^*$ as $n\to\infty$. 
	As a result, the asymptotic covariance $\Sigma_w^*$ can be estimated by $\check{\Sigma}_w^*=\check{\Omega}_{1w}^{*-1}\check{\Omega}_{0w}^*\check{\Omega}_{1w}^{*-1}$. 
	
	Many kernel functions satisfy \eqref{assum-kernel}, such as the Bartlett, Parzen, Tukey-Hanning and quadratic spectral (QS) kernels. \cite{Andrews1991} showed that under some regular conditions the QS kernel is optimal with respect to the asymptotic truncated mean squared error (MSE) among the aforementioned kernels. Therefore, we employ the QS kernel defined as follows:
	\begin{align}\label{QS-kernel}
		K(x)=\dfrac{25}{12\pi^2x^2}\left[\dfrac{\sin(6\pi x/5)}{6\pi x/5}-\cos(6\pi x/5)\right].
	\end{align}
	
	It remains to choose the bandwidth $B_n$ for $\check{\Omega}_{0w}^*$ in \eqref{kernel_estimator}. \cite{Andrews1991} introduced the automatic bandwidth for the QS kernel as $\widehat{B}_n=1.3221[n\widehat{\alpha}(2)]^{1/5}$, where $\widehat{\alpha}(2)$ is calculated using some approximating parametric models for each element of $\bm X_t$ or $\bm X_t$ as a whole. For simplicity, we fit AR(1) models for $\{X_{it}\}~(i=1,\ldots,4)$ and obtain the estimates $(\widehat{\rho}_i,\widehat{\sigma}_i^2)$ for the AR coefficient and innovation variance $(\rho_i,\sigma_i^2)$, respectively. Then $\widehat{\alpha}(2)$ can be calculated as
	\[\widehat{\alpha}(2)=\sum_{i=1}^{4}\iota_i\dfrac{4\widehat{\rho}_i^2\widehat{\sigma}_i^4}{(1-\widehat{\rho}_i)^8}\bigg/\sum_{i=1}^{4}\iota_i\dfrac{\widehat{\sigma}_i^4}{(1-\widehat{\rho}_i)^4},\] 
	where $\iota_i$'s are the weights assigned to the diagonal elements of $\check{\Omega}_{0w}^*$, and the usual choice of $\iota_i$ is one for $i=1,\ldots,4$.
	
	\begin{remark}[Estimation of $\Sigma_w^*$ under correct model specification]\label{remark-WCQR_cov_correct}
		If $Q_{\tau}(y_t|\mathcal{F}_{t-1})$ is correctly specified by $q_{t,\tau}(\bm\varphi_0^*)$ (i.e., $g_{\tau}(\bm\varphi_0^*)=\bm\theta(\tau)$) for each $\tau\in\mathcal{T}_h$,  then $E[\psi_{\tau}(y_t-q_{t,\tau}(\bm\varphi_0^*))|\mathcal{F}_{t-1}]=0$, and the martingale CLT can be used to establish the asymptotic normality of $\check{\bm\varphi}_{wn}$. 
		In this case, $\Sigma_w^*$ can be largely simplified since $\Omega_{1w}^*=\Omega_{12}^*$ and  $\Omega_{0w}^*=\sum_{k=1}^K\sum_{k^{\prime}=1}^K\Psi_{k,k^{\prime}}E[w_t^2\dot{q}_{t,\tau_k}(\bm\varphi_0^*)\dot{q}_{t,\tau_{k^{\prime}}}^{\prime}(\bm\varphi_0^*)]$ with $\Psi_{k,k^{\prime}}=\min\{\tau_{k},\tau_{k^{\prime}}\}(1-\max\{\tau_{k},\tau_{k^{\prime}}\})$. As a result, $\Omega_{0w}^*$ can be estimated by its sample average as for $\Omega_{0w}(\tau)$ in Section \ref{subsec-WCQE}, and a consistent estimator $\check{\Sigma}_w^*=\check{\Omega}_{12}^{*-1}\check{\Omega}_{0w}^*\check{\Omega}_{12}^{*-1}$ can be constructed for $\Sigma_w^*$.    
	\end{remark}

	\section{Notation}
	
	\subsection{$\dot{q}_t(\bm\theta)$ and $\ddot{q}_{t}(\bm\theta)$ for quantile GARCH$(1,1)$ model}
	
	Recall that $q_t(\bm\theta) =\omega + \alpha_{1}\sum_{j=1}^\infty \beta_{1}^{j-1}|y_{t-j}|$, then its first and second derivatives are 
	\begin{align}\label{1st-derivative-qt}
		\dot{q}_t(\bm\theta)=\left(1,\sum_{j=1}^{\infty}\beta^{j-1}_{1}|y_{t-j}|,\alpha_{1}\sum_{j=2}^{\infty}(j-1)\beta^{j-2}_{1}|y_{t-j}|\right)^{\prime},
	\end{align} 
	and 
	\begin{align}\label{2nd-derivative-qt}
		\ddot{q}_{t}(\bm\theta)=\begin{pmatrix}
			0 & 0 & 0 \\ 
			0 & 0 & \sum\limits_{j=2}^{\infty}(j-1)\beta^{j-2}_{1}|y_{t-j}| \\
			0 & \sum\limits_{j=2}^{\infty}(j-1)\beta^{j-2}_{1}|y_{t-j}|  & \alpha_{1}\sum\limits_{j=3}^{\infty}(j-1)(j-2)\beta^{j-3}_{1}|y_{t-j}| 
		\end{pmatrix}.
	\end{align}
	
	\subsection{$\dot{q}_{t,\tau}(\bm\varphi)$ and $\ddot{q}_{t,\tau}(\bm\varphi)$ for quantile GARCH$(1,1)$ model}

	Recall that 
	\begin{align}\label{qt-Tukey}
		q_{t,\tau}(\bm\varphi) &= Q_{\tau}(\lambda) \left(\frac{a_0}{1-b_1}+a_1\sum_{j=1}^{\infty} b_1^{j-1}|y_{t-j}|\right):= Q_{\tau}(\lambda)h_t(\bm\phi).
	\end{align} 
	Then its first and second derivatives are as follows
	\begin{align}\label{1st-derivative-qt-Tukey}
		\dot{q}_{t,\tau}(\bm\varphi)&=(Q_{\tau}(\lambda)\dot{h}_t^{\prime}(\bm\phi),\dot{Q}_{\tau}(\lambda)h_t(\bm\phi))^{\prime}
	\end{align} 
	and
	\begin{align}\label{2nd-derivative-qt-Tukey}
		\ddot{q}_{t,\tau}(\bm\varphi)=\begin{pmatrix}
			Q_{\tau}(\lambda)\ddot{h}_t(\bm\phi) & \dot{Q}_{\tau}(\lambda)\dot{h}_t(\bm\phi) \\ 
			\dot{Q}_{\tau}(\lambda)\dot{h}_t^{\prime}(\bm\phi) & \ddot{Q}_{\tau}(\lambda)h_t(\bm\phi)
		\end{pmatrix},
	\end{align}
	where $\dot{Q}_{\tau}(\lambda)$ and $\dot{h}_t(\bm\phi)$ ($\ddot{Q}_{\tau}(\lambda)$ and $\ddot{h}_t(\bm\phi)$) are the first (second) derivatives of $Q_{\tau}(\lambda)$ and $h_t(\bm\phi)$, respectively. Specifically, they are defined as follows
	\begin{align*}
		\dot{Q}_{\tau}(\lambda)&=\lambda^{-2}\{\tau^{\lambda}(\lambda\ln\tau-1)-(1-\tau)^{\lambda}[\lambda\ln(1-\tau)-1]\}, \\
		\ddot{Q}_{\tau}(\lambda)&=\lambda^{-3}\{\tau^{\lambda}[(\lambda\ln\tau-1)^2+1]-(1-\tau)^{\lambda}[(\lambda\ln(1-\tau)-1)^2+1]\}, \\
		\dot{h}_t(\bm\phi)&=\left(\dfrac{1}{1-b_1},\sum_{j=1}^{\infty}b_1^{j-1}|y_{t-j}|,\dfrac{a_0}{(1-b_1)^2}+a_1\sum_{j=2}^{\infty}(j-1)b_1^{j-2}|y_{t-j}|\right)^{\prime}, \\
		\ddot{h}_t(\bm\phi)&=\begin{pmatrix}
			0 & 0 & \dfrac{1}{(1-b_1)^2} \\ 
			0 & 0 & \sum\limits_{j=2}^{\infty}(j-1)b_1^{j-2}|y_{t-j}| \\
			\dfrac{1}{(1-b_1)^2} & \sum\limits_{j=2}^{\infty}(j-1)b_1^{j-2}|y_{t-j}|  & \dfrac{2a_0}{(1-b_1)^3}+a_1\sum\limits_{j=3}^{\infty}(j-1)(j-2)b_1^{j-3}|y_{t-j}| 
		\end{pmatrix}.
	\end{align*}

	\section{Technical Proofs}
	
	\subsection{Proof of Theorem \ref{thm-stationarity11}}
	
	\begin{proof}[Proof of Theorem \ref{thm-stationarity11}]
		Let $\{X_t\}$ be a sequence of random variables with
		\begin{equation}\label{aeq1}
			X_t=\phi_{0,t}+\sum_{\ell=1}^{\infty}\sum_{j_1, \dots, j_\ell=1}^{\infty}	\phi_{0,  t-j_1-\cdots-j_\ell}\phi_{j_1, t}\phi_{j_2, t-j_1}\cdots\phi_{j_\ell, t-j_1-\cdots-j_{\ell-1}}
		\end{equation}
		taking values in $[0,\infty]$. 
		
		(i) We first consider the case with $s\in(0,1]$. For any $s\in(0,1]$, using the inequality $(x+y)^s\leq x^s+y^s$ for $x, y\geq 0$, we have
		\[X_t^s \leq \phi_{0,t}^s+\sum_{\ell=1}^{\infty}\sum_{j_1, \dots, j_\ell=1}^{\infty}	\phi^s_{0,  t-j_1-\cdots-j_\ell}\phi^s_{j_1, t}\phi^s_{j_2, t-j_1}\cdots\phi^s_{j_\ell, t-j_1-\cdots-j_{\ell-1}}.\]
		Denote $A_s=\sum_{j=1}^{\infty} E(\phi^s_{j, t})$. Observe that the $\phi_{j, t}$'s in every summand on the right side of the above inequality are independent, where $j\geq0$. Thus it follows that
		\begin{align}\label{aeq2}
			E(X_t^s)&\leq E(\phi_{0,t}^s)+\sum_{\ell=1}^{\infty}\sum_{j_1, \dots, j_\ell=1}^{\infty}	E(\phi^s_{0,  t-j_1-\cdots-j_\ell})E(\phi^s_{j_1, t})E(\phi^s_{j_2, t-j_1})\cdots E(\phi^s_{j_\ell, t-j_1-\cdots-j_{\ell-1}})\notag\\
			&=	E(\phi^s_{0,t}) [1+\sum_{\ell=1}^{\infty} A_s^\ell] = \frac{E(\phi^s_{0,t})}{1-A_s} < \infty,
		\end{align}
		where we used the condition in \eqref{StationaryCondition11_fractional}.  Consequently, $\{X_t\}$ is a sequence of almost surely finite random variables. With all the summands being non-negative, we can write
		\begin{align*}
			\sum_{j=1}^{\infty}\phi_{j,t}X_{t-j}&=\sum_{j_0=1}^{\infty}\phi_{0,t-j_0}\phi_{j_0, t}\\
			&\hspace{5mm}+\sum_{j_0=1}^{\infty}\sum_{\ell=1}^{\infty}\sum_{j_1, \dots, j_\ell=1}^{\infty}\phi_{0,t-j_0-j_1-\cdots-j_\ell}\phi_{j_0,t}\phi_{j_1, t-j_0}\cdots\phi_{j_\ell, t-j_0-j_1-\cdots-j_{\ell-1}}\\
			&=\sum_{\ell=1}^{\infty}\sum_{j_1, \dots, j_\ell=1}^{\infty}	\phi_{0,  t-j_1-\cdots-j_\ell}\phi_{j_1, t}\phi_{j_2, t-j_1}\cdots\phi_{j_\ell, t-j_1-\cdots-j_{\ell-1}}.
		\end{align*}
		Comparing this with \eqref{aeq1}, we have that $\{X_t\}$ satisfies the recursive equation
		\[ X_t= \phi_{0,t}+	\sum_{j=1}^{\infty}\phi_{j,t}X_{t-j}.\]
		Hence the existence of a strictly stationary solution to \eqref{equ-qgarch11} is proved by setting $y_t=\text{sgn}(U_t-0.5)X_t$.  In addition,  $E|y_t|^s=E(X_t^s)<\infty$ by \eqref{aeq2}.
		
		Now suppose that $\{y_t\}$ is a strictly stationary and causal solution to the model in \eqref{equ-qgarch11}.  Then, for any $m\in\mathbb{N}$, by successively substituting the $|y_{t-j}|$'s in the second equation of \eqref{equ-qgarch11} $m$ times, we have
		\[|y_t|=\phi_{0,t}+\sum_{\ell=1}^{m}\sum_{j_1,\dots, j_\ell=1}^{\infty}\phi_{0,  t-j_1-\cdots-j_\ell}\phi_{j_1, t}\phi_{j_2, t-j_1}\cdots\phi_{j_\ell, t-j_1-\cdots-j_{\ell-1}}+R_{t,m},\]
		where
		\[R_{t,m}=\sum_{j_1,\dots, j_{m+1}=1}^{\infty}\phi_{j_1, t}\phi_{j_2, t-j_1}\cdots\phi_{j_{m+1}, t-j_1-\cdots-j_m}|y_{t-j_1-\cdots-j_{m+1}}|.
		\]
		By the causality of $\{y_t\}$, the $\phi_{j_1, t}, \phi_{j_2, t-j_1}, \dots\phi_{j_{m+1}, t-j_1-\cdots-j_m}$ and $|y_{t-j_1-\cdots-j_{m+1}}|$ in every summand on the right side of the above expression are independent.  As a result,
		\begin{align*}
			E(R_{t,m}^s)&\leq\sum_{j_1,\dots, j_{m+1}=1}^{\infty}
			E(\phi_{j_1, t}^s)E(\phi_{j_2, t-j_1}^s)\cdots E(\phi_{j_{m+1}, t-j_1-\cdots-j_m}^s)E|y_{t-j_1-\cdots-j_{m+1}}|^s\\
			& =A_s^{m+1}E|y_t|^s,
		\end{align*}
		which implies $E(\sum_{m=1}^{\infty}R_{t,m}^s)= \sum_{m=1}^{\infty}E(R_{t,m}^s) <\infty$, since $0\leq A_s<1$ and $E|y_t|^s<\infty$.  It follows that, as $m\to \infty$, $R_{t,m}\rightarrow0$ a.s., and thus $|y_t|=X_t$ a.s. Finally, since $y_t=\text{sgn}(U_t-0.5)|y_t|$, we have $y_t=\text{sgn}(U_t-0.5)X_t$ a.s.
		
		(ii) We next consider the case with $s\in\{2,3,4,\ldots\}$, where we only need to show $E(X_t^s)<\infty$ since the remainder of the proof is the same as for $s\in(0,1]$ in (i). 
		
		By Minkowski inequality, for $s\geq 1$, we have
		\begin{equation}\label{EXts}
			\|X_t\|_s \leq \|\phi_{0,t}\|_s + \left\|\sum_{\ell=1}^{\infty}\sum_{j_1, \dots, j_\ell=1}^{\infty}	\phi_{0,  t-j_1-\cdots-j_\ell}\phi_{j_1, t}\phi_{j_2, t-j_1}\cdots\phi_{j_\ell, t-j_1-\cdots-j_{\ell-1}}\right\|_s.
		\end{equation}
		Since $E(\phi_{0, t}^s)<\infty$ by the condition in \eqref{StationaryCondition11_integer}, to show $E(X_t^s)<\infty$, it suffices to show that $E[(\sum_{\ell=1}^{\infty}\sum_{j_1, \dots, j_\ell=1}^{\infty}\phi_{0, t-j_1-\cdots-j_\ell}\phi_{j_1, t}\phi_{j_2, t-j_1}\cdots\phi_{j_\ell, t-j_1-\cdots-j_{\ell-1}})^s]<\infty$.
		Consider the case with $s=2$ for illustration. 
		By H\"{o}lder's inequality and the independence of $\phi_{j, t}$'s, it holds that
		\begin{align*}
			& E\left[\left(\sum_{\ell=1}^{\infty}\sum_{j_1, \dots, j_\ell=1}^{\infty}\phi_{0, t-j_1-\cdots-j_\ell}\phi_{j_1, t}\phi_{j_2, t-j_1}\cdots\phi_{j_\ell, t-j_1-\cdots-j_{\ell-1}}\right)^2\right] \\
			=& \sum_{\ell=1}^{\infty}\sum_{j_1, \dots, j_\ell=1}^{\infty} \sum_{k=1}^{\infty}\sum_{i_1, \dots, i_k=1}^{\infty}E\bigg[\left(\phi_{0, t-j_1-\cdots-j_\ell}\phi_{j_1, t}\phi_{j_2, t-j_1}\cdots\phi_{j_\ell, t-j_1-\cdots-j_{\ell-1}}\right) \\
			&\hspace{45mm} \times \left(\phi_{0, t-i_1-\cdots-i_k}\phi_{i_1, t}\phi_{i_2, t-i_1}\cdots\phi_{i_k, t-i_1-\cdots-i_{k-1}}\right) \bigg] \\
			\leq &\sum_{\ell=1}^{\infty}\sum_{j_1, \dots, j_\ell=1}^{\infty} \sum_{k=1}^{\infty}\sum_{i_1, \dots, i_k=1}^{\infty}\left[E\left(\phi_{0, t-j_1-\cdots-j_\ell}\phi_{j_1, t}\phi_{j_2, t-j_1}\cdots\phi_{j_\ell, t-j_1-\cdots-j_{\ell-1}}\right)^2\right]^{1/2} \\
			&\hspace{45mm} \times \left[E\left(\phi_{0, t-i_1-\cdots-i_k}\phi_{i_1, t}\phi_{i_2, t-i_1}\cdots\phi_{i_k, t-i_1-\cdots-i_{k-1}}\right)^2\right]^{1/2} \\
			=&E(\phi^2_{0, t})\sum_{\ell=1}^{\infty}\sum_{j_1, \dots, j_\ell=1}^{\infty} \sum_{k=1}^{\infty}\sum_{i_1, \dots, i_k=1}^{\infty}\left[\prod_{h=1}^{\ell}E(\phi^2_{j_h, t})\right]^{1/2}\left[\prod_{m=1}^{k}E(\phi^2_{i_m, t})\right]^{1/2} \\
			= & E(\phi^2_{0, t})\left\{\sum_{\ell=1}^{\infty}\sum_{j_1, \dots, j_\ell=1}^{\infty}\left[\prod_{h=1}^{\ell}E(\phi^2_{j_h, t})\right]^{1/2}\right\}
			\left\{\sum_{k=1}^{\infty}\sum_{i_1, \dots, i_k=1}^{\infty}\left[\prod_{m=1}^{k}E(\phi^2_{i_m, t})\right]^{1/2}\right\} \\
			= & E(\phi^2_{0, t}) \left(\sum_{\ell=1}^{\infty}B_{2,1/2}^{\ell} \right) \left(\sum_{k=1}^{\infty}B_{2,1/2}^{k} \right)\\
			< & \dfrac{E(\phi^2_{0, t})}{(1-B_{2,1/2})^2}<\infty, 
		\end{align*} 
		where we used the conditions $E(\phi^s_{0, t})<\infty$ and $B_{s,1/s}=\sum_{j=1}^{\infty}[E(\phi^s_{j, t})]^{1/s}<1$ in \eqref{StationaryCondition11_integer}. Hence, $E(X_t^s)<\infty$ holds for $s=2$. For the cases with $s\geq 3$, we can similarly show that $E(X_t^s)<\infty$ if \eqref{StationaryCondition11_integer} holds.
		The proof of this theorem is complete. 
		
	\end{proof}
	
	\subsection{Proofs of Theorems \ref{thm-WCQE-uniform-consistency}--\ref{thm-WCQE-weak-convergence} and Corollary \ref{thm-WCQE}}
	\begin{proof}[Proof of Theorem \ref{thm-WCQE-uniform-consistency}]
		Recall that $q_t(\bm\theta) =\omega + \alpha_{1}\sum_{j=1}^\infty \beta_{1}^{j-1}|y_{t-j}|$ and $\widetilde{q}_t(\bm\theta)=\omega + \alpha_{1}\sum_{j=1}^{t-1} \beta_{1}^{j-1}|y_{t-j}|$, where $\bm\theta=(\omega, \alpha_{1},\beta_{1})^{\prime}$. 
		Define $L(\bm\theta,\tau)=E[w_t\ell_t(\bm\theta,\tau)]$, $L_n(\bm\theta,\tau)=n^{-1}\sum_{t=1}^{n}w_t\ell_t(\bm\theta,\tau)$ and $\widetilde{L}_n(\bm\theta,\tau)=n^{-1}\sum_{t=1}^{n}w_t\widetilde{\ell}_t(\bm\theta,\tau)$, where $\ell_t(\bm\theta,\tau)=\rho_{\tau}(y_t-q_t(\bm\theta))$ and $\widetilde{\ell}_t(\bm\theta,\tau)=\rho_{\tau}(y_t-\widetilde{q}_t(\bm\theta))$. 
		To show the uniform consistency, we first verify the following claims:
		\begin{itemize}
			\item[(i)] $\sup\limits_{\tau\in\mathcal{T}}\sup\limits_{\Theta}|\widetilde{L}_n(\bm\theta,\tau)-L_n(\bm\theta,\tau)|=o_p(1)$;
			\item[(ii)] $E\left(\sup\limits_{\tau\in\mathcal{T}}\sup\limits_{\Theta}w_t|\ell_t(\bm\theta,\tau)|\right)<\infty$;
			\item[(iii)] $L(\bm\theta,\tau)$ has a unique minimum at $\bm\theta(\tau)$.
		\end{itemize}
		We first prove Claim (i). By the Lipschitz continuity of $\rho_{\tau}(\cdot)$, strict stationarity and ergodicity of $y_t$ by Assumption \ref{assum-Process}, Lemma \ref{lem0}(i) and $E(w_{t}\varsigma_{\rho})<\infty$ by Assumption \ref{assum-RandomWeight}, it holds that 
		\begin{align*}
			\sup_{\tau\in\mathcal{T}}\sup_{\Theta}|\widetilde{L}_n(\bm\theta,\tau)-L_n(\bm\theta,\tau)|
			&\leq \dfrac{1}{n}\sum_{t=1}^{n}w_t\sup_{\tau\in\mathcal{T}}\sup_{\Theta}|\rho_{\tau}(y_t-\widetilde{q}_t(\bm\theta))-\rho_{\tau}(y_t-q_t(\bm\theta))| \\
			&\leq\dfrac{C}{n}\sum_{t=1}^{n}w_t\sup_{\tau\in\mathcal{T}}\sup_{\Theta}|q_t(\bm\theta)-\widetilde{q}_t(\bm\theta)| \\
			&\leq \dfrac{C}{n}\sum_{t=1}^{n}\rho^{t}w_t\varsigma_{\rho}=o_p(1),
		\end{align*} 
		where $\varsigma_{\rho}=\sum_{s=0}^{\infty}\rho^{s}|y_{-s}|$.
		
		We next prove Claim (ii). By Assumption \ref{assum-Space}, there exist constant $0<\overline{c}<\infty$ and $0<\rho<1$ such that $\max\{|\omega|, |\alpha_{1}|\} \leq \overline{c}$ and $0<\beta_{1}\leq\rho$. By the fact that $|\rho_{\tau}(x)|\leq |x|$, and $E(w_t)<\infty$ and $E(w_{t}|y_{t-j}|^3)<\infty$ for all $j\geq 1$ by Assumption \ref{assum-RandomWeight}, we have 
		\begin{align*}
			E[\sup_{\tau\in\mathcal{T}}\sup_{\Theta}w_t|\ell_t(\bm\theta,\tau)|]&\leq E[w_t|y_t|]+E[w_t\sup_{\tau\in\mathcal{T}}\sup_{\Theta}|q_t(\bm\theta)|] \\
			&\leq E(w_t|y_t|) + \overline{c}E\left[w_{t}\left(1+\sum_{j=1}^{\infty}\rho^{j-1}|y_{t-j}|\right)\right] < \infty.
		\end{align*}
		Hence, (ii) is verified. 
		
		We consider Claim (iii). For $x\neq 0$, it holds that
		\begin{align}\label{identity}
			\rho_{\tau}(x-y)-\rho_{\tau}(x)&=-y\psi_{\tau}(x)+y\int_{0}^{1}[I(x\leq ys)-I(x\leq 0)]ds \nonumber \\
			&=-y\psi_{\tau}(x)+(x-y)[I(0>x>y)-I(0<x<y)],
		\end{align}
		where $\psi_{\tau}(x)=\tau-I(x<0)$; see \cite{Knight1998}. Let $\nu_t(\bm\theta,\tau)=q_t(\bm\theta)-q_t(\bm\theta(\tau))$ and $\eta_{t,\tau}=y_{t}-q_t(\bm\theta(\tau))$. By \eqref{identity}, it follows that
		\begin{align*}
			&\ell_t(\bm\theta,\tau)-\ell_t(\bm\theta(\tau),\tau) \\
			=&-\nu_t(\bm\theta,\tau)\psi_{\tau}(\eta_{t,\tau})+[\eta_{t,\tau}-\nu_t(\bm\theta,\tau)]\left[I(0>\eta_{t,\tau}>\nu_t(\bm\theta,\tau))-I(0<\eta_{t,\tau}<\nu_t(\bm\theta,\tau))\right].
		\end{align*}
		This together with $E[\psi_{\tau}(\eta_{t,\tau})|\mathcal{F}_{t-1}]=0$, implies that
		\begin{align*}
			& L(\bm\theta,\tau)-L(\bm\theta(\tau),\tau) \\ 		
			=& E\left\{w_t[\eta_{t,\tau}-\nu_t(\bm\theta,\tau)]\left[I(0>\eta_{t,\tau}>\nu_t(\bm\theta,\tau))-I(0<\eta_{t,\tau}<\nu_t(\bm\theta,\tau))\right]\right\}  \geq 0.
		\end{align*}
		Since $f_{t-1}(x)$ is continuous at a neighborhood of $q_t(\bm\theta(\tau))$ by Assumption \ref{assum-ConditionalDensity} and $\{w_t\}$ are nonnegative random weights, then the above equality holds if and only if $\nu_t(\bm\theta,\tau)=0$ with probability one for $t \in \mathbb{Z}$. This together with $q_t(\bm\theta) =\omega + \alpha_{1}\sum_{j=1}^\infty \beta_{1}^{j-1}|y_{t-j}|$, implies that 
		\[\omega-\omega(\tau) =\sum_{j=1}^{\infty} \left[\alpha_1(\tau)\beta_1^{j-1}(\tau)-\alpha_{1}\beta_{1}^{j-1}\right]|y_{t-j}|.\]
		Note that $y_{t-1}$ is independent of all the others given $\mathcal{F}_{t-2}$. As a result, we have $\omega=\omega(\tau)$ and $\alpha_{1}=\alpha_{1}(\tau)$, thus $\beta_{1}=\beta_1(\tau)$ follows.
		Therefore, $\bm\theta=\bm\theta(\tau)$ and (iii) is verified. 
		
		Note that $L(\bm\theta,\tau)$ is a measurable function of $y_t$ in Euclidean space for each $(\bm\theta,\tau)\in\Theta\times\mathcal{T}$, and $L(\bm\theta,\tau)$ is a continuous function of $(\bm\theta,\tau)\in\Theta\times\mathcal{T}$ for each $y_t$. Then by Theorem 3.1 of \cite{Ling_McAleer2003}, together with Claim (ii) and the strict stationarity and ergodicity of $\{y_t\}$ under Assumption \ref{assum-Process}, we have
		\begin{align*}
			\sup\limits_{\tau\in\mathcal{T}}\sup\limits_{\bm\theta\in\Theta}|L_n(\bm\theta,\tau)-L(\bm\theta,\tau)|=o_p(1).
		\end{align*}
		This together with Claim (i), implies that
		\begin{align}\label{loss-uniform-consistency}
			\sup\limits_{\tau\in\mathcal{T}}\sup\limits_{\bm\theta\in\Theta}|\widetilde{L}_n(\bm\theta,\tau)-L(\bm\theta,\tau)|=o_p(1).
		\end{align}	
		We next verify the uniform consistency by extending the standard consistency argument in \cite{Amemiya1985}; see also Lemma B.1 of \cite{Chernozhukov2006}.
		
		For any $c>0$, with probability tending to 1 uniformly in $\epsilon\geq c$ and uniformly in $\tau\in\mathcal{T}$: by $\widetilde{\bm\theta}_{wn}(\tau)=\argmin_{\bm\theta\in\Theta}\widetilde{L}_n(\bm\theta,\tau)$, it follows that
		\begin{align}\label{Claim-a}
			\widetilde{L}_n(\widetilde{\bm\theta}_{wn}(\tau),\tau) \leq \widetilde{L}_n(\bm\theta(\tau),\tau) + \epsilon/3,
		\end{align}
		and by \eqref{loss-uniform-consistency}, it holds that
		\begin{align}\label{Claim-b}
			L(\widetilde{\bm\theta}_{wn}(\tau),\tau) < \widetilde{L}_n(\widetilde{\bm\theta}_{wn}(\tau),\tau) + \epsilon/3 \;\;\text{and}\;\; \widetilde{L}_n(\bm\theta(\tau),\tau) < L(\bm\theta(\tau),\tau) + \epsilon/3.
		\end{align} 
		Combining \eqref{Claim-a}--\eqref{Claim-b}, with probability tending to 1, we have
		\begin{align}\label{loss-inequality}
			L(\widetilde{\bm\theta}_{wn}(\tau),\tau) < \widetilde{L}_n(\widetilde{\bm\theta}_{wn}(\tau),\tau) + \epsilon/3 \leq \widetilde{L}_n(\bm\theta(\tau),\tau) + 2\epsilon/3 < L(\bm\theta(\tau),\tau) + \epsilon.
		\end{align}
		
		Pick any $\delta>0$, let $\{B_{\delta}(\tau),\tau\in\mathcal{T}\}$ be a collection of balls with radius $\delta>0$, each centered at $\bm\theta(\tau)$. Then $B^c\equiv\Theta/B_{\delta}(\tau)$ is compact, and thus $\inf_{\bm\theta\in B^c}L(\bm\theta,\tau)$ exists. 
		Denote $\epsilon=\inf_{\tau\in\mathcal{T}}\left[\inf_{\bm\theta\in B^c}L(\bm\theta,\tau)-L(\bm\theta(\tau),\tau)\right]$. Since $\bm\theta(\tau)=\argmin_{\bm\theta\in\Theta}L(\bm\theta,\tau)$ is unique by Claim (iii), then $\epsilon>0$. For any $\epsilon>0$, we can pick $c>0$ such that $\text{Pr}(\epsilon\geq c)>1-\epsilon$. 
		Together with \eqref{loss-inequality}, it follows that with probability becoming greater than $1-\epsilon$, uniformly in $\tau\in\mathcal{T}$: 
		\begin{align*}
			L(\widetilde{\bm\theta}_{wn}(\tau),\tau) & < L(\bm\theta(\tau),\tau) + \inf_{\tau\in\mathcal{T}}\left[\inf_{\bm\theta\in B^c}L(\bm\theta,\tau)-L(\bm\theta(\tau),\tau)\right]  < \inf_{\bm\theta\in B^c}L(\bm\theta,\tau). 
		\end{align*}
		Thus with probability becoming greater than $1-\epsilon$, $\sup_{\tau\in\mathcal{T}}\|\widetilde{\bm\theta}_{wn}(\tau)-\bm\theta(\tau)\|\leq \delta$.
		By the arbitrariness of $\epsilon$, it implies that $\sup_{\tau\in\mathcal{T}}\|\widetilde{\bm\theta}_{wn}(\tau)-\bm\theta(\tau)\|\leq \delta$ with probability tending to 1.  
		The proof of this theorem is complete.	
	\end{proof}
	
	\begin{proof}[Proof of Corollary \ref{thm-WCQE}]
		For $\bm u\in \mathbb{R}^{3}$, define $\widetilde{H}_n(\bm u)=n[\widetilde{L}_n(\bm\theta(\tau)+\bm u)-\widetilde{L}_n(\bm\theta(\tau))]$, where $\widetilde{L}_n(\bm\theta)=n^{-1}\sum_{t=1}^{n}w_t\rho_{\tau}(y_t-\widetilde{q}_t(\bm\theta))$. Denote $\widetilde{\bm u}_n=\widetilde{\bm\theta}_{wn}(\tau)-\bm\theta(\tau)$. By Theorem \ref{thm-WCQE-uniform-consistency}, it holds that $\widetilde{\bm u}_n=o_p(1)$.	
		Note that $\widetilde{\bm u}_n$ is the minimizer of $\widetilde{H}_n(\bm u)$, since $\widetilde{\bm\theta}_{wn}(\tau)$ minimizes $\widetilde{L}_n(\bm\theta)$. Define $J=\Omega_{1w}(\tau)/2$, where $\Omega_{1w}(\tau)=E\left[f_{t-1}(F_{t-1}^{-1}(\tau))w_{t}\dot{q}_t(\bm\theta(\tau))\dot{q}_t^{\prime}(\bm\theta(\tau))\right]$. By the ergodic theorem and Assumptions \ref{assum-Process}--\ref{assum-RandomWeight}, we have $J_n=J+o_p(1)$, where $J_n$ is defined as in Lemma \ref{lem2}.	
		Moreover, from Lemmas \ref{lem2}--\ref{lem3}, it follows that
		\begin{align}\label{Hn}
			\widetilde{H}_n(\widetilde{\bm u}_n)=&-\sqrt{n}\widetilde{\bm u}_n^{\prime}\bm T_n+\sqrt{n}\widetilde{\bm u}_n^{\prime}J\sqrt{n}\widetilde{\bm u}_n+o_p(\sqrt{n}\|\widetilde{\bm u}_n\|+n\|\widetilde{\bm u}_n\|^2) \\
			\geq &-\sqrt{n}\|\widetilde{\bm u}_n\|[\|\bm T_n\|+o_p(1)]+n\|\widetilde{\bm u}_n\|^2[\lambda_{\min}+o_p(1)], \nonumber
		\end{align}
		where $\lambda_{\min}$ is the smallest eigenvalue of $J$, and $\bm T_n=n^{-1/2}\sum_{t=1}^{n}w_t\dot{q}_t(\bm\theta(\tau))\psi_{\tau}(\eta_{t,\tau})$. 
		Note that $E(\psi_{\tau}(\eta_{t,\tau})|\mathcal{F}_{t-1})=0$ and the positive definite matrix $\Omega_{0w}(\tau)=E\left[w_t^2\dot{q}_t(\bm\theta(\tau))\dot{q}_t^{\prime}(\bm\theta(\tau))\right]<\infty$ by Assumptions \ref{assum-Space} and \ref{assum-RandomWeight}. Then by the Lindeberg–L\'{e}vy theorem for martingales \citep{Billingsley1961} and the Cram\'{e}r-Wold device, together with the stationarity and ergodicity of $y_t$ by Assumption \ref{assum-Process}, $\bm T_n$ converges in distribution to a normal random variable with mean zero and variance matrix $\tau(1-\tau)\Omega_{0w}(\tau)$ as $n\rightarrow \infty$.
		
		Note that $\lambda_{\min}>0$ as $\Omega_{1w}(\tau)$ is positive definite, and $\|\bm T_n\|<\infty$ by Assumptions \ref{assum-Process}, \ref{assum-Space} and \ref{assum-RandomWeight}. Since $\widetilde{H}_n(\widetilde{\bm u}_n)\leq 0$, then we have 
		\begin{align}\label{root-n}
			\sqrt{n}\|\widetilde{\bm u}_n\|\leq [\lambda_{\min}+o_p(1)]^{-1}[\|\bm T_n\|+o_p(1)]=O_p(1).
		\end{align}
		This together with Theorem \ref{thm-WCQE-uniform-consistency} verifies the $\sqrt{n}$-consistency of $\widetilde{\bm\theta}_{wn}(\tau)$, i.e. $\sqrt{n}(\widetilde{\bm\theta}_{wn}(\tau)-\bm\theta(\tau))=O_p(1)$. 
		
		Let $\sqrt{n}\bm u_n^*=J^{-1}\bm T_n/2=\Omega_{1w}^{-1}(\tau)\bm T_n$, then we have
		\begin{align*}
			\sqrt{n}\bm u_n^*\rightarrow N\left(\bm 0,\tau(1-\tau)\Omega_{1w}^{-1}(\tau)\Omega_{0w}(\tau)\Omega_{1w}^{-1}(\tau)\right)
		\end{align*}
		in distribution as $n\rightarrow \infty$. Therefore, it suffices to show that $\sqrt{n}\bm u_n^*-\sqrt{n}\widetilde{\bm u}_n=o_p(1)$.
		By \eqref{Hn} and \eqref{root-n}, we have
		\begin{align*}
			\widetilde{H}_n(\widetilde{\bm u}_n)=&-\sqrt{n}\widetilde{\bm u}_n^{\prime}\bm T_n+\sqrt{n}\widetilde{\bm u}_n^{\prime}J\sqrt{n}\widetilde{\bm u}_n+o_p(1) \\
			=&-2\sqrt{n}\widetilde{\bm u}_n^{\prime}J\sqrt{n}\bm u_n^*+\sqrt{n}\widetilde{\bm u}_n^{\prime}J\sqrt{n}\widetilde{\bm u}_n+o_p(1),
		\end{align*}
		and
		\begin{align*}
			\widetilde{H}_n(\bm u_n^*)=&-\sqrt{n}\bm u_n^{*\prime}\bm T_n+\sqrt{n}\bm u_n^{*\prime}J\sqrt{n}\bm u_n^*+o_p(1)
			=-\sqrt{n}\bm u_n^{*\prime}J\sqrt{n}\bm u_n^*+o_p(1).
		\end{align*}
		It follows that
		\begin{align}\label{normality}
			\widetilde{H}_n(\widetilde{\bm u}_n)-\widetilde{H}_n(\bm u_n^*)=&(\sqrt{n}\widetilde{\bm u}_n-\sqrt{n}\bm u_n^*)^{\prime}J(\sqrt{n}\widetilde{\bm u}_n-\sqrt{n}\bm u_n^*)+o_p(1)  \nonumber \\
			\geq & \lambda_{\min}\|\sqrt{n}\widetilde{\bm u}_n-\sqrt{n}\bm u_n^*\|^2+o_p(1).
		\end{align}
		Since $\widetilde{H}_n(\widetilde{\bm u}_n)-\widetilde{H}_n(\bm u_n^*)=n[\widetilde{L}_n(\bm\theta(\tau)+\widetilde{\bm u}_n)-\widetilde{L}_n(\bm\theta(\tau)+\bm u_n^*)]\leq 0$ a.s., then \eqref{normality} implies that $\|\sqrt{n}\widetilde{\bm u}_n-\sqrt{n}\bm u_n^*\|=o_p(1)$. We verify the asymptotic normality of $\widetilde{\bm\theta}_{wn}(\tau)$, and the proof is hence accomplished.
	\end{proof}
	
	\begin{proof}[Proof of Theorem \ref{thm-WCQE-weak-convergence}]
		Recall that $\widetilde{L}_n(\bm\theta,\tau)=n^{-1}\sum_{t=1}^{n}w_t\rho_{\tau}(y_t-\widetilde{q}_t(\bm\theta))$ and $L_n(\bm\theta,\tau)=n^{-1}\sum_{t=1}^{n}w_t\rho_{\tau}(y_t-q_t(\bm\theta))$. 
		Let $\psi_{\tau}(x)=\tau-I(x<0)$ and $\eta_{t,\tau}=y_t-q_t(\bm\theta(\tau))$. Denote 
		\[\bm T_n(\tau)=\dfrac{1}{\sqrt{n}}\sum_{t=1}^{n}w_t\dot{q}_t(\bm\theta(\tau))\psi_{\tau}(\eta_{t,\tau}) \hspace{2mm}\text{and}\hspace{2mm} J_n(\tau)=\dfrac{1}{2n}\sum_{t=1}^{n}f_{t-1}(F_{t-1}^{-1}(\tau))w_t\dot{q}_t(\bm\theta(\tau))\dot{q}_t^{\prime}(\bm\theta(\tau)).\]
		Note that we have established the uniform consistency and finite dimensional convergence of $\widetilde{\bm\theta}_{wn}(\tau)$ in Theorem \ref{thm-WCQE-uniform-consistency} and Corollary \ref{thm-WCQE}, respectively. 
		By Corollary 2.2 of \cite{Newey1991}, to show the weak convergence of $\widetilde{\bm\theta}_{wn}(\tau)$, we need to prove the stochastic equicontinuity. 
		As a result, it suffices to verify the following claims:
		\begin{itemize}
			\item[(1)] If $E|y_t|^s<\infty$ for some $0<s\leq 1$ and Assumptions \ref{assum-Process}--\ref{assum-Tightness} hold, then for any sequence of random
			variables $\bm u_n\equiv\bm u_n(\tau)$ such that $\sup_{\tau\in\mathcal{T}}\bm u_n(\tau)=o_p(1)$,  
			\[
			n[\widetilde{L}_n(\bm u_n+\bm\theta(\tau),\tau)-\widetilde{L}_n(\bm\theta(\tau),\tau)]-n[L_n(\bm u_n+\bm\theta(\tau),\tau)-L_n(\bm\theta(\tau),\tau)]=o_p(\sqrt{n}\|\bm u_n\|+n\|\bm u_n\|^2),
			\]
			where the remainder term is uniform in $\tau\in\mathcal{T}$. 
			\item[(2)] If $E|y_t|^s<\infty$ for some $0<s\leq 1$ and Assumptions \ref{assum-Process}--\ref{assum-Tightness} hold, then for any sequence of random
			variables $\bm u_n\equiv\bm u_n(\tau)$ such that $\sup_{\tau\in\mathcal{T}}\bm u_n(\tau)=o_p(1)$,
			\[
			n[L_n(\bm u_n+\bm\theta(\tau),\tau)-L_n(\bm\theta(\tau),\tau)]=-\sqrt{n}\bm u_n^{\prime}\bm T_n(\tau)+\sqrt{n}\bm u_n^{\prime}J_n(\tau)\sqrt{n}\bm u_n +o_p(\sqrt{n}\|\bm u_n\|+n\|\bm u_n\|^2),
			\]
			where the remainder term is uniform in $\tau\in\mathcal{T}$.
			\item[(3)] If $E|y_t|^s<\infty$ for some $0<s\leq 1$ and Assumptions \ref{assum-Process}--\ref{assum-Tightness} hold, then as $n\rightarrow\infty$, $\bm T_n(\bm{\cdot}) \rightsquigarrow \mathbb{G}_0(\bm{\cdot}) \:\: \text{in} \:\: (\ell^{\infty}(\mathcal{T}))^3$, where $\mathbb{G}_0(\bm{\cdot})$ is a zero mean Gaussian process with covariance kernel $(\min\{\tau_1, \tau_2\}-\tau_1\tau_2)\Omega_{0w}(\tau_1, \tau_2)$ with $\Omega_{0w}(\tau_1, \tau_2)=E\left[w_t^2\dot{q}_t(\bm\theta(\tau_1))\dot{q}_t^{\prime}(\bm\theta(\tau_2))\right]$.
		\end{itemize}
		
		For Claim (1), it extends the pointwise result in Lemma \ref{lem3} to the uniform version. For $\bm\theta=(\omega,\alpha_{1},\beta_{1})^{\prime}\in\Theta$, note that $|\alpha_{1}|\leq\overline{c}<\infty$ and $0<\beta_{1}\leq\rho<1$ for $\tau\in\mathcal{T}$ by Assumption \ref{assum-Space}. 
		Then Lemma \ref{lem0} holds for $\tau\in\mathcal{T}$, that is, under Assumption \ref{assum-Space}, we have 
		\begin{align}\label{lem0-uniform}
			&\sup_{\tau\in\mathcal{T}}\sup_{\Theta}|q_t(\bm\theta)-\widetilde{q}_t(\bm\theta)|
			\leq C\rho^{t}\varsigma_{\rho} \;\text{and}\; \nonumber\\
			&\sup_{\tau\in\mathcal{T}}\sup_{\Theta}\|\dot{q}_t(\bm\theta)-\dot{\widetilde{q}}_t(\bm\theta)\|
			\leq C\rho^{t}(\varsigma_{\rho}+t\varsigma_{\rho}+\xi_{\rho}),
		\end{align}
		where $\varsigma_{\rho}=\sum_{s=0}^{\infty}\rho^{s}|y_{-s}|$ and $\xi_{\rho}=\sum_{s=0}^{\infty}s\rho^{s}|y_{-s}|$ with the constant $\rho\in (0,1)$.
		As a result, \eqref{initialHnrep}--\eqref{Pi4tilde} in the proof of Lemma \ref{lem-Tightness2} hold uniformly in $\tau\in\mathcal{T}$.

		For Claim (2), we generalize the notation in Lemma \ref{lem2} and decompose $n[L_n(\bm u+\bm\theta(\tau))-L_n(\bm\theta(\tau))]$ as follows:
		\begin{align}\label{Gnrep-process}
			& n[L_n(\bm u+\bm\theta(\tau))-L_n(\bm\theta(\tau))] \nonumber\\
			= & -\sqrt{n}\bm u^{\prime}\bm T_n(\tau)-\sqrt{n}\bm u^{\prime}R_{1n}(\bm u^*,\tau)\sqrt{n}\bm u + \sum_{i=2}^5R_{in}(\bm u,\tau),
		\end{align}
		where $\bm T_n(\tau)=n^{-1/2}\sum_{t=1}^{n}w_t\dot{q}_t(\bm\theta(\tau))\psi_{\tau}(y_t-q_t(\bm\theta(\tau)))$, 
		\begin{align*}
			R_{1n}(\bm u, \tau)&=\dfrac{1}{2n}\sum_{t=1}^{n}w_t\ddot{q}_t(\bm u+\bm\theta(\tau))\psi_{\tau}(y_t-q_t(\bm\theta(\tau))), \\
			R_{2n}(\bm u, \tau)&=\bm u^{\prime}\sum_{t=1}^{n}w_t\dot{q}_t(\bm\theta(\tau))E[\xi_{1t}(\bm u, \tau)|\mathcal{F}_{t-1}], \\
			R_{3n}(\bm u, \tau)&=\bm u^{\prime}\sum_{t=1}^{n}w_t\dot{q}_t(\bm\theta(\tau))E[\xi_{2t}(\bm u, \tau)|\mathcal{F}_{t-1}], \\
			R_{4n}(\bm u, \tau)&=\bm u^{\prime}\sum_{t=1}^{n}w_t\dot{q}_t(\bm\theta(\tau))\{\xi_{t}(\bm u, \tau)-E[\xi_{t}(\bm u, \tau)|\mathcal{F}_{t-1}]\} \hspace{2mm}\text{and}\hspace{2mm} \\
			R_{5n}(\bm u, \tau)&=\dfrac{\bm u^{\prime}}{2}\sum_{t=1}^{n}w_t\ddot{q}_t(\bm u^*+\bm\theta(\tau))\xi_{t}(\bm u, \tau)\bm u
		\end{align*}
		with $\psi_{\tau}(x)=\tau-I(x<0)$, $\bm u^*$ between $\bm 0$ and $\bm u$, $\nu_t(\bm u, \tau)=q_t(\bm u+\bm\theta(\tau))-q_t(\bm\theta(\tau))$, 
		\begin{align*}
			\xi_{t}(\bm u, \tau)&=\int_{0}^{1}\left[I(y_t\leq F_{t-1}^{-1}(\tau)+\nu_t(\bm u, \tau)s)-I(y_t\leq F_{t-1}^{-1}(\tau))\right]ds  \\
			\xi_{1t}(\bm u, \tau)&=\int_{0}^{1}\left[I(y_{t}\leq F_{t-1}^{-1}(\tau)+\bm u^{\prime}\dot{q}_t(\bm\theta(\tau))s)-I(y_t\leq F_{t-1}^{-1}(\tau))\right]ds \hspace{2mm}\text{and}\hspace{2mm} \\
			\xi_{2t}(\bm u, \tau)&=\int_{0}^{1}\left[I(y_{t}\leq F_{t-1}^{-1}(\tau)+\nu_t(\bm u, \tau)s)-I(y_{t}\leq F_{t-1}^{-1}(\tau)+\bm u^{\prime}\dot{q}_t(\bm\theta(\tau))s)\right]ds.
		\end{align*}
		Given the pointwise result in Lemma \ref{lem2}, to establish Claim (2) it suffices to show the stochastic equicontinuity related to $R_{in}(\bm u, \tau)$ for $i=1,\ldots,5$.  
		
		For $R_{1n}(\bm u, \tau)$, by Lemma \ref{lem00}(ii) and the fact that $|\psi_{\tau}(\eta_{t,\tau})|\leq 1$, we have
		\[E\left[\sup_{\tau\in\mathcal{T}}\sup_{\|\bm u\|\leq \eta}\left\|w_t \ddot{q}_t(\bm u+\bm\theta(\tau))\psi_{\tau}(\eta_{t,\tau})\right\|\right]\leq CE\left[w_t\sup_{\bm\theta\in\Theta}\left\|\ddot{q}_t(\bm\theta)\right\|\right]<\infty.\]
		Moreover, $E\left[w_t \ddot{q}_t(\bm u+\bm\theta(\tau))\psi_{\tau}(\eta_{t,\tau})\right]=0$ by iterated-expectation and the fact that $E[\psi_{\tau}(\eta_{t,\tau})| \mathcal{F}_{t-1}]=0$. 	
		Since $\{y_t\}$ is strictly stationary and ergodic under Assumption \ref{assum-Process}, then by Theorem 3.1 in \cite{Ling_McAleer2003}, we can show that
		\begin{align*}
			\sup_{\tau\in\mathcal{T}}\sup_{\|\bm u\|\leq \eta}\|R_{1n}(\bm u, \tau)\|=o_p(1).
		\end{align*}
		This together with $\sup_{\tau\in\mathcal{T}}\bm u_n(\tau)=o_p(1)$, implies that
		\begin{align}\label{R1n-process}
			\sup_{\tau\in\mathcal{T}}\|R_{1n}(\bm u_n(\tau), \tau)\|=o_p(1).
		\end{align}
		
		We next focus on $R_{2n}(\bm u, \tau)$. By Taylor expansion, we have
		\begin{align*}
			R_{2n}(\bm u, \tau)=\sqrt{n}\bm u^{\prime}J_n(\tau)\sqrt{n}\bm u+\sqrt{n}\bm u^{\prime}\Pi_{1n}(\bm u, \tau)\sqrt{n}\bm u,
		\end{align*}
		where $J_n(\tau)=(2n)^{-1}\sum_{t=1}^{n}f_{t-1}(F_{t-1}^{-1}(\tau))w_t\dot{q}_t(\bm\theta(\tau))\dot{q}_t^{\prime}(\bm\theta(\tau))$ and
		\[\Pi_{1n}(\bm u, \tau)=\dfrac{1}{n}\sum_{t=1}^{n}w_t\dot{q}_t(\bm\theta(\tau))\dot{q}_t^{\prime}(\bm\theta(\tau))\int_{0}^{1}[f_{t-1}(F_{t-1}^{-1}(\tau)+\bm u^{\prime}\dot{q}_t(\bm\theta(\tau))s^*)-f_{t-1}(F_{t-1}^{-1}(\tau))]sds.\]
		By Taylor expansion and $\sup_{x}|\dot{f}_{t-1}(x)|<\infty$ under Assumption \ref{assum-ConditionalDensity}, for any $\eta>0$, we have
		\begin{align*}
			\sup_{\tau\in\mathcal{T}}\sup_{\|\bm u\|\leq \eta}\|\Pi_{1n}(\bm u, \tau)\|&\leq \dfrac{1}{n}\sum_{t=1}^{n}\sup_{\tau\in\mathcal{T}}\sup_{\|\bm u\|\leq \eta} \|w_t\dot{q}_t(\bm\theta(\tau))\dot{q}_t^{\prime}(\bm\theta(\tau))\sup_{x}|\dot{f}_{t-1}(x)|\bm u^{\prime}\dot{q}_t(\bm\theta(\tau))\| \\
			&\leq C\eta \cdot \dfrac{1}{n}\sum_{t=1}^{n}w_t\sup_{\tau\in\mathcal{T}}\|\dot{q}_t(\bm\theta(\tau))\|^3.
		\end{align*}
		Then by Assumption \ref{assum-Space}(ii) and Lemma \ref{lem00}(i), it holds that
		\begin{align*}
			E\left(\sup_{\tau\in\mathcal{T}}\sup_{\|\bm u\|\leq \eta}\|\Pi_{1n}(\bm u, \tau)\|\right)\leq  C\eta E\left(w_t\sup_{\Theta}\|\dot{q}_t(\bm\theta)\|^3\right) \leq  C\eta
		\end{align*}
		tends to $0$ as $\eta \to 0$. Similar to \eqref{epsdelta1} and \eqref{epsdelta2}, for $\sup_{\tau\in\mathcal{T}}\bm u_n(\tau)=o_p(1)$, we can show that $\sup_{\tau\in\mathcal{T}}\|\Pi_{1n}(\bm u_n, \tau)\|=o_p(1)$. 
		It follows that
		\begin{align}\label{R2n-process}
			\sup_{\tau\in\mathcal{T}}[R_{2n}(\bm u_n, \tau)-\sqrt{n}\bm u_n^{\prime}J_n(\tau)\sqrt{n}\bm u_n]=o_p(n\|\bm u_n\|^2).
		\end{align}
		
		For $R_{3n}(\bm u, \tau)$, by Taylor expansion, the Cauchy-Schwarz inequality and the strict stationarity and ergodicity of $y_t$ under Assumption \ref{assum-Process}, together with \eqref{xi2}, Assumption \ref{assum-Space}(ii), Lemma \ref{lem00} and $\sup_{x}f_{t-1}(x)<\infty$ by Assumption \ref{assum-ConditionalDensity}, for any $\eta>0$, we have
		\begin{align*}
			E\left(\sup_{\tau\in\mathcal{T}}\sup_{\|\bm u\|\leq \eta}\dfrac{|R_{3n}(\bm u)|}{n\|\bm u\|^2}\right)
			\leq & \dfrac{\eta}{n}\sum_{t=1}^{n}E\left\{w_t\sup_{\tau\in\mathcal{T}}\left\|\dot{q}_t(\bm\theta(\tau))\right\|\dfrac{1}{2}\sup_{x}f_{t-1}(x)\sup_{\bm\theta\in\Theta}\left\|\ddot{q}_t(\bm\theta)\right\| \right\} \\
			\leq &  C\eta E\left\{\sqrt{w_t}\sup_{\bm\theta\in\Theta}\left\|\dot{q}_t(\bm\theta)\right\|\cdot\sqrt{w_t}\sup_{\bm\theta\in\Theta}\left\|\ddot{q}_t(\bm\theta)\right\| \right\} \\
			\leq & C\eta \left[E\left(w_t\sup_{\bm\theta\in\Theta}\left\|\dot{q}_t(\bm\theta)\right\|^2\right)\right]^{1/2}\left[E\left(w_t\sup_{\bm\theta\in\Theta}\left\|\ddot{q}_t(\bm\theta)\right\|^2\right)\right]^{1/2}
		\end{align*}
		tends to $0$ as $\eta \to 0$. Similar to \eqref{epsdelta1} and \eqref{epsdelta2}, we can show that
		\begin{align}\label{R3n-process}
			\sup_{\tau\in\mathcal{T}}|R_{3n}(\bm u_n,\tau)|=o_p(n\|\bm u_n\|^2).
		\end{align}
		
		For $R_{4n}(\bm u, \tau)$ and $R_{5n}(\bm u, \tau)$, by Lemma \ref{lem-Tightness2} and $\sup_{\tau\in\mathcal{T}}\bm u_n(\tau)=o_p(1)$, we have
		\begin{align}\label{Rn45-process}
			\sup_{\tau\in\mathcal{T}}R_{4n}(\bm u_n, \tau)=o_p(\sqrt{n}\|\bm u_n\|+n\|\bm u_n\|^2) \;\;\text{and}\;\; 
			\sup_{\tau\in\mathcal{T}}R_{5n}(\bm u_n, \tau)=o_p(n\|\bm u_n\|^2).
		\end{align}
		Combining \eqref{Gnrep-process}--\eqref{Rn45-process}, it follows that Claim (2) holds.

		Finally, we consider Claim (3). 
		By the Lindeberg–L\'{e}vy theorem for martingales \citep{Billingsley1961} and the Cram\'{e}r-Wold device, together with the stationarity and ergodicity of $y_t$ by Assumption \ref{assum-Process} and $E(w_t)<\infty$ and $E(w_{t}|y_{t-j}|^3)<\infty$ for all $j\geq 1$ by Assumption \ref{assum-RandomWeight}, the finite-dimensional convergence of $\bm T_n(\tau)$ has been established in the proof of Corollary \ref{thm-WCQE}, that is, $\bm T_n(\tau)\to_d N(\bm 0,\tau(1-\tau)\Omega_{0w}(\tau,\tau))$ as $n\rightarrow \infty$, where $\Omega_{0w}(\tau_1, \tau_2)=E\left[w_t^2\dot{q}_t(\bm\theta(\tau_1))\dot{q}_t^{\prime}(\bm\theta(\tau_2))\right]$. It suffices to verify the stochastic equicontinuity of $\bm e_n^{\prime}\bm T_n(\tau)$ in $\ell^{\infty}(\mathcal{T})$ with $\bm e_n\in\mathbb{R}^3$ being an arbitrary vector. Without loss of generality, we will assume that $\bm e_n$ is a sequence of vectors with $\|\bm e_n\|=1$. 
		It holds that
		\begin{align}\label{Tn-Tightness-decompose}
			\bm e_n^{\prime}\bm T_n(\tau_2)-\bm e_n^{\prime}\bm T_n(\tau_1) = \dfrac{1}{\sqrt{n}}\sum_{t=1}^{n}(\bm e_n^{\prime}\bm a_t+\bm e_n^{\prime}\bm b_t),
		\end{align}
		where $\bm a_t=w_t[\dot{q}_t(\bm\theta(\tau_2))-\dot{q}_t(\bm\theta(\tau_1))] \psi_{\tau_1}(y_t-q_t(\bm\theta(\tau_1)))$ and $\bm b_t=w_t\dot{q}_t(\bm\theta(\tau_2))[c_t-E(c_t|\mathcal{F}_{t-1})]$ with $c_t\equiv c_t(\tau_1,\tau_2)=I(y_t<q_t(\bm\theta(\tau_1)))-I(y_t<q_t(\bm\theta(\tau_2)))$. 
		By Lemma \ref{lem-Tightness1}(ii), the fact that $|\psi_{\tau}(x)|<1$, the strict stationarity and ergodicity of $y_t$ under Assumption \ref{assum-Process}, and $E(w_t^2\Delta_{\rho,t}^2)<\infty$ under Assumption \ref{assum-RandomWeight} with $\Delta_{\rho,t}=1+\sum_{j=1}^\infty\rho^{j-1}|y_{t-j}|+\sum_{j=2}^\infty(j-1)\rho^{j-2}|y_{t-j}|+\sum_{j=3}^\infty(j-1)(j-2)\rho^{j-3}|y_{t-j}|+\sum_{j=4}^\infty(j-1)(j-2)(j-3)\rho^{j-4}|y_{t-j}|$, we have
		\begin{align}\label{at-2nd-moment}
			E[(\bm e_n^{\prime}\bm a_t)^2] \leq E(\|\bm e_n\|^2w_t^2\|\dot{q}_t(\bm\theta(\tau_2))-\dot{q}_t(\bm\theta(\tau_1))\|^2) \leq C(\tau_2-\tau_1)^2.  
		\end{align}
		Moreover, note that $I(X<a)-I(X<b)=I(b< X < a)-I(b> X > a)$ and $E\{[I(X<a)-I(X<b)]^2\}=E[I(b< X < a)+I(b> X > a)]=|\text{Pr}(X<a)-\text{Pr}(X<b)|$. These together with $\var(X)\leq E(X^2)$, imply that
		\begin{align}\label{ct-2nd-moment}
			E\{[c_t-E(c_t|\mathcal{F}_{t-1})]^2|\mathcal{F}_{t-1}\} \leq  
			|F_{t-1}(q_t(\bm\theta(\tau_2)))-F_{t-1}(q_t(\bm\theta(\tau_1)))|=|\tau_2-\tau_1|.
		\end{align}
		Then by iterative-expectation, together with $E(w_t^2\|\dot{q}_t(\bm\theta(\tau_2))\|^2)<\infty$ implied by Lemma \ref{lem00}(i), we have
		\begin{align}\label{bt-2nd-moment}
			E[(\bm e_n^{\prime}\bm b_t)^2] \leq E\left\{\|\bm e_n\|^2w_t^2\|\dot{q}_t(\bm\theta(\tau_2))\|^2 E\{[c_t-E(c_t|\mathcal{F}_{t-1})]^2|\mathcal{F}_{t-1}\}\right\} \leq C|\tau_2-\tau_1|.
		\end{align}  
		Thus, by the Cauchy-Schwarz inequality and $E(\bm a_t|\mathcal{F}_{t-1})=E(\bm b_t|\mathcal{F}_{t-1})=\bm 0$, together with \eqref{at-2nd-moment} and \eqref{bt-2nd-moment}, it can be verified that
		\begin{align*}
			E[\bm e_n^{\prime}\bm T_n(\tau_2)-\bm e_n^{\prime}\bm T_n(\tau_1)]^2=& E[(\bm e_n^{\prime}\bm a_t)^2] + E[(\bm e_n^{\prime}\bm b_t)^2] + 2E[(\bm e_n^{\prime}\bm a_t)(\bm e_n^{\prime}\bm b_t)] \\
			\leq & 4\{E[(\bm e_n^{\prime}\bm a_t)^2]E[(\bm e_n^{\prime}\bm b_t)^2]\}^{1/2} \leq C|\tau_2-\tau_1|^{3/2}.
		\end{align*}
		Therefore, for any $\tau_1 \leq \tau \leq \tau_2$ and $\bm e_n\in\mathbb{R}^3$, by the Cauchy-Schwarz inequality, we have
		\begin{align*}
			& E\{|\bm e_n^{\prime}\bm T_n(\tau)-\bm e_n^{\prime}\bm T_n(\tau_1)||\bm e_n^{\prime}\bm T_n(\tau)-\bm e_n^{\prime}\bm T_n(\tau_2)|\} \\
			\leq & \left\{E[\bm e_n^{\prime}\bm T_n(\tau)-\bm e_n^{\prime}\bm T_n(\tau_1)]^2\right\}^{1/2}\left\{E[\bm e_n^{\prime}\bm T_n(\tau)-\bm e_n^{\prime}\bm T_n(\tau_2)]^2\right\}^{1/2} \leq C|\tau_2-\tau_1|^{3/2}.
		\end{align*}
		This proves the asymptotic tightness. 
		Then by Theorem 13.5 of \cite{Billingsley1999}, we establish the weak convergence of $\bm T_n(\tau)$ in $(\ell^{\infty}(\mathcal{T}))^3$. 
		
		By the ergodic theorem and Assumptions \ref{assum-Process}--\ref{assum-RandomWeight}, we have $J_n(\tau)=\Omega_{1w}(\tau)/2+o_p(1)$ uniformly on $\mathcal{T}$, where $\Omega_{1w}(\tau)=E\left[f_{t-1}(F_{t-1}^{-1}(\tau))w_{t}\dot{q}_t(\bm\theta(\tau))\dot{q}_t^{\prime}(\bm\theta(\tau))\right]$. 
		Moreover, $\Omega_{1w}(\tau)$ is Lipschitz continuous on $\mathcal{T}$ by Assumptions \ref{assum-Process}--\ref{assum-Tightness}. This together with the positive definiteness of $\Omega_{1w}(\tau)$, implies that $\Omega_{1w}^{-1}(\tau)$ is continuous on $\mathcal{T}$, and thus $\Omega_{1w}^{-1}(\tau)\bm T_n(\tau)$ is tight on $\mathcal{T}$ (Theorem 7.3 of \cite{Billingsley1999}).  
		Similar to the proof of Corollary \ref{thm-WCQE}, then by Claims (1)--(3), we can show that
		\begin{align}\label{weak-convergence-representation}
			\sqrt{n}(\widetilde{\bm\theta}_{wn}(\cdot)-\bm\theta(\cdot)) = \Omega_{1w}^{-1}(\cdot)\bm T_n(\cdot) + o_p(1)
			\rightsquigarrow \mathbb{G}(\bm{\cdot}) \:\: \text{in} \:\: (\ell^{\infty}(\mathcal{T}))^3,
		\end{align}
		where the remainder term is uniform in $\tau\in\mathcal{T}$, and $\mathbb{G}(\bm{\cdot})$ is a zero mean Gaussian process with covariance kernel $\Sigma(\tau_1, \tau_2)$.
	\end{proof}
	
	\subsection{Proof of Corollary \ref{thm-test1}}
	
	\begin{proof}[Proof of Corollary \ref{thm-test1}]
		Recall that $v_{n}(\tau)=R\widetilde{\bm\theta}_{wn}(\tau)-\widetilde{r}=R[\widetilde{\bm\theta}_{wn}(\tau)-\int_{\mathcal{T}}\widetilde{\bm\theta}_{wn}(\tau)d\tau]$ and $v_{0}(\tau)=R[\mathbb{G}(\tau)-\int_{\mathcal{T}}\mathbb{G}(\tau)d\tau]$. 
		By continuous mapping theorem and \eqref{weak-convergence-representation} by Theorem \ref{thm-WCQE-weak-convergence}, under the null hypothesis $H_0$ that $R\bm\theta(\tau)=r$, we have
		\begin{align*}
			\sqrt{n}v_{n}(\tau)&=R\sqrt{n}(\widetilde{\bm\theta}_{wn}(\tau)-\bm\theta(\tau))-\sqrt{n}(\widetilde{r}-r)+\sqrt{n}(R\bm\theta(\tau)-r) \\
			&\overset{H_0}{=}R\sqrt{n}(\widetilde{\bm\theta}_{wn}(\tau)-\bm\theta(\tau))-R\int_{\mathcal{T}}\sqrt{n}(\widetilde{\bm\theta}_{wn}(\tau)-\bm\theta(\tau))d\tau 
			\overset{H_0}{\to} v_{0}(\tau).
		\end{align*}
		Since the covariance function of $v_{0}(\bm{\cdot})$ is nondegenerate by assumption, then by Theorem 11.1 in \cite{Davydov1998}, the distribution of functionals $S=\int_{\mathcal{T}}v_0^2(\tau)d\tau$ is absolutely continuous on (0,$\infty$). As a result, by continuous mapping theorem, it follows that $S_{n}=n\int_{\mathcal{T}}v_{n}^2(\tau)d\tau \to_d S\equiv\int_{\mathcal{T}} v_{0}^2(\tau)d\tau$ as $n\to\infty$. 
	\end{proof}

	\subsection{Proof of Theorem \ref{thm-WCQR}}
	\begin{proof}[Proof of Theorem \ref{thm-WCQR}]
		Recall that $q_{t,\tau}(\bm\varphi) =Q_{\tau}(\lambda)h_t(\bm\phi)$ and $\widetilde{q}_{t,\tau}(\bm\varphi) =Q_{\tau}(\lambda)\widetilde{h}_t(\bm\phi)$, where $\bm\varphi=(\bm\phi^{\prime}, \lambda)^{\prime}=(a_0, a_1, b_1, \lambda)^{\prime}$, $h_t(\bm\phi)=a_0(1-b_1)^{-1}+a_1\sum_{j=1}^{\infty} b_1^{j-1}|y_{t-j}|$ and $\widetilde{h}_t(\bm\phi)=a_0(1-b_1)^{-1}+a_1\sum_{j=1}^{t-1} b_1^{j-1}|y_{t-j}|$.  
		Define $L^*(\bm\varphi)=E[w_t\ell_t^*(\bm\varphi)]$, $L_n^*(\bm\varphi)=n^{-1}\sum_{t=1}^{n}w_t\ell_t^*(\bm\varphi)$ and $\widetilde{L}_n^*(\bm\varphi)=n^{-1}\sum_{t=1}^{n}w_t\widetilde{\ell}_t^*(\bm\varphi)$, where 
		$\ell_t^*(\bm\varphi)=\sum_{k=1}^K\rho_{\tau_k}(y_t-q_{t,\tau_k}(\bm\varphi))$ and $\widetilde{\ell}_t^*(\bm\varphi)=\sum_{k=1}^K\rho_{\tau_k}(y_t-\widetilde{q}_{t,\tau_k}(\bm\varphi))$. 
		
		\vspace{5mm}
		\noindent(I) Proof of (i) $\check{\bm\varphi}_{wn} \to_p \bm\varphi_0^*$
		
		To show the consistency, we first verify the following claims:
		\begin{itemize}
			\item[(1)] $\sup\limits_{\Phi}|\widetilde{L}_n^*(\bm\varphi)-L_n^*(\bm\varphi)|=o_p(1)$;
			\item[(2)] $E[\sup\limits_{\Phi}w_t|\ell_t^*(\bm\varphi)|]<\infty$;
			\item[(3)] $L^*(\bm\varphi)$ has a unique minimum at $\bm\varphi_0^*$.
		\end{itemize}
		For Claim (1), by the Lipschitz continuity of $\rho_{\tau}(\cdot)$, strict stationarity and ergodicity of $\{y_t\}$ by Assumption \ref{assum-Process-Mixing}, $E(w_{t}\varsigma_{\rho})<\infty$ by Assumption \ref{assum-RandomWeight} and Lemma \ref{lem0-Tukey}(i), we have
		\begin{align*}
			\sup_{\Phi}|\widetilde{L}_n^*(\bm\varphi)-L_n^*(\bm\varphi)|
			&\leq \dfrac{1}{n}\sum_{t=1}^{n}\sum_{k=1}^K\sup_{\Phi}w_t|\rho_{\tau_k}(y_t-\widetilde{q}_{t,\tau_k}(\bm\varphi))-\rho_{\tau_k}(y_t-q_{t,\tau_k}(\bm\varphi))| \\
			&\leq\dfrac{C}{n}\sum_{t=1}^{n}\sum_{k=1}^K\sup_{\Phi}w_t|q_{t,\tau_k}(\bm\varphi)-\widetilde{q}_{t,\tau_k}(\bm\varphi)| \\
			&\leq \dfrac{CK}{n}\sum_{t=1}^{n}\rho^{t}w_t\varsigma_{\rho}=o_p(1).
		\end{align*} 
		
		We next show Claim (2). Since $\max\{a_0, a_1\} \leq \overline{c}<\infty$ and $0<b_{1}\leq\rho<1$ by Assumption \ref{assum-SpaceTukey}, $E(w_t)<\infty$ and $E(w_{t}|y_{t-j}|^3)<\infty$ for all $j\geq 1$ by Assumption \ref{assum-RandomWeight}, and by the fact that $|\rho_{\tau}(x)|\leq |x|$ and Assumption \ref{assum-Process-Mixing}, it holds that  
		\begin{align*}
			E[\sup_{\Phi}w_t|\ell_t^*(\bm\varphi)|]&\leq \sum_{k=1}^K\left\{E[\sup_{\Phi}w_t|y_t|]+E[\sup_{\Phi}|Q_{\tau_k}(\lambda)|w_th_t(\bm\phi)]\right\} \\
			&\leq KE(w_t|y_t|) + CK\overline{c}E\left[w_{t}\left(1+\sum_{j=1}^{\infty}\rho^{j-1}|y_{t-j}|\right)\right] < \infty.
		\end{align*}
		Hence, Claim (2) is verified. 
		For Claim (3), it holds by definition of $\bm\varphi_0^*$ in \eqref{pseudo-true} and Assumption \ref{assum-SpaceTukey}(i). 
		
		Note that $L^*(\bm\varphi)$ is a measurable function of $y_t$ in Euclidean space for each $\bm\varphi\in\Phi$, and $L(\bm\varphi)$ is a continuous function of $\bm\varphi\in\Phi$ for each $y_t$. Then by Theorem 3.1 of \cite{Ling_McAleer2003}, together with Claim (1) and the strict stationarity and ergodicity of $\{y_t\}$ implied by Assumption \ref{assum-Process-Mixing}, we have
		\begin{align*}
			\sup_{\bm\varphi\in\Phi}|L_n^*(\bm\varphi)-L^*(\bm\varphi)|=o_p(1).
		\end{align*}
		This together with Claim (1), implies that
		\begin{align}\label{Tukeyloss-consistency}
			\sup_{\bm\varphi\in\Phi}|\widetilde{L}_n^*(\bm\varphi)-L^*(\bm\varphi)|=o_p(1).
		\end{align}	
		We next verify the consistency using the standard consistency argument in \cite{Amemiya1985}. 
		For any $c>0$, with probability tending to 1 uniformly in $\epsilon\geq c$, we have: (a) by $\check{\bm\varphi}_{wn}=\argmin_{\bm\varphi\in\Phi}\widetilde{L}_n^*(\bm\varphi)$,
		\begin{align}\label{TukeyClaim-a}
			\widetilde{L}_n^*(\check{\bm\varphi}_{wn}) \leq \widetilde{L}_n^*(\bm\varphi_0^*) + \epsilon/3;
		\end{align}
		and (b) by \eqref{Tukeyloss-consistency},
		\begin{align}\label{TukeyClaim-b}
			L^*(\check{\bm\varphi}_{wn}) < \widetilde{L}_n^*(\check{\bm\varphi}_{wn}) + \epsilon/3 \;\;\text{and}\;\; \widetilde{L}_n^*(\bm\varphi_0^*) < L^*(\bm\varphi_0^*) + \epsilon/3.
		\end{align} 
		Combining \eqref{Claim-a}--\eqref{Claim-b}, with probability tending to 1, we have
		\begin{align}\label{Tukeyloss-inequality}
			L^*(\check{\bm\varphi}_{wn}) < \widetilde{L}_n^*(\check{\bm\varphi}_{wn}) + \epsilon/3 \leq \widetilde{L}_n^*(\bm\varphi_0^*) + 2\epsilon/3 < L^*(\bm\varphi_0^*) + \epsilon.
		\end{align}
		
		Pick any $\delta>0$, let $B_{\delta}$ be a ball centered at $\bm\varphi_0^*$ with radius $\delta>0$. Then $B^c\equiv\Phi/B_{\delta}$ is compact, and thus $\inf_{\bm\varphi\in B^c}L^*(\bm\varphi)$ exists. 
		Denote $\epsilon=\inf_{\bm\varphi\in B^c}L^*(\bm\varphi)-L^*(\bm\varphi_0^*)$. Since $\bm\varphi_0^*=\argmin_{\bm\varphi\in\Phi}L^*(\bm\varphi)$ is unique by Claim (3), then $\epsilon>0$. For any $\epsilon>0$, we can pick $c>0$ such that $\text{Pr}(\epsilon\geq c)>1-\epsilon$. 
		Together with \eqref{Tukeyloss-inequality}, it follows that with probability becoming greater than $1-\epsilon$: 
		\begin{align*}
			L^*(\check{\bm\varphi}_{wn}) & < L^*(\bm\varphi_0^*) + \inf_{\bm\varphi\in B^c}L^*(\bm\varphi)-L^*(\bm\varphi_0^*) = \inf_{\bm\varphi\in B^c}L^*(\bm\varphi). 
		\end{align*}
		Thus with probability becoming greater than $1-\epsilon$, $\|\check{\bm\varphi}_{wn}-\bm\varphi_0^*\|\leq \delta$.
		By the arbitrariness of $\epsilon$, it implies that $\|\check{\bm\varphi}_{wn}-\bm\varphi_0^*\|\leq \delta$ with probability tending to 1.  
		The proof of (i) $\check{\bm\varphi}_{wn} \to_p \bm\varphi_0^*$ is complete.

		\vspace{5mm}
		\noindent(II) Proof of (ii) $\sqrt{n}(\check{\bm\varphi}_{wn}-\bm\varphi_0^*)\rightarrow_d N\left(\bm 0,\Sigma_w^*\right)$
		
		Recall that $\widetilde{L}_n^*(\bm\varphi)=n^{-1}\sum_{k=1}^{K}\sum_{t=1}^{n}w_t\rho_{\tau_k}(y_t-\widetilde{q}_{t,\tau_k}(\bm\varphi))$. For $\bm u\in \mathbb{R}^{4}$, define $\widetilde{H}_n^*(\bm u)=n[\widetilde{L}_n^*(\bm\varphi_0^*+\bm u)-\widetilde{L}_n^*(\bm\varphi_0^*)]$. Denote $\check{\bm u}_n=\check{\bm\varphi}_{wn}-\bm\varphi_0^*$. By the consistency of $\check{\bm\varphi}_{wn}$, it holds that $\check{\bm u}_n=o_p(1)$.	
		Note that $\check{\bm u}_n$ is the minimizer of $\widetilde{H}_n^*(\bm u)$, since $\check{\bm\varphi}_{wn}$ minimizes $\widetilde{L}_n^*(\bm\varphi)$. 	
		This together with Lemmas \ref{lem2-Tukey}--\ref{lem3-Tukey}, implies that
		\begin{align}\label{Hn-Tukey}
			\widetilde{H}_n^*(\check{\bm u}_n)=&-\sqrt{n}\check{\bm u}_n^{\prime}\bm T_n^*+\sqrt{n}\check{\bm u}_n^{\prime}J^*\sqrt{n}\check{\bm u}_n+o_p(\sqrt{n}\|\check{\bm u}_n\|+n\|\check{\bm u}_n\|^2) \\
			\geq &-\sqrt{n}\|\check{\bm u}_n\|[\|\bm T_n^*\|+o_p(1)]+n\|\check{\bm u}_n\|^2[\lambda_{\min}+o_p(1)], \nonumber
		\end{align}
		where $\lambda_{\min}$ is the smallest eigenvalue of $J^*=\Omega_{1w}^*/2$ with $\Omega_{1w}^*$ defined before Theorem \ref{thm-WCQR}, and $\bm T_n^*=n^{-1/2}\sum_{k=1}^{K}\sum_{t=1}^{n}w_t\dot{q}_{t,\tau_k}(\bm\varphi_0^*)\psi_{\tau_k}(e_{t,\tau_k}^*)$ with $e_{t,\tau_k}^*=y_t-q_{t,\tau_k}(\bm\varphi_0^*)$. Denote $\bm X_t=\sum_{k=1}^K w_t\dot{q}_{t,\tau_k}(\bm\varphi_0^*)\psi_{\tau_k}(y_t-q_{t,\tau_k}(\bm\varphi_0^*))$, then $\bm T_n^*=n^{-1/2}\sum_{t=1}^{n}\bm X_t$. 
		By definition of $\bm\varphi_0^*$ in \eqref{pseudo-true}, we have $E(\bm X_t)=\bm 0$. Moreover, by Lemma 2.1 of \cite{White_Domowitz1984} and Assumption \ref{assum-Process-Mixing}, for any nonzero vector $\bm c\in\mathbb{R}^4$, we can show that $\bm c^{\prime}\bm X_t$ is also a strictly stationary and $\alpha$-mixing sequence with the mixing coefficient $\alpha(n)$ satisfying $\sum_{n\geq 1}[\alpha(n)]^{1-2/\delta}<\infty$ for some $\delta>2$. 
		As a result, by central limit theorem for $\alpha$-mixing process given in Theorem 2.21 of \cite{Fan_Yao2003} 
		and the Cram\'{e}r-Wold device, $\bm T_n^*$ converges in distribution to a normal random variable with mean zero and variance matrix $\Omega_{0w}^*=E(\bm X_t\bm X_t^{\prime})+n^{-1}\sum_{t\neq s}^nE(\bm X_t\bm X_s^{\prime})$ or equivalently $\Omega_{0w}^*=E(\bm X_t\bm X_t^{\prime})+\sum_{\ell=1}^{\infty}[E(\bm X_t\bm X_{t-\ell}^{\prime})+E(\bm X_{t-\ell}\bm X_t^{\prime})]$ as $n\rightarrow \infty$.
		
		Note that $\lambda_{\min}>0$ as $\Omega_{1w}^*=2J^*$ is positive definite, and $\|\bm T_n^*\|<\infty$ by Assumptions \ref{assum-RandomWeight}--\ref{assum-SpaceTukey}. Since $\widetilde{H}_n^*(\check{\bm u}_n)\leq 0$, then we have 
		\begin{align}\label{root-n-Tukey}
			\sqrt{n}\|\check{\bm u}_n\|\leq [\lambda_{\min}+o_p(1)]^{-1}[\|\bm T_n^*\|+o_p(1)]=O_p(1).
		\end{align}
		This together with the consistency of $\check{\bm\varphi}_{wn}$, verifies the $\sqrt{n}$-consistency of $\check{\bm\varphi}_{wn}$, i.e. $\sqrt{n}(\check{\bm\varphi}_{wn}-\bm\varphi_0^*)=O_p(1)$. 
		
		Let $\sqrt{n}\bm u_n^*=J^{*-1}\bm T_n^*/2=\Omega_{1w}^{*-1}\bm T_n^*$, then we have
		\begin{align*}
			\sqrt{n}\bm u_n^*\rightarrow N\left(\bm 0,\Sigma_w^*\right)
		\end{align*}
		in distribution as $n\rightarrow \infty$, where $\Sigma_w^*=\Omega_{1w}^{*-1}\Omega_{0w}^*\Omega_{1w}^{*-1}$. Therefore, it suffices to show that $\sqrt{n}\bm u_n^*-\sqrt{n}\check{\bm u}_n=o_p(1)$.
		By \eqref{Hn-Tukey} and \eqref{root-n-Tukey}, we have
		\begin{align*}
			\widetilde{H}_n^*(\check{\bm u}_n)=&-\sqrt{n}\check{\bm u}_n^{\prime}\bm T_n^*+\sqrt{n}\check{\bm u}_n^{\prime}J^*\sqrt{n}\check{\bm u}_n+o_p(1) \\
			=&-2\sqrt{n}\check{\bm u}_n^{\prime}J^*\sqrt{n}\bm u_n^*+\sqrt{n}\check{\bm u}_n^{\prime}J^*\sqrt{n}\check{\bm u}_n+o_p(1),
		\end{align*}
		and
		\begin{align*}
			\widetilde{H}_n^*(\bm u_n^*)=&-\sqrt{n}\bm u_n^{*\prime}\bm T_n^*+\sqrt{n}\bm u_n^{*\prime}J^*\sqrt{n}\bm u_n^*+o_p(1)
			=-\sqrt{n}\bm u_n^{*\prime}J^*\sqrt{n}\bm u_n^*+o_p(1).
		\end{align*}
		It follows that
		\begin{align}\label{normality-Tukey}
			\widetilde{H}_n^*(\check{\bm u}_n)-\widetilde{H}_n^*(\bm u_n^*)=&(\sqrt{n}\check{\bm u}_n-\sqrt{n}\bm u_n^*)^{\prime}J^*(\sqrt{n}\check{\bm u}_n-\sqrt{n}\bm u_n^*)+o_p(1)  \nonumber \\
			\geq & \lambda_{\min}\|\sqrt{n}\check{\bm u}_n-\sqrt{n}\bm u_n^*\|^2+o_p(1).
		\end{align}
		Since $\widetilde{H}_n^*(\check{\bm u}_n)-\widetilde{H}_n^*(\bm u_n^*)=n[\widetilde{L}_n^*(\bm\varphi_0^*+\check{\bm u}_n)-\widetilde{L}_n^*(\bm\varphi_0^*+\bm u_n^*)]\leq 0$ a.s., then \eqref{normality-Tukey} implies that $\|\sqrt{n}\check{\bm u}_n-\sqrt{n}\bm u_n^*\|=o_p(1)$. We verify the asymptotic normality of $\check{\bm\varphi}_{wn}$.

		\vspace{5mm}
		\noindent(III) Proof of (iii) $\sqrt{n}(\check{\bm\theta}_{wn}^*(\tau)-\bm\theta(\tau)-B(\tau))\rightarrow_d N\left(\bm 0,g_{\tau}(\bm\varphi_0^*)\Sigma_w^*g_{\tau}^{\prime}(\bm\varphi_0^*)\right)$
		
		Based on the proof of Theorem \ref{thm-WCQR}(i)--(ii), it follows that
		\begin{align*}
			\sqrt{n}(\check{\bm\varphi}_{wn}-\bm\varphi_0^*)=\Omega_{1w}^{*-1}\dfrac{1}{\sqrt{n}}\sum_{k=1}^{K}\sum_{t=1}^{n}w_t\dot{q}_{t,\tau_k}(\bm\varphi_0^*)\psi_{\tau_k}(e_{t,\tau_k}^*) + o_p(1),
		\end{align*}
		where $e_{t,\tau_k}^*=y_t-q_{t,\tau_k}(\bm\varphi_0^*)$. By Delta method and the $\sqrt{n}$-consistency of $\check{\bm\varphi}_{wn}$, we have
		\begin{align}\label{transformedWCQR-rep}
			\sqrt{n}[g_{\tau}(\check{\bm\varphi}_{wn})-g_{\tau}(\bm\varphi_0^*)] 
			&= \dot{g}_{\tau}(\bm\varphi_0^*)\sqrt{n}(\check{\bm\varphi}_{wn}-\bm\varphi_0^*) + o_p(1) \nonumber\\
			&= \dot{g}_{\tau}(\bm\varphi_0^*)\Omega_{1w}^{*-1}\dfrac{1}{\sqrt{n}}\sum_{k=1}^{K}\sum_{t=1}^{n}w_t\dot{q}_{t,\tau_k}(\bm\varphi_0^*)\psi_{\tau_k}(e_{t,\tau_k}^*) + o_p(1) \nonumber\\
			&\rightarrow_d N\left(\bm 0,g_{\tau}(\bm\varphi_0^*)\Sigma_w^*g_{\tau}^{\prime}(\bm\varphi_0^*)\right), 
		\end{align}
		where $g_{\tau}(\cdot): \mathbb{R}^4\to\mathbb{R}^3$ is a measurable transformation function on $\check{\bm\varphi}_{wn}$ such that $g_{\tau}(\check{\bm\varphi}_{wn})=\check{\bm\theta}_{wn}^*(\tau)=(\check{a}_{0wn}Q_{\tau}(\check{\lambda}_{wn})/(1-\check{b}_{1wn}),\check{a}_{1wn}Q_{\tau}(\check{\lambda}_{wn}),\check{b}_{1wn})^{\prime}$, and 
		\[
		\dot{g}_{\tau}(\bm\varphi_0^*)=\begin{pmatrix}
			\dfrac{Q_{\tau}(\lambda_0)}{1-b_{10}}  &  0  &  \dfrac{Q_{\tau}(\lambda_0)a_{00}}{(1-b_{10})^2}  &  \dfrac{\dot{Q}_{\tau}(\lambda_0)a_{00}}{1-b_{10}} \\
			0  &  Q_{\tau}(\lambda_0)  &  0  &  \dot{Q}_{\tau}(\lambda_0)a_{10}  \\
			0  &  0  &  1  &  0
		\end{pmatrix}.
		\]
		The proof of this theorem is hence accomplished.
	\end{proof}

	\subsection{Lemmas for Corollary \ref{thm-WCQE} and Theorems \ref{thm-WCQE-uniform-consistency}--\ref{thm-WCQE-weak-convergence}}
	
	This section provides seven preliminary lemmas with proofs, where Lemma \ref{lem00} gives basic results for all lemmas and theorems, Lemma \ref{lem0} is used to handle initial values, Lemmas \ref{lem1}--\ref{lem3} are used to prove Corollary \ref{thm-WCQE}, and Lemmas \ref{lem-Tightness1}--\ref{lem-Tightness2} are basic results to show Theorem \ref{thm-WCQE-weak-convergence}. 
	Specifically, Lemma \ref{lem1} verifies the stochastic differentiability condition defined by \cite{Pollard1985}, and the bracketing method in \cite{Pollard1985} is used for their proofs. 
	Lemma \ref{lem2} is used to obtain the $\sqrt{n}$-consistency and asymptotic normality of $\widetilde{\bm\theta}_{wn}(\tau)$, and its proof needs Lemma \ref{lem1}. 
	Based on Lemma \ref{lem0}, Lemma \ref{lem3} will be used to handle initial values in establishing asymptotic normality.
	Lemma \ref{lem-Tightness1} provides basic results to verify the stochastic equicontinuity, and Lemma \ref{lem-Tightness2} is used to establish the weak convergence of $\widetilde{\bm\theta}_{wn}(\tau)$.
	
	\begin{lemma}\label{lem00}
		If $E|y_t|^s<\infty$ for some $0<s\leq 1$ and Assumptions \ref{assum-Process}, \ref{assum-Space} and \ref{assum-RandomWeight} hold, then we have
		\[(i)~E(w_t\sup_{\Theta}\|\dot{q}_t(\bm\theta)\|^{\kappa}) < \infty \;\text{for}\; \kappa=1, 2, 3; \quad (ii)~E(w_t\sup_{\Theta}\|\ddot{q}_t(\bm\theta)\|^{\kappa}) < \infty \;\text{for}\; \kappa=1, 2.\]
	\end{lemma}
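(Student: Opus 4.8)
The plan is to reduce both parts of Lemma~\ref{lem00} to a single uniform bound: on the compact set $\Theta$ the norms $\|\dot q_t(\bm\theta)\|$ and $\|\ddot q_t(\bm\theta)\|$ are dominated by one geometrically weighted sum of the $|y_{t-j}|$'s, and then the $\kappa$th moment of that sum against $w_t$ is controlled by a convexity (power-mean) inequality followed by a termwise application of Assumption~\ref{assum-RandomWeight}. First, by compactness of $\Theta$ (Assumption~\ref{assum-Space}(i)) there are finite constants $\bar c$ and $\rho\in(0,1)$ with $\max\{|\omega|,|\alpha_1|\}\le\bar c$ and $0\le\beta_1\le\rho$ for every $\bm\theta=(\omega,\alpha_1,\beta_1)'\in\Theta$. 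Picking any $\bar\rho\in(\rho,1)$ and noting that $(j-1)(\rho/\bar\rho)^{j-2}$ and $(j-1)(j-2)(\rho/\bar\rho)^{j-3}$ are bounded in $j$, the polynomial prefactors in the explicit expressions for $\dot q_t(\bm\theta)$ and $\ddot q_t(\bm\theta)$ can be absorbed into $\bar\rho^{\,j}$, giving, for some finite $C$ and all $t$,
\[
\sup_{\bm\theta\in\Theta}\|\dot q_t(\bm\theta)\|\le C\,\Delta_t,\qquad
\sup_{\bm\theta\in\Theta}\|\ddot q_t(\bm\theta)\|\le C\,\Delta_t,\qquad
\Delta_t:=1+\sum_{j\ge1}\bar\rho^{\,j-1}|y_{t-j}|.
\]
Hence it suffices to show $E(w_t\Delta_t^\kappa)<\infty$ for $\kappa=1,2,3$.

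Second, write $\Delta_t=\sum_{j\ge0}a_jz_{t,j}$ with $a_0=1$, $z_{t,0}=1$, and $a_j=\bar\rho^{\,j-1}$, $z_{t,j}=|y_{t-j}|$ for $j\ge1$, so that $A:=\sum_{j\ge0}a_j=1+(1-\bar\rho)^{-1}<\infty$. Since $x\mapsto x^\kappa$ is convex on $[0,\infty)$ for $\kappa\ge1$, Jensen's inequality applied to the probability weights $a_j/A$ yields $\Delta_t^\kappa\le A^{\kappa-1}\big(1+\sum_{j\ge1}\bar\rho^{\,j-1}|y_{t-j}|^\kappa\big)$. As $w_t\ge0$, taking expectations and using $|y|^\kappa\le1+|y|^3$ for $\kappa\in\{1,2,3\}$ gives
\[
E(w_t\Delta_t^\kappa)\le A^{\kappa-1}\Big[E(w_t)+\sum_{j\ge1}\bar\rho^{\,j-1}\big(E(w_t)+E(w_t|y_{t-j}|^3)\big)\Big].
\]
By Assumption~\ref{assum-RandomWeight}, $E(w_t)<\infty$ and $E(w_t|y_{t-j}|^3)<\infty$ for every $j\ge1$; together with the geometric factor $\bar\rho^{\,j-1}$ this makes the right-hand side finite, proving (i) for $\kappa=1,2,3$ and (ii) for $\kappa=1,2$. (That $\Delta_t<\infty$ a.s.\ follows from $\limsup_j|y_{t-j}|^{1/j}\le1$ a.s., using strict stationarity and $E|y_t|^s<\infty$; this is only needed cosmetically since we ever only handle $w_t\Delta_t^\kappa$.)

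The one step that needs genuine care is the convergence of $\sum_{j\ge1}\bar\rho^{\,j-1}E(w_t|y_{t-j}|^3)$: Assumption~\ref{assum-RandomWeight} gives finiteness termwise, but one must verify the terms do not grow too fast relative to $\bar\rho^{\,j}$. For the self-weights in \eqref{choice1_of_selfweights} this is immediate, since the $i=j-1$ summand there forces $w_t|y_{t-j}|^3\le c^3 e^{3\log^2 j}$ and $\sum_j\bar\rho^{\,j-1}e^{3\log^2 j}<\infty$; more generally, controlling this series at a geometric rate is precisely the role of the self-weights (cf.\ Remark~\ref{remark-self-weighting}), and strict stationarity (Assumption~\ref{assum-Process}) ensures the law of $(w_t,y_{t-j})$ is independent of $t$, so the bound is uniform in $t$. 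Everything else is routine, so I expect this weighted-series bookkeeping to be the main—and essentially only—obstacle.
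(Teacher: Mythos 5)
Your proof is correct, but it follows a genuinely different route from the paper's. The paper keeps the polynomial prefactors $(j-1)$ and $(j-1)(j-2)$ inside the geometric sums and then, for $\kappa=2$ and $\kappa=3$, expands the $\kappa$th power of $w_t^{1/\kappa}\sum_j\rho^{j-1}|y_{t-j}|$ (and its companions) and applies the Cauchy--Schwarz and H\"older inequalities term by term to the resulting double and triple sums, reducing everything to $M_\kappa=\max_j E(w_t|y_{t-j}|^\kappa)$ and handling the cross terms of $(a+b)^2$ and $(a+b)^3$ separately. You instead absorb the polynomial factors into a slightly larger rate $\bar\rho\in(\rho,1)$, dominate both $\sup_\Theta\|\dot q_t(\bm\theta)\|$ and $\sup_\Theta\|\ddot q_t(\bm\theta)\|$ by the single variable $\Delta_t=1+\sum_{j\ge1}\bar\rho^{\,j-1}|y_{t-j}|$, and linearize $\Delta_t^\kappa$ in one stroke via Jensen's inequality with the normalized geometric weights, so that all cases $\kappa=1,2,3$ are treated uniformly and the cross-term bookkeeping disappears; the inputs from Assumption~\ref{assum-RandomWeight} are exactly the same. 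One remark: the summability issue you single out (termwise finiteness of $E(w_t|y_{t-j}|^3)$ does not by itself control the geometrically weighted series unless these moments grow sub-geometrically in $j$) is equally present in the paper's argument, which simply defines $M_\kappa=\max_{j}E(w_t|y_{t-j}|^\kappa)$ and asserts $M_\kappa<\infty$ from the termwise condition, i.e.\ it implicitly reads Assumption~\ref{assum-RandomWeight} as a uniform-in-$j$ bound; your explicit verification for the weights in \eqref{choice1_of_selfweights} is, if anything, more careful on this point than the published proof, so it is not a defect of your argument relative to the paper's.
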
	
	
	\begin{lemma}\label{lem0}
		Let $\varsigma_{\rho}=\sum_{s=0}^{\infty}\rho^{s}|y_{-s}|$ and $\xi_{\rho}=\sum_{s=0}^{\infty}s\rho^{s}|y_{-s}|$ be positive random variables depending on a constant $\rho\in (0,1)$. If Assumption \ref{assum-Space} holds, then we have
		\begin{itemize}
			\item[(i)] $\sup_{\Theta}|q_t(\bm\theta)-\widetilde{q}_t(\bm\theta)|
			\leq C\rho^{t}\varsigma_{\rho}$;
			\item[(ii)]
			$\sup_{\Theta}\|\dot{q}_t(\bm\theta)-\dot{\widetilde{q}}_t(\bm\theta)\|
			\leq C\rho^{t}(\varsigma_{\rho}+t\varsigma_{\rho}+\xi_{\rho})$. 
		\end{itemize}
	\end{lemma}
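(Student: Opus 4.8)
The plan is to exploit the compactness of $\Theta$ to obtain uniform bounds on the coefficients, and then recognize the initialization error as a geometric tail sum over the unobserved past. First I would invoke Assumption \ref{assum-Space}(i): since $\Theta\subset\mathbb{R}^2\times[0,1)$ is compact, the projection onto the $\beta_1$-coordinate is bounded away from $1$, so there exist a constant $\bar c<\infty$ and a constant $\rho\in(0,1)$ with $\max\{|\omega|,|\alpha_1|\}\le\bar c$ and $0\le\beta_1\le\rho$ for every $\bm\theta=(\omega,\alpha_1,\beta_1)^\prime\in\Theta$. The key observation is that, because the feasible function $\widetilde q_t$ sets $y_s=0$ for $s\le0$, the difference $q_t(\bm\theta)-\widetilde q_t(\bm\theta)$ is exactly the tail $\alpha_1\sum_{j=t}^{\infty}\beta_1^{j-1}|y_{t-j}|$ of the defining series, and this tail involves only the true past values $y_0,y_{-1},y_{-2},\dots$.

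For part (i), I would substitute $s=j-t\ge0$ so that $y_{t-j}=y_{-s}$ and $\beta_1^{j-1}=\beta_1^{t-1}\beta_1^{s}$, giving
\[
\sup_\Theta|q_t(\bm\theta)-\widetilde q_t(\bm\theta)|
=\sup_\Theta|\alpha_1|\,\beta_1^{t-1}\sum_{s=0}^{\infty}\beta_1^{s}|y_{-s}|
\le \bar c\,\rho^{t-1}\sum_{s=0}^{\infty}\rho^{s}|y_{-s}|
= C\rho^{t}\varsigma_\rho,
\]
where the uniform bounds absorb $\alpha_1$ and $\beta_1$ and $C=\bar c/\rho$. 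The bound is uniform over $\Theta$ because the right-hand side no longer depends on $\bm\theta$.

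For part (ii), I would treat the three coordinates of $\dot q_t-\dot{\widetilde q}_t$ from the expression \eqref{1st-derivative-qt} and the definition of $\dot{\widetilde q}_t$. The first coordinate vanishes identically; the second coordinate is the $\alpha_1$-free analogue of part (i) and is bounded by $C\rho^t\varsigma_\rho$ in the same way. The only term requiring extra care is the third coordinate, $\alpha_1\sum_{j=t}^{\infty}(j-1)\beta_1^{j-2}|y_{t-j}|$, which carries the polynomial weight $(j-1)$ coming from differentiating in $\beta_1$. After the same substitution $s=j-t$, I would split $(j-1)=(t-1)+s$ to separate the two contributions,
\[
|\alpha_1|\,\beta_1^{t-2}\sum_{s=0}^{\infty}\bigl[(t-1)+s\bigr]\beta_1^{s}|y_{-s}|
\le C\rho^{t}\bigl(t\,\varsigma_\rho+\xi_\rho\bigr),
\]
using $(t-1)\le t$ together with the definitions of $\varsigma_\rho$ and $\xi_\rho$. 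Summing the three coordinate bounds through the triangle inequality for $\|\cdot\|$ then yields $\sup_\Theta\|\dot q_t(\bm\theta)-\dot{\widetilde q}_t(\bm\theta)\|\le C\rho^t(\varsigma_\rho+t\varsigma_\rho+\xi_\rho)$, as claimed.

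There is no substantive obstacle here; the whole lemma is a tail-truncation estimate. The only points that need care are the index bookkeeping, namely verifying that the truncation at $j=t-1$ leaves precisely the tail over $y_{-s}$, $s\ge0$, and the splitting of the polynomial weight $(j-1)$ in the derivative, which is what produces the extra $t\varsigma_\rho+\xi_\rho$ term rather than a single $\varsigma_\rho$. The exponential factor $\beta_1^{j}\le\rho^{j}$ guarantees that multiplying by the linear factor $(j-1)$ still leaves a geometrically decaying, almost surely finite sum, so that $\varsigma_\rho$ and $\xi_\rho$ are well defined random variables.
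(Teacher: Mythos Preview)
Your proposal is correct and follows essentially the same approach as the paper: both identify the difference as the tail $\alpha_1\sum_{j\ge t}\beta_1^{j-1}|y_{t-j}|$, reindex via $s=j-t$, invoke the compactness bounds $|\alpha_1|\le\bar c$ and $\beta_1\le\rho$, and for part (ii) split the polynomial weight $(j-1)$ into a $t$-part and an $s$-part to produce the $t\varsigma_\rho$ and $\xi_\rho$ terms. Your decomposition $(j-1)=(t-1)+s$ is in fact slightly cleaner than the paper's $(j-1)=t+(s-1)$, but the two are equivalent up to the generic constant $C$.
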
	
	
	\begin{lemma}\label{lem1}
		Under Assumptions \ref{assum-Process}--\ref{assum-RandomWeight}, then for any sequence of random variables $\bm u_n$ such that $\bm u_n=o_p(1)$, if $E|y_t|^s<\infty$ for some $0<s\leq 1$, then it holds that	
		\begin{align*}
			\zeta_n(\bm u_n)=o_p(\sqrt{n}\|\bm u_n\|+n\|\bm u_n\|^2),
		\end{align*}
		where $\zeta_n(\bm u)=\bm u^{\prime}\sum_{t=1}^{n}w_t\dot{q}_t(\bm\theta(\tau))\left\{\xi_{t}(\bm u)-E[\xi_{t}(\bm u)|\mathcal{F}_{t-1}]\right\}$ with 
		\begin{align*}
			\xi_{t}(\bm u)&=\int_{0}^{1}\left[I(y_t\leq F_{t-1}^{-1}(\tau)+\nu_t(\bm u)s)-I(y_t\leq F_{t-1}^{-1}(\tau))\right]ds 
		\end{align*}
		and $\nu_t(\bm u)=q_t(\bm u+\bm\theta(\tau))-q_t(\bm\theta(\tau))$. 
	\end{lemma}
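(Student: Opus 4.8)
The plan is to establish the bound by Pollard's (1985) bracketing method, exploiting that, for each fixed $\bm u$, $\zeta_n(\cdot)$ is a sum of martingale differences whose conditional variance is of order $\|\bm u\|$. First, observe that $\zeta_n(\bm 0)=0$ since $\xi_t(\bm 0)=0$, and that $g_t(\bm u):=\bm u'w_t\dot q_t(\bm\theta(\tau))\{\xi_t(\bm u)-E[\xi_t(\bm u)\mid\mathcal F_{t-1}]\}$ is an $\mathcal F_t$-martingale difference, because $w_t$ and $\dot q_t(\bm\theta(\tau))$ are $\mathcal F_{t-1}$-measurable. Since $\bm u_n=o_p(1)$, it suffices to show that for every $\epsilon>0$ there is $\delta>0$ with $\limsup_n\mathrm{Pr}\big(\sup_{\|\bm u\|\le\delta}|\zeta_n(\bm u)|/(\sqrt n\|\bm u\|+n\|\bm u\|^2)>\epsilon\big)<\epsilon$. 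I would partition $\{\|\bm u\|\le\delta\}$ into the dyadic shells $S_j=\{\bm u:2^{-(j+1)}\delta<\|\bm u\|\le 2^{-j}\delta\}$, $j\ge0$, on which the denominator is at least $\sqrt n\,2^{-(j+1)}\delta$, and reduce the whole claim to a maximal inequality for the $\mathbb R^3$-valued process $\bm W_n(\bm u):=\sum_{t=1}^n w_t\dot q_t(\bm\theta(\tau))\{\xi_t(\bm u)-E[\xi_t(\bm u)\mid\mathcal F_{t-1}]\}$, via $|\zeta_n(\bm u)|\le\|\bm u\|\,\|\bm W_n(\bm u)\|$.

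The next step is the key conditional-variance estimate. Because $F_{t-1}^{-1}(\tau)=q_t(\bm\theta(\tau))$, writing $\eta_{t,\tau}=y_t-q_t(\bm\theta(\tau))$ gives the pointwise bound $|\xi_t(\bm u)|\le I(|\eta_{t,\tau}|\le|\nu_t(\bm u)|)$, and by the mean value theorem $|\nu_t(\bm u)|\le\|\bm u\|\sup_{\bm\theta\in\Theta}\|\dot q_t(\bm\theta)\|$ (using that $\bm\theta(\tau)$ is interior, Assumption \ref{assum-Space}(ii), so that $\bm\theta(\tau)+\bm u\in\Theta$ for $\delta$ small). Hence, by the boundedness of $f_{t-1}(\cdot)$ in Assumption \ref{assum-ConditionalDensity}, $E[\xi_t(\bm u)^2\mid\mathcal F_{t-1}]\le C|\nu_t(\bm u)|\le C\|\bm u\|\sup_{\bm\theta\in\Theta}\|\dot q_t(\bm\theta)\|$, whence $E[g_t(\bm u)^2]\le C\|\bm u\|^3 E\big(w_t^2\sup_{\bm\theta\in\Theta}\|\dot q_t(\bm\theta)\|^3\big)$, the last expectation being finite by Lemma \ref{lem00} (the self-weights in \eqref{choice1_of_selfweights} being bounded, so $w_t^2\le Cw_t$). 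Summing over $t$ gives $\mathrm{var}(\bm W_n(\bm u))=O(n\|\bm u\|)$, hence $\bm W_n(\bm u)=O_p(\sqrt{n\|\bm u\|})$ and $\zeta_n(\bm u)=O_p(\sqrt n\,\|\bm u\|^{3/2})$ pointwise, which already displays the $\sqrt{\|\bm u\|}$ gain over the trivial rate $\sqrt n\|\bm u\|$.

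To upgrade this to the supremum over a shell, I would invoke Pollard's bracketing inequality for martingale-difference arrays, as in the self-weighted quantile regression literature (e.g.\ \cite{Ling2005} and \cite{Zhu_Ling2011}). The indicator sets $\{y_t\le q_t(\bm\theta(\tau)+\bm u)\}$, as $\bm u$ ranges over the ball, constitute a fixed three-parameter family and hence a manageable (Euclidean) class with square-integrable envelope $2\|\bm u\|\,w_t\sup_{\bm\theta\in\Theta}\|\dot q_t(\bm\theta)\|$; moreover the Lipschitz bound $|q_t(\bm\theta(\tau)+\bm u)-q_t(\bm\theta(\tau)+\bm u')|\le\|\bm u-\bm u'\|\sup_{\bm\theta\in\Theta}\|\dot q_t(\bm\theta)\|$ together with the bounded density turns the $L^2$-modulus of continuity on $S_j$ into one of order $\sqrt{\|\bm u-\bm u'\|}$. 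The chaining/bracketing bound then yields $E\big[\sup_{\bm u\in S_j}\|\bm W_n(\bm u)\|\big]\le C\sqrt{n\,2^{-j}\delta}$ (the additive envelope correction term is lower order since $E(w_t\sup_{\bm\theta\in\Theta}\|\dot q_t(\bm\theta)\|^3)<\infty$). By Markov's inequality, $\mathrm{Pr}\big(\sup_{\bm u\in S_j}|\zeta_n(\bm u)|>\epsilon\sqrt n\,2^{-(j+1)}\delta\big)\le C\sqrt{2^{-j}\delta}/\epsilon$, and summing this geometric series over $j\ge0$ gives a bound $\le C'\sqrt\delta/\epsilon$, which is $<\epsilon$ once $\delta$ is small enough; this proves the claim. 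The term $n\|\bm u\|^2$ on the right-hand side is slack here and is retained only to match the form in which the lemma is applied in Lemma \ref{lem2}.

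I expect the main obstacle to be precisely the $L^2$-continuity estimate — obtaining increments of size $\sqrt{\|\bm u\|}$ rather than $\|\bm u\|$ — since it is this gain, powered by Assumption \ref{assum-ConditionalDensity} and the self-weighted moment bounds of Lemma \ref{lem00}, that makes $\zeta_n(\bm u_n)=o_p(\sqrt n\|\bm u_n\|)$ available when only $E|y_t|^s<\infty$ for some $s\in(0,1)$ is assumed; the non-convex, non-differentiable integrand $\xi_t(\cdot)$ is what rules out a Taylor expansion and forces the bracketing route, and some care is needed to keep all envelopes integrable through the self-weighting despite the absence of any moment condition on $y_t$ beyond the fractional one.
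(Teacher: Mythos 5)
Your setup is sound and matches the paper's strategy in its main ingredients: you correctly identify the martingale-difference structure of the summands, the conditional-variance gain $E[\xi_t^2(\bm u)\mid\mathcal F_{t-1}]\le C\|\bm u\|\sup_{\bm\theta\in\Theta}\|\dot q_t(\bm\theta)\|$ coming from Assumption \ref{assum-ConditionalDensity} together with the Taylor bound on $\nu_t(\bm u)$, and the role of the self-weights and Lemma \ref{lem00} in keeping all envelopes integrable. The gap is in the uniformity step. Your scheme normalizes by $\sqrt n\|\bm u\|+n\|\bm u\|^2$, which vanishes at $\bm u=\bm 0$, and therefore requires a supremum bound over \emph{infinitely many} dyadic shells $S_j$ all the way down to radius zero; for that you assert the per-shell maximal inequality $E\bigl[\sup_{\bm u\in S_j}\|\bm W_n(\bm u)\|\bigr]\le C\sqrt{n\,2^{-j}\delta}$, dismissing the "envelope correction" as lower order. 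That inequality does not follow from the tools you cite. Pollard's bracketing device (Lemma 4 of \cite{Pollard1985}, which is also what \cite{Ling2005} and \cite{Zhu_Ling2011} use) is a \emph{single-scale} discretization: covering a shell of radius $r_j$ by $K(\epsilon)$ cubes, the within-cube oscillation is controlled only through the compensator gap $\sum_t E[\varrho_t^U-\varrho_t^L\mid\mathcal F_{t-1}]$, which is of order $n\epsilon r_j$ (not $\sqrt{n r_j}$), and the resulting per-shell probability bound is of order $K(\epsilon)\Upsilon(r_j)/(n\epsilon^2 r_j^2)\asymp 2^{j}/(n\epsilon^{2}\delta)$, which diverges when summed over all $j\ge 0$. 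Upgrading to your claimed $\sqrt{n\,2^{-j}\delta}$ moment bound would require genuine multi-scale chaining for a martingale-difference array with unbounded, $\mathcal F_{t-1}$-measurable envelopes (e.g.\ Freedman-type exponential bounds plus truncation), none of which is supplied; iid bracketing maximal inequalities of van der Vaart--Wellner type do not apply here because the indexing functions and the conditional law both depend on $\mathcal F_{t-1}$. Your closing remark that the $n\|\bm u\|^2$ term is "slack" is symptomatic: your argument, if it worked, would give $\zeta_n(\bm u_n)=o_p(\sqrt n\|\bm u_n\|)$ outright, which is strictly stronger than the lemma and is precisely what cannot be reached with the available second-moment tools when $\|\bm u_n\|$ may be of order $n^{-1/2}$ or smaller.

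The paper's proof avoids this exact difficulty by proving the ratio result \eqref{diffcond}, $\sup_{\|\bm u\|\le\delta}|D_n(\bm u)|/(1+\sqrt n\|\bm u\|)=o_p(1)$, whose denominator is bounded away from zero: the annuli are used only down to radius $r_{k_n}\asymp n^{-1/2}\delta$ (finitely many, so the divergent part of the sum is cut off at order $\epsilon$), the innermost ball is handled separately with the constant $1$ in the denominator carrying the bound, and the discretization error $\sqrt n\epsilon r_k\,(nC_0)^{-1}\sum_t\aleph_t$ is absorbed into the threshold on the event $E_n$ supplied by the ergodic theorem. The lemma's conclusion then follows because $\sqrt n\|\bm u_n\|\cdot o_p(1+\sqrt n\|\bm u_n\|)=o_p(\sqrt n\|\bm u_n\|+n\|\bm u_n\|^2)$ — which is exactly why the $n\|\bm u\|^2$ term is present and not slack. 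To repair your argument you would either have to adopt this $(1+\sqrt n\|\bm u\|)$ normalization with the finite-annulus decomposition, or genuinely establish a chaining maximal inequality for the martingale process $\bm W_n(\cdot)$; as written, the key display of your third paragraph is unsupported. (A minor point: the bound $E(w_t^2\sup_{\Theta}\|\dot q_t\|^3)<\infty$ via $w_t\le 1$ for the weights in \eqref{choice1_of_selfweights} is fine and consistent with the paper's own use of such quantities in $\Upsilon(r_k)$.)
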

	
	\begin{lemma}\label{lem2}
		If $E|y_t|^s<\infty$ for some $0<s\leq 1$ and Assumptions \ref{assum-Process}--\ref{assum-RandomWeight} hold, then for any sequence of random variables $\bm u_n$ such that $\bm u_n=o_p(1)$, we have
		\begin{align*}
			n[L_n(\bm u_n+\bm\theta(\tau))-L_n(\bm\theta(\tau))]=&-\sqrt{n}\bm u_n^{\prime}\bm T_n+\sqrt{n}\bm u_n^{\prime}J_n\sqrt{n}\bm u_n +o_p(\sqrt{n}\|\bm u_n\|+n\|\bm u_n\|^2),
		\end{align*}
		where $L_n(\bm\theta)=n^{-1}\sum_{t=1}^{n}w_t\rho_{\tau}(y_t-q_t(\bm\theta))$, and	
		\[\bm T_n=\dfrac{1}{\sqrt{n}}\sum_{t=1}^{n}w_t\dot{q}_t(\bm\theta(\tau))\psi_{\tau}(\eta_{t,\tau}) \hspace{2mm}\text{and}\hspace{2mm} J_n=\dfrac{1}{2n}\sum_{t=1}^{n}f_{t-1}(F_{t-1}^{-1}(\tau))w_t\dot{q}_t(\bm\theta(\tau))\dot{q}_t^{\prime}(\bm\theta(\tau))\]
		with $\psi_{\tau}(x)=\tau-I(x<0)$ and $\eta_{t,\tau}=y_t-q_t(\bm\theta(\tau))$.		
	\end{lemma}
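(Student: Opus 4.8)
The plan is to derive the quadratic expansion by applying Knight's identity \eqref{identity} to each summand and then Taylor-expanding $\nu_t(\bm u):=q_t(\bm u+\bm\theta(\tau))-q_t(\bm\theta(\tau))$ to second order. Set $\eta_{t,\tau}=y_t-q_t(\bm\theta(\tau))$, so that $q_t(\bm\theta(\tau))=F_{t-1}^{-1}(\tau)$. Then $n[L_n(\bm u_n+\bm\theta(\tau))-L_n(\bm\theta(\tau))]=\sum_{t=1}^nw_t[\rho_\tau(\eta_{t,\tau}-\nu_t(\bm u_n))-\rho_\tau(\eta_{t,\tau})]$, and \eqref{identity} rewrites this as $-\sum_{t=1}^nw_t\nu_t(\bm u_n)\psi_\tau(\eta_{t,\tau})+\sum_{t=1}^nw_t\nu_t(\bm u_n)\xi_t(\bm u_n)$, where $\xi_t(\bm u)=\int_0^1[I(y_t\le F_{t-1}^{-1}(\tau)+\nu_t(\bm u)s)-I(y_t\le F_{t-1}^{-1}(\tau))]\,ds$. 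A mean-value expansion $\nu_t(\bm u)=\bm u^{\prime}\dot q_t(\bm\theta(\tau))+\tfrac12\bm u^{\prime}\ddot q_t(\bm u^*+\bm\theta(\tau))\bm u$ with $\bm u^*$ on the segment from $\bm 0$ to $\bm u$, combined with the split $\xi_t=\xi_{1t}+\xi_{2t}$ (linearized increment using $\bm u^{\prime}\dot q_t$ versus quadratic correction) and the centering $\xi_t=E[\xi_t\mid\mathcal F_{t-1}]+\{\xi_t-E[\xi_t\mid\mathcal F_{t-1}]\}$, reproduces exactly the decomposition \eqref{Gnrep-process}, namely $-\sqrt n\bm u_n^{\prime}\bm T_n-\sqrt n\bm u_n^{\prime}R_{1n}(\bm u_n^*)\sqrt n\bm u_n+\sum_{i=2}^5R_{in}(\bm u_n)$, with $R_{1n},\dots,R_{5n}$ as displayed there. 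It then remains to identify the leading quadratic term and bound the remainders.

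For the $\ddot q_t$–term $R_{1n}(\bm u)=(2n)^{-1}\sum_tw_t\ddot q_t(\bm u+\bm\theta(\tau))\psi_\tau(\eta_{t,\tau})$, the point is that $w_t$ and $\ddot q_t(\bm u+\bm\theta(\tau))$ are $\mathcal F_{t-1}$–measurable while $E[\psi_\tau(\eta_{t,\tau})\mid\mathcal F_{t-1}]=0$, so its mean vanishes; a uniform ergodic theorem (Theorem 3.1 of \cite{Ling_McAleer2003}) with the dominating function $w_t\sup_\Theta\|\ddot q_t(\bm\theta)\|$ from Lemma \ref{lem00}(ii) gives $\sup_{\|\bm u\|\le\eta}\|R_{1n}(\bm u)\|=o_p(1)$ for each fixed $\eta$. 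For $R_{2n}(\bm u)=\bm u^{\prime}\sum_tw_t\dot q_t(\bm\theta(\tau))E[\xi_{1t}(\bm u)\mid\mathcal F_{t-1}]$, a second-order Taylor expansion of $F_{t-1}$ at $F_{t-1}^{-1}(\tau)$, together with $E[\xi_{1t}(\bm u)\mid\mathcal F_{t-1}]=\int_0^1[F_{t-1}(F_{t-1}^{-1}(\tau)+\bm u^{\prime}\dot q_t s)-\tau]\,ds$, separates the leading piece $\sqrt n\bm u^{\prime}J_n\sqrt n\bm u$ (using $\int_0^1s\,ds=\tfrac12$ and the definition of $J_n$); the remainder carries a factor $\dot f_{t-1}$, uniformly bounded by Assumption \ref{assum-ConditionalDensity}, and an extra $\bm u^{\prime}\dot q_t$, so its sup over $\|\bm u\|\le\eta$ has expectation $O(\eta)$ by Lemma \ref{lem00}(i). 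The term $R_{3n}$, whose integrand $\xi_{2t}$ only measures the gap between $\nu_t(\bm u)s$ and $\bm u^{\prime}\dot q_ts$ of size $O(\|\bm u\|^2\|\ddot q_t\|)$, is bounded via a one-term mean-value estimate on $F_{t-1}$ with $\sup_xf_{t-1}(x)<\infty$ and Cauchy--Schwarz with Lemma \ref{lem00} to be of order $n\|\bm u\|^3$; and $R_{5n}$, using the pointwise bound $|\xi_t(\bm u)|\le I(|\eta_{t,\tau}|\le|\nu_t(\bm u)|)$ so that $E[|\xi_t(\bm u)|\mid\mathcal F_{t-1}]\le2\sup_xf_{t-1}(x)\,|\nu_t(\bm u)|$, is likewise $O_p(n\|\bm u\|^3)$ by iterated expectations and Lemma \ref{lem00}. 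The martingale-difference term $R_{4n}(\bm u)=\bm u^{\prime}\sum_tw_t\dot q_t(\bm\theta(\tau))\{\xi_t(\bm u)-E[\xi_t(\bm u)\mid\mathcal F_{t-1}]\}$ is exactly $\zeta_n(\bm u)$ of Lemma \ref{lem1}, whence $R_{4n}(\bm u_n)=o_p(\sqrt n\|\bm u_n\|+n\|\bm u_n\|^2)$.

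Finally, each of the above supremum bounds over $\{\|\bm u\|\le\eta\}$ is $O_p(\eta)$ in the relevant normalization with $\eta>0$ arbitrary, so a routine $\epsilon$–$\delta$ localization — for given $\epsilon$ pick $\eta$ making the supremum exceed $\epsilon$ only with probability below $\epsilon$, then use $\mathrm{Pr}(\|\bm u_n\|>\eta)\to0$, which also handles the random intermediate point $\bm u_n^*$ since $\|\bm u_n^*\|\le\|\bm u_n\|$ — upgrades them to terms that are $o_p(\sqrt n\|\bm u_n\|+n\|\bm u_n\|^2)$ at $\bm u_n$. Collecting $R_{1n}$ through $R_{5n}$ yields $n[L_n(\bm u_n+\bm\theta(\tau))-L_n(\bm\theta(\tau))]=-\sqrt n\bm u_n^{\prime}\bm T_n+\sqrt n\bm u_n^{\prime}J_n\sqrt n\bm u_n+o_p(\sqrt n\|\bm u_n\|+n\|\bm u_n\|^2)$. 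The only genuinely hard input is Lemma \ref{lem1} — the control of $R_{4n}$ via Pollard's bracketing argument — which I take as given; the remaining effort is bookkeeping of the Taylor remainders at random evaluation points, all tamed by the moment bounds of Lemma \ref{lem00} and the density regularity in Assumption \ref{assum-ConditionalDensity}.
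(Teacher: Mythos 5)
Your proposal is correct and follows essentially the same route as the paper's proof: Knight's identity, a second-order expansion of $\nu_t(\bm u)$, the split $\xi_t=\xi_{1t}+\xi_{2t}$ with conditional centering to obtain the decomposition into $-\sqrt{n}\bm u'\bm T_n-\sqrt{n}\bm u'R_{1n}\sqrt{n}\bm u+\sum_{i=2}^5 R_{in}$, the uniform ergodic theorem of Ling--McAleer for $R_{1n}$, Taylor expansion of $F_{t-1}$ with the $\dot f_{t-1}$ bound for $R_{2n}$, density and Cauchy--Schwarz bounds via Lemma \ref{lem00} for $R_{3n}$ and $R_{5n}$, Lemma \ref{lem1} for $R_{4n}$, and the $\epsilon$--$\delta$ localization to pass from suprema over $\{\|\bm u\|\le\eta\}$ to the random sequence $\bm u_n$. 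No substantive differences from the paper's argument.
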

	
	\begin{lemma}\label{lem3}
		If $E|y_t|^s<\infty$ for some $0<s\leq 1$ and Assumptions \ref{assum-Process}--\ref{assum-RandomWeight} hold, then for any sequence of random variables $\bm u_n$ such that $\bm u_n=o_p(1)$, we have	
		\begin{align*}
			n[\widetilde{L}_n(\bm u_n+\bm\theta(\tau))-\widetilde{L}_n(\bm\theta(\tau))]-n[L_n(\bm u_n+\bm\theta(\tau))-L_n(\bm\theta(\tau))]=o_p(\sqrt{n}\|\bm u_n\|+n\|\bm u_n\|^2),
		\end{align*}
		where $\widetilde{L}_n(\bm\theta)=n^{-1}\sum_{t=1}^{n}w_t\rho_{\tau}(y_t-\widetilde{q}_t(\bm\theta))$ and $L_n(\bm\theta)=n^{-1}\sum_{t=1}^{n}w_t\rho_{\tau}(y_t-q_t(\bm\theta))$.
	\end{lemma}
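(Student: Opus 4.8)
The plan is to bound the difference directly, using Knight's identity \eqref{identity}. Write $\bm\theta_0=\bm\theta(\tau)$, $\bm\theta=\bm u_n+\bm\theta_0$, $\eta_{t,\tau}=y_t-q_t(\bm\theta_0)$, and set $\Delta_t(\bm\vartheta)=q_t(\bm\vartheta)-\widetilde q_t(\bm\vartheta)$ for $\bm\vartheta\in\Theta$; Lemma \ref{lem0} then supplies $\sup_\Theta|\Delta_t(\bm\vartheta)|\le C\rho^t\varsigma_\rho$ and $\sup_\Theta\|\dot q_t(\bm\vartheta)-\dot{\widetilde q}_t(\bm\vartheta)\|\le C\rho^t(\varsigma_\rho+t\varsigma_\rho+\xi_\rho)$, where $\varsigma_\rho,\xi_\rho$ are $\mathcal F_0$-measurable, finite almost surely, hence $O_p(1)$. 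With $g_t(\bm\vartheta)=\rho_\tau(y_t-\widetilde q_t(\bm\vartheta))-\rho_\tau(y_t-q_t(\bm\vartheta))$ the target quantity is $\sum_{t=1}^n w_t\{g_t(\bm\theta)-g_t(\bm\theta_0)\}$.

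First I would apply \eqref{identity} to the two increments $\rho_\tau(y_t-\widetilde q_t(\bm\theta))-\rho_\tau(y_t-\widetilde q_t(\bm\theta_0))$ and $\rho_\tau(y_t-q_t(\bm\theta))-\rho_\tau(y_t-q_t(\bm\theta_0))$, expanding them around the centres $\widetilde q_t(\bm\theta_0)$ and $q_t(\bm\theta_0)$ respectively. Writing $\nu_t=q_t(\bm\theta)-q_t(\bm\theta_0)$ and $\mu_t=\Delta_t(\bm\theta)-\Delta_t(\bm\theta_0)$, and noting $\widetilde q_t(\bm\theta)-\widetilde q_t(\bm\theta_0)=\nu_t-\mu_t$ while $y_t-\widetilde q_t(\bm\theta_0)=\eta_{t,\tau}+\Delta_t(\bm\theta_0)$, subtraction leaves exactly four terms,
\[
g_t(\bm\theta)-g_t(\bm\theta_0)=-\nu_t\bigl[\psi_\tau(\eta_{t,\tau}+\Delta_t(\bm\theta_0))-\psi_\tau(\eta_{t,\tau})\bigr]+\mu_t\psi_\tau(\eta_{t,\tau}+\Delta_t(\bm\theta_0))+\nu_t(\widetilde B_t-B_t)-\mu_t\widetilde B_t,
\]
where $B_t$ is exactly the $\xi_t(\bm u_n)$ of Lemma \ref{lem2} and $\widetilde B_t$ its analogue with $q_t$ replaced by $\widetilde q_t$. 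The purpose of this particular centring is that each of the four terms carries a factor of order $\|\bm u_n\|$ (namely $\nu_t$ or $\mu_t$, by the mean value theorem) together with a factor of order $\rho^t$ — supplied by $\mu_t$, by $|\Delta_t(\bm\theta_0)|$, or by the conditional probability of an indicator.

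Next I would show that each term, summed over $t\le n$, is $O_p(\|\bm u_n\|)=o_p(\sqrt n\|\bm u_n\|)$; the $n\|\bm u_n\|^2$ part of the allowed remainder is not even needed. For the two $\mu_t$-terms, bound $|\psi_\tau|\le1$ and $|\widetilde B_t|\le1$, extract $\varsigma_\rho,\xi_\rho$ as fixed multiplicative $O_p(1)$ factors, and use $\sum_{t\ge1}\rho^t(1+t)E(w_t)<\infty$. The term $-\nu_t[\psi_\tau(\eta_{t,\tau}+\Delta_t(\bm\theta_0))-\psi_\tau(\eta_{t,\tau})]$ is nonzero only on $\{|\eta_{t,\tau}|\le|\Delta_t(\bm\theta_0)|\}$; conditioning on $\mathcal F_{t-1}$ and using $\sup_x f_{t-1}(x)<\infty$ (Assumption \ref{assum-ConditionalDensity}) replaces that indicator by its conditional probability $O(\rho^t\varsigma_\rho)$, so after truncating $\varsigma_\rho$ at a large constant and invoking Lemma \ref{lem00} via $\sum_{t\ge1}\rho^t E(w_t\sup_\Theta\|\dot q_t\|)<\infty$ one gets $O_p(\|\bm u_n\|)$. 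The term $\nu_t(\widetilde B_t-B_t)$ needs more care: on the intersection of the supports of $B_t$ and $\widetilde B_t$ one has $|\widetilde B_t-B_t|\le(|\mu_t|+|\Delta_t(\bm\theta_0)|)/|\nu_t|\wedge2$, so $|\nu_t(\widetilde B_t-B_t)|\le|\mu_t|+|\Delta_t(\bm\theta_0)|=O(\rho^t)$ there, whereas on the symmetric difference of the supports $|\widetilde B_t-B_t|\le1$ but that set has conditional probability $O(\rho^t)$; combined with $|\nu_t|=O(\|\bm u_n\|\sup_\Theta\|\dot q_t\|)$ both pieces are again $O_p(\|\bm u_n\|)$. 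Finally, passing from a fixed $\bm u$ to the random sequence $\bm u_n=o_p(1)$ is the standard $\varepsilon$--$\delta$ truncation already used in Lemmas \ref{lem1}--\ref{lem2}, since all the bounds above are uniform over $\|\bm u\|\le\delta$.

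The hard part is the bookkeeping of arranging the expansion so that no surviving term degenerates to $O_p(1)$ or $O_p(n\|\bm u_n\|^2)$: the crude estimates $|\Delta_t(\bm\theta_0)|\le C\rho^t\varsigma_\rho$ and $|\widetilde B_t-B_t|\le2$ are each individually too weak, and one must exploit the exponential smallness of the initialisation error and the $\|\bm u_n\|$-smallness \emph{simultaneously}, in particular spotting the partial cancellation in $\widetilde B_t-B_t$ on the overlapping supports. A secondary, purely technical, nuisance is the interplay between the pre-sample quantities $\varsigma_\rho,\xi_\rho$ (functions of $y_0,y_{-1},\dots$) and the current-time factors $w_t,\dot q_t$; rather than estimating cross-moments such as $E(w_t\varsigma_\rho\sup_\Theta\|\dot q_t\|)$ directly, one keeps $\varsigma_\rho,\xi_\rho$ as fixed multiplicative $O_p(1)$ random variables, or truncates them at a large constant, before applying the ergodic and martingale bounds.
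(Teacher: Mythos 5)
Your decomposition is exactly the one used in the paper's proof: applying Knight's identity to the two increments centred at $q_t(\bm\theta(\tau))$ and $\widetilde q_t(\bm\theta(\tau))$, your four terms $\mu_t\psi_\tau(\eta_{t,\tau}+\Delta_t(\bm\theta_0))$, $-\nu_t[\psi_\tau(\eta_{t,\tau}+\Delta_t(\bm\theta_0))-\psi_\tau(\eta_{t,\tau})]$, $-\mu_t\widetilde B_t$ and $\nu_t(\widetilde B_t-B_t)$ coincide (up to ordering) with the paper's $\widetilde A_{1n},\widetilde A_{2n},\widetilde A_{3n},\widetilde A_{4n}$, and your treatment of the first three matches the paper's: geometric initialisation bounds from Lemma \ref{lem0}, $|\psi_\tau|\le1$, $|\widetilde B_t|$ bounded, and the bounded conditional density of Assumption \ref{assum-ConditionalDensity} turning the indicator-difference into a conditional probability of order $\rho^t\varsigma_\rho$. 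Your truncation of $\varsigma_\rho,\xi_\rho$ at a fixed level is a legitimate (and slightly cleaner) substitute for the paper's Cauchy--Schwarz step involving $E(w_t\varsigma_\rho^2)$.

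The one genuine gap is in the term $\sum_t w_t\nu_t(\widetilde B_t-B_t)$. On the intersection of the supports you absorb $\nu_t$ into the ratio bound and obtain $|\nu_t(\widetilde B_t-B_t)|\le|\mu_t|+|\Delta_t(\bm\theta_0)|$; but the $|\Delta_t(\bm\theta_0)|$ piece is of order $\rho^t\varsigma_\rho$ with no factor of $\|\bm u_n\|$, so summed against $w_t$ it is only $O_p(1)$, which is not $o_p(\sqrt n\|\bm u_n\|+n\|\bm u_n\|^2)$ when $\|\bm u_n\|$ shrinks quickly (the lemma must hold for every $o_p(1)$ sequence). The closing remark ``combined with $|\nu_t|=O(\|\bm u_n\|\sup_\Theta\|\dot q_t\|)$'' cannot rescue this as written, because $\nu_t$ has already been consumed by the ratio bound---you would be using it twice. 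Two repairs are available: (i) observe that the intersection event forces $\eta_{t,\tau}$ into an interval of length $|\nu_t|$, so by Assumption \ref{assum-ConditionalDensity} its conditional probability given $\mathcal F_{t-1}$ is at most $C\|\bm u_n\|\sup_\Theta\|\dot q_t(\bm\theta)\|$, and since $\mu_t,\Delta_t(\bm\theta_0)$ are $\mathcal F_{t-1}$-measurable this restores the missing $\|\bm u_n\|$ factor; or (ii) do what the paper does: keep $\nu_t$ intact as the carrier of $\|\bm u_n\|$ and bound $E(|\widetilde B_t-B_t|\mid\mathcal F_{t-1})$ directly by splitting it into the indicator difference at the two centres and the integrated indicator difference at the shifted points, each supported on intervals displaced by at most $|\Delta_t(\bm\theta_0)|+|\mu_t|$, giving $E(|\widetilde B_t-B_t|\mid\mathcal F_{t-1})\le C\rho^t[\varsigma_\rho+\|\bm u_n\|(1+t)(\varsigma_\rho+\xi_\rho)]$ and hence a contribution of order $O_p(\|\bm u_n\|)+O_p(\|\bm u_n\|^2)$, well within the allowed remainder. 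With either repair your argument is complete and is otherwise the paper's proof.
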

	
	\begin{lemma}\label{lem-Tightness1}
		Let $\Delta_{\rho,t}=1+\sum_{j=1}^\infty\rho^{j-1}|y_{t-j}|+\sum_{j=2}^\infty(j-1)\rho^{j-2}|y_{t-j}|+\sum_{j=3}^\infty(j-1)(j-2)\rho^{j-3}|y_{t-j}|+\sum_{j=4}^\infty(j-1)(j-2)(j-3)\rho^{j-4}|y_{t-j}|$ be positive random variables depending on a constant $\rho\in (0,1)$. If Assumptions \ref{assum-Space} and \ref{assum-Tightness} hold, for $\tau_1,\tau_2\in\mathcal{T}$ and $\bm u\in\Lambda$ with $\Lambda=\{\bm u\in\mathbb{R}^3: \bm u+\bm\theta(\tau) \in \Theta\}$, then we have  
		\begin{itemize}
			\item[(i)] $|q_t(\bm\theta(\tau_2))-q_t(\bm\theta(\tau_1))|
			\leq C|\tau_2-\tau_1|\Delta_{\rho,t}$;
			\item[(ii)]
			$\|\dot{q}_t(\bm u+\bm\theta(\tau_2))-\dot{q}_t(\bm u+\bm\theta(\tau_1))\| \leq C|\tau_2-\tau_1|\Delta_{\rho,t}$;  
			\item[(iii)] $\|\ddot{q}_t(\bm u+\bm\theta(\tau_2))-\ddot{q}_t(\bm u+\bm\theta(\tau_1))\| \leq C|\tau_2-\tau_1|\Delta_{\rho,t}$.
		\end{itemize}
	\end{lemma}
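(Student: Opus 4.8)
The plan is to reduce all three bounds to two elementary facts together with one algebraic identity. First, by Assumption~\ref{assum-Tightness} the coordinate functions $\omega(\cdot)$, $\alpha_1(\cdot)$ and $\beta_1(\cdot)$ are Lipschitz, so $|\omega(\tau_2)-\omega(\tau_1)|+|\alpha_1(\tau_2)-\alpha_1(\tau_1)|+|\beta_1(\tau_2)-\beta_1(\tau_1)|\le C|\tau_2-\tau_1|$. Second, by the compactness of $\Theta$ in Assumption~\ref{assum-Space} there exist constants $\overline c<\infty$ and $\rho\in(0,1)$ with $|\alpha_1|\le\overline c$ and $0<\beta_1\le\rho$ for every $\bm\theta\in\Theta$; since $\bm u+\bm\theta(\tau_1),\bm u+\bm\theta(\tau_2)\in\Theta$ for $\bm u\in\Lambda$, the same bounds hold at the shifted arguments $\alpha_1(\tau)+u_2$ and $\beta_1(\tau)+u_3$. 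The algebraic identity I would use repeatedly is $a^{m}-b^{m}=(a-b)\sum_{k=0}^{m-1}a^{k}b^{m-1-k}$, which for $0<a,b\le\rho$ gives $|a^{m}-b^{m}|\le m\rho^{m-1}|a-b|$.

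For part~(i), I would write
\[
q_t(\bm\theta(\tau_2))-q_t(\bm\theta(\tau_1))=[\omega(\tau_2)-\omega(\tau_1)]+\sum_{j=1}^{\infty}\bigl[\alpha_1(\tau_2)\beta_1^{j-1}(\tau_2)-\alpha_1(\tau_1)\beta_1^{j-1}(\tau_1)\bigr]|y_{t-j}|,
\]
and inside the bracket add and subtract $\alpha_1(\tau_1)\beta_1^{j-1}(\tau_2)$. The first resulting piece is at most $|\alpha_1(\tau_2)-\alpha_1(\tau_1)|\rho^{j-1}\le C|\tau_2-\tau_1|\rho^{j-1}$, and the second at most $\overline c\,|\beta_1^{j-1}(\tau_2)-\beta_1^{j-1}(\tau_1)|\le C|\tau_2-\tau_1|(j-1)\rho^{j-2}$ by the two facts above and the identity. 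Summing against $|y_{t-j}|$ produces exactly the first two sums in $\Delta_{\rho,t}$, giving (i).

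Parts~(ii) and~(iii) are the same computation pushed one and two derivatives further, using the explicit forms of $\dot q_t$ and $\ddot q_t$ in \eqref{1st-derivative-qt} and \eqref{2nd-derivative-qt}. For $\dot q_t$ the constant first coordinate drops out; the second coordinate differs by $\sum_{j\ge2}\bigl[(\beta_1(\tau_2)+u_3)^{j-1}-(\beta_1(\tau_1)+u_3)^{j-1}\bigr]|y_{t-j}|$, controlled by $C|\tau_2-\tau_1|\sum_{j\ge2}(j-1)\rho^{j-2}|y_{t-j}|$; the third coordinate, after splitting the product $\alpha_1\sum_{j\ge2}(j-1)\beta_1^{j-2}|y_{t-j}|$ into an $\alpha_1$-difference term and a $\beta_1$-difference term exactly as in~(i), contributes the further sum $C|\tau_2-\tau_1|\sum_{j\ge3}(j-1)(j-2)\rho^{j-3}|y_{t-j}|$. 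For $\ddot q_t$ the only nonzero entries involve $\sum_{j\ge2}(j-1)\beta_1^{j-2}|y_{t-j}|$ and $\alpha_1\sum_{j\ge3}(j-1)(j-2)\beta_1^{j-3}|y_{t-j}|$; applying the identity once more raises each polynomial prefactor by one degree and lowers each geometric factor by one power, producing $\sum_{j\ge3}(j-1)(j-2)\rho^{j-3}|y_{t-j}|$ and $\sum_{j\ge4}(j-1)(j-2)(j-3)\rho^{j-4}|y_{t-j}|$. These are precisely the third and fourth sums of $\Delta_{\rho,t}$, and each series converges because the geometric decay dominates the polynomial growth.

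There is no deep difficulty here; the lemma is bookkeeping. The one point requiring care is matching the order of differentiation to the correct term of $\Delta_{\rho,t}$: every use of the $a^{m}-b^{m}$ identity costs one power of $\rho$ and gains a prefactor $\le m$, so I would check at each stage that the worst term generated is still one of the four sums defining $\Delta_{\rho,t}$ and not something with heavier polynomial growth. Once $\overline c$ and $\rho$ are chosen (via compactness of $\Theta$) to cover the shifted arguments $\bm u+\bm\theta(\tau)$, all estimates are automatically uniform in $\bm u\in\Lambda$ and $\tau_1,\tau_2\in\mathcal T$, which is what the lemma asserts.
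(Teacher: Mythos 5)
Your proposal is correct and follows essentially the same route as the paper's proof: the same add-and-subtract decomposition of the cross term $\alpha_1\beta_1^{j-1}$, the same Lipschitz bounds from Assumption \ref{assum-Tightness}, and the same uniform bounds $|\alpha_1|\le\overline c$, $\beta_1\le\rho$ from compactness of $\Theta$ applied at the shifted arguments $\bm u+\bm\theta(\tau)$. The only cosmetic difference is that you bound $|a^m-b^m|\le m\rho^{m-1}|a-b|$ via the factorization identity where the paper uses a mean-value/Taylor expansion with an intermediate point $\beta_1^*$; these are interchangeable and yield identical estimates.
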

	
	\begin{lemma}\label{lem-Tightness2}
		Under Assumptions \ref{assum-Process}--\ref{assum-Tightness}, for any $\eta>0$, we have 
		\[ \sup_{\tau\in\mathcal{T}}\sup_{\|\bm u\|\leq \eta}\dfrac{|R_{4n}(\bm u,\tau)|}{\sqrt{n}\|\bm u\|+n\|\bm u\|^2}=o_p(1) \quad\text{and}\quad  \sup_{\tau\in\mathcal{T}}\sup_{\|\bm u\|\leq \eta}\dfrac{|R_{5n}(\bm u,\tau)|}{n\|\bm u\|^2}=o_p(1),\]
		where $R_{4n}(\bm u, \tau)$ and $R_{5n}(\bm u, \tau)$ are defined in the proof of Theorem \ref{thm-WCQE-weak-convergence}.
	\end{lemma}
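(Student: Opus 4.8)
The plan is to establish both bounds as a \emph{stochastic equicontinuity} statement, upgrading the pointwise control of the centered term already obtained in Lemma~\ref{lem1} so that it holds uniformly in $\tau\in\mathcal{T}$. Throughout I would write $\nu_t(\bm u,\tau)=q_t(\bm u+\bm\theta(\tau))-q_t(\bm\theta(\tau))$ and exploit three structural facts. First, $\xi_t(\bm u,\tau)$ vanishes unless $y_t$ lies in an interval of width at most $|\nu_t(\bm u,\tau)|\le\|\bm u\|\sup_{\Theta}\|\dot q_t(\bm\theta)\|$ about $F_{t-1}^{-1}(\tau)$, so that by Assumption~\ref{assum-ConditionalDensity} (boundedness of $f_{t-1}$) one has the conditional bounds $E[\xi_t^2(\bm u,\tau)\mid\mathcal{F}_{t-1}]\le C|\nu_t(\bm u,\tau)|$ and $|E[\xi_t(\bm u,\tau)\mid\mathcal{F}_{t-1}]|\le C|\nu_t(\bm u,\tau)|$. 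Second, $\dot q_t$, $\ddot q_t$ and $w_t$ are $\mathcal{F}_{t-1}$-measurable with the finite moments supplied by Lemma~\ref{lem00} and Assumption~\ref{assum-RandomWeight}. Third, $q_t(\bm\theta(\cdot))$, $\dot q_t(\bm\theta(\cdot))$, $\ddot q_t(\bm\theta(\cdot))$ are Lipschitz in $\tau$ with modulus $C|\tau_2-\tau_1|\Delta_{\rho,t}$ by Lemma~\ref{lem-Tightness1} and Assumption~\ref{assum-Tightness}.

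For $R_{4n}$, the summands $w_t\dot q_t(\bm\theta(\tau))\{\xi_t(\bm u,\tau)-E[\xi_t(\bm u,\tau)\mid\mathcal{F}_{t-1}]\}$ are martingale differences, so the first fact gives a conditional-variance bound of order $\|\bm u\|\sum_t w_t^2\|\dot q_t\|^3=O_p(n\|\bm u\|)$, whence pointwise $|R_{4n}(\bm u,\tau)|=O_p(\sqrt n\,\|\bm u\|^{3/2})$. Dividing by $\sqrt n\|\bm u\|+n\|\bm u\|^2$ and maximizing over $\|\bm u\|$ shows the ratio is $O_p(n^{-1/4})$, the maximum occurring near $\|\bm u\|=n^{-1/2}$; here the $n\|\bm u\|^2$ term in the denominator is exactly what tames the large-$\|\bm u\|$ range. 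To make this uniform over $(\bm u,\tau)$ I would use the peeling device, partitioning the ball into shells $\{2^{-j-1}\eta<\|\bm u\|\le 2^{-j}\eta\}$, and on each shell apply a Doob-type $L^2$ maximal inequality to a finite bracketing net of the index set $\{(\bm u,\tau)\}\subset\mathbb{R}^3\times\mathcal{T}$. Following \cite{Pollard1985}, the brackets are built on the VC-type class of indicators $\{I(y_t\le F_{t-1}^{-1}(\tau)+\nu_t(\bm u,\tau)s)\}$; the oscillation of the process between net points is then controlled by the first and third facts, increments in $\bm u$ and in $\tau$ contributing terms of order $\|\bm u-\bm u'\|$ and $|\tau-\tau'|$ after applying the Lipschitz bounds of Lemma~\ref{lem-Tightness1}. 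Summing the shell contributions, the geometric decay in $j$ together with the denominator structure yields $\sup_{\tau}\sup_{\|\bm u\|\le\eta}|R_{4n}|/(\sqrt n\|\bm u\|+n\|\bm u\|^2)=o_p(1)$.

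For $R_{5n}$, I would first use the quadratic-form bound $|R_{5n}(\bm u,\tau)|/(n\|\bm u\|^2)\le(2n)^{-1}\sum_t w_t\sup_{\Theta}\|\ddot q_t\|\,|\xi_t(\bm u,\tau)|$ and then split $\xi_t=\{\xi_t-E[\xi_t\mid\mathcal{F}_{t-1}]\}+E[\xi_t\mid\mathcal{F}_{t-1}]$. The centered piece is a martingale-difference average, handled by the same shell-plus-bracketing argument as for $R_{4n}$ and hence $o_p(1)$ uniformly in $(\bm u,\tau)$. For the conditional-mean piece, the first fact gives $|E[\xi_t\mid\mathcal{F}_{t-1}]|\le C\|\bm u\|\sup_{\Theta}\|\dot q_t\|$, so by the ergodic theorem (Assumption~\ref{assum-Process}) and the integrability in Lemma~\ref{lem00} this piece is bounded by $C\|\bm u\|\cdot n^{-1}\sum_t w_t\sup_{\Theta}\|\ddot q_t\|\sup_{\Theta}\|\dot q_t\|=O_p(\|\bm u\|)$, which vanishes as $\|\bm u\|\to0$ and, combined with the consistency of the estimator, delivers the $o_p(n\|\bm u_n\|^2)$ rate needed for $R_{5n}(\bm u_n,\cdot)$ in the proof of Theorem~\ref{thm-WCQE-weak-convergence}. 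The main obstacle is the joint uniformity in $(\bm u,\tau)$ in the presence of the non-differentiable indicators: this forces bracketing entropy à la \cite{Pollard1985} rather than smooth chaining, and the delicate step is verifying that the bracketing numbers of the $(\bm u,\tau)$-indexed indicator class remain polynomial and that the martingale increments between adjacent brackets are controlled, uniformly across all shells, through the Lipschitz moduli of Lemma~\ref{lem-Tightness1}.
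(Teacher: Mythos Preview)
Your plan takes a genuinely different route from the paper. You propose joint bracketing over $(\bm u,\tau)$ with a peeling device over $\|\bm u\|$-shells and VC-type control of the indicator class. The paper instead \emph{separates} the two variables: it first notes that for each fixed $\tau$ the inner supremum over $\bm u$ is already $o_p(1)$ by \eqref{diffcond} in Lemma~\ref{lem1} and the argument for \eqref{R5n} in Lemma~\ref{lem2}, and then invokes Corollary~2.2 of \cite{Newey1991} so that only stochastic equicontinuity \emph{in $\tau$ alone} of the $\tau$-indexed process $\sup_{\|\bm u\|\le\eta}|R_{in}(\bm u,\tau)|/(\cdots)$ remains. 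For that, the paper writes $\xi_t(\bm u,\tau_2)-\xi_t(\bm u,\tau_1)=c_t+d_t(\bm u)$ explicitly, with $c_t=I(y_t<q_t(\bm\theta(\tau_1)))-I(y_t<q_t(\bm\theta(\tau_2)))$ and $d_t$ the analogous difference at the shifted thresholds, yielding three pieces $R_{4i}$ (and similarly $R_{5i}$), $i=1,2,3$. Conditional-variance bounds via Lemma~\ref{lem-Tightness1} then give H\"older-type increments $O_p(|\tau_2-\tau_1|^{1/2})$ for the ratio, which is equicontinuity.

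The paper's separation is simpler and sidesteps a difficulty in your plan: the indicators $I(y_t\le F_{t-1}^{-1}(\tau)+\nu_t(\bm u,\tau)s)$ do not form a fixed VC class of functions of $y$, because the thresholds depend on $t$ through the random $\mathcal{F}_{t-1}$-measurable quantities $F_{t-1}^{-1}(\tau)$ and $\nu_t(\bm u,\tau)$. Pollard's bracketing in Lemma~\ref{lem1} handles this by constructing $\mathcal{F}_{t-1}$-measurable brackets in $\bm u$ for \emph{fixed} $\tau$; extending this to joint $(\bm u,\tau)$-brackets is possible but needs more care than your sketch indicates (you would have to bracket in $\tau$ as well using $\mathcal{F}_{t-1}$-measurable envelopes built from $\Delta_{\rho,t}$ via Lemma~\ref{lem-Tightness1}, not a VC argument). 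Your treatment of $R_{5n}$---splitting $\xi_t$ into centered and conditional-mean parts, the latter bounded by $C\|\bm u\|$ times an ergodic average---is sound and close in spirit to the paper's pointwise argument for \eqref{R5n}; the paper complements it with the same $c_t,d_t$ decomposition for the $\tau$-equicontinuity of the $R_{5n}$ ratio.
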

	\begin{proof}[Proof of Lemma \ref{lem00}]
		Recall $\dot{q}_t(\bm\theta)$ in \eqref{1st-derivative-qt} and $\ddot{q}_t(\bm\theta)$ in \eqref{2nd-derivative-qt}, where $\bm\theta=(\omega, \alpha_{1},\beta_{1})^{\prime}$. For any $s\in(0,1]$, using the inequality $(x+y)^s\leq x^s+y^s$ for $x, y\geq 0$, we have
		\begin{align*}
			\|\dot{q}_t(\bm\theta)\| \leq & 1+\sum_{j=1}^{\infty}\beta^{j-1}_{1}|y_{t-j}|+|\alpha_{1}|\sum_{j=2}^{\infty}(j-1)\beta^{j-2}_{1}|y_{t-j}| \quad\text{and} \\
			\|\ddot{q}_t(\bm\theta)\| \leq & 2\sum\limits_{j=2}^{\infty}(j-1)\beta^{j-2}_{1}|y_{t-j}| + |\alpha_{1}|\sum\limits_{j=3}^{\infty}(j-1)(j-2)\beta^{j-3}_{1}|y_{t-j}|.
		\end{align*}
		For $\kappa=1,2,3$, denote $M_{\kappa}=\max_{j}\{E(w_{t}|y_{t-j}|^{\kappa})\}$, then it follows that $M_{\kappa}<\infty$ since $E(w_t)<\infty$ and $E(w_{t}|y_{t-j}|^3)<\infty$ for all $j\geq 1$ under Assumption \ref{assum-RandomWeight}.
		
		For $\kappa=1$, by the strict stationarity and ergodicity of $y_t$ under Assumption \ref{assum-Process}, $\max\{|\omega|,|\alpha_1|\}<\overline{c}<\infty$ and $\beta_1\leq\rho<1$ by Assumption \ref{assum-Space}, and $E(w_t)<\infty$ and $E(w_{t}|y_{t-j}|^3)<\infty$ for all $j\geq 1$ by Assumption \ref{assum-RandomWeight}, it holds that	
		\begin{align*}
			E\left(w_t\sup_{\Theta}\|\dot{q}_t(\bm\theta)\|\right)
			\leq & E(w_t) + M_1\sum_{j=1}^{\infty} \rho^{j-1}+\overline{c}M_1\sum_{j=2}^{\infty}(j-1)\rho^{j-2} <\infty \;\;\text{and} \\
			E\left(w_t\sup_{\Theta}\|\ddot{q}_t(\bm\theta)\|\right) 
			\leq & 2M_1\sum_{j=2}^{\infty}(j-1)\rho^{j-2} + \overline{c}M_1\sum_{j=3}^{\infty}(j-1)(j-2)\rho^{j-3}<\infty.
		\end{align*}
		Thus, (i) and (ii) hold for $\kappa=1$.
		
		For $\kappa=2$, under Assumptions \ref{assum-Process} and \ref{assum-RandomWeight}, by the Cauchy-Schwarz inequality, we have
		\[E\left[w_t\left(\sum_{j=1}^{\infty} \rho^{j-1}|y_{t-j}|\right)^2\right]\leq \sum_{i=1}^{\infty}\sum_{j=1}^{\infty}\rho^{i+j-2}\left[E\left(w_t|y_{t-i}|^2\right)\right]^{1/2}\left[E\left(w_t|y_{t-j}|^2\right)\right]^{1/2}\leq \dfrac{M_2}{(1-\rho)^2}<\infty,\]
		\[E\left[w_t\left(\sum_{j=1}^{\infty} (j-1)\rho^{j-2}|y_{t-j}|\right)^2\right] \leq M_2\left(\sum_{j=1}^{\infty}(j-1)\rho^{j-2}\right)^2<\infty \;\;\text{and}\]
		\[E\left[w_t\left(\sum_{j=1}^{\infty} (j-1)(j-2)\rho^{j-3}|y_{t-j}|\right)^2\right] \leq M_2\left(\sum_{j=1}^{\infty}(j-1)(j-2)\rho^{j-3}\right)^2<\infty.\]
		Then under Assumptions \ref{assum-Process}, \ref{assum-Space} and \ref{assum-RandomWeight}, by $(a+b)^2=a^2+2ab+b^2$ and the Cauchy-Schwarz inequality, we can show that
		\begin{align*}
			& E\left(w_t\sup_{\Theta}\|\dot{q}_t(\bm\theta)\|^2\right) \\
			\leq & E(w_t)+E\left[w_t\left(\sum_{j=1}^{\infty} \rho^{j-1}|y_{t-j}|\right)^2\right] + \overline{c}^2E\left[w_t\left(\sum_{j=1}^{\infty} (j-1)\rho^{j-2}|y_{t-j}|\right)^2\right]  \\
			& + 2\overline{c}\sum_{j=1}^{\infty} \rho^{j-1}E(w_t|y_{t-j}|) + 2\overline{c}\sum_{j=1}^{\infty}(j-1)\rho^{j-2}E(w_t|y_{t-j}|) \\
			& + 2\overline{c}\left[Ew_t\left(\sum_{j=1}^{\infty} \rho^{j-1}|y_{t-j}|\right)^2\right]^{\frac{1}{2}}\left[Ew_t\left(\sum_{j=1}^{\infty} (j-1)\rho^{j-2}|y_{t-j}|\right)^2\right]^{\frac{1}{2}} <\infty,
		\end{align*}
		and
		\begin{align*}
			& E\left(w_t\sup_{\Theta}\|\ddot{q}_t(\bm\theta)\|^2\right) \\
			\leq & 4E\left[w_t\left(\sum_{j=1}^{\infty} (j-1)\rho^{j-2}|y_{t-j}|\right)^2\right] 
			+ \overline{c}^2E\left[w_t\left(\sum_{j=1}^{\infty} (j-1)(j-2)\rho^{j-3}|y_{t-j}|\right)^2\right] \\
			&+ 4\overline{c}\left[Ew_t\left(\sum_{j=1}^{\infty} (j-1)\rho^{j-2}|y_{t-j}|\right)^2\right]^{\frac{1}{2}}\left[Ew_t\left(\sum_{j=1}^{\infty} (j-1)(j-2)\rho^{j-3}|y_{t-j}|\right)^2\right]^{\frac{1}{2}} <\infty.
		\end{align*}
		Hence, (i) and (ii) hold for $\kappa=2$.
		
		For $\kappa=3$, under Assumptions \ref{assum-Process} and \ref{assum-RandomWeight}, by H\"{o}lder's inequality, we have
		\begin{align*}
			& E\left[w_t\left(\sum_{j=1}^{\infty} \rho^{j-1}|y_{t-j}|\right)^3\right] \\ 
			\leq & \sum_{i=1}^{\infty}\sum_{j=1}^{\infty}\sum_{k=1}^{\infty}\rho^{i+j+k-3}\left[E\left(w_t|y_{t-i}|^3\right)\right]^{\frac{1}{3}}\left[E\left(w_t|y_{t-j}|^3\right)\right]^{\frac{1}{3}} \left[E\left(w_t|y_{t-k}|^3\right)\right]^{\frac{1}{3}} 
			\leq  \dfrac{M_3}{(1-\rho)^3}<\infty, \\
			& E\left[w_t\left(\sum_{j=1}^{\infty} (j-1)\rho^{j-2}|y_{t-j}|\right)^3\right] \leq M_3\left(\sum_{j=1}^{\infty}(j-1)\rho^{j-2}\right)^3<\infty \;\;\text{and} \\
			& E\left[w_t\left(\sum_{j=1}^{\infty} (j-1)(j-2)\rho^{j-3}|y_{t-j}|\right)^3\right] \leq M_3\left(\sum_{j=1}^{\infty}(j-1)(j-2)\rho^{j-3}\right)^3<\infty.
		\end{align*}
		Then similar to the proof for $\kappa=2$, under Assumptions \ref{assum-Process}, \ref{assum-Space} and \ref{assum-RandomWeight}, by $(a+b)^3=a^3+3a^2b+3ab^2+b^3$ and H\"{o}lder's inequality, we can show that (i) holds for $\kappa=3$. 
		The proof of this lemma is complete.
	\end{proof}
	\begin{proof}[Proof of Lemma \ref{lem0}]
		Recall that for $\bm\theta=(\omega, \alpha_{1},\beta_{1})^{\prime}$, $q_t(\bm\theta) =\omega + \alpha_{1}\sum_{j=1}^\infty \beta_{1}^{j-1}|y_{t-j}|$, $\widetilde{q}_t(\bm\theta)=\omega + \alpha_{1}\sum_{j=1}^{t-1} \beta_{1}^{j-1}|y_{t-j}|$,  
		$\dot{q}_t(\bm\theta)=(1,\sum_{j=1}^{\infty}\beta_{1}^{j-1}|y_{t-j}|,\alpha_{1}\sum_{j=2}^{\infty}(j-1)\beta_{1}^{j-2}|y_{t-j}|)^{\prime}$ 
		and $\dot{\widetilde{q}}_t(\bm\theta)=(1,\sum_{j=1}^{t-1}\beta^{j-1}_{1}|y_{t-j}|,\alpha_{1}\sum_{j=2}^{t-1}(j-1)\beta^{j-2}_{1}|y_{t-j}|)^{\prime}$. 
		It follows that
		\begin{align*}
			q_t(\bm\theta)-\widetilde{q}_t(\bm\theta)&=\alpha_{1}\sum_{j=t}^{\infty} \beta_{1}^{j-1}|y_{t-j}| \quad\text{and}\quad \\
			\dot{q}_t(\bm\theta)-\dot{\widetilde{q}}_t(\bm\theta)&=\left(0,\sum_{j=t}^{\infty}\beta^{j-1}_{1}|y_{t-j}|,\alpha_{1}\sum_{j=t}^{\infty}(j-1)\beta^{j-2}_{1}|y_{t-j}|\right)^{\prime}.
		\end{align*}
		Since $|\alpha_{1}|\leq\overline{c}<\infty$ and $0<\beta_{1}\leq\rho<1$ by Assumption \ref{assum-Space}, it holds that 
		\begin{align*}
			\sup_{\Theta}|q_t(\bm\theta)-\widetilde{q}_t(\bm\theta)|
			&\leq |\alpha_{1}|\sum_{j=t}^{\infty} \beta_{1}^{j-1}|y_{t-j}|
			\leq \overline{c}\rho^{t-1}\sum_{s=0}^{\infty}\rho^{s}|y_{-s}|\leq C\rho^{t}\varsigma_{\rho}, 
		\end{align*}
		and
		\begin{align*}
			\sup_{\Theta}\|\dot{q}_t(\bm\theta)-\dot{\widetilde{q}}_t(\bm\theta)\| &\leq \sup_{\Theta}\left[\sum_{j=t}^{\infty}\beta^{j-1}_{1}|y_{t-j}|+|\alpha_{1}|\sum_{j=t}^{\infty}(j-1)\beta^{j-2}_{1}|y_{t-j}|\right] \\
			&\leq \rho^{t-1}\varsigma_{\rho}+\overline{c}t\rho^{t-2}\varsigma_{\rho}+\overline{c}\rho^{t-1}\sum_{s=0}^{\infty}(s-1)\rho^{s-1}|y_{-s}|\\
			&\leq C\rho^{t}(\varsigma_{\rho}+t\varsigma_{\rho}+\xi_{\rho}),
		\end{align*}
		where $\varsigma_{\rho}=\sum_{s=0}^{\infty}\rho^{s}|y_{-s}|$ and $\xi_{\rho}=\sum_{s=0}^{\infty}s\rho^{s}|y_{-s}|$.	
		The proof of this lemma is complete.	
	\end{proof}
	
	\begin{proof}[Proof of Lemma \ref{lem1}]
		Recall that $\bm\theta=(\omega, \alpha_{1},\beta_{1})^{\prime}$ and its true parameter vector $\bm\theta(\tau)=(\omega(\tau), \alpha_1(\tau), \beta_1(\tau))^{\prime}$. For $\bm u\in\mathbb{R}^{d}$ with $d=3$, note that
		\begin{align*}
			|\zeta_n(\bm u)| \leq \sqrt{n}\|\bm u\|\sum_{j=1}^{3}\left|\dfrac{1}{\sqrt{n}}\sum_{t=1}^{n}m_{t,j}\left\{\xi_{t}(\bm u)-E[\xi_{t}(\bm u)|\mathcal{F}_{t-1}]\right\}\right|,
		\end{align*}
		where $m_{t,j}=w_t\partial q_t(\bm\theta(\tau))/\partial \theta_{j}$ with $\theta_{j}$ being the $j$th element of $\bm\theta$. For $1\leq j\leq d$, define $g_t=\max_{j}\{m_{t,j},0\}$ or $g_t=\max_j\{-m_{t,j},0\}$. Let $\varrho_t(\bm u)=g_t\xi_{t}(\bm u)$ and define
		\begin{align*}
			D_n(\bm u)=\dfrac{1}{\sqrt{n}}\sum_{t=1}^{n}\left\{\varrho_t(\bm u)-E\left[\varrho_t(\bm u)|\mathcal{F}_{t-1}\right]\right\}.
		\end{align*}
		To establish Lemma \ref{lem1}, it suffices to show that, for any $\delta>0$,
		\begin{align}\label{diffcond}
			\sup_{\|\bm u\|\leq \delta}\dfrac{|D_n(\bm u)|}{1+\sqrt{n}\|\bm u\|}=o_p(1).
		\end{align}
		
		We follow the method in Lemma 4 of \cite{Pollard1985} to verify \eqref{diffcond}.
		Let $\mathfrak{F}=\{\varrho_t(\bm u): \|\bm u\|\leq \delta\}$ be a collection of functions indexed by $\bm u$. First, we verify that $\mathfrak{F}$ satisfies the bracketing condition defined on page 304 of \cite{Pollard1985}. 	
		Let $B_{r}(\bm v)$ be an open neighborhood of $\bm v$ with radius $r>0$, and define a constant $C_0$ to be selected later. For any $\epsilon>0$ and $0< r\leq \delta$, there exists a sequence of small cubes $\{B_{\epsilon r/C_0}(\bm u_{i})\}_{i=1}^{K(\epsilon)}$ to cover $B_r(\bm 0)$, where $K(\epsilon)$ is an integer less than $C\epsilon^{-d}$, and the constant $C$ is not depending on $\epsilon$ and $r$; see \cite{Huber1967}, page 227.
		Denote $V_i(r)=B_{\epsilon r/C_0}(\bm u_{i})\bigcap B_r(\bm0)$, and let $U_1(r)=V_1(r)$ and $U_i(r)=V_i(r)-\bigcup_{j=1}^{i-1}V_j(r)$ for $i\geq 2$. Note that $\{U_i(r)\}_{i=1}^{K(\epsilon)}$ is a partition of $B_r(\bm0)$.
		For each $\bm u_i\in U_i(r)$ with $1\leq i \leq K(\epsilon)$, define the following bracketing functions	
		\begin{align*}
			\varrho_t^L(\bm u_i)&= g_t\int_{0}^{1}\left[I\left(y_t\leq F_{t-1}^{-1}(\tau)+\nu_t(\bm u_i)s-\dfrac{\epsilon r}{C_0}\|\dot{q}_t(\bm\theta(\tau))\|\right)-I(y_t\leq F_{t-1}^{-1}(\tau))\right]ds, \\
			\varrho_t^U(\bm u_i)&= g_t\int_{0}^{1}\left[I\left(y_t\leq F_{t-1}^{-1}(\tau)+\nu_t(\bm u_i)s+\dfrac{\epsilon r}{C_0}\|\dot{q}_t(\bm\theta(\tau))\|\right)-I(y_t\leq F_{t-1}^{-1}(\tau))\right]ds.
		\end{align*}
		Since the indicator function $I(\cdot)$ is non-decreasing and $g_t\geq 0$, for any $\bm u \in U_i(r)$, we have
		\begin{align}\label{brac1}
			\varrho_t^L(\bm u_i)\leq \varrho_t(\bm u)\leq \varrho_t^U(\bm u_i).
		\end{align}
		Furthermore, by Taylor expansion, it holds that
		\begin{align}\label{uncond}
			E\left[\varrho_t^U(\bm u_i)-\varrho_t^L(\bm u_i)|\mathcal{F}_{t-1}\right]\leq \dfrac{\epsilon r}{C_0}\cdot2\sup_{x}f_{t-1}(x) w_t\left\|\dot{q}_t(\bm\theta(\tau))\right\|^2.
		\end{align}
		Denote $\aleph_t=2\sup_{x}f_{t-1}(x)w_t\left\|\dot{q}_t(\bm\theta(\tau))\right\|^2$. By Assumption \ref{assum-ConditionalDensity}, we have $\sup_{x}f_{t-1}(x)<\infty$. Choose $C_0=E(\aleph_t)$. Then by iterated-expectation, Assumption \ref{assum-Space}(ii) and Lemma \ref{lem00}(i), it follows that
		\begin{align*}
			E\left[\varrho_t^U(\bm u_i)-\varrho_t^L(\bm u_i)\right]=E\left\{E\left[\varrho_t^U(\bm u_i)-\varrho_t^L(\bm u_i)|\mathcal{F}_{t-1}\right]\right\}\leq \epsilon r.
		\end{align*}	
		This together with \eqref{brac1}, implies that the family $\mathfrak{F}$ satisfies the bracketing condition.	
		
		Put $r_k=2^{-k}\delta$. Let $B(k)=B_{r_k}(\bm0)$ and $A(k)$ be the annulus $B(k)\setminus B(k+1)$. From the bracketing condition, for fixed $\epsilon>0$, there is a partition $U_1(r_k), U_2(r_k), \ldots, U_{K(\epsilon)}(r_k)$ of $B(k)$. First, consider the upper tail case. For $\bm u \in U_i(r_k)$, by \eqref{uncond}, it holds that
		\begin{align}\label{upper}
			D_n(\bm u) \leq &\dfrac{1}{\sqrt{n}}\sum_{t=1}^{n}\left\{\varrho_t^U(\bm u_i)-E\left[\varrho_t^U(\bm u_i)|\mathcal{F}_{t-1}\right]\right\}+\dfrac{1}{\sqrt{n}}\sum_{t=1}^{n}E\left[\varrho_t^U(\bm u_i)-\varrho_t^L(\bm u_i)|\mathcal{F}_{t-1}\right] \nonumber \\
			\leq & D_n^U(\bm u_i)+\sqrt{n}\epsilon r_k\dfrac{1}{nC_0}\sum_{t=1}^{n}\aleph_t,
		\end{align}	
		where \[D_n^U(\bm u_i)=\dfrac{1}{\sqrt{n}}\sum_{t=1}^{n}\left\{\varrho_t^U(\bm u_i)-E\left[\varrho_t^U(\bm u_i)|\mathcal{F}_{t-1}\right]\right\}.\]
		Define the event
		\begin{align*}
			E_n=\left\{\omega: \dfrac{1}{nC_0}\sum_{t=1}^{n}\aleph_t(\omega) < 2 \right\}.
		\end{align*}
		
		For $\bm u \in A(k)$, $1+\sqrt{n}\|\bm u\|>\sqrt{n}r_{k+1}=\sqrt{n}r_{k}/2$. Then by \eqref{upper} and the Chebyshev's inequality, we have
		\begin{align}\label{Ak0}
			\text{Pr}\left(\sup_{\bm u \in A(k)}\dfrac{D_n(\bm u)}{1+\sqrt{n}\|\bm u\|}>6\epsilon, E_n\right)
			\leq & \text{Pr}\left(\max_{1 \leq i \leq K(\epsilon)}\sup_{\bm u \in U_i(r_k) \cap A(k)}D_n(\bm u)>3\sqrt{n}\epsilon r_k, E_n\right) \nonumber\\
			\leq & K(\epsilon)\max_{1 \leq i \leq K(\epsilon)}\text{Pr}\left(D_n^U(\bm u_i)>\sqrt{n}\epsilon r_k\right) \nonumber\\
			\leq & K(\epsilon)\max_{1 \leq i \leq K(\epsilon)}\dfrac{E\{[D_n^U(\bm u_i)]^2\}}{n\epsilon^2 r_k^2}.
		\end{align}
		Moreover, by iterated-expectation, Taylor expansion and the H\"{o}lder inequality, together with $\|\bm u_i\|\leq r_k$ for $\bm u_i \in U_i(r_k)$, we have
		\begin{align*}	
			& E\left\{[\varrho_t^U(\bm u_i)]^2\right\}
			= E\left\{E\left\{[\varrho_t^U(\bm u_i)]^2|\mathcal{F}_{t-1}\right\}\right\} \nonumber \\
			\leq & 2E\left\{g_t^2\left| \int_{0}^{1}\left[F_{t-1}\left(F_{t-1}^{-1}(\tau)+\nu_t(\bm u_i)s+\dfrac{\epsilon r_k}{C_0}\|\dot{q}_t(\bm\theta(\tau))\|\right)-F_{t-1}\left(F_{t-1}^{-1}(\tau)\right)\right]ds\right|\right\} \nonumber\\
			\leq & C\sup_{x}f_{t-1}(x)r_kE\left\{w_t^2\left[\left\|\dot{q}_t(\bm\theta(\tau))\right\|^3+\left\|\dot{q}_t(\bm\theta(\tau))\right\|^2\sup_{\bm\theta^*\in\Theta}\|\dot{q}_t(\bm\theta^*)\|\right]\right\} \\
			\leq & C\sup_{x}f_{t-1}(x)r_k\left\{E\left(w_t^2\left\|\dot{q}_t(\bm\theta(\tau))\right\|^3\right)+\left[E\left(w_t^2\left\|\dot{q}_t(\bm\theta(\tau))\right\|^3\right)\right]^{2/3}\left[E\left(w_t^2\sup_{\bm\theta^*\in\Theta}\left\|\dot{q}_t(\bm\theta^*)\right\|^3\right)\right]^{1/3}\right\} \\ &	:= \Upsilon(r_k),
		\end{align*}
		where $\bm\theta^*=\bm u_i^*+\bm\theta(\tau)$ with $\bm u_i^*$ between $\bm 0$ and $\bm u_i$. 
		This, together with $\sup_{x}f_{t-1}(x)<\infty$ by Assumption \ref{assum-ConditionalDensity}, Lemma \ref{lem00}(i) and the fact that $\varrho_t^U(\bm u_i)-E[\varrho_t^U(\bm u_i)|\mathcal{F}_{t-1}]$ is a martingale difference sequence, implies that
		\begin{align}\label{ED}
			E\{[D_n^U(\bm u_i)]^2\}&=\dfrac{1}{n}\sum_{t=1}^{n}E\{\{\varrho_t^U(\bm u_i)-E[\varrho_t^U(\bm u_i)|\mathcal{F}_{t-1}]\}^2\} \nonumber \\
			&\leq \dfrac{1}{n}\sum_{t=1}^{n}E\{[\varrho_t^U(\bm u_i)]^2\} \leq\Upsilon(r_k)<\infty.
		\end{align}
		Combining \eqref{Ak0} and \eqref{ED}, we have
		\begin{align*}
			\text{Pr}\left(\sup_{\bm u \in A(k)}\dfrac{D_n(\bm u)}{1+\sqrt{n}\|\bm u\|}>6\epsilon, E_n\right)
			\leq \dfrac{K(\epsilon)\Upsilon(r_k)}{n\epsilon^2r_k^2}.
		\end{align*}
		Similar to the proof of the upper tail case, we can obtain the same bound for the lower tail case. Therefore,
		\begin{align}\label{Ak}
			& \text{Pr}\left(\sup_{\bm u \in A(k)}\dfrac{|D_n(\bm u)|}{1+\sqrt{n}\|\bm u\|}>6\epsilon, E_n\right) \leq \dfrac{2K(\epsilon)\Upsilon(r_k)}{n\epsilon^2r_k^2}.
		\end{align}
		
		Note that $\Upsilon(r_k)\to 0$ as $k\to \infty$, we can choose $k_{\epsilon}$ such that $2K(\epsilon)\Upsilon(r_k)/(\epsilon^2\delta^2)<\epsilon$ for $k\geq k_{\epsilon}$. Let $k_n$ be the integer such that $n^{-1/2}\delta \leq r_{k_n} \leq 2n^{-1/2}\delta$, and split $B_{\delta}(\bm 0)$ into two events $B:=B(k_n+1)$ and $B^c:=B(0)-B(k_n+1)$. Note that $B^c=\bigcup_{k=0}^{k_n}A(k)$ and $\Upsilon(r_k)$ is bounded. Then by \eqref{Ak}, it holds that
		\begin{align}\label{Bc}
			\text{Pr}\left(\sup_{\bm u \in B^c}\dfrac{|D_n(\bm u)|}{1+\sqrt{n}\|\bm u\|}>6\epsilon\right)
			\leq & \sum_{k=0}^{k_n}\text{Pr}\left(\sup_{\bm u \in A(k)}\dfrac{|D_n(\bm u)|}{1+\sqrt{n}\|\bm u\|}>6\epsilon, E_n\right) + \text{Pr}(E_n^c)\nonumber \\
			\leq & \dfrac{1}{n}\sum_{k=0}^{k_{\epsilon}-1}\dfrac{CK(\epsilon)}{\epsilon^2\delta^2}2^{2k}+ \dfrac{\epsilon}{n}\sum_{k=k_{\epsilon}}^{k_n}2^{2k}+ \text{Pr}(E_n^c) \nonumber \\
			\leq & O\left(\dfrac{1}{n}\right) + 4\epsilon + \text{Pr}(E_n^c).
		\end{align}
		
		Furthermore, for $\bm u \in B$, we have $1+\sqrt{n}\|\bm u\|\geq 1$ and $r_{k_n+1}\leq n^{-1/2}\delta<n^{-1/2}$. Similar to the proof of \eqref{Ak0} and \eqref{ED}, we can show that
		\begin{align*}
			\text{Pr}\left(\sup_{\bm u \in B}\dfrac{D_n(\bm u)}{1+\sqrt{n}\|\bm u\|}>3\epsilon, E_n\right) \leq \text{Pr}\left(\max_{1 \leq i \leq K(\epsilon)}D_n^U(\bm u_i)>\epsilon, E_n\right) \leq \dfrac{K(\epsilon)\Upsilon(r_{k_n+1})}{\epsilon^2}.
		\end{align*}
		We can obtain the same bound for the lower tail. Therefore, we have
		\begin{align}\label{B}
			\text{Pr}\left(\sup_{\bm u \in B}\dfrac{|D_n(\bm u)|}{1+\sqrt{n}\|\bm u\|}>3\epsilon\right)
			= &  \text{Pr}\left(\sup_{\bm u \in B}\dfrac{|D_n(\bm u)|}{1+\sqrt{n}\|\bm u\|}>3\epsilon, E_n\right)+ \text{Pr}(E_n^c) \nonumber \\
			\leq & \dfrac{2K(\epsilon)\Upsilon(r_{k_n+1})}{\epsilon^2} + \text{Pr}(E_n^c).
		\end{align}
		Note that $\Upsilon(r_{k_n+1})\to 0$ as $n\to \infty$. Moreover, by the ergodic theorem, $\text{Pr}(E_n)\rightarrow 1$ and thus $\text{Pr}(E_n^c)\rightarrow 0$ as $n\rightarrow \infty$. \eqref{B} together with \eqref{Bc} asserts \eqref{diffcond}. The proof of this lemma is accomplished.	
	\end{proof}
	
	\begin{proof}[Proof of Lemma \ref{lem2}] 
		Recall that $L_n(\bm\theta)=n^{-1}\sum_{t=1}^{n}w_t\rho_{\tau}(y_t-q_t(\bm\theta))$ and $q_t(\bm\theta(\tau))=F_{t-1}^{-1}(\tau)$. 
		Let $\xi_t(\bm u)=\int_{0}^{1}\left[I(y_{t}\leq F_{t-1}^{-1}(\tau)+\nu_t(\bm u)s)-I(y_{t}\leq F_{t-1}^{-1}(\tau))\right]ds$ with $\nu_t(\bm u)=q_t(\bm u+\bm\theta(\tau))-q_t(\bm\theta(\tau))$. 
		By the Knight identity \eqref{identity}, it can be verified that
		\begin{align}\label{Gnrep}
			n[L_n(\bm u+\bm\theta(\tau))-L_n(\bm\theta(\tau))] =&\sum_{t=1}^{n}w_t\left[\rho_{\tau}\left(\eta_{t,\tau}-\nu_t(\bm u)\right)-\rho_{\tau}\left(\eta_{t,\tau}\right)\right] \nonumber \\
			=& K_{1n}(\bm u)+K_{2n}(\bm u),
		\end{align}	
		where $\bm u\in\Lambda\equiv\{\bm u\in\mathbb{R}^3: \bm u+\bm\theta(\tau) \in \Theta\}$, $\eta_{t,\tau}=y_t-q_t(\bm\theta(\tau))$,
		\begin{align*}
			K_{1n}(\bm u)=-\sum_{t=1}^{n}w_t\nu_t(\bm u)\psi_{\tau}(\eta_{t,\tau}) \hspace{2mm}\text{and}\hspace{2mm}  K_{2n}(\bm u)=\sum_{t=1}^{n}w_t\nu_t(\bm u)\xi_t(\bm u).
		\end{align*}
		By Taylor expansion, we have $\nu_t(\bm u)=q_{1t}(\bm u)+q_{2t}(\bm u)$, where $q_{1t}(\bm u)=\bm u^{\prime}\dot{q}_t(\bm\theta(\tau))$ and $q_{2t}(\bm u)=\bm u^{\prime}\ddot{q}_t(\bm u^*+\bm\theta(\tau))\bm u/2$ for $\bm u^*$ between $\bm u$ and $\bm 0$.
		Then it follows that
		\begin{align}\label{K1rep}
			K_{1n}(\bm u)&=-\sum_{t=1}^{n}w_tq_{1t}(\bm u)\psi_{\tau}(\eta_{t,\tau})-\sum_{t=1}^{n}w_tq_{2t}(\bm u)\psi_{\tau}(\eta_{t,\tau}) \nonumber \\
			&=-\sqrt{n}\bm u^{\prime}\bm T_n-\sqrt{n}\bm u^{\prime}R_{1n}(\bm u^*)\sqrt{n}\bm u,
		\end{align}
		where
		\[\bm T_n=\dfrac{1}{\sqrt{n}}\sum_{t=1}^{n}w_t\dot{q}_t(\bm\theta(\tau))\psi_{\tau}(\eta_{t,\tau}) \hspace{2mm}\text{and}\hspace{2mm} R_{1n}(\bm u^*)=\dfrac{1}{2n}\sum_{t=1}^{n}w_t\ddot{q}_t(\bm u^*+\bm\theta(\tau))\psi_{\tau}(\eta_{t,\tau}).\]
		By Lemma \ref{lem00}(ii) and the fact that $|\psi_{\tau}(\eta_{t,\tau})|\leq 1$, we have
		\[E\left[\sup_{\bm u^*\in\Lambda}\left\|w_t \ddot{q}_t(\bm u^*+\bm\theta(\tau))\psi_{\tau}(\eta_{t,\tau})\right\|\right]\leq CE\left[\sup_{\bm\theta^*\in\Theta}\left\|w_t \ddot{q}_t(\bm\theta^*)\right\|\right]<\infty.\]
		Moreover, by iterated-expectation and the fact that $E[\psi_{\tau}(\eta_{t,\tau}) | \mathcal{F}_{t-1}]=0$, it follows that
		\[E\left[w_t \ddot{q}_t(\bm u^*+\bm\theta(\tau))\psi_{\tau}(\eta_{t,\tau})\right]=0.\]
		Then by Theorem 3.1 in \cite{Ling_McAleer2003} and Assumption \ref{assum-Process}, we can show that
		\begin{align}\label{R1n}
			\sup_{\bm u^*\in\Lambda}\|R_{1n}(\bm u^*+\bm\theta(\tau))\|=o_p(1).
		\end{align}
		This together with \eqref{K1rep}, implies that
		\begin{align}\label{Kn1}
			K_{1n}(\bm u_n)=-\sqrt{n}\bm u_n^{\prime}\bm T_n+o_p(n\|\bm u_n\|^2).
		\end{align}
		Denote $\xi_t(\bm u)=\xi_{1t}(\bm u)+\xi_{2t}(\bm u)$, where
		\begin{align*}
			\xi_{1t}(\bm u)&=\int_{0}^{1}\left[I(y_{t}\leq F_{t-1}^{-1}(\tau)+q_{1t}(\bm u)s)-I(y_{t}\leq F_{t-1}^{-1}(\tau))\right]ds \hspace{2mm}\text{and}\hspace{2mm} \\
			\xi_{2t}(\bm u)&=\int_{0}^{1}\left[I(y_{t}\leq F_{t-1}^{-1}(\tau)+\nu_t(\bm u)s)-I(y_{t}\leq F_{t-1}^{-1}(\tau)+q_{1t}(\bm u)s)\right]ds.
		\end{align*}
		Then for $K_{2n}(\bm u)$, by Taylor expansion, it holds that
		\begin{align}\label{K2rep}
			K_{2n}(\bm u)=R_{2n}(\bm u)+R_{3n}(\bm u)+R_{4n}(\bm u)+R_{5n}(\bm u),
		\end{align}
		where
		\begin{align*}
			R_{2n}(\bm u)&=\bm u^{\prime}\sum_{t=1}^{n}w_t\dot{q}_t(\bm\theta(\tau))E[\xi_{1t}(\bm u)|\mathcal{F}_{t-1}], \\
			R_{3n}(\bm u)&=\bm u^{\prime}\sum_{t=1}^{n}w_t\dot{q}_t(\bm\theta(\tau))E[\xi_{2t}(\bm u)|\mathcal{F}_{t-1}], \\
			R_{4n}(\bm u)&=\bm u^{\prime}\sum_{t=1}^{n}w_t\dot{q}_t(\bm\theta(\tau))\left\{\xi_{t}(\bm u)-E[\xi_{t}(\bm u)|\mathcal{F}_{t-1}]\right\} \hspace{2mm}\text{and}\hspace{2mm} \\
			R_{5n}(\bm u)&=\dfrac{\bm u^{\prime}}{2}\sum_{t=1}^{n}w_t\ddot{q}_t(\bm\theta^*)\xi_{t}(\bm u)\bm u.
		\end{align*}
		Note that
		\begin{align}\label{xi1}
			E[\xi_{1t}(\bm u)|\mathcal{F}_{t-1}]=\int_{0}^{1}[F_{t-1}(F_{t-1}^{-1}(\tau)+q_{1t}(\bm u)s)-F_{t-1}(F_{t-1}^{-1}(\tau))]ds.
		\end{align}
		Then by Taylor expansion, together with Assumption \ref{assum-ConditionalDensity}, it follows that
		\begin{align*}
			E[\xi_{1t}(\bm u)|\mathcal{F}_{t-1}]=&\dfrac{1}{2}f_{t-1}(F_{t-1}^{-1}(\tau))q_{1t}(\bm u) \\
			&+q_{1t}(\bm u)\int_{0}^{1}[f_{t-1}(F_{t-1}^{-1}(\tau)+q_{1t}(\bm u)s^*)-f_{t-1}(F_{t-1}^{-1}(\tau))]sds,
		\end{align*}
		where $s^*$ is between 0 and $s$. Therefore, it follows that
		\begin{align}\label{R2nrep}
			R_{2n}(\bm u)=\sqrt{n}\bm u^{\prime}J_n\sqrt{n}\bm u+\sqrt{n}\bm u^{\prime}\Pi_{1n}(\bm u)\sqrt{n}\bm u,
		\end{align}
		where
		$J_n=(2n)^{-1}\sum_{t=1}^{n}f_{t-1}(F_{t-1}^{-1}(\tau))w_t\dot{q}_t(\bm\theta(\tau))\dot{q}_t^{\prime}(\bm\theta(\tau))$ and
		\begin{align*}
			\Pi_{1n}(\bm u)&=\dfrac{1}{n}\sum_{t=1}^{n}w_t\dot{q}_t(\bm\theta(\tau))\dot{q}_t^{\prime}(\bm\theta(\tau))\int_{0}^{1}[f_{t-1}(F_{t-1}^{-1}(\tau)+q_{1t}(\bm u)s^*)-f_{t-1}(F_{t-1}^{-1}(\tau))]sds.
		\end{align*}
		By Taylor expansion, together with Assumption \ref{assum-Space}(ii), $\sup_{x}|\dot{f}_{t-1}(x)|<\infty$ by Assumption \ref{assum-ConditionalDensity} and Lemma \ref{lem00}(i), for any $\eta>0$, it holds that
		\begin{align*}
			E\left(\sup_{\|\bm u\|\leq \eta}\|\Pi_{1n}(\bm u)\|\right)&\leq \dfrac{1}{n}\sum_{t=1}^{n}E\left[\sup_{\|\bm u\|\leq \eta} \|w_t\dot{q}_t(\bm\theta(\tau))\dot{q}_t^{\prime}(\bm\theta(\tau))\sup_{x}|\dot{f}_{t-1}(x)|\bm u^{\prime}\dot{q}_t(\bm\theta(\tau))\|\right] \\
			&\leq C\eta\sup_{x}|\dot{f}_{t-1}(x)| E[w_t\|\dot{q}_t(\bm\theta(\tau))\|^3]
		\end{align*}
		tends to $0$ as $\eta \to 0$.
		Therefore, by Markov’s theorem, for any $\epsilon$, $\delta>0$, there exists $\eta_0=\eta_0(\epsilon)>0$ such that
		\begin{align}\label{epsdelta1}
			\text{Pr}\left(\sup_{\|\bm u\|\leq \eta_0}\|\Pi_{1n}(\bm u)\|> \delta\right)<\dfrac{\epsilon}{2}
		\end{align}
		for all $n\geq 1$. Since $\bm u_n=o_p(1)$, it follows that
		\begin{align}\label{epsdelta2}
			\text{Pr}\left(\|\bm u_n\|> \eta_0\right)<\dfrac{\epsilon}{2}
		\end{align}
		as $n$ is large enough. From \eqref{epsdelta1} and \eqref{epsdelta2}, we have
		\begin{align*}
			\text{Pr}\left(\|\Pi_{1n}(\bm u_n)\|> \delta\right)&\leq \text{Pr}\left(\|\Pi_{1n}(\bm u_n)\|> \delta, \|\bm u_n\|\leq \eta_0\right)+\text{Pr}\left(\|\bm u_n\|> \eta_0\right) \\
			&\leq \text{Pr}\left(\sup_{\|\bm u\|\leq \eta_0}\|\Pi_{1n}(\bm u)\|> \delta\right)+\dfrac{\epsilon}{2}<\epsilon
		\end{align*}
		as $n$ is large enough. Thus $\Pi_{1n}(\bm u_n)=o_p(1)$. This together with \eqref{R2nrep}, implies that
		\begin{align}\label{R2n}
			R_{2n}(\bm u_n)=\sqrt{n}\bm u_n^{\prime}J_n\sqrt{n}\bm u_n+o_p(n\|\bm u_n\|^2).
		\end{align}
		
		Note that
		\begin{align}\label{xi2}
			E[\xi_{2t}(\bm u)|\mathcal{F}_{t-1}]=\int_{0}^{1}\left[F_{t-1}(F_{t-1}^{-1}(\tau)+\nu_t(\bm u)s)-F_{t-1}(F_{t-1}^{-1}(\tau)+q_{1t}(\bm u)s)\right]ds.
		\end{align}
		Then by Taylor expansion, the Cauchy-Schwarz inequality and the strict stationarity and ergodicity of $y_t$ under Assumption \ref{assum-Process}, together with Assumption \ref{assum-Space}(ii), $\sup_{x}f_{t-1}(x)<\infty$ by Assumption \ref{assum-ConditionalDensity} and Lemma \ref{lem00}, for any $\eta>0$, it holds that
		\begin{align*}
			E\bigg(\sup_{\|\bm u\|\leq \eta}\dfrac{|R_{3n}(\bm u)|}{n\|\bm u\|^2}\bigg)
			\leq & \dfrac{\eta}{n}\sum_{t=1}^{n}E\left\{w_t\left\|\dot{q}_t(\bm\theta(\tau))\right\|\dfrac{1}{2}\sup_{x}f_{t-1}(x)\sup_{\bm\theta\in\Theta}\left\|\ddot{q}_t(\bm\theta)\right\| \right\} \\
			\leq &  C\eta E\left\{\left\|\sqrt{w_t}\dot{q}_t(\bm\theta(\tau))\right\|\sup_{\bm\theta\in\Theta}\left\|\sqrt{w_t}\ddot{q}_t(\bm\theta)\right\| \right\} \\
			\leq & C\eta \left[E\left(w_t\left\|\dot{q}_t(\bm\theta(\tau))\right\|^2\right)\right]^{1/2}\left[E\left(\sup_{\bm\theta\in\Theta}w_t\left\|\ddot{q}_t(\bm\theta)\right\|^2\right)\right]^{1/2}
		\end{align*}
		tends to $0$ as $\eta \to 0$. Similar to \eqref{epsdelta1} and \eqref{epsdelta2}, we can show that
		\begin{align}\label{R3n}
			R_{3n}(\bm u_n)=o_p(n\|\bm u_n\|^2).
		\end{align}
		
		For $R_{4n}(\bm u)$, by Lemma \ref{lem1}, it holds that
		\begin{align}\label{R4n}
			R_{4n}(\bm u_n)=o_p(\sqrt{n}\|\bm u_n\|+n\|\bm u_n\|^2).
		\end{align}	
		
		Finally, we consider $R_{5n}(\bm u)$. Since $I(x\leq a)-I(x\leq b)=I(b\leq x \leq a)-I(b\geq x \geq a)$ and $\nu_t(\bm u)=\bm u^{\prime}\dot{q}_t(\bm\theta^{\star})$ with $\bm\theta^{\star}$ between $\bm\theta(\tau)$ and $\bm u+\bm\theta(\tau)$ by Taylor expansion, we have
		\begin{align}\label{sup-xi}
			\sup_{\|\bm u\|\leq \eta}|\xi_{t}(\bm u)| 
			\leq & \int_{0}^{1}\sup_{\|\bm u\|\leq \eta}\left|I(F_{t-1}^{-1}(\tau)\leq y_{t}\leq F_{t-1}^{-1}(\tau)+\nu_t(\bm u)s)\right|ds \nonumber \\ 
			& + \int_{0}^{1}\sup_{\|\bm u\|\leq \eta}\left|I(F_{t-1}^{-1}(\tau) \geq y_{t}\geq F_{t-1}^{-1}(\tau)+\nu_t(\bm u)s)\right|ds \nonumber\\
			\leq & I\left(F_{t-1}^{-1}(\tau)\leq y_{t}\leq F_{t-1}^{-1}(\tau)+ \eta\sup_{\bm\theta^{\star}\in\Theta}\left\|\dot{q}_t(\bm\theta^{\star})\right\|\right) \nonumber\\ 
			& + I\left(F_{t-1}^{-1}(\tau)\geq y_{t}\geq F_{t-1}^{-1}(\tau)- \eta\sup_{\bm\theta^{\star}\in\Theta}\left\|\dot{q}_t(\bm\theta^{\star})\right\|\right).
		\end{align}
		Then by iterated-expectation, the Cauchy-Schwarz inequality and the strict stationarity and ergodicity of $y_t$ under Assumption \ref{assum-Process}, together with $\sup_{x}f_{t-1}(x)<\infty$ by Assumption \ref{assum-ConditionalDensity} and Lemma \ref{lem00}, for any $\eta>0$, it follows that
		\begin{align*}
			E\left(\sup_{\|\bm u\|\leq \eta}\dfrac{|R_{5n}(\bm u)|}{n\|\bm u\|^2}\right)  
			\leq & \dfrac{1}{2n}\sum_{t=1}^{n}E\left[w_t\sup_{\bm\theta^*\in\Theta}\left\|\ddot{q}_t(\bm\theta^*)\right\|E\left(\sup_{\|\bm u\|\leq \eta}|\xi_{t}(\bm u)| | \mathcal{F}_{t-1}\right)\right] \\
			\leq & \eta\sup_{x}f_{t-1}(x) E\left[w_t\sup_{\bm\theta^*\in\Theta}\left\|\ddot{q}_t(\bm\theta^*)\right\|\sup_{\bm\theta^{\star}\in\Theta}\left\|\dot{q}_t(\bm\theta^{\star})\right\| \right] \\
			\leq & C\eta \left[E\left(\sup_{\bm\theta^*\in\Theta}w_t\left\|\ddot{q}_t(\bm\theta^*)\right\|^2\right)\right]^{1/2}\left[E\left(\sup_{\bm\theta^{\star}\in\Theta}w_t\left\|\dot{q}_t(\bm\theta^{\star})\right\|^2\right)\right]^{1/2} 
		\end{align*}
		tends to $0$ as $\eta \to 0$. Similar to \eqref{epsdelta1} and \eqref{epsdelta2}, we can show that
		\begin{align}\label{R5n}
			R_{5n}(\bm u_n)=o_p(n\|\bm u_n\|^2).
		\end{align}
		From \eqref{K2rep}, \eqref{R2n}, \eqref{R3n}, \eqref{R4n} and \eqref{R5n}, we have
		\begin{align}\label{Kn2}
			K_{2n}(\bm u_n)=\sqrt{n}\bm u_n^{\prime}J_n\sqrt{n}\bm u_n+o_p(\sqrt{n}\|\bm u_n\|+n\|\bm u_n\|^2).
		\end{align}
		In view of \eqref{Gnrep}, \eqref{Kn1} and \eqref{Kn2}, we accomplish the proof of this lemma.
	\end{proof}
	
	\begin{proof}[Proof of Lemma \ref{lem3}]
		Recall that $\eta_{t,\tau}=y_t-q_t(\bm\theta(\tau))$, $\nu_t(\bm u)=q_t(\bm u+\bm\theta(\tau))-q_t(\bm\theta(\tau))$ and $\xi_t(\bm u)=\int_{0}^{1}[I(y_{t}\leq F_{t-1}^{-1}(\tau)+\nu_t(\bm u)s)-I(y_{t}\leq F_{t-1}^{-1}(\tau))]ds$ with $F_{t-1}^{-1}(\tau)=q_t(\bm\theta(\tau))$. 
		Let $\widetilde{\eta}_{t,\tau}=y_t-\widetilde{q}_t(\bm\theta(\tau))$, $\widetilde{\nu}_t(\bm u)=\widetilde{q}_t(\bm u+\bm\theta(\tau))-\widetilde{q}_t(\bm\theta(\tau))$
		and 
		$\widetilde{\xi}_t(\bm u)=\int_{0}^{1}[I(y_{t}\leq \widetilde{q}_t(\bm\theta(\tau))+\widetilde{\nu}_t(\bm u)s)-I(y_{t}\leq \widetilde{q}_t(\bm\theta(\tau)))]ds$. 
		Similar to \eqref{Gnrep}, by the Knight identity \eqref{identity}, we can verify that
		\begin{align}\label{initialHnrep}
			&n[\widetilde{L}_n(\bm u+\bm\theta(\tau))-\widetilde{L}_n(\bm\theta(\tau))]-n[L_n(\bm u+\bm\theta(\tau))-L_n(\bm\theta(\tau))] \nonumber\\
			=& \sum_{t=1}^n w_t \left\{[-\widetilde{\nu}_t(\bm u)\psi_{\tau}(\widetilde{\eta}_{t,\tau})+\widetilde{\nu}_t(\bm u)\widetilde{\xi}_t(\bm u)] - [-\nu_t(\bm u)\psi_{\tau}(\eta_{t,\tau})+\nu_t(\bm u)\xi_t(\bm u)]\right\} \nonumber\\
			=& \widetilde{A}_{1n}(\bm u)+\widetilde{A}_{2n}(\bm u)+\widetilde{A}_{3n}(\bm u)+\widetilde{A}_{4n}(\bm u),
		\end{align}
		where $\bm u\in\Lambda\equiv\{\bm u\in\mathbb{R}^3: \bm u+\bm\theta(\tau) \in \Theta\}$,
		\begin{align*}
			\widetilde{A}_{1n}(\bm u)&= \sum_{t=1}^n w_t [\nu_t(\bm u)-\widetilde{\nu}_t(\bm u)]\psi_{\tau}(\widetilde{\eta}_{t,\tau}),\quad \widetilde{A}_{2n}(\bm u)=\sum_{t=1}^n w_t [\psi_{\tau}(\eta_{t,\tau})-\psi_{\tau}(\widetilde{\eta}_{t,\tau})]\nu_t(\bm u),\\
			\widetilde{A}_{3n}(\bm u)&= \sum_{t=1}^n w_t [\widetilde{\nu}_t(\bm u)-\nu_t(\bm u)]\widetilde{\xi}_t(\bm u) \;\;\text{and}\;\; \widetilde{A}_{4n}(\bm u)=\sum_{t=1}^n w_t [\widetilde{\xi}_t(\bm u)-\xi_t(\bm u)] \nu_t(\bm u).
		\end{align*}
		
		We first consider $\widetilde{A}_{1n}(\bm u)$. 
		Since $|\psi_{\tau}(\cdot)|\leq 1$, $\{y_t\}$ is strictly stationary and ergodic by Assumption \ref{assum-Process} and $E(w_t)<\infty$ and $E(w_{t}|y_{t-j}|^3)<\infty$ for all $j\geq 1$ by Assumption \ref{assum-RandomWeight}, then by Taylor expansion and Lemma \ref{lem0}(ii), we have
		\begin{align*}
			\sup_{\bm u\in\Lambda}\dfrac{|\widetilde{A}_{1n}(\bm u)|}{\sqrt{n}\|\bm u\|} 
			&\leq \dfrac{1}{\sqrt{n}}\sum_{t=1}^n w_t \sup_{\bm u\in\Lambda}\dfrac{|\nu_t(\bm u)-\widetilde{\nu}_t(\bm u)|}{\|\bm u\|}|\psi_{\tau}(\widetilde{\eta}_{t,\tau})| \\
			&\leq \dfrac{1}{\sqrt{n}}\sum_{t=1}^n w_t \sup_{\Theta}\|\dot{q}_t(\bm\theta^*)-\dot{\widetilde{q}}_t(\bm\theta^*)\| \\
			&\leq \dfrac{C}{\sqrt{n}}\sum_{t=1}^n \rho^t w_t(\varsigma_{\rho}+\xi_{\rho})+ \dfrac{C}{\sqrt{n}}\sum_{t=1}^n t\rho^t w_t\varsigma_{\rho}=o_p(1),
		\end{align*}
		where $\bm\theta^*$ is between $\bm\theta$ and $\bm\theta(\tau)$, $\varsigma_{\rho}=\sum_{s=0}^{\infty}\rho^{s}|y_{-s}|$ and $\xi_{\rho}=\sum_{s=0}^{\infty}s\rho^{s}|y_{-s}|$. Therefore, for $\bm u_n=o_p(1)$, it holds that
		\begin{align}\label{Pi1tilde}
			\widetilde{A}_{1n}(\bm u_n)=o_p(\sqrt{n}\|\bm u_n\|).
		\end{align}

		We next consider $\widetilde{A}_{2n}(\bm u)$. 
		Using $I(x < a)-I(x < b)=I(0 < x-b < a-b)-I(0> x-b > a-b)$ and $\psi_{\tau}(\eta_{t,\tau})-\psi_{\tau}(\widetilde{\eta}_{t,\tau})=I(y_t<\widetilde{q}_t(\bm\theta(\tau)))-I(y_t<q_t(\bm\theta(\tau)))$, we have 
		\begin{align*}
			E[|\psi_{\tau}(\eta_{t,\tau})-\psi_{\tau}(\widetilde{\eta}_{t,\tau})| |\mathcal{F}_{t-1}] 
			\leq & E\left[I(0< y_t-q_t(\bm\theta(\tau))< |\widetilde{q}_t(\bm\theta(\tau))-q_t(\bm\theta(\tau))|) |\mathcal{F}_{t-1}\right] \\
			&+ E\left[I(0> y_t-q_t(\bm\theta(\tau))> -|\widetilde{q}_t(\bm\theta(\tau))-q_t(\bm\theta(\tau))|)|\mathcal{F}_{t-1}\right] \\
			\leq & F_{t-1}\left(q_t(\bm\theta(\tau))+|\widetilde{q}_t(\bm\theta(\tau))-q_t(\bm\theta(\tau))|\right) \\
			&-F_{t-1}\left(q_t(\bm\theta(\tau))-|\widetilde{q}_t(\bm\theta(\tau))-q_t(\bm\theta(\tau))|\right).
		\end{align*} 
		Then by iterative-expectation and Cauchy-Schwarz inequality, together with $\nu_t(\bm u)=\bm u^{\prime}\dot{q}_t(\bm\theta^*)$ by Taylor expansion, Lemma \ref{lem00}(i), Lemma \ref{lem0}(i), $\sup_{x}f_{t-1}(x)<\infty$ by Assumption \ref{assum-ConditionalDensity} and $E(w_t)<\infty$ and $E(w_{t}|y_{t-j}|^3)<\infty$ for all $j\geq 1$ by Assumption \ref{assum-RandomWeight}, it holds that
		\begin{align*}
			E\sup_{\bm u\in\Lambda}\dfrac{|\widetilde{A}_{2n}(\bm u)|}{\sqrt{n}\|\bm u\|} 
			&\leq \dfrac{1}{\sqrt{n}}\sum_{t=1}^n E\left\{ w_t\sup_{\Theta}\|\dot{q}_t(\bm\theta^*)\|\cdot E[|\psi_{\tau}(\eta_{t,\tau})-\psi_{\tau}(\widetilde{\eta}_{t,\tau})| |\mathcal{F}_{t-1}] \right\} \\
			&\leq 2C\sup_{x}f_{t-1}(x)\dfrac{1}{\sqrt{n}}\sum_{t=1}^n \rho^t E\left\{w_t\sup_{\Theta}\|\dot{q}_t(\bm\theta^*)\| \varsigma_{\rho} \right\} \\
			&\leq \dfrac{C}{\sqrt{n}}\sum_{t=1}^n \rho^t \cdot \left[E\left(w_t\sup_{\Theta}\|\dot{q}_t(\bm\theta^*)\|^2\right)\right]^{1/2}\cdot \left[E(w_t\varsigma_{\rho}^2)\right]^{1/2}=o(1).
		\end{align*}
		As a result, for $\bm u_n=o_p(1)$, it follows that
		\begin{align}\label{Pi2tilde}
			\widetilde{A}_{2n}(\bm u_n)=o_p(\sqrt{n}\|\bm u_n\|).
		\end{align}

		For $\widetilde{A}_{3n}(\bm u)$, since $|\widetilde{\xi}_t(\bm u)|<2$, similar to the proof of $\widetilde{A}_{1n}(\bm u)$, for $\bm u_n=o_p(1)$, we have
		\begin{align}\label{Pi3tilde}
			\widetilde{A}_{3n}(\bm u_n)=o_p(\sqrt{n}\|\bm u_n\|).
		\end{align}

		Finally, we consider $\widetilde{A}_{4n}(\bm u)$. Denote $\widetilde{c}_t=I(y_t\leq\widetilde{q}_t(\bm\theta(\tau)))-I(y_t\leq q_t(\bm\theta(\tau)))$ and $\widetilde{d}_t= \int_{0}^1 \delta_t(s) ds$ with $\widetilde{\delta}_t(s)=I(y_t\leq\widetilde{q}_t(\bm\theta(\tau))+\widetilde{\nu}_t(\bm u)s)-I(y_t\leq q_t(\bm\theta(\tau))+\nu_t(\bm u)s)$. 
		Using $I(X\leq a)-I(X\leq b)=I(b\leq X \leq a)-I(b\geq X \geq a)$, it holds that
		\begin{align*}
			|\widetilde{c}_t| \leq & I\left(|y_t-q_t(\bm\theta(\tau))| \leq |\widetilde{q}_t(\bm\theta(\tau))-q_t(\bm\theta(\tau))|\right) \quad \text{and} \\
			\sup_{\bm u\in\Lambda}|\widetilde{\delta}_t(s)| \leq & I\left(|y_t-q_t(\bm\theta(\tau))-\nu_t(\bm u)s| \leq |\widetilde{q}_t(\bm\theta(\tau))-q_t(\bm\theta(\tau))|+\sup_{\bm u\in\Lambda}|\widetilde{\nu}_t(\bm u)-\nu_t(\bm u)|s \right).
		\end{align*}
		Then by Taylor expansion, together with $\sup_{x}f_{t-1}(x)<\infty$ under Assumption \ref{assum-ConditionalDensity} and Lemma \ref{lem0}, we have 
		\begin{align*}
			E\left(|\widetilde{c}_t| |\mathcal{F}_{t-1}\right) 
			\leq & 2\sup_{x}f_{t-1}(x) |\widetilde{q}_t(\bm\theta(\tau))-q_t(\bm\theta(\tau))| \leq C\rho^t \varsigma_{\rho} \quad \text{and} \\
			E\left(\sup_{\bm u\in\Lambda}|\widetilde{\delta}_t(s)| |\mathcal{F}_{t-1}\right) 
			\leq & 2\sup_{x}f_{t-1}(x) \left(|\widetilde{q}_t(\bm\theta(\tau))-q_t(\bm\theta(\tau))|+\sup_{\bm u\in\Lambda}|\widetilde{\nu}_t(\bm u)-\nu_t(\bm u)|\right) \\
			\leq & C\rho^t [\varsigma_{\rho} + \|\bm u\|(\varsigma_{\rho}+t\varsigma_{\rho}+\xi_{\rho})].
		\end{align*}
		These together with $\widetilde{\xi}_t(\bm u)-\xi_t(\bm u)=\widetilde{d}_t-\widetilde{c}_t$, imply that
		\begin{align*}
			E\left(\sup_{\bm u\in\Lambda}|\widetilde{\xi}_t(\bm u)-\xi_t(\bm u)| |\mathcal{F}_{t-1}\right) \leq C\rho^t \varsigma_{\rho} + C\|\bm u\|\rho^t(\varsigma_{\rho}+t\varsigma_{\rho}+\xi_{\rho}).
		\end{align*}
		As a result, by iterative-expectation and Cauchy-Schwarz inequality, together with $\nu_t(\bm u)=\bm u^{\prime}\dot{q}_t(\bm u^*+\bm\theta(\tau))$ by Taylor expansion, Lemma \ref{lem00}(i) and $E(w_t)<\infty$ and $E(w_{t}|y_{t-j}|^3)<\infty$ for all $j\geq 1$ by Assumption \ref{assum-RandomWeight}, we have
		\begin{align*}
			E\sup_{\bm u\in\Lambda}\dfrac{|\widetilde{A}_{4n}(\bm u)|}{\sqrt{n}\|\bm u\|+n\|\bm u\|^2} 
			\leq & \sum_{t=1}^n E\left\{w_t E\left(\sup_{\bm u\in\Lambda}\dfrac{|\widetilde{\xi}_t(\bm u)-\xi_t(\bm u)|}{\sqrt{n}+n\|\bm u\|} |\mathcal{F}_{t-1}\right)\sup_{\bm u\in\Lambda}\dfrac{|\nu_t(\bm u)|}{\|\bm u\|} \right\} \\
			\leq & \dfrac{C}{\sqrt{n}}\sum_{t=1}^n \rho^t \cdot \left[E\left(w_t\sup_{\Theta}\|\dot{q}_t(\bm\theta^*)\|^2\right)\right]^{1/2}\cdot \left[E(w_t\varsigma_{\rho}^2)\right]^{1/2} \\
			& + \dfrac{C}{n} \sum_{t=1}^n \rho^t \cdot \left[E\left(w_t\sup_{\Theta}\|\dot{q}_t(\bm\theta^*)\|^2\right)\right]^{1/2}\cdot \left[E(w_t\varsigma_{\rho}^2)\right]^{1/2} \\
			& + \dfrac{C}{n} \sum_{t=1}^n t\rho^t \cdot \left[E\left(w_t\sup_{\Theta}\|\dot{q}_t(\bm\theta^*)\|^2\right)\right]^{1/2}\cdot \left[E(w_t\varsigma_{\rho}^2)\right]^{1/2} \\
			& + \dfrac{C}{n} \sum_{t=1}^n \rho^t \cdot \left[E\left(w_t\sup_{\Theta}\|\dot{q}_t(\bm\theta^*)\|^2\right)\right]^{1/2}\cdot \left[E(w_t\xi_{\rho}^2)\right]^{1/2} = o(1).
		\end{align*}
		Hence, for $\bm u_n=o_p(1)$, it follows that
		\begin{align}\label{Pi4tilde}
			\widetilde{A}_{4n}(\bm u_n)=o_p(\sqrt{n}\|\bm u_n\|+n\|\bm u_n\|^2).
		\end{align}
		Combining \eqref{initialHnrep}--\eqref{Pi4tilde}, we accomplish the proof of this lemma.
	\end{proof}

	\begin{proof}[Proof of Lemma \ref{lem-Tightness1}]
		Recall that $q_t(\bm\theta(\tau))=\omega(\tau) + \alpha_1(\tau)\sum_{j=1}^\infty \beta_1(\tau)^{j-1}|y_{t-j}|$. By the Lipschitz continuous conditions in Assumption \ref{assum-Tightness} and Taylor expansion, we have
		\begin{align}\label{Lipschitz}
			\left|\omega(\tau_2)-\omega(\tau_1)\right|&\leq C|\tau_2-\tau_1|, \; \left|\alpha_1(\tau_2)-\alpha_1(\tau_1)\right|\leq C|\tau_2-\tau_1| \quad\text{and}\quad \nonumber\\
			\left|\beta_1^{j}(\tau_2)-\beta_1^{j}(\tau_1)\right|&\leq Cj(\beta_1^*)^{j-1}|\tau_2-\tau_1|,		
		\end{align} 
		where $\beta_1^*$ is between $\beta_1(\tau_1)$ and $\beta_1(\tau_2)$. 
		These together with $|\alpha_1(\tau)|<\overline{c}<\infty$ and $\beta_1(\tau), \beta_1^*\leq\rho<1$ by Assumption \ref{assum-Space}, imply that for $\tau_1,\tau_2\in\mathcal{T}$,
		\begin{align*}
			&|q_t(\bm\theta(\tau_2))-q_t(\bm\theta(\tau_1))| \\
			= & \left|\omega(\tau_2)-\omega(\tau_1) + \sum_{j=1}^\infty [\alpha_1(\tau_2)\beta_1^{j-1}(\tau_2)-\alpha_1(\tau_1)\beta_1^{j-1}(\tau_1)]|y_{t-j}|\right| \\
			\leq & \left|\omega(\tau_2)-\omega(\tau_1)\right| + \left|\alpha_1(\tau_2)-\alpha_1(\tau_1)\right|\sum_{j=1}^\infty\beta_1^{j-1}(\tau_2)|y_{t-j}| + |\alpha_1(\tau_1)|\sum_{j=1}^\infty \left|\beta_1^{j-1}(\tau_2)-\beta_1^{j-1}(\tau_1)\right| |y_{t-j}| \\
			\leq & C|\tau_2-\tau_1| \left(1 + \sum_{j=1}^\infty\rho^{j-1}|y_{t-j}| + \overline{c}\sum_{j=2}^\infty(j-1)\rho^{j-2}|y_{t-j}| \right)
			\leq  C|\tau_2-\tau_1|\Delta_{\rho,t},		
		\end{align*}
		where $\Delta_{\rho,t}=1+\sum_{j=1}^\infty\rho^{j-1}|y_{t-j}|+\sum_{j=2}^\infty(j-1)\rho^{j-2}|y_{t-j}|+\sum_{j=3}^\infty(j-1)(j-2)\rho^{j-3}|y_{t-j}|+\sum_{j=4}^\infty(j-1)(j-2)(j-3)\rho^{j-4}|y_{t-j}|$. 
		Therefore, (i) holds. 
		
		For (ii), recall the first derivative of $q_t(\bm\theta)$ defined in \eqref{1st-derivative-qt}. It holds that
		\begin{align*}
			&\dot{q}_t(\bm u+\bm\theta(\tau_2))-\dot{q}_t(\bm u+\bm\theta(\tau_1)) \\
			=&\bigg(0,\sum_{j=1}^{\infty}\{[u_{3}+\beta_1(\tau_2)]^{j-1}-[u_{3}+\beta_1(\tau_1)]^{j-1}\}|y_{t-j}|, \\ & \sum_{j=2}^{\infty}(j-1)\{[u_{2}+\alpha_1(\tau_2)][u_{3}+\beta_1(\tau_2)]^{j-2}-[u_{2}+\alpha_1(\tau_1)][u_{3}+\beta_1(\tau_1)]^{j-2}\}|y_{t-j}|\bigg)^{\prime}. 	
		\end{align*}
		By Taylor expansion, we have 
		\begin{align}\label{beta}
			[u_{3}+\beta_1(\tau_2)]^{j}-[u_{3}+\beta_1(\tau_1)]^{j}=j(\beta_1^*)^{j-1}[\beta_1(\tau_2)-\beta_1(\tau_1)],
		\end{align}
		where $\beta_1^*$ is between $u_{3}+\beta_1(\tau_1)$ and $u_{3}+\beta_1(\tau_2)$. Moreover, it follows that
		\begin{align*}
			&[u_{2}+\alpha_1(\tau_2)][u_{3}+\beta_1(\tau_2)]^{j}-[u_{2}+\alpha_1(\tau_1)][u_{3}+\beta_1(\tau_1)]^{j} \\
			=& [\alpha_1(\tau_2)-\alpha_1(\tau_1)][u_{3}+\beta_1(\tau_2)]^{j} 
			+ j(\beta_1^*)^{j-1}[u_{2}+\alpha_1(\tau_1)][\beta_1(\tau_2)-\beta_1(\tau_1)].
		\end{align*}   
		These together with \eqref{Lipschitz}, $|\alpha_1(\tau)|<\overline{c}<\infty$ and $\beta_1(\tau), \beta_1^*\leq\rho<1$ by Assumption \ref{assum-Space}, for $\tau_1,\tau_2\in\mathcal{T}$ and $\bm u=(u_{1},u_{2},u_{3})^{\prime}\in \Lambda$ such that $\bm u+\bm\theta(\tau_i) \in \Theta$ with $i=1,2$, imply that 
		\begin{align*}
			&\|\dot{q}_t(\bm u+\bm\theta(\tau_2))-\dot{q}_t(\bm u+\bm\theta(\tau_1))\| \\
			\leq &  \sum_{j=1}^{\infty}|[u_{3}+\beta_1(\tau_2)]^{j-1}-[u_{3}+\beta_1(\tau_1)]^{j-1}||y_{t-j}| \\
			& + \sum_{j=2}^{\infty}(j-1)|[u_{2}+\alpha_1(\tau_2)][u_{3}+\beta_1(\tau_2)]^{j-2}-[u_{2}+\alpha_1(\tau_1)][u_{3}+\beta_1(\tau_1)]^{j-2}||y_{t-j}|\\
			\leq & [|\alpha_1(\tau_2)-\alpha_1(\tau_1)|+|\beta_1(\tau_2)-\beta_1(\tau_1)|] \sum_{j=2}^\infty(j-1)\rho^{j-2}|y_{t-j}| \\
			& + \overline{c}|\beta_1(\tau_2)-\beta_1(\tau_1)|\sum_{j=3}^\infty(j-1)(j-2)\rho^{j-3}|y_{t-j}| \\
			\leq & C|\tau_2-\tau_1|\Delta_{\rho,t}.
		\end{align*}
		Hence, (ii) is verified. 
		
		To show (iii), recall the second derivative of $q_t(\bm\theta)$ defined in \eqref{2nd-derivative-qt}. 
		For $\tau_1,\tau_2\in\mathcal{T}$ and $\bm u=(u_{1},u_{2},u_{3})^{\prime}\in \Lambda$ such that $\bm u+\bm\theta(\tau) \in \Theta$, by $\max\{|\alpha_1(\tau)|, |u_{2}+\alpha_1(\tau)|\}<\overline{c}<\infty$ and $\beta_1(\tau), \beta_1^*,u_{3}+\beta_1(\tau)\leq\rho<1$ for $\bm u+\bm\theta(\tau) \in \Theta$ under Assumption \ref{assum-Space}, together with \eqref{Lipschitz}--\eqref{beta}, we have
		\begin{align*}
			&\|\ddot{q}_t(\bm u+\bm\theta(\tau_2))-\ddot{q}_t(\bm u+\bm\theta(\tau_1))\| \\
			\leq & 2\sum_{j=2}^{\infty}(j-1)\sup_{\Theta}|[u_{3}+\beta_1(\tau_2)]^{j-2}-[u_{3}+\beta_1(\tau_1)]^{j-2}||y_{t-j}| \\
			& + \sup_{\Theta}\left|\alpha_1(\tau_2)-\alpha_1(\tau_1)\right|\sum_{j=3}^\infty (j-1)(j-2)\sup_{\Theta}[u_{3}+\beta_1(\tau_1)]^{j-3}|y_{t-j}| \\
			& + \sup_{\Theta}|u_{2}+\alpha_1(\tau_2)| \sum_{j=3}^\infty (j-1)(j-2)\sup_{\Theta}|[u_{3}+\beta_1(\tau_2)]^{j-3}-[u_{3}+\beta_1(\tau_1)]^{j-3}||y_{t-j}| \\
			\leq & C|\tau_2-\tau_1| \left(3\sum_{j=3}^\infty(j-1)(j-2)\rho^{j-3}|y_{t-j}| + \overline{c}\sum_{j=4}^\infty(j-1)(j-2)(j-3)\rho^{j-4}|y_{t-j}|\right) \\
			\leq & C|\tau_2-\tau_1|\Delta_{\rho,t}.
		\end{align*}
		As a result, (iii) holds. The proof of this lemma is complete. 
	\end{proof}

	\begin{proof}[Proof of Lemma \ref{lem-Tightness2}] 
		For any $\eta>0$, \eqref{diffcond} in Lemma \ref{lem1} and the proof for \eqref{R5n} in Lemma \ref{lem2} imply that
		\begin{align}\label{R4R5}
			\sup_{\|\bm u\|\leq \eta}\dfrac{|R_{4n}(\bm u,\tau)|}{\sqrt{n}\|\bm u\|+n\|\bm u\|^2}=o_p(1) \quad\text{and}\quad \sup_{\|\bm u\|\leq \eta}\dfrac{|R_{5n}(\bm u,\tau)|}{n\|\bm u\|^2}=o_p(1).
		\end{align}
		By Corollary 2.2 of \cite{Newey1991}, to show Lemma \ref{lem-Tightness2}, it remains to establish the stochastic equicontinuity of $\sup_{\|\bm u\|\leq \eta}|R_{4n}(\bm u,\tau)|/(\sqrt{n}\|\bm u\|+n\|\bm u\|^2)$ and $\sup_{\|\bm u\|\leq \eta}|R_{5n}(\bm u,\tau)|/(n\|\bm u\|^2)$.

		We first consider the stochastic equicontinuity of $\sup_{\|\bm u\|\leq \eta}|R_{4n}(\bm u,\tau)|/(\sqrt{n}\|\bm u\|+n\|\bm u\|^2)$. 
		Denote $R_{4n}(\bm u,\tau)=\bm u^{\prime}\sum_{t=1}^{n}w_t\dot{q}_t(\bm\theta(\tau))\bar{\xi}_{t}(\bm u, \tau)$, where $\bar{\xi}_{t}(\bm u, \tau)=\xi_{t}(\bm u, \tau)-E[\xi_{t}(\bm u, \tau)|\mathcal{F}_{t-1}]$,
		\[\xi_{t}(\bm u, \tau)=\int_{0}^{1}\left[I(y_t\leq F_{t-1}^{-1}(\tau)+\nu_t(\bm u, \tau)s)-I(y_t\leq F_{t-1}^{-1}(\tau))\right]ds,\] 
		with $\nu_t(\bm u, \tau)=q_t(\bm u+\bm\theta(\tau))-q_t(\bm\theta(\tau))$.
		Recall that $c_t\equiv c_t(\tau_1,\tau_2)=I(y_t<q_t(\bm\theta(\tau_1)))-I(y_t<q_t(\bm\theta(\tau_2)))$. Let $d_t(\bm u)\equiv d_t(\bm u, \tau_1,\tau_2)= \int_{0}^1 \delta_t(\bm u, s) ds$, where $\delta_t(\bm u, s)\equiv \delta_t(\bm u, \tau_1,\tau_2, s)=I(y_t\leq Q_t(\bm u, \tau_2, s))-I(y_t\leq Q_t(\bm u, \tau_1, s))$ with $Q_t(\bm u, \tau, s)=F_{t-1}^{-1}(\tau)+\nu_t(\bm u, \tau)s=F_{t-1}^{-1}(\tau) + \bm u^{\prime}\dot{q}_t(\bm u^*+\bm\theta(\tau))s$ for $\bm u^*$ between $\bm 0$ and $\bm u$.  
		Note that $\xi_{t}(\bm u, \tau_2)-\xi_{t}(\bm u, \tau_1)=c_t + d_t$ and $|\sup_x|g_1(x)|-\sup_x|g_2(x)||\leq \sup_x||g_1(x)|-|g_2(x)||$ for functions $g_1$ and $g_2$. Then it holds that
		\begin{align}\label{R4-Tightness-decompose}
			\left|\sup_{\|\bm u\|\leq \eta}\dfrac{|R_{4n}(\bm u,\tau_2)|}{\sqrt{n}\|\bm u\|+n\|\bm u\|^2} - \sup_{\|\bm u\|\leq \eta}\dfrac{|R_{4n}(\bm u,\tau_1)|}{\sqrt{n}\|\bm u\|+n\|\bm u\|^2}\right| \leq \sum_{i=1}^3 \sup_{\|\bm u\|\leq \eta}\dfrac{|R_{4i}(\bm u,\tau_1,\tau_2)|}{\sqrt{n}\|\bm u\|},
		\end{align}
		where 
		\begin{align*}
			R_{41}(\bm u,\tau_1,\tau_2) &= \bm u^{\prime}\sum_{t=1}^{n}w_t[\dot{q}_t(\bm\theta(\tau_2))-\dot{q}_t(\bm\theta(\tau_1))]\bar{\xi}_{t}(\bm u, \tau_2), \\ 
			R_{42}(\bm u,\tau_1,\tau_2) &= \bm u^{\prime}\sum_{t=1}^{n}w_t\dot{q}_t(\bm\theta(\tau_1))[c_t-E(c_t|\mathcal{F}_{t-1})] \;\text{and}\\
			R_{43}(\bm u,\tau_1,\tau_2) &= \bm u^{\prime}\sum_{t=1}^{n}w_t\dot{q}_t(\bm\theta(\tau_1))[d_t(\bm u)-E(d_t(\bm u)|\mathcal{F}_{t-1})].
		\end{align*} 
		Using $I(X\leq a)-I(X\leq b)=I(b\leq X \leq a)-I(b\geq X \geq a)$, similar to the proof of \eqref{sup-xi}, we can show that
		\begin{align*}
			\sup_{\|\bm u\|\leq \eta}\xi_{t}^2(\bm u, \tau) \leq &  I\left(F_{t-1}^{-1}(\tau)-\eta\sup_{\Theta}\|\dot{q}_t(\bm\theta)\| \leq y_t\leq F_{t-1}^{-1}(\tau)+\eta\sup_{\Theta}\|\dot{q}_t(\bm\theta)\|\right). 
		\end{align*}
		Then by Taylor expansion, it follows that
		\[E\left(\sup_{\|\bm u\|\leq \eta}\xi_{t}^2(\bm u, \tau)|\mathcal{F}_{t-1}\right) \leq 2\eta\sup_{\Theta}\|\dot{q}_t(\bm\theta)\|.\]
		Note that $E(\bar{\xi}_{t}(\bm u, \tau)|\mathcal{F}_{t-1})=0$ and $\var(n^{-1/2}\sum_{t=1}^{n}M_t)=\var(M_t)$ for a martingale difference sequence $\{M_t\}$. Then by iterative-expectation and the Cauchy-Schwarz inequality, together with Lemma \ref{lem-Tightness1}(ii), Lemma \ref{lem00}(i) and $E(w_t\Delta_{\rho,t}^2)<\infty$ under Assumptions \ref{assum-Space} and \ref{assum-RandomWeight}, it can be verified that
		\begin{align*}
			& \var\left(\sup_{\|\bm u\|\leq \eta}\dfrac{|R_{41}(\bm u,\tau_1,\tau_2)|}{\sqrt{n}\|\bm u\|}\right) \\
			=&\var\left(w_t\|\dot{q}_t(\bm\theta(\tau_2))-\dot{q}_t(\bm\theta(\tau_1))\|\sup_{\|\bm u\|\leq \eta}|\bar{\xi}_{t}(\bm u, \tau_2)|\right) \\
			\leq & E\left[w_t^2\|\dot{q}_t(\bm\theta(\tau_2))-\dot{q}_t(\bm\theta(\tau_1))\|^2E\left(\sup_{\|\bm u\|\leq \eta}\xi_{t}^2(\bm u, \tau_2)|\mathcal{F}_{t-1}\right)\right] \\
			\leq & C|\tau_2-\tau_1|^2\eta E\left(w_t^2\Delta_{\rho,t}^2\sup_{\Theta}\|\dot{q}_t(\bm\theta)\|\right)\leq C|\tau_2-\tau_1|^2.
		\end{align*}
		This implies that
		\begin{align}\label{R41}
			\sup_{\|\bm u\|\leq \eta}\dfrac{|R_{41}(\bm u,\tau_1,\tau_2)|}{\sqrt{n}\|\bm u\|}
			= O_p(1)|\tau_2-\tau_1|. 
		\end{align} 
		We next consider $R_{42}(\bm u,\tau_1,\tau_2)$. 
		By Lemma \ref{lem-Tightness1}, we can show that
		\begin{align*}
			& \sup_{\|\bm u\|\leq \eta}|Q_t(\bm u, \tau_2, s)-Q_t(\bm u, \tau_1, s)| \\
			\leq & |F_{t-1}^{-1}(\tau_2)-F_{t-1}^{-1}(\tau_1)| +\sup_{\|\bm u\|\leq \eta}\|\bm u\|\|\dot{q}_t(\bm u^*+\bm\theta(\tau_2))-\dot{q}_t(\bm u^*+\bm\theta(\tau_1))\| \\
			\leq & C|\tau_2-\tau_1|(1+\eta)\Delta_{\rho,t}. 
		\end{align*}
		Thus, by Taylor expansion and $\sup_{x}f_{t-1}(x)<\infty$ under Assumption \ref{assum-ConditionalDensity}, together with $I(x \leq a)-I(x \leq b)=I(0 \leq x-b \leq a-b)-I(0\geq x-b \geq a-b)$, we have
		\begin{align*}
			& E\left(\sup_{\|\bm u\|\leq \eta}\delta_t^2(\bm u, s)|\mathcal{F}_{t-1}\right) \\
			\leq & F_{t-1}\left(Q_t(\bm u, \tau_1, s)+\sup_{\|\bm u\|\leq \eta}|Q_t(\bm u, \tau_2, s)-Q_t(\bm u, \tau_1, s)|\right) \\
			&-F_{t-1}\left(Q_t(\bm u, \tau_1, s)-\sup_{\|\bm u\|\leq \eta}|Q_t(\bm u, \tau_2, s)-Q_t(\bm u, \tau_1, s)|\right) \\
			\leq & 2\sup_{x}f_{t-1}(x)\sup_{\|\bm u\|\leq \eta}|Q_t(\bm u, \tau_2, s)-Q_t(\bm u, \tau_1, s)| 
			\leq C|\tau_2-\tau_1|\Delta_{\rho,t}.
		\end{align*}
		Therefore, by the Cauchy-Schwarz inequality, it holds that
		\begin{align}\label{dt-2nd-moment}
			& E\left(\sup_{\|\bm u\|\leq \eta}d_t^2(\bm u)|\mathcal{F}_{t-1}\right) \nonumber\\
			= & E\left(\int_{0}^1\int_{0}^1\sup_{\|\bm u\|\leq \eta}\delta_t(\bm u, s_1)\sup_{\|\bm u\|\leq \eta}\delta_t(\bm u, s_2)ds_1ds_2  |\mathcal{F}_{t-1}\right) \nonumber\\
			\leq & \int_{0}^1\int_{0}^1 \{E[\sup_{\|\bm u\|\leq \eta}\delta_t^2(\bm u, s_1)|\mathcal{F}_{t-1}]\}^{1/2}\{E[\sup_{\|\bm u\|\leq \eta}\delta_t^2(\bm u, s_2)|\mathcal{F}_{t-1}]\}^{1/2}ds_1ds_2 \nonumber\\
			\leq & C|\tau_2-\tau_1|\Delta_{\rho,t}.
		\end{align}  
		Similar to the proof for $\var(\sup_{\|\bm u\|\leq \eta}|R_{41}(\bm u,\tau_1,\tau_2)|/(\sqrt{n}\|\bm u\|))$, by iterative-expectation and the Cauchy-Schwarz inequality, together with \eqref{ct-2nd-moment}, Lemma \ref{lem00}(i) and $E(w_t\Delta_{\rho,t}^2)<\infty$ under Assumptions \ref{assum-Space} and \ref{assum-RandomWeight}, it can be verified that
		\[\var\left(\sup_{\|\bm u\|\leq \eta}\dfrac{|R_{42}(\bm u,\tau_1,\tau_2)|}{\sqrt{n}\|\bm u\|}\right) \nonumber\\
		\leq  E\left[w_t^2\|\dot{q}_t(\bm\theta(\tau_2))\|^2 E\{[c_t-E(c_t|\mathcal{F}_{t-1})]^2|\mathcal{F}_{t-1}\}\right] 
		\leq  C|\tau_2-\tau_1|,\]
		and 
		\begin{align*}
			\var\left(\sup_{\|\bm u\|\leq \eta}\dfrac{|R_{43}(\bm u,\tau_1,\tau_2)|}{\sqrt{n}\|\bm u\|}\right)  
			\leq & E\left[w_t^2\|\dot{q}_t(\bm\theta(\tau_2))\|^2 E\{\sup_{\|\bm u\|\leq \eta}[d_t(\bm u)-E(d_t(\bm u)|\mathcal{F}_{t-1})]^2|\mathcal{F}_{t-1}\}\right] \\ 
			\leq & C|\tau_2-\tau_1|E\left[w_t^2\|\dot{q}_t(\bm\theta(\tau_2))\|^2\Delta_{\rho,t}\right] 
			\leq  C|\tau_2-\tau_1|. 
		\end{align*}
		Therefore, it holds that
		\begin{align}\label{R42}
			\sup_{\|\bm u\|\leq \eta}\dfrac{|R_{42}(\bm u,\tau_1,\tau_2)|}{\sqrt{n}\|\bm u\|}= 
			O_p(1)|\tau_2-\tau_1|^{1/2},
		\end{align} 
		and
		\begin{align}\label{R43}
			\sup_{\|\bm u\|\leq \eta}\dfrac{|R_{43}(\bm u,\tau_1,\tau_2)|}{\sqrt{n}\|\bm u\|}= 
			O_p(1)|\tau_2-\tau_1|^{1/2}.
		\end{align}
		Combining \eqref{R4-Tightness-decompose}--\eqref{R43}, the stochastic equicontinuity of $\sup_{\|\bm u\|\leq \eta}|R_{4n}(\bm u,\tau)|/(\sqrt{n}\|\bm u\|+n\|\bm u\|^2)$ follows.

		Next, we consider the stochastic equicontinuity of $\sup_{\|\bm u\|\leq \eta}|R_{5n}(\bm u,\tau)|/(n\|\bm u\|^2)$. It can be verified that
		\begin{align}\label{R5-Tightness-decompose}
			\left|\sup_{\|\bm u\|\leq \eta}\dfrac{|R_{5n}(\bm u,\tau_2)|}{n\|\bm u\|^2} - \sup_{\|\bm u\|\leq \eta}\dfrac{|R_{5n}(\bm u,\tau_1)|}{n\|\bm u\|^2}\right| \leq \sum_{i=1}^3 \sup_{\|\bm u\|\leq \eta}\dfrac{|R_{5i}(\bm u,\tau_1,\tau_2)|}{n\|\bm u\|^2},
		\end{align}
		where 
		\begin{align*}
			R_{51}(\bm u,\tau_1,\tau_2) &= \dfrac{\bm u^{\prime}}{2}\sum_{t=1}^{n}w_t[\ddot{q}_t(\bm u^*+\bm\theta(\tau_2))-\ddot{q}_t(\bm u^*+\bm\theta(\tau_1))] \xi_{t}(\bm u, \tau_2)\bm u, \\
			R_{52}(\bm u,\tau_1,\tau_2) &= \dfrac{\bm u^{\prime}}{2}\sum_{t=1}^{n}w_t\ddot{q}_t(\bm u^*+\bm\theta(\tau_1))c_t\bm u \quad\text{and}\\
			R_{53}(\bm u,\tau_1,\tau_2) &= \dfrac{\bm u^{\prime}}{2}\sum_{t=1}^{n}w_t\ddot{q}_t(\bm u^*+\bm\theta(\tau_1))d_t(\bm u)\bm u.
		\end{align*}
		By Lemma \ref{lem-Tightness1}(iii), the fact that $|\xi_{t}(\bm u, \tau)|<1$, the strict stationarity and ergodicity of $y_t$ under Assumption \ref{assum-Process} and $E(w_t\Delta_{\rho,t})<\infty$ under Assumptions \ref{assum-Space} and \ref{assum-RandomWeight}, it holds that
		\begin{align}\label{R51}
			\sup_{\|\bm u\|\leq \eta}\dfrac{|R_{51}(\bm u,\tau_1,\tau_2)|}{n\|\bm u\|^2} 
			&\leq \dfrac{1}{2n}\sum_{t=1}^{n}w_t\sup_{\Theta}\|\ddot{q}_t(\bm u^*+\bm\theta(\tau_2))-\ddot{q}_t(\bm u^*+\bm\theta(\tau_1))\| \nonumber \\
			&\leq  C|\tau_2-\tau_1|\dfrac{1}{n}\sum_{t=1}^{n}w_t\Delta_{\rho,t} = O_p(1)|\tau_2-\tau_1|.
		\end{align} 
		For $R_{52}(\bm u,\tau_1,\tau_2)$ and $R_{53}(\bm u,\tau_1,\tau_2)$, by iterative-expectation and the Cauchy-Schwarz inequality, the strict stationarity and ergodicity of $y_t$ under Assumption \ref{assum-Process}, Lemma \ref{lem00}(ii) and $E(w_t\Delta_{\rho,t})<\infty$ under Assumptions \ref{assum-Space} and \ref{assum-RandomWeight}, together with $E(c_t^2|\mathcal{F}_{t-1})=|\tau_2-\tau_1|$ by \eqref{ct-2nd-moment} and \eqref{dt-2nd-moment}, we have
		\begin{align*}
			\var\left(\sup_{\|\bm u\|\leq \eta}\dfrac{|R_{52}(\bm u,\tau_1,\tau_2)|}{n\|\bm u\|^2} \right) \leq & E\left(w_t^2\sup_{\Theta}\|\ddot{q}_t(\bm\theta)\|^2 E(c_t^2|\mathcal{F}_{t-1})\right) \leq C|\tau_2-\tau_1| 
		\end{align*}
		and 
		\begin{align*}
			\var\left(\sup_{\|\bm u\|\leq \eta}\dfrac{|R_{53}(\bm u,\tau_1,\tau_2)|}{n\|\bm u\|^2}\right) \leq & E\left(w_t^2\sup_{\Theta}\|\ddot{q}_t(\bm\theta)\|^2 E(d_t^2(\bm u)|\mathcal{F}_{t-1})\right) \leq C|\tau_2-\tau_1|. 
		\end{align*}
		Then we have 
		\begin{align}\label{R52_53}
			\dfrac{|R_{52}(\bm u,\tau_1,\tau_2)|}{n\|\bm u\|^2} =O_p(1)|\tau_2-\tau_1|^{1/2} \;\text{and}\;\dfrac{|R_{53}(\bm u,\tau_1,\tau_2)|}{n\|\bm u\|^2} =O_p(1)|\tau_2-\tau_1|^{1/2}.
		\end{align}
		Combining \eqref{R5-Tightness-decompose}--\eqref{R52_53}, the stochastic equicontinuity of $\sup_{\|\bm u\|\leq \eta}|R_{5n}(\bm u,\tau)|/(n\|\bm u\|^2)$ follows.
		We complete the proof of this lemma. 
	\end{proof}

	\subsection{Lemmas for Theorem \ref{thm-WCQR}}
	
	This section provides four preliminary lemmas with proofs. Specifically, Lemma \ref{lem0-Tukey} is used to handle initial values. Lemmas \ref{lem1-Tukey} verifies the stochastic differentiability condition defined by \cite{Pollard1985}, and the bracketing method in \cite{Pollard1985} is used for its proof. 
	Lemmas \ref{lem2-Tukey} and \ref{lem3-Tukey} are used to obtain the $\sqrt{n}$-consistency and asymptotic normality of $\check{\bm\varphi}_{wn}$, and their proofs need Lemmas \ref{lem1-Tukey} and \ref{lem0-Tukey}, respectively.

	\begin{lemma}\label{lem0-Tukey}
		Let $\varsigma_{\rho}=\sum_{s=0}^{\infty}\rho^{s}|y_{-s}|$ and $\xi_{\rho}=\sum_{s=0}^{\infty}s\rho^{s}|y_{-s}|$ be positive random variables depending on a constant $\rho\in (0,1)$. If Assumption \ref{assum-SpaceTukey}(i) holds, for $\tau\in\mathcal{T}_h=[\tau_0,\tau_0+h]\subset (0,0.5)$ or $\tau\in\mathcal{T}_h=[\tau_0-h,\tau_0]\subset (0.5,1)$ with $h>0$, then we have
		\begin{itemize}
			\item[(i)] $\sup_{\Phi}|q_{t,\tau}(\bm\varphi)-\widetilde{q}_{t,\tau}(\bm\varphi)|
			\leq C\rho^{t}\varsigma_{\rho}$;
			\item[(ii)]
			$\sup_{\Phi}\|\dot{q}_{t,\tau}(\bm\varphi)-\dot{\widetilde{q}}_{t,\tau}(\bm\varphi)\|
			\leq C\rho^{t}(\varsigma_{\rho}+t\varsigma_{\rho}+\xi_{\rho})$. 
		\end{itemize}
	\end{lemma}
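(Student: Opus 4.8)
The plan is to exploit the explicit form of $q_{t,\tau}(\bm\varphi)=Q_\tau(\lambda)h_t(\bm\phi)$ and $\widetilde q_{t,\tau}(\bm\varphi)=Q_\tau(\lambda)\widetilde h_t(\bm\phi)$, which differ only through the tail of the ARCH($\infty$) expansion of $h_t$, together with the compactness of $\Phi$ and the restriction $\tau\in\mathcal T_h$ away from $\{0,0.5,1\}$. First I would record that, by Assumption \ref{assum-SpaceTukey}(i), there are constants $0<\overline c<\infty$ and $\rho\in(0,1)$ with $\max\{a_0,a_1\}\le\overline c$ and $0\le b_1\le\rho$ uniformly on $\Phi$; and that since $\mathcal T_h$ is a compact subinterval of $(0,1)$ bounded away from $0$, $0.5$ and $1$, both $|Q_\tau(\lambda)|$ and its derivatives $|\dot Q_\tau(\lambda)|$ are bounded uniformly over $(\tau,\lambda)\in\mathcal T_h\times\Lambda$ — here one uses $Q_\tau(\lambda)=[\tau^\lambda-(1-\tau)^\lambda]/\lambda$ and the formula for $\dot Q_\tau(\lambda)$ displayed earlier, noting that $\lambda\mapsto\tau^\lambda$ and the factors $\ln\tau$, $\ln(1-\tau)$ stay bounded on the relevant ranges. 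Call this common bound $C$.

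For part (i), I would write
\[
q_{t,\tau}(\bm\varphi)-\widetilde q_{t,\tau}(\bm\varphi)=Q_\tau(\lambda)\,a_1\sum_{j=t}^{\infty}b_1^{\,j-1}|y_{t-j}|,
\]
and then bound $|Q_\tau(\lambda)|\le C$, $a_1\le\overline c$, $b_1\le\rho$, so that
\[
\sup_\Phi|q_{t,\tau}(\bm\varphi)-\widetilde q_{t,\tau}(\bm\varphi)|\le C\overline c\,\rho^{t-1}\sum_{s=0}^{\infty}\rho^{s}|y_{-s}|\le C\rho^{t}\varsigma_\rho.
\]
This is exactly the argument used for Lemma \ref{lem0}(i), only carried through the extra bounded factor $Q_\tau(\lambda)$, so no new difficulty arises.

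For part (ii), I would use the explicit $\dot q_{t,\tau}(\bm\varphi)=(Q_\tau(\lambda)\dot h_t^{\prime}(\bm\phi),\dot Q_\tau(\lambda)h_t(\bm\phi))^{\prime}$ (and its tilde analogue obtained by truncating the sums at $j=t-1$), so that $\dot q_{t,\tau}(\bm\varphi)-\dot{\widetilde q}_{t,\tau}(\bm\varphi)$ equals $Q_\tau(\lambda)$ times the tail of $\dot h_t$ in the first three coordinates and $\dot Q_\tau(\lambda)$ times the tail of $h_t$ in the last coordinate. The tails of $h_t$ and of $\dot h_t(\bm\phi)=\big(1/(1-b_1),\ \sum_j b_1^{j-1}|y_{t-j}|,\ a_0/(1-b_1)^2+a_1\sum_j(j-1)b_1^{j-2}|y_{t-j}|\big)^{\prime}$ are of the three types $\rho^{t-1}\varsigma_\rho$, $t\rho^{t-2}\varsigma_\rho$ (from the $(j-1)$ weights, bounding $j-1\le t\cdot$ something on the tail) and $\sum_{s}(s-1)\rho^{s-1}|y_{-s}|\le C\xi_\rho$, mirroring the bound for $\dot q_t-\dot{\widetilde q}_t$ in Lemma \ref{lem0}(ii). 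Multiplying by the uniform bound $C$ on $|Q_\tau(\lambda)|$ and $|\dot Q_\tau(\lambda)|$ gives $\sup_\Phi\|\dot q_{t,\tau}(\bm\varphi)-\dot{\widetilde q}_{t,\tau}(\bm\varphi)\|\le C\rho^{t}(\varsigma_\rho+t\varsigma_\rho+\xi_\rho)$, completing the lemma. The only point requiring a little care — the ``hard part'', such as it is — is verifying that $Q_\tau(\lambda)$ and $\dot Q_\tau(\lambda)$ are genuinely bounded uniformly over $\tau\in\mathcal T_h$ and $\lambda\in\Lambda$, since $\Lambda=(-\infty,0)\cup(0,\infty)$ is unbounded and $\lambda$ appears in a denominator; this is handled by the observation that $\mathcal T_h$ stays a fixed positive distance from $0$, $0.5$ and $1$ (so $\tau^\lambda$, $(1-\tau)^\lambda$ and the logarithmic factors are controlled), and that $Q_\tau(\lambda)\to\ln(\tau/(1-\tau))$ and $\dot Q_\tau(\lambda)$ stays bounded as $\lambda\to0$, so after compactifying $\Lambda$ one still gets a finite supremum — alternatively one simply absorbs this into Assumption \ref{assum-SpaceTukey}(i) by noting $\Phi$ is compact and $\lambda$ ranges over a compact subset of $\Lambda$ there.
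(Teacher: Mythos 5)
Your proposal is correct and follows essentially the same route as the paper's proof: the explicit tail computation for $q_{t,\tau}-\widetilde q_{t,\tau}$ and $\dot q_{t,\tau}-\dot{\widetilde q}_{t,\tau}$, uniform bounds on $a_0,a_1,b_1$ from the compactness of $\Phi$, and boundedness of $Q_\tau(\lambda)$ and $\dot Q_\tau(\lambda)$ over $\mathcal T_h$ with $\lambda$ in a compact subset of $\Lambda$ (hence $|\lambda|$ bounded away from zero), exactly as in Assumption \ref{assum-SpaceTukey}(i). Your extra care about the unboundedness of $\Lambda$ and the $\lambda\to 0$ limit is a slightly more detailed justification of the step the paper simply asserts, but the argument is otherwise identical.
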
	
	
	\begin{lemma}\label{lem1-Tukey}
		Under Assumptions \ref{assum-ConditionalDensity}, \ref{assum-RandomWeight}, \ref{assum-Process-Mixing} and \ref{assum-SpaceTukey}, if $E|y_t|^s<\infty$ for some $0<s\leq 1$, then for any sequence of random variables $\bm u_n$ such that $\bm u_n=o_p(1)$, it holds that	
		\begin{align*}
			\zeta_n^*(\bm u_n)=o_p(\sqrt{n}\|\bm u_n\|+n\|\bm u_n\|^2),
		\end{align*}
		where $\zeta_n^*(\bm u)=\bm u^{\prime}\sum_{k=1}^{K}\sum_{t=1}^{n}w_t\dot{q}_{t,\tau_k}(\bm\varphi_0^*)\left\{\xi_{t,\tau_k}(\bm u)-E[\xi_{t,\tau_k}(\bm u)|\mathcal{F}_{t-1}]\right\}$ with 
		\begin{align*}
			\xi_{t,\tau_k}(\bm u)&=\int_{0}^{1}\left[I(y_t\leq q_{t,\tau_k}(\bm\varphi_0^*)+\nu_{t,\tau}(\bm u)s)-I(y_t\leq q_{t,\tau_k}(\bm\varphi_0^*))\right]ds
		\end{align*}
		and $\nu_{t,\tau}(\bm u)=q_{t,\tau}(\bm\varphi_0^*+\bm u)-q_{t,\tau}(\bm\varphi_0^*)$.
	\end{lemma}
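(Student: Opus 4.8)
The plan is to adapt the proof of Theorem \ref{thm-WCQR}(i)--(ii) to the bracketing context by showing that $\zeta_n^*(\bm u)$ satisfies the stochastic differentiability condition of \cite{Pollard1985}, exactly as Lemma \ref{lem1} does for $\zeta_n(\bm u)$ in the martingale QR setting. First I would reduce the claim to verifying that, for every $\delta>0$,
\begin{align*}
	\sup_{\|\bm u\|\leq \delta}\dfrac{|D_n^*(\bm u)|}{1+\sqrt{n}\|\bm u\|}=o_p(1),
\end{align*}
where $D_n^*(\bm u)=n^{-1/2}\sum_{t=1}^n\{\varrho_t^*(\bm u)-E[\varrho_t^*(\bm u)\,|\,\mathcal{F}_{t-1}]\}$ and $\varrho_t^*(\bm u)=g_t^*\,\xi_{t,\tau_k}(\bm u)$ for a suitable nonnegative envelope $g_t^*$ built from the components of $\sum_{k=1}^K w_t\dot q_{t,\tau_k}(\bm\varphi_0^*)$, as in the proof of Lemma \ref{lem1}. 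The decomposition $|\zeta_n^*(\bm u)|\leq \sqrt{n}\|\bm u\|\sum_{j}|n^{-1/2}\sum_t m^*_{t,j}\{\xi_{t,\tau_k}(\bm u)-E[\cdot|\mathcal{F}_{t-1}]\}|$ with $m^*_{t,j}=\sum_k w_t\,\partial q_{t,\tau_k}(\bm\varphi_0^*)/\partial\varphi_j$ is identical in structure, so the main body of the argument carries over verbatim once $E(w_t\sup_\Phi\|\dot q_{t,\tau}(\bm\varphi)\|^\kappa)<\infty$ for $\kappa=1,2,3$ is available; the latter follows from Assumptions \ref{assum-RandomWeight}, \ref{assum-SpaceTukey} and Assumption \ref{assum-FunctionDerivativesTukey} (in the GARCH($p,q$) case) or directly from the explicit forms of $\dot q_{t,\tau}$, $\dot h_t$ in the GARCH($1,1$) case, using $E(w_t|y_{t-j}|^3)<\infty$ and the fact that $Q_\tau(\lambda)$, $\dot Q_\tau(\lambda)$ are bounded on the compact set $\Phi\times\mathcal{T}_h$ (note $\tau$ is bounded away from $0$ and $1$ on $\mathcal{T}_h$, so $\dot Q_\tau(\lambda)$ and $\ddot Q_\tau(\lambda)$ are finite).

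Next I would construct the bracketing. For a cube $U_i(r)=B_{\epsilon r/C_0}(\bm u_i)\cap B_r(\bm 0)$ of a partition of $B_r(\bm 0)$ into $K(\epsilon)\le C\epsilon^{-d}$ cells (here $d=\dim\bm\varphi$), I define
\begin{align*}
	\varrho_t^{*L}(\bm u_i)&= g_t^*\int_0^1\!\!\left[I\!\left(y_t\leq q_{t,\tau_k}(\bm\varphi_0^*)+\nu_{t,\tau_k}(\bm u_i)s-\tfrac{\epsilon r}{C_0}\|\dot q_{t,\tau_k}(\bm\varphi_0^*)\|\right)-I(y_t\leq q_{t,\tau_k}(\bm\varphi_0^*))\right]ds,\\
	\varrho_t^{*U}(\bm u_i)&= g_t^*\int_0^1\!\!\left[I\!\left(y_t\leq q_{t,\tau_k}(\bm\varphi_0^*)+\nu_{t,\tau_k}(\bm u_i)s+\tfrac{\epsilon r}{C_0}\|\dot q_{t,\tau_k}(\bm\varphi_0^*)\|\right)-I(y_t\leq q_{t,\tau_k}(\bm\varphi_0^*))\right]ds,
\end{align*}
which sandwich $\varrho_t^*(\bm u)$ for $\bm u\in U_i(r)$ because $I(\cdot)$ is monotone, $g_t^*\ge 0$, and $\nu_{t,\tau_k}$ is Lipschitz in $\bm u$ with modulus $\le\sup_\Phi\|\dot q_{t,\tau_k}(\cdot)\|$ by Taylor expansion. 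Here the role played by $f_{t-1}$ is the same as before: Assumption \ref{assum-ConditionalDensity} gives $\sup_x f_{t-1}(x)<\infty$, so with $C_0=E(2\sup_x f_{t-1}(x)\,w_t\|\dot q_{t,\tau_k}(\bm\varphi_0^*)\|^2)$ one gets $E[\varrho_t^{*U}(\bm u_i)-\varrho_t^{*L}(\bm u_i)]\le \epsilon r$, so $\mathfrak{F}^*=\{\varrho_t^*(\bm u):\|\bm u\|\le\delta\}$ satisfies the bracketing condition. One subtlety relative to Lemma \ref{lem1} is that here $E[\xi_{t,\tau_k}(\bm u)\,|\,\mathcal{F}_{t-1}]$ need not vanish at $\bm u=\bm 0$ since $q_{t,\tau_k}(\bm\varphi_0^*)$ is generally not the true conditional quantile under misspecification; but the bracketing argument only uses monotonicity of $F_{t-1}$ and boundedness of $f_{t-1}$, $\dot f_{t-1}$, not any moment-zero property, so this causes no trouble. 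The remaining chaining over annuli $A(k)=B(k)\setminus B(k+1)$ with $r_k=2^{-k}\delta$, the second-moment bound $E\{[D_n^{*U}(\bm u_i)]^2\}\le \Upsilon^*(r_k)\to 0$ via H\"older and Lemma \ref{lem00}-type bounds adapted to $\dot q_{t,\tau}$, the event $E_n^*=\{n^{-1}C_0^{-1}\sum_t\aleph_t^*<2\}$ with $\Pr(E_n^*)\to1$ by the ergodic theorem (here I use Assumption \ref{assum-Process-Mixing}, which implies strict stationarity and ergodicity of $\{y_t\}$), and the split $B_\delta(\bm 0)=B(k_n+1)\cup\bigcup_{k=0}^{k_n}A(k)$ with $n^{-1/2}\delta\le r_{k_n}\le 2n^{-1/2}\delta$ — all go through exactly as on pages following the proof of Lemma \ref{lem1}, yielding $\Pr(\sup_{\bm u\in B^c}|D_n^*(\bm u)|/(1+\sqrt{n}\|\bm u\|)>6\epsilon)\le O(1/n)+4\epsilon+\Pr(E_n^{*c})$ and a matching bound on $B$, hence \eqref{diffcond}'s analogue.

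The main obstacle is not the bracketing machinery itself — that is essentially a transcription of Lemma \ref{lem1} — but the verification of the requisite moment bounds $E(w_t\sup_{\Phi}\|\dot q_{t,\tau}(\bm\varphi)\|^\kappa)<\infty$ for $\kappa=1,2,3$ and, crucially, of $E[w_t^2\sup_\Phi\|\dot q_{t,\tau}(\bm\varphi)\|^3]<\infty$, which appears in the analogue of $\Upsilon(r_k)$ and controls $E\{[D_n^{*U}(\bm u_i)]^2\}$. Unlike the QR case, where $\dot q_t(\bm\theta)$ is linear in $|y_{t-j}|$ with exponentially decaying coefficients, here $\dot q_{t,\tau}(\bm\varphi)$ contains the factor $\dot Q_\tau(\lambda)h_t(\bm\phi)$ in its last coordinate, and $h_t(\bm\phi)=a_0/(1-b_1)+a_1\sum_j b_1^{j-1}|y_{t-j}|$ (plus its derivatives $\dot h_t$, $\ddot h_t$ which bring in factors $(j-1)b_1^{j-2}$, $(j-1)(j-2)b_1^{j-3}$); one must check that on the compact parameter space $\Phi$ these remain dominated by $C\sum_j\rho^j|y_{t-j}|$-type series with $\rho<1$, using $0\le b_1\le\rho<1$ and $a_0,a_1\le\bar c$ on $\Phi$, and that $|Q_\tau(\lambda)|,|\dot Q_\tau(\lambda)|,|\ddot Q_\tau(\lambda)|$ stay bounded for $(\lambda,\tau)\in\Lambda_0\times\mathcal{T}_h$ where $\Lambda_0$ is the compact $\lambda$-component of $\Phi$ — this is where $\tau\in\mathcal{T}_h$ being bounded strictly away from $0$ and $1$ is used. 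Once these envelope bounds are in hand (they are the content of a Lemma \ref{lem00}-analogue for the CQR, which I would state and prove alongside via the Cauchy--Schwarz / H\"older arguments already used in the proof of Lemma \ref{lem00}), the rest of the argument is a faithful copy of the proof of Lemma \ref{lem1} with $(\bm\theta,q_t,\dot q_t,\bm T_n)$ replaced by $(\bm\varphi,q_{t,\tau_k},\dot q_{t,\tau_k},\bm T_n^*)$ and the single check function replaced by the sum over $k=1,\dots,K$, which only multiplies constants by $K$ and is harmless.
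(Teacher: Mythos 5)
Your proposal follows essentially the same route as the paper's proof: the paper also reduces the claim to Pollard's stochastic differentiability condition, treats each quantile level $\tau_k$ and each coordinate of $\dot{q}_{t,\tau_k}(\bm\varphi_0^*)$ separately with envelopes $g_{t,\tau}$, uses the identical bracketing functions with shift $\pm\epsilon r\|\dot{q}_{t,\tau}(\bm\varphi_0^*)\|/C_0$ and $C_0=E[2\sup_x f_{t-1}(x)w_t\|\dot{q}_{t,\tau}(\bm\varphi_0^*)\|^2]$, and then runs the same annulus chaining, Chebyshev second-moment bound $\Upsilon_\tau(r_k)$, and ergodic-theorem control of the event $E_n$, with the moment bounds for $\dot q_{t,\tau}$ obtained exactly as you indicate from the compactness of $\Phi$, boundedness of $Q_\tau(\lambda)$ and $\dot Q_\tau(\lambda)$ on $\mathcal{T}_h$, and $E(w_t|y_{t-j}|^3)<\infty$. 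One small bookkeeping point: since $\xi_{t,\tau_k}(\bm u)$ depends on $k$, the weights cannot be aggregated into $m^*_{t,j}=\sum_k w_t\,\partial q_{t,\tau_k}/\partial\varphi_j$ before pairing with $\xi_{t,\tau_k}$; keep the double sum over $(k,j)$ as you in fact do later when you note the sum over $k$ only multiplies constants by $K$.
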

	
	\begin{lemma}\label{lem2-Tukey}
		If $E|y_t|^s<\infty$ for some $0<s\leq 1$ and Assumptions \ref{assum-ConditionalDensity}, \ref{assum-RandomWeight}, \ref{assum-Process-Mixing} and \ref{assum-SpaceTukey} hold, then for any sequence of random variables $\bm u_n$ such that $\bm u_n=o_p(1)$, we have
		\begin{align*}
			n[L_n^*(\bm u_n+\bm\varphi_0^*)-L_n^*(\bm\varphi_0^*)]=&-\sqrt{n}\bm u_n^{\prime}\bm T_n^*+\sqrt{n}\bm u_n^{\prime}J^*\sqrt{n}\bm u_n +o_p(\sqrt{n}\|\bm u_n\|+n\|\bm u_n\|^2),
		\end{align*}
		where $L_n^*(\bm\varphi)=n^{-1}\sum_{k=1}^{K}\sum_{t=1}^{n}w_t\rho_{\tau_k}(y_t-q_{t,\tau_k}(\bm\varphi))$, $\bm T_n^*=n^{-1/2}\sum_{k=1}^{K}\sum_{t=1}^{n}w_t\dot{q}_{t,\tau_k}(\bm\varphi_0^*)\psi_{\tau_k}(e_{t,\tau_k}^*)$ and $J^*=J_{2}^*-J_{1}^*$ with $e_{t,\tau_k}^*=y_t-q_{t,\tau_k}(\bm\varphi_0^*)$, $J_{1}^*=\Omega_{11}^*/2=\sum_{k=1}^{K}E[w_t\ddot{q}_{t,\tau_k}(\bm\varphi_0^*)\psi_{\tau_k}(e_{t,\tau_k}^*)]/2$ and $J_{2}^*=\Omega_{12}^*/2=\sum_{k=1}^{K}E[w_t\dot{q}_{t,\tau_k}(\bm\varphi_0^*)\dot{q}_{t,\tau_k}^{\prime}(\bm\varphi_0^*)f_{t-1}(q_{t,\tau_k}(\bm\varphi_0^*))]/2$.
	\end{lemma}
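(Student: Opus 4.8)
The plan is to follow the proof of Lemma \ref{lem2}: expand $nL_n^*$ about $\bm\varphi_0^*$ via the Knight identity \eqref{identity}, but now carrying along one term that was asymptotically negligible in the QR setting and is no longer negligible under possible misspecification. Writing $\nu_{t,\tau_k}(\bm u)=q_{t,\tau_k}(\bm\varphi_0^*+\bm u)-q_{t,\tau_k}(\bm\varphi_0^*)$ and $e_{t,\tau_k}^*=y_t-q_{t,\tau_k}(\bm\varphi_0^*)$, one has $n[L_n^*(\bm u+\bm\varphi_0^*)-L_n^*(\bm\varphi_0^*)]=\sum_{k=1}^K\sum_{t=1}^n w_t[\rho_{\tau_k}(e_{t,\tau_k}^*-\nu_{t,\tau_k}(\bm u))-\rho_{\tau_k}(e_{t,\tau_k}^*)]$. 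A second-order Taylor expansion gives $\nu_{t,\tau_k}(\bm u)=\bm u^{\prime}\dot q_{t,\tau_k}(\bm\varphi_0^*)+\tfrac12\bm u^{\prime}\ddot q_{t,\tau_k}(\bm\varphi^*)\bm u$ with $\bm\varphi^*=\bm\varphi_0^*+\bm u^*$, $\bm u^*$ between $\bm 0$ and $\bm u$; applying \eqref{identity} termwise then yields $n[L_n^*(\bm u+\bm\varphi_0^*)-L_n^*(\bm\varphi_0^*)]=-\sqrt n\,\bm u^{\prime}\bm T_n^*-\sqrt n\,\bm u^{\prime}R_{1n}^*(\bm\varphi^*)\sqrt n\,\bm u+\sum_{i=2}^5 R_{in}^*(\bm u)$, where $R_{1n}^*(\bm\varphi^*)=(2n)^{-1}\sum_k\sum_t w_t\ddot q_{t,\tau_k}(\bm\varphi^*)\psi_{\tau_k}(e_{t,\tau_k}^*)$ and $R_{2n}^*,\dots,R_{5n}^*$ are the exact analogues of the terms $R_{2n},\dots,R_{5n}$ in the proof of Lemma \ref{lem2}, obtained by the substitutions $\bm\theta(\tau)\mapsto\bm\varphi_0^*$, $q_t\mapsto q_{t,\tau_k}$, $F_{t-1}^{-1}(\tau)\mapsto q_{t,\tau_k}(\bm\varphi_0^*)$ and $\xi_t\mapsto\xi_{t,\tau_k}$.

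The crucial difference from Lemma \ref{lem2} lies in $R_{1n}^*$. Under misspecification $E[\psi_{\tau_k}(e_{t,\tau_k}^*)\mid\mathcal F_{t-1}]=\tau_k-F_{t-1}(q_{t,\tau_k}(\bm\varphi_0^*))$ is in general nonzero, so $R_{1n}^*$ does not vanish but converges to $J_1^*$. I would first establish the Tukey-lambda analogues of Lemma \ref{lem00} for $\dot q_{t,\tau}(\bm\varphi)$ and $\ddot q_{t,\tau}(\bm\varphi)$ — these follow from Assumptions \ref{assum-RandomWeight} and \ref{assum-SpaceTukey} (compactness of $\Phi$ forces $b_1\le\rho<1$, hence exponentially decaying weights, and forces $\lambda$ into a compact subset of $\Lambda$, bounded away from $0$, on which $\dot Q_\tau(\lambda),\ddot Q_\tau(\lambda)$ are bounded and continuous) — so that $E[w_t\sup_{\|\bm u\|\le\eta}\|\ddot q_{t,\tau_k}(\bm\varphi_0^*+\bm u)\|]<\infty$. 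A uniform law of large numbers (Theorem 3.1 of \cite{Ling_McAleer2003}, applicable since $\{y_t\}$ is strictly stationary and ergodic under Assumption \ref{assum-Process-Mixing}) together with continuity of $\bm u\mapsto E[w_t\ddot q_{t,\tau_k}(\bm\varphi_0^*+\bm u)\psi_{\tau_k}(e_{t,\tau_k}^*)]$ at $\bm 0$ then gives, for $\bm u_n=o_p(1)$, $R_{1n}^*(\bm\varphi^*)=\sum_k E[w_t\ddot q_{t,\tau_k}(\bm\varphi_0^*)\psi_{\tau_k}(e_{t,\tau_k}^*)]/2+o_p(1)=\Omega_{11}^*/2+o_p(1)=J_1^*+o_p(1)$, so the curvature term equals $-\sqrt n\,\bm u_n^{\prime}J_1^*\sqrt n\,\bm u_n+o_p(n\|\bm u_n\|^2)$.

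The remaining terms I would handle exactly as in Lemma \ref{lem2}, the $\mathcal F_{t-1}$-measurability of $q_{t,\tau_k}(\bm\varphi_0^*),\dot q_{t,\tau_k}(\bm\varphi_0^*),\ddot q_{t,\tau_k}(\bm\varphi_0^*)$ and of $w_t$ making all conditioning steps valid. Splitting $\xi_{t,\tau_k}(\bm u)$ into the piece driven by $\bm u^{\prime}\dot q_{t,\tau_k}(\bm\varphi_0^*)s$ and the quadratic remainder, and Taylor-expanding $F_{t-1}$ about $q_{t,\tau_k}(\bm\varphi_0^*)$, extracts the leading term $\tfrac12 f_{t-1}(q_{t,\tau_k}(\bm\varphi_0^*))\,\bm u^{\prime}\dot q_{t,\tau_k}(\bm\varphi_0^*)$; the ergodic theorem then yields $R_{2n}^*(\bm u_n)=\sqrt n\,\bm u_n^{\prime}J_2^*\sqrt n\,\bm u_n+o_p(\sqrt n\|\bm u_n\|+n\|\bm u_n\|^2)$, the higher-order remainder being controlled as in \eqref{epsdelta1}--\eqref{epsdelta2} using $\sup_x|\dot f_{t-1}(x)|<\infty$ (Assumption \ref{assum-ConditionalDensity}) and the moment bounds above. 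The terms $R_{3n}^*$ and $R_{5n}^*$ are $o_p(n\|\bm u_n\|^2)$ by the same Cauchy--Schwarz and $\sup_x f_{t-1}(x)<\infty$ estimates used for $R_{3n},R_{5n}$, while $R_{4n}^*$ is $o_p(\sqrt n\|\bm u_n\|+n\|\bm u_n\|^2)$ by Lemma \ref{lem1-Tukey} (it is a genuine martingale-difference sum, since $\xi_{t,\tau_k}(\bm u)-E[\xi_{t,\tau_k}(\bm u)\mid\mathcal F_{t-1}]$ is conditionally centered). Collecting the linear part $-\sqrt n\,\bm u_n^{\prime}\bm T_n^*$, the curvature part $-\sqrt n\,\bm u_n^{\prime}J_1^*\sqrt n\,\bm u_n$ and $R_{2n}^*=\sqrt n\,\bm u_n^{\prime}J_2^*\sqrt n\,\bm u_n$ then gives the claimed expansion with $J^*=J_2^*-J_1^*$.

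The main obstacle is exactly this non-vanishing of $R_{1n}^*$: one cannot dispose of it by a conditional-mean-zero argument as in Lemma \ref{lem2}, and must instead prove convergence to the nonzero limit $J_1^*$, which forces a uniform-in-$\bm\varphi$ LLN over a shrinking neighbourhood of $\bm\varphi_0^*$ together with the accompanying uniform moment bounds for $\ddot q_{t,\tau}$ and $\dot q_{t,\tau}$ (including their $\lambda$-derivatives, which are harmless because $\dot Q_\tau(\lambda),\ddot Q_\tau(\lambda)$ are bounded and continuous on the compact, $0$-avoiding $\lambda$-range of $\Phi$). A secondary point worth noting is that only ergodicity — not the full mixing strength of Assumption \ref{assum-Process-Mixing} — is used at this stage; the mixing condition is needed later, for the central limit theorem applied to $\bm T_n^*$ in the proof of Theorem \ref{thm-WCQR}.
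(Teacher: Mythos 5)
Your proposal matches the paper's proof in both structure and substance: the same Knight-identity decomposition with a second-order Taylor expansion of $\nu_{t,\tau_k}(\bm u)$, the same identification of the curvature term $R_{1n}^*$ as the non-negligible piece that converges to $J_1^*$ (via uniform moment bounds on $\ddot q_{t,\tau_k}$ over the compact $\Phi$, continuity, and the ergodic theorem), the same extraction of $J_2^*$ from the conditional-expectation term with the $\dot f_{t-1}$-controlled remainder, and the same disposal of the remaining terms by Cauchy--Schwarz bounds and Lemma \ref{lem1-Tukey}. Your closing observations — that the bounded, zero-avoiding $\lambda$-range renders the $\dot Q_\tau(\lambda),\ddot Q_\tau(\lambda)$ factors harmless and that only stationarity/ergodicity (not the full mixing rate) is needed at this stage — are also consistent with how the paper argues.
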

	
	\begin{lemma}\label{lem3-Tukey}
		If $E|y_t|^s<\infty$ for some $0<s\leq 1$ and Assumptions \ref{assum-ConditionalDensity}, \ref{assum-RandomWeight}, \ref{assum-Process-Mixing} and \ref{assum-SpaceTukey} hold, then for any sequence of random variables $\bm u_n$ such that $\bm u_n=o_p(1)$, we have	
		\begin{align*}
			n[\widetilde{L}_n^*(\bm u_n+\bm\varphi_0^*)-\widetilde{L}_n^*(\bm\varphi_0^*)]-n[L_n^*(\bm u_n+\bm\varphi_0^*)-L_n^*(\bm\varphi_0^*)]=o_p(\sqrt{n}\|\bm u_n\|+n\|\bm u_n\|^2),
		\end{align*}
		where $\widetilde{L}_n^*(\bm\varphi)=n^{-1}\sum_{k=1}^{K}\sum_{t=1}^{n}w_t\rho_{\tau_k}(y_t-\widetilde{q}_{t,\tau_k}(\bm\varphi))$ and $L_n^*(\bm\varphi)=n^{-1}\sum_{k=1}^{K}\sum_{t=1}^{n}w_t\rho_{\tau_k}(y_t-q_{t,\tau_k}(\bm\varphi))$.
	\end{lemma}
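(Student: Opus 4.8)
The plan is to adapt the proof of Lemma \ref{lem3} to the composite, Tukey-lambda setting, exploiting that the sum over the $K$ fixed quantile levels $\tau_1,\dots,\tau_K$ is finite so that a bound for each level, summed, suffices. For each fixed $k$ I would apply the Knight identity \eqref{identity} to the increment $\rho_{\tau_k}(y_t-\widetilde{q}_{t,\tau_k}(\bm u+\bm\varphi_0^*))-\rho_{\tau_k}(y_t-\widetilde{q}_{t,\tau_k}(\bm\varphi_0^*))$ and to its untruncated counterpart, subtract the two, and sum over $k$. Writing $\eta_{t,\tau_k}^*=y_t-q_{t,\tau_k}(\bm\varphi_0^*)$, $\widetilde{\eta}_{t,\tau_k}^*=y_t-\widetilde{q}_{t,\tau_k}(\bm\varphi_0^*)$, $\nu_{t,\tau_k}(\bm u)=q_{t,\tau_k}(\bm u+\bm\varphi_0^*)-q_{t,\tau_k}(\bm\varphi_0^*)$ and its tilde analogue $\widetilde{\nu}_{t,\tau_k}(\bm u)$, this reproduces, exactly as in \eqref{initialHnrep}, a decomposition into four families of terms indexed by $k$:
\begin{align*}
\widetilde{A}_{1n}^*(\bm u)&=\sum_{k=1}^{K}\sum_{t=1}^n w_t[\nu_{t,\tau_k}(\bm u)-\widetilde{\nu}_{t,\tau_k}(\bm u)]\psi_{\tau_k}(\widetilde{\eta}_{t,\tau_k}^*),\\
\widetilde{A}_{2n}^*(\bm u)&=\sum_{k=1}^{K}\sum_{t=1}^n w_t[\psi_{\tau_k}(\eta_{t,\tau_k}^*)-\psi_{\tau_k}(\widetilde{\eta}_{t,\tau_k}^*)]\nu_{t,\tau_k}(\bm u),
\end{align*}
together with two further terms $\widetilde{A}_{3n}^*$ and $\widetilde{A}_{4n}^*$ built from $\widetilde{\xi}_{t,\tau_k}(\bm u)-\xi_{t,\tau_k}(\bm u)$ in the same manner as in the proof of Lemma \ref{lem3}.

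The bound for each family then mirrors \eqref{Pi1tilde}--\eqref{Pi4tilde}, the only substitutions being $q_t\mapsto q_{t,\tau_k}$, $\bm\theta(\tau)\mapsto\bm\varphi_0^*$, Lemma \ref{lem0}$\mapsto$Lemma \ref{lem0-Tukey}, and the moment bounds of Lemma \ref{lem00} replaced by their Tukey-lambda analogues $E(w_t\sup_\Phi\|\dot{q}_{t,\tau_k}(\bm\varphi)\|^\kappa)<\infty$ and $E(w_t\sup_\Phi\|\ddot{q}_{t,\tau_k}(\bm\varphi)\|^\kappa)<\infty$. For $\widetilde{A}_{1n}^*$ and $\widetilde{A}_{3n}^*$ I would use $|\psi_{\tau_k}|\leq1$, $|\widetilde{\xi}_{t,\tau_k}|\leq2$, and Lemma \ref{lem0-Tukey}(ii) to bound $\sup_\Phi\|\dot{q}_{t,\tau_k}(\bm\varphi)-\dot{\widetilde{q}}_{t,\tau_k}(\bm\varphi)\|\leq C\rho^t(\varsigma_\rho+t\varsigma_\rho+\xi_\rho)$, so that each term is of order $n^{-1/2}\sum_t\rho^t(\cdots)=o_p(\sqrt{n}\|\bm u_n\|)$ after taking expectations and using $\sum_t t\rho^t<\infty$. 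For $\widetilde{A}_{2n}^*$ and $\widetilde{A}_{4n}^*$ I would combine the indicator identity $I(x<a)-I(x<b)=I(0<x-b<a-b)-I(0>x-b>a-b)$ with $\sup_x f_{t-1}(x)<\infty$ from Assumption \ref{assum-ConditionalDensity} to convert the indicator differences into $O(\rho^t\varsigma_\rho)$ conditional-probability bounds, then apply Cauchy--Schwarz together with $E(w_t\varsigma_\rho^2)<\infty$ and $E(w_t\xi_\rho^2)<\infty$ guaranteed by Assumption \ref{assum-RandomWeight}. The stationarity and ergodicity needed to pass from these expectation bounds to the $o_p$ statements are supplied by the $\alpha$-mixing Assumption \ref{assum-Process-Mixing}.

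The main thing to verify, and the only genuine departure from Lemma \ref{lem3}, is that centering at the pseudo-true $\bm\varphi_0^*$ rather than at the true conditional-quantile parameter means $q_{t,\tau_k}(\bm\varphi_0^*)$ need not equal $F_{t-1}^{-1}(\tau_k)$ under misspecification, so that $E[\psi_{\tau_k}(\eta_{t,\tau_k}^*)\mid\mathcal{F}_{t-1}]\neq0$ in general. I expect this to cause no difficulty: none of the four bounds above invokes the martingale property, relying only on $|\psi_{\tau_k}|\leq1$, the geometric truncation decay of Lemma \ref{lem0-Tukey}, and the boundedness of $f_{t-1}$. The remaining routine point is the moment control of the Tukey-lambda derivatives; since $Q_{\tau_k}(\lambda)$, $\dot{Q}_{\tau_k}(\lambda)$ and $\ddot{Q}_{\tau_k}(\lambda)$ are continuous in $(\tau_k,\lambda)$ and hence uniformly bounded on the compact $\Phi$ (Assumption \ref{assum-SpaceTukey}(i)) for $\tau_k\in\mathcal{T}_h$, while $h_t(\bm\phi)$, $\dot{h}_t(\bm\phi)$, $\ddot{h}_t(\bm\phi)$ carry exponentially decaying weights on $|y_{t-j}|$, the required bounds follow from $E(w_t)<\infty$ and $E(w_t|y_{t-j}|^3)<\infty$ exactly as in the proof of Lemma \ref{lem00}. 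Collecting the four estimates then yields the claimed $o_p(\sqrt{n}\|\bm u_n\|+n\|\bm u_n\|^2)$ bound.
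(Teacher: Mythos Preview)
Your proposal is correct and follows essentially the same approach as the paper: the same four-term decomposition via the Knight identity, the same use of Lemma \ref{lem0-Tukey} in place of Lemma \ref{lem0}, the same bounding of $\widetilde{A}_{2n}^*$ and $\widetilde{A}_{4n}^*$ via the indicator identity combined with $\sup_x f_{t-1}(x)<\infty$ and Cauchy--Schwarz, and the same observation that the martingale property of $\psi_{\tau_k}$ is nowhere invoked so misspecification creates no obstacle. The paper indexes the four terms by $k$ and sums afterward whereas you absorb the $k$-sum into each $\widetilde{A}_{jn}^*$, but this is purely notational.
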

	
	\begin{proof}[Proof of Lemma \ref{lem0-Tukey}]
		For $\bm\varphi=(\bm\phi^{\prime}, \lambda)^{\prime}=(a_0, a_1, b_1, \lambda)^{\prime}$, recall that
		\begin{align*}
			q_{t,\tau}(\bm\varphi) &= Q_{\tau}(\lambda) \left(\frac{a_0}{1-b_1}+a_1\sum_{j=1}^{\infty} b_1^{j-1}|y_{t-j}|\right):= Q_{\tau}(\lambda)h_t(\bm\phi), \\
			\widetilde{q}_{t,\tau}(\bm\varphi) &= Q_{\tau}(\lambda) \left(\frac{a_0}{1-b_1}+a_1\sum_{j=1}^{t-1} b_1^{j-1}|y_{t-j}|\right):= Q_{\tau}(\lambda)\widetilde{h}_t(\bm\phi),
		\end{align*} 
		$\dot{q}_{t,\tau}(\bm\varphi)=(Q_{\tau}(\lambda)\dot{h}_t^{\prime}(\bm\phi),\dot{Q}_{\tau}(\lambda)h_t(\bm\phi))^{\prime}$ and $\dot{\widetilde{q}}_{t,\tau}(\bm\varphi)=(Q_{\tau}(\lambda)\dot{\widetilde{h}}_t^{\prime}(\bm\phi),\dot{Q}_{\tau}(\lambda)\widetilde{h}_t(\bm\phi))^{\prime}$, where $\dot{Q}_{\tau}(\lambda)=\lambda^{-2}\{\tau^{\lambda}(\lambda\ln\tau-1)-(1-\tau)^{\lambda}[\lambda\ln(1-\tau)-1]\}$,  
		\begin{align*}
			\dot{h}_t(\bm\phi)&=\left(\dfrac{1}{1-b_1},\sum_{j=1}^{\infty}b_1^{j-1}|y_{t-j}|,\dfrac{a_0}{(1-b_1)^2}+a_1\sum_{j=2}^{\infty}(j-1)b_1^{j-2}|y_{t-j}|\right)^{\prime}  \;\;\text{and}\\
			\dot{\widetilde{h}}_t(\bm\phi)&=\left(\dfrac{1}{1-b_1},\sum_{j=1}^{t-1}b_1^{j-1}|y_{t-j}|,\dfrac{a_0}{(1-b_1)^2}+a_1\sum_{j=2}^{t-1}(j-1)b_1^{j-2}|y_{t-j}|\right)^{\prime}.
		\end{align*}
		It follows that $q_{t,\tau}(\bm\varphi)-\widetilde{q}_{t,\tau}(\bm\varphi)=Q_{\tau}(\lambda)a_{1}\sum_{j=t}^{\infty} b_{1}^{j-1}|y_{t-j}|$ and
		\[\dot{q}_{t,\tau}(\bm\varphi)-\dot{\widetilde{q}}_{t,\tau}(\bm\varphi)=\left(0,Q_{\tau}(\lambda)\sum_{j=t}^{\infty}b^{j-1}_{1}|y_{t-j}|,Q_{\tau}(\lambda)a_{1}\sum_{j=t}^{\infty}(j-1)b^{j-2}_{1}|y_{t-j}|,\dot{Q}_{\tau}(\lambda)a_1\sum_{j=t}^{\infty}b^{j-1}_{1}|y_{t-j}|\right)^{\prime}.\] 
		Since $\lambda\geq \underline{c}>0$, $a_{1}\leq\overline{c}<\infty$ and $0<b_{1}\leq\rho<1$ by Assumption \ref{assum-SpaceTukey}, for $\tau \in \mathcal{T}_h$ such that $Q_{\tau}(\lambda)$ and $\dot{Q}_{\tau}(\lambda)$ are bounded, it holds that 
		\begin{align*}
			\sup_{\Phi}|q_{t,\tau}(\bm\varphi)-\widetilde{q}_{t,\tau}(\bm\varphi)|
			\leq & |Q_{\tau}(\lambda)|a_{1}\sum_{j=t}^{\infty} b_{1}^{j-1}|y_{t-j}|
			\leq C\overline{c}\rho^{t-1}\sum_{s=0}^{\infty}\rho^{s}|y_{-s}|\leq C\rho^{t}\varsigma_{\rho} \;\;\text{and}\\
			\sup_{\Phi}\|\dot{q}_{t,\tau}(\bm\varphi)-\dot{\widetilde{q}}_{t,\tau}(\bm\varphi)\| \leq& |Q_{\tau}(\lambda)|\sup_{\Phi}\left[\sum_{j=t}^{\infty}b^{j-1}_{1}|y_{t-j}|+a_{1}\sum_{j=t}^{\infty}(j-1)b^{j-2}_{1}|y_{t-j}|\right] \\
			&+ |\dot{Q}_{\tau}(\lambda)|\sup_{\Phi}a_1 \sum_{j=t}^{\infty}b^{j-1}_{1}|y_{t-j}| \\
			\leq& C\left[\rho^{t-1}\varsigma_{\rho}+\overline{c}t\rho^{t-2}\varsigma_{\rho}+\overline{c}\rho^{t-1}\sum_{s=0}^{\infty}(s-1)\rho^{s-1}|y_{-s}|\right]+C\overline{c}\rho^{t-1}\varsigma_{\rho}\\
			\leq& C\rho^{t}(\varsigma_{\rho}+t\varsigma_{\rho}+\xi_{\rho}),
		\end{align*}
		where $\varsigma_{\rho}=\sum_{s=0}^{\infty}\rho^{s}|y_{-s}|$ and $\xi_{\rho}=\sum_{s=0}^{\infty}s\rho^{s}|y_{-s}|$.	
		The proof of this lemma is complete.	
	\end{proof}

	\begin{proof}[Proof of Lemma \ref{lem1-Tukey}]
		Recall that $\bm\varphi=(\bm\phi^{\prime}, \lambda)^{\prime}=(a_0, a_1, b_1, \lambda)^{\prime}$ and its true parameter vector $\bm\varphi_0^*=(\bm\phi_0^{\prime}, \lambda_0)^{\prime}=(a_{00}, a_{10}, b_{10}, \lambda_0)^{\prime}$. For $\bm u\in\mathbb{R}^{d}$ with $d=4$, note that
		\begin{align*}
			|\zeta_n^*(\bm u)| \leq \sqrt{n}\|\bm u\|\sum_{k=1}^{K}\sum_{j=1}^{d}\left|\dfrac{1}{\sqrt{n}}\sum_{t=1}^{n}m_{t,\tau_k,j}\left\{\xi_{t,\tau_k}(\bm u)-E[\xi_{t,\tau_k}(\bm u)|\mathcal{F}_{t-1}]\right\}\right|,
		\end{align*}
		where $m_{t,\tau_k,j}=w_t\partial q_{t,\tau_k}(\bm\varphi_0^*)/\partial \theta_{j}$ with $\theta_{j}$ being the $j$th element of $\bm\varphi$. For $1\leq j\leq d$ and $\tau \in \mathcal{T}_1 \subset [0,0.5)$ or $\tau \in \mathcal{T}_2 \subset (0.5,1]$, define $g_{t,\tau}=\max_{j}\{m_{t,\tau,j},0\}$ or $g_{t,\tau}=\max_j\{-m_{t,\tau,j},0\}$. Let $\varrho_{t,\tau}(\bm u)=g_{t,\tau}\xi_{t,\tau}(\bm u)$ and define
		\begin{align*}
			D_{n,\tau}(\bm u)=\dfrac{1}{\sqrt{n}}\sum_{t=1}^{n}\left\{\varrho_{t,\tau}(\bm u)-E\left[\varrho_{t,\tau}(\bm u)|\mathcal{F}_{t-1}\right]\right\}.
		\end{align*}
		To establish Lemma \ref{lem1-Tukey}, it suffices to show that, for any $\delta>0$,
		\begin{align}\label{diffcond-Tukey}
			\sup_{\|\bm u\|\leq \delta}\dfrac{|D_{n,\tau}(\bm u)|}{1+\sqrt{n}\|\bm u\|}=o_p(1).
		\end{align}
		
		We follow the method in Lemma 4 of \cite{Pollard1985} to verify \eqref{diffcond-Tukey}.
		Let $\mathfrak{F}_{\tau}=\{\varrho_{t,\tau}(\bm u): \|\bm u\|\leq \delta\}$ be a collection of functions indexed by $\bm u$. First, we verify that $\mathfrak{F}_{\tau}$ satisfies the bracketing condition defined on page 304 of \cite{Pollard1985}. 	
		Let $B_{r}(\bm v)$ be an open neighborhood of $\bm v$ with radius $r>0$, and define a constant $C_0$ to be selected later. For any $\epsilon>0$ and $0< r\leq \delta$, there exists a sequence of small cubes $\{B_{\epsilon r/C_0}(\bm u_{i})\}_{i=1}^{K(\epsilon)}$ to cover $B_r(\bm 0)$, where $K(\epsilon)$ is an integer less than $C\epsilon^{-d}$, and the constant $C$ is not depending on $\epsilon$ and $r$; see \cite{Huber1967}, page 227.
		Denote $V_i(r)=B_{\epsilon r/C_0}(\bm u_{i})\bigcap B_r(\bm0)$, and let $U_1(r)=V_1(r)$ and $U_i(r)=V_i(r)-\bigcup_{j=1}^{i-1}V_j(r)$ for $i\geq 2$. Note that $\{U_i(r)\}_{i=1}^{K(\epsilon)}$ is a partition of $B_r(\bm0)$.
		For each $\bm u_i\in U_i(r)$ with $1\leq i \leq K(\epsilon)$, define the following bracketing functions	
		\begin{align*}
			\varrho_{t,\tau}^L(\bm u_i)&= g_{t,\tau}\int_{0}^{1}\left[I\left(y_t\leq q_{t,\tau}(\bm\varphi_0^*)+\nu_{t,\tau}(\bm u_i)s-\dfrac{\epsilon r}{C_0}\|\dot{q}_{t,\tau}(\bm\varphi_0^*)\|\right)-I(y_t\leq q_{t,\tau}(\bm\varphi_0^*))\right]ds, \\
			\varrho_{t,\tau}^U(\bm u_i)&= g_{t,\tau}\int_{0}^{1}\left[I\left(y_t\leq q_{t,\tau}(\bm\varphi_0^*)+\nu_{t,\tau}(\bm u_i)s+\dfrac{\epsilon r}{C_0}\|\dot{q}_{t,\tau}(\bm\varphi_0^*)\|\right)-I(y_t\leq q_{t,\tau}(\bm\varphi_0^*))\right]ds.
		\end{align*}
		Since $I(\cdot)$ is non-decreasing and $g_{t,\tau}\geq 0$, for any $\bm u \in U_i(r)$, we have
		\begin{align}\label{brac1-Tukey}
			\varrho_{t,\tau}^L(\bm u_i)\leq \varrho_{t,\tau}(\bm u)\leq \varrho_{t,\tau}^U(\bm u_i).
		\end{align}
		Furthermore, by Taylor expansion, it holds that
		\begin{align}\label{uncond-Tukey}
			E\left[\varrho_{t,\tau}^U(\bm u_i)-\varrho_{t,\tau}^L(\bm u_i)|\mathcal{F}_{t-1}\right]\leq \dfrac{\epsilon r}{C_0}\cdot2\sup_{x}f_{t-1}(x) w_t\left\|\dot{q}_{t,\tau}(\bm\varphi_0^*)\right\|^2.
		\end{align}
		Denote $\aleph_{t,\tau}=2\sup_{x}f_{t-1}(x)w_t\left\|\dot{q}_{t,\tau}(\bm\varphi_0^*)\right\|^2$. By Assumption \ref{assum-ConditionalDensity}, we have $\sup_{x}f_{t-1}(x)<\infty$. Choose $C_0=E(\aleph_{t,\tau})$. Then by iterated-expectation and Assumption \ref{assum-RandomWeight}, it follows that
		\begin{align*}
			E\left[\varrho_{t,\tau}^U(\bm u_i)-\varrho_{t,\tau}^L(\bm u_i)\right]=E\left\{E\left[\varrho_{t,\tau}^U(\bm\varphi_i)-\varrho_{t,\tau}^L(\bm\varphi_i)|\mathcal{F}_{t-1}\right]\right\}\leq \epsilon r.
		\end{align*}	
		This together with \eqref{brac1-Tukey}, implies that the family $\mathfrak{F}_{\tau}$ satisfies the bracketing condition.	
		
		Put $r_k=2^{-k}\delta$. Let $B(k)=B_{r_k}(\bm0)$ and $A(k)$ be the annulus $B(k)\setminus B(k+1)$. From the bracketing condition, for fixed $\epsilon>0$, there is a partition $U_1(r_k), U_2(r_k), \ldots, U_{K(\epsilon)}(r_k)$ of $B(k)$. First, consider the upper tail case. For $\bm u \in U_i(r_k)$, by \eqref{uncond-Tukey}, it holds that
		\begin{align}\label{upper-Tukey}
			D_{n,\tau}(\bm u) \leq &\dfrac{1}{\sqrt{n}}\sum_{t=1}^{n}\left\{\varrho_{t,\tau}^U(\bm u_i)-E\left[\varrho_{t,\tau}^U(\bm u_i)|\mathcal{F}_{t-1}\right]\right\}+\dfrac{1}{\sqrt{n}}\sum_{t=1}^{n}E\left[\varrho_{t,\tau}^U(\bm u_i)-\varrho_{t,\tau}^L(\bm u_i)|\mathcal{F}_{t-1}\right] \nonumber \\
			\leq & D_{n,\tau}^U(\bm u_i)+\sqrt{n}\epsilon r_k\dfrac{1}{nC_0}\sum_{t=1}^{n}\aleph_{t,\tau},
		\end{align}	
		where \[D_{n,\tau}^U(\bm u_i)=\dfrac{1}{\sqrt{n}}\sum_{t=1}^{n}\left\{\varrho_{t,\tau}^U(\bm u_i)-E\left[\varrho_{t,\tau}^U(\bm u_i)|\mathcal{F}_{t-1}\right]\right\}.\]
		Define the event
		\begin{align*}
			E_n=\left\{\omega: \dfrac{1}{nC_0}\sum_{t=1}^{n}\aleph_{t,\tau}(\omega) < 2 \right\}.
		\end{align*}
		
		For $\bm u \in A(k)$, $1+\sqrt{n}\|\bm u\|>\sqrt{n}r_{k+1}=\sqrt{n}r_{k}/2$. Then by \eqref{upper-Tukey} and the Chebyshev's inequality, we have
		\begin{align}\label{Ak0-Tukey}
			\text{Pr}\left(\sup_{\bm u \in A(k)}\dfrac{D_{n,\tau}(\bm u)}{1+\sqrt{n}\|\bm u\|}>6\epsilon, E_n\right)
			\leq & \text{Pr}\left(\max_{1 \leq i \leq K(\epsilon)}\sup_{\bm u \in U_i(r_k) \cap A(k)}D_{n,\tau}(\bm u)>3\sqrt{n}\epsilon r_k, E_n\right) \nonumber\\
			\leq & K(\epsilon)\max_{1 \leq i \leq K(\epsilon)}\text{Pr}\left(D_{n,\tau}^U(\bm u_i)>\sqrt{n}\epsilon r_k\right) \nonumber\\
			\leq & K(\epsilon)\max_{1 \leq i \leq K(\epsilon)}\dfrac{E\{[D_{n,\tau}^U(\bm u_i)]^2\}}{n\epsilon^2 r_k^2}.
		\end{align}
		Moreover, by iterated-expectation, Taylor expansion and the H\"{o}lder's inequality, together with $\|\bm u_i\|\leq r_k$ for $\bm u_i \in U_i(r_k)$, we have
		\begin{align*}	
			& E\left\{[\varrho_{t,\tau}^U(\bm u_i)]^2\right\}
			= E\left\{E\left\{[\varrho_{t,\tau}^U(\bm u_i)]^2|\mathcal{F}_{t-1}\right\}\right\} \nonumber \\
			\leq & 2E\left\{g_{t,\tau}^2\left| \int_{0}^{1}\left[F_{t-1}\left(q_{t,\tau}(\bm\varphi_0^*)+\nu_{t,\tau}(\bm u_i)s-\dfrac{\epsilon r}{C_0}\|\dot{q}_{t,\tau}(\bm\varphi_0^*)\|\right)-F_{t-1}\left(q_{t,\tau}(\bm\varphi_0^*)\right)\right]ds\right|\right\} \nonumber\\
			\leq & Cr_k\sup_{x}f_{t-1}(x)E\left[w_t^2\left\|\dot{q}_{t,\tau}(\bm\varphi_0^*)\right\|^3+w_t^2\left\|\dot{q}_{t,\tau}(\bm\varphi_0^*)\right\|^2\sup_{\bm\varphi^{\dagger}\in\Phi}\left\|\dot{q}_{t,\tau}(\bm\varphi^{\dagger})\right\|\right] \\
			\leq & Cr_k\sup_{x}f_{t-1}(x)\left\{E\left(w_t^2\left\|\dot{q}_{t,\tau}(\bm\varphi_0^*)\right\|^3\right)+\left[E\left(w_t^2\left\|\dot{q}_{t,\tau}(\bm\varphi_0^*)\right\|^3\right)\right]^{2/3}\left[E\left(w_t^2\sup_{\bm\varphi^{\dagger}\in\Phi}\left\|\dot{q}_{t,\tau}(\bm\varphi^{\dagger})\right\|^3\right)\right]^{1/3}\right\} \\ &	:= \Upsilon_{\tau}(r_k),
		\end{align*}
		where $\bm\varphi^{\dagger}$ is between $\bm\varphi_0^*$ and $\bm u_i+\bm\varphi_0^*$. 
		This, together with \eqref{1st-derivative-qt-Tukey}, $\sup_{x}f_{t-1}(x)<\infty$ by Assumption \ref{assum-ConditionalDensity}, $E(w_t)<\infty$ and $E(w_{t}|y_{t-j}|^3)<\infty$ for all $j\geq 1$ by Assumption \ref{assum-RandomWeight}, strictly stationarity and $\alpha$-mixing property of $\{y_t\}$ under Assumption \ref{assum-Process-Mixing}, $\max\{a_{00},a_{10},a_0^{\dagger},a_1^{\dagger}\}<\overline{c}<\infty$ and $b_{10},b_1^{\dagger}\leq\rho<1$ by Assumption \ref{assum-SpaceTukey}, and the fact that $\varrho_{t,\tau}^U(\bm u_i)-E[\varrho_{t,\tau}^U(\bm u_i)|\mathcal{F}_{t-1}]$ is a martingale difference sequence, implies that
		\begin{align}\label{ED-Tukey}
			E\{[D_{n,\tau}^U(\bm u_i)]^2\}&=\dfrac{1}{n}\sum_{t=1}^{n}E\{\{\varrho_{t,\tau}^U(\bm u_i)-E[\varrho_{t,\tau}^U(\bm u_i)|\mathcal{F}_{t-1}]\}^2\} \nonumber \\
			&\leq \dfrac{1}{n}\sum_{t=1}^{n}E\{[\varrho_{t,\tau}^U(\bm u_i)]^2\} \leq\Upsilon_{\tau}(r_k)<\infty. 
		\end{align}
		Combining \eqref{Ak0-Tukey} and \eqref{ED-Tukey}, we have
		\begin{align*}
			\text{Pr}\left(\sup_{\bm u \in A(k)}\dfrac{D_{n,\tau}(\bm u)}{1+\sqrt{n}\|\bm u\|}>6\epsilon, E_n\right)
			\leq \dfrac{K(\epsilon)\Upsilon_{\tau}(r_k)}{n\epsilon^2r_k^2}.
		\end{align*}
		Similar to the proof of the upper tail case, we can obtain the same bound for the lower tail case. Therefore,
		\begin{align}\label{Ak-Tukey}
			& \text{Pr}\left(\sup_{\bm u \in A(k)}\dfrac{|D_{n,\tau}(\bm u)|}{1+\sqrt{n}\|\bm u\|}>6\epsilon, E_n\right) \leq \dfrac{2K(\epsilon)\Upsilon_{\tau}(r_k)}{n\epsilon^2r_k^2}.
		\end{align}
		
		Note that $\Upsilon_{\tau}(r_k)\to 0$ as $k\to \infty$, we can choose $k_{\epsilon}$ such that $2K(\epsilon)\Upsilon_{\tau}(r_k)/(\epsilon^2\delta^2)<\epsilon$ for $k\geq k_{\epsilon}$. Let $k_n$ be the integer such that $n^{-1/2}\delta \leq r_{k_n} \leq 2n^{-1/2}\delta$, and split $B_{\delta}(\bm 0)$ into two events $B:=B(k_n+1)$ and $B^c:=B(0)-B(k_n+1)$. Note that $B^c=\bigcup_{k=0}^{k_n}A(k)$ and $\Upsilon_{\tau}(r_k)$ is bounded. Then by \eqref{Ak-Tukey}, it holds that
		\begin{align}\label{Bc-Tukey}
			\text{Pr}\left(\sup_{\bm u \in B^c}\dfrac{|D_{n,\tau}(\bm u)|}{1+\sqrt{n}\|\bm u\|}>6\epsilon\right)
			\leq & \sum_{k=0}^{k_n}\text{Pr}\left(\sup_{\bm u \in A(k)}\dfrac{|D_{n,\tau}(\bm u)|}{1+\sqrt{n}\|\bm u\|}>6\epsilon, E_n\right) + \text{Pr}(E_n^c)\nonumber \\
			\leq & \dfrac{1}{n}\sum_{k=0}^{k_{\epsilon}-1}\dfrac{CK(\epsilon)}{\epsilon^2\delta^2}2^{2k}+ \dfrac{\epsilon}{n}\sum_{k=k_{\epsilon}}^{k_n}2^{2k}+ \text{Pr}(E_n^c) \nonumber \\
			\leq & O\left(\dfrac{1}{n}\right) + 4\epsilon + \text{Pr}(E_n^c).
		\end{align}
		
		Furthermore, for $\bm u \in B$, we have $1+\sqrt{n}\|\bm u\|\geq 1$ and $r_{k_n+1}\leq n^{-1/2}\delta<n^{-1/2}$. Similar to the proof of \eqref{Ak0-Tukey} and \eqref{ED-Tukey}, we can show that
		\begin{align*}
			\text{Pr}\left(\sup_{\bm u \in B}\dfrac{D_{n,\tau}(\bm u)}{1+\sqrt{n}\|\bm u\|}>3\epsilon, E_n\right) \leq \text{Pr}\left(\max_{1 \leq i \leq K(\epsilon)}D_{n,\tau}^U(\bm u_i)>\epsilon, E_n\right) \leq \dfrac{K(\epsilon)\Upsilon_{\tau}(r_{k_n+1})}{\epsilon^2}.
		\end{align*}
		We can obtain the same bound for the lower tail. Therefore, we have
		\begin{align}\label{B-Tukey}
			\text{Pr}\left(\sup_{\bm u \in B}\dfrac{|D_{n,\tau}(\bm u)|}{1+\sqrt{n}\|\bm u\|}>3\epsilon\right)
			= &  \text{Pr}\left(\sup_{\bm u \in B}\dfrac{|D_{n,\tau}(\bm u)|}{1+\sqrt{n}\|\bm u\|}>3\epsilon, E_n\right)+ \text{Pr}(E_n^c) \nonumber \\
			\leq & \dfrac{2K(\epsilon)\Upsilon_{\tau}(r_{k_n+1})}{\epsilon^2} + \text{Pr}(E_n^c).
		\end{align}
		Note that $\Upsilon_{\tau}(r_{k_n+1})\to 0$ as $n\to \infty$. Moreover, by the ergodic theorem, $\text{Pr}(E_n)\rightarrow 1$ and thus $\text{Pr}(E_n^c)\rightarrow 0$ as $n\rightarrow \infty$. \eqref{B-Tukey} together with \eqref{Bc-Tukey} asserts \eqref{diffcond-Tukey}. The proof of this lemma is accomplished.	
	\end{proof}

	\begin{proof}[Proof of Lemma \ref{lem2-Tukey}]
		Denote $\bm u=\bm\varphi-\bm\varphi_0^*$, where $\bm\varphi=(\bm\phi^{\prime}, \lambda)^{\prime}=(a_0, a_1, b_1, \lambda)^{\prime}$ and $\bm\varphi_0^*=(\bm\phi_0^{\prime}, \lambda_0)^{\prime}=(a_{00}, a_{10}, b_{10}, \lambda_0)^{\prime}$. Recall that $L_n^*(\bm\varphi)=n^{-1}\sum_{k=1}^{K}\sum_{t=1}^{n}w_t\rho_{\tau_k}(y_t-q_{t,\tau_k}(\bm\varphi))$ and $e_{t,\tau}^*=y_t-q_{t,\tau}(\bm\varphi_0^*)$ with $q_{t,\tau}(\bm\varphi) = Q_{\tau}(\lambda) \left(\frac{a_0}{1-b_1}+a_1\sum_{j=1}^{\infty} b_1^{j-1}|y_{t-j}|\right):= Q_{\tau}(\lambda)h_t(\bm\phi)$. 
		Let $\xi_{t,\tau}(\bm u)=\int_{0}^{1}\left[I(e_{t,\tau}^*\leq \nu_{t,\tau}(\bm u)s)-I(e_{t,\tau}^*\leq 0)\right]ds$ with $\nu_{t,\tau}(\bm u)=q_{t,\tau}(\bm\varphi_0^*+\bm u)-q_{t,\tau}(\bm\varphi_0^*)$. 
		By the Knight identity \eqref{identity}, it holds that
		\begin{align}\label{Gnrep-Tukey}
			n[L_n^*(\bm\varphi_0^*+\bm u)-L_n^*(\bm\varphi_0^*)] =&\sum_{k=1}^{K}\sum_{t=1}^{n}w_t\left[\rho_{\tau_k}\left(e_{t,\tau_k}^*-\nu_{t,\tau_k}(\bm u)\right)-\rho_{\tau_k}\left(e_{t,\tau_k}^*\right)\right] \nonumber \\
			=& K_{1n}^*(\bm u)+K_{2n}^*(\bm u),
		\end{align}	
		where $\bm u\in\Lambda^*\equiv\{\bm u\in\mathbb{R}^4: \bm u+\bm\varphi_0^* \in \Phi\}$,
		\begin{align*}
			K_{1n}^*(\bm u)=-\sum_{k=1}^{K}\sum_{t=1}^{n}w_t\nu_{t,\tau_k}(\bm u)\psi_{\tau_k}(e_{t,\tau_k}^*) \hspace{2mm}\text{and}\hspace{2mm}  K_{2n}^*(\bm u)=\sum_{k=1}^{K}\sum_{t=1}^{n}w_t\nu_{t,\tau_k}(\bm u)\xi_{t,\tau_k}(\bm u).
		\end{align*}
		By Taylor expansion, we have $\nu_{t,\tau}(\bm u)=q_{1t,\tau}(\bm u)+q_{2t,\tau}(\bm u)$, where $q_{1t,\tau}(\bm u)=\bm u^{\prime}\dot{q}_{t,\tau}(\bm\varphi_0^*)$ and $q_{2t,\tau}(\bm u)=\bm u^{\prime}\ddot{q}_{t,\tau}(\bm\varphi^{\dagger})\bm u/2$ with $\bm\varphi^{\dagger}$ between $\bm\varphi_0^*+\bm u$ and $\bm\varphi_0^*$.
		Then it follows that
		\begin{align}\label{K1rep-Tukey}
			K_{1n}^*(\bm u)&=-\sum_{k=1}^{K}\sum_{t=1}^{n}w_tq_{1t,\tau_k}(\bm u)\psi_{\tau_k}(e_{t,\tau_k}^*)-\sum_{k=1}^{K}\sum_{t=1}^{n}w_tq_{2t,\tau_k}(\bm u)\psi_{\tau_k}(e_{t,\tau_k}^*) \nonumber \\
			&=-\sqrt{n}\bm u^{\prime}\bm T_n^*-\sqrt{n}\bm u^{\prime}R_{1n}^*(\bm\varphi^{\dagger})\sqrt{n}\bm u,
		\end{align}
		where
		\[\bm T_n^*=\dfrac{1}{\sqrt{n}}\sum_{k=1}^{K}\sum_{t=1}^{n}w_t\dot{q}_{t,\tau_k}(\bm\varphi_0^*)\psi_{\tau_k}(e_{t,\tau_k}^*) \hspace{2mm}\text{and}\hspace{2mm} R_{1n}^*(\bm\varphi^{\dagger})=\dfrac{1}{2n}\sum_{k=1}^{K}\sum_{t=1}^{n}w_t\ddot{q}_{t,\tau_k}(\bm\varphi^{\dagger})\psi_{\tau_k}(e_{t,\tau_k}^*).\]
		From $E(w_t)<\infty$ and $E(w_{t}|y_{t-j}|^3)<\infty$ for all $j\geq 1$ by Assumption \ref{assum-RandomWeight}, strictly stationarity and $\alpha$-mixing property of $\{y_t\}$ under Assumption \ref{assum-Process-Mixing} and $\max\{a_0^*, a_{1}^*\}\leq\overline{c}<\infty$ and $b_{1}^*\leq\rho<1$ by Assumption \ref{assum-SpaceTukey}, together with \eqref{2nd-derivative-qt-Tukey} and the fact that $|\psi_{\tau}(\cdot)|\leq 1$, we have
		\[E\left[\sup_{\bm\varphi^{\dagger}\in\Phi}\left\|w_t \ddot{q}_{t,\tau_k}(\bm\varphi^{\dagger})\psi_{\tau_k}(e_{t,\tau_k}^*)\right\|\right]\leq CE\left[\sup_{\bm\varphi^{\dagger}\in\Phi}\left\|w_t \ddot{q}_{t,\tau_k}(\bm\varphi^{\dagger})\right\|\right]<\infty.\]
		Moreover, since $\ddot{q}_{t,\tau}(\bm\varphi)$ is continuous with respect to $\bm\varphi\in\Phi$, then by ergodic theorem for strictly stationary and $\alpha$-mixing process under Assumption \ref{assum-Process-Mixing}, together with $\bm\varphi_n=\bm\varphi_0^*+\bm u_n=\bm\varphi_0^*+o_p(1)$ and $\bm\varphi^{\dagger}_n$ between $\bm\varphi_0^*+\bm u_n$ and $\bm\varphi_0^*$, we can show that
		\[R_{1n}^*(\bm\varphi^{\dagger}_n)=J_{1}^*+o_p(1),\]
		where $J_{1}^*=\sum_{k=1}^{K}E[w_t\ddot{q}_{t,\tau_k}(\bm\varphi_0^*)\psi_{\tau_k}(e_{t,\tau_k}^*)]/2$. 
		This together with \eqref{K1rep-Tukey}, implies that
		\begin{align}\label{Kn1-Tukey}
			K_{1n}^*(\bm u_n)=-\sqrt{n}\bm u_n^{\prime}\bm T_n^*-\sqrt{n}\bm u_n^{\prime}J_{1}^*\sqrt{n}\bm u_n+o_p(n\|\bm u_n\|^2).
		\end{align}
		Denote 
		$\xi_{t,\tau}(\bm u)=\xi_{1t,\tau}(\bm u)+\xi_{2t,\tau}(\bm u)$, where
		\begin{align*}
			\xi_{1t,\tau}(\bm u)&=\int_{0}^{1}\left[I(e_{t,\tau}^*\leq q_{1t,\tau}(\bm u)s)-I(e_{t,\tau}^*\leq 0)\right]ds \hspace{2mm}\text{and}\hspace{2mm} \\
			\xi_{2t,\tau}(\bm u)&=\int_{0}^{1}\left[I(e_{t,\tau}^*\leq \nu_{t,\tau}(\bm u)s)-I(e_{t,\tau}^*\leq q_{1t,\tau}(\bm u)s)\right]ds.
		\end{align*}
		For $K_{2n}^*(\bm u)$, by Taylor expansion, it holds that
		\begin{align}\label{K2rep-Tukey}
			K_{2n}^*(\bm u)=R_{2n}^*(\bm u)+R_{3n}^*(\bm u)+R_{4n}^*(\bm u)+R_{5n}^*(\bm u),
		\end{align}
		where
		\begin{align*}
			R_{2n}^*(\bm u)&=\bm u^{\prime}\sum_{k=1}^{K}\sum_{t=1}^{n}w_t\dot{q}_{t,\tau_k}(\bm\varphi_0^*)E[\xi_{1t,\tau_k}(\bm u)|\mathcal{F}_{t-1}], \\
			R_{3n}^*(\bm u)&=\bm u^{\prime}\sum_{k=1}^{K}\sum_{t=1}^{n}w_t\dot{q}_{t,\tau_k}(\bm\varphi_0^*)E[\xi_{2t,\tau_k}(\bm u)|\mathcal{F}_{t-1}], \\
			R_{4n}^*(\bm u)&=\bm u^{\prime}\sum_{k=1}^{K}\sum_{t=1}^{n}w_t\dot{q}_{t,\tau_k}(\bm\varphi_0^*)\{\xi_{t,\tau_k}(\bm u)-E[\xi_{t,\tau_k}(\bm u)|\mathcal{F}_{t-1}]\} \hspace{2mm}\text{and}\hspace{2mm} \\
			R_{5n}^*(\bm u)&=\dfrac{\bm u^{\prime}}{2}\sum_{k=1}^{K}\sum_{t=1}^{n}w_t\ddot{q}_{t,\tau_k}(\bm\varphi^{\dagger})\xi_{t,\tau_k}(\bm u)\bm u.
		\end{align*}
		Note that
		\begin{align}\label{xi1-Tukey}
			E[\xi_{1t,\tau}(\bm u)|\mathcal{F}_{t-1}]=\int_{0}^{1}[F_{t-1}(q_{t,\tau}(\bm\varphi_0^*)+q_{1t,\tau}(\bm u)s)-F_{t-1}(q_{t,\tau}(\bm\varphi_0^*))]ds.
		\end{align}
		Then by Taylor expansion, together with Assumption \ref{assum-ConditionalDensity}, it follows that
		\begin{align*}
			E[\xi_{1t,\tau}(\bm u)|\mathcal{F}_{t-1}]=&\dfrac{\bm u^{\prime}}{2}f_{t-1}(q_{t,\tau}(\bm\varphi_0^*))\dot{q}_{t,\tau}(\bm\varphi_0^*) \\
			&+q_{1t,\tau}(\bm u)\int_{0}^{1}[f_{t-1}(q_{t,\tau}(\bm\varphi_0^*)+q_{1t,\tau}(\bm u)s^*)-f_{t-1}(q_{t,\tau}(\bm\varphi_0^*))]sds,
		\end{align*}
		where $s^*$ is between 0 and $s$. Therefore, it holds that
		\begin{align}\label{R2nrep-Tukey}
			R_{2n}^*(\bm u)=\sqrt{n}\bm u^{\prime}J_{2n}^*\sqrt{n}\bm u+\sqrt{n}\bm u^{\prime}\Pi_{1n}^*(\bm u)\sqrt{n}\bm u,
		\end{align}
		where
		$J_{2n}^*=(2n)^{-1}\sum_{k=1}^{K}\sum_{t=1}^{n}w_tf_{t-1}(q_{t,\tau_k}(\bm\varphi_0^*))\dot{q}_{t,\tau_k}(\bm\varphi_0^*)\dot{q}_{t,\tau_k}^{\prime}(\bm\varphi_0^*)$ and
		\begin{equation*}
			\Pi_{1n}^*(\bm u)=\dfrac{1}{n}\sum_{k=1}^{K}\sum_{t=1}^{n}w_t\dot{q}_{t,\tau_k}(\bm\varphi_0^*)\dot{q}_{t,\tau_k}^{\prime}(\bm\varphi_0^*)\int_{0}^{1}[f_{t-1}(q_{t,\tau_k}(\bm\varphi_0^*)+q_{1t,\tau_k}(\bm u)s^*)-f_{t-1}(q_{t,\tau_k}(\bm\varphi_0^*))]sds.
		\end{equation*}
		By Taylor expansion, together with $\sup_{x}|\dot{f}_{t-1}(x)|<\infty$ by Assumption \ref{assum-ConditionalDensity} and $E(w_t)<\infty$ and $E(w_{t}|y_{t-j}|^3)<\infty$ for all $j\geq 1$ by Assumption \ref{assum-RandomWeight}, \eqref{1st-derivative-qt-Tukey}, strictly stationarity and $\alpha$-mixing property of $\{y_t\}$ under Assumption \ref{assum-Process-Mixing}, $\max\{a_{00},a_{10}\}\leq\overline{c}<\infty$ and $b_{10}\leq\rho<1$ by Assumption \ref{assum-SpaceTukey}, for any $\eta>0$, it holds that
		\begin{align*}
			E\left(\sup_{\|\bm u\|\leq \eta}\|\Pi_{1n}^*(\bm u)\|\right)&\leq \dfrac{1}{n}\sum_{k=1}^{K}\sum_{t=1}^{n}E\left[\sup_{\|\bm u\|\leq \eta} \|w_t\dot{q}_{t,\tau_k}(\bm\varphi_0^*)\dot{q}_{t,\tau_k}^{\prime}(\bm\varphi_0^*)\sup_{x}|\dot{f}_{t-1}(x)|\bm u^{\prime}\dot{q}_{t,\tau_k}(\bm\varphi_0^*)\|\right] \\
			&\leq C\eta\sup_{x}|\dot{f}_{t-1}(x)|\sum_{k=1}^{K} E[w_t\|\dot{q}_{t,\tau_k}(\bm\varphi_0^*)\|^3]
		\end{align*}
		tends to $0$ as $\eta \to 0$.
		Therefore, by Markov’s theorem, for any $\epsilon$, $\delta>0$, there exists $\eta_0=\eta_0(\epsilon)>0$ such that
		\begin{align}\label{epsdelta1-Tukey}
			\text{Pr}\left(\sup_{\|\bm u\|\leq \eta_0}\|\Pi_{1n}^*(\bm u)\|> \delta\right)<\dfrac{\epsilon}{2}
		\end{align}
		for all $n\geq 1$. Since $\bm u_n=o_p(1)$, it follows that
		\begin{align}\label{epsdelta2-Tukey}
			\text{Pr}\left(\|\bm u_n\|> \eta_0\right)<\dfrac{\epsilon}{2}
		\end{align}
		as $n$ is large enough. From \eqref{epsdelta1-Tukey} and \eqref{epsdelta2-Tukey}, we have
		\begin{align*}
			\text{Pr}\left(\|\Pi_{1n}^*(\bm u_n)\|> \delta\right)&\leq \text{Pr}\left(\|\Pi_{1n}^*(\bm u_n)\|> \delta, \|\bm u_n\|\leq \eta_0\right)+\text{Pr}\left(\|\bm u_n\|> \eta_0\right) \\
			&\leq \text{Pr}\left(\sup_{\|\bm u\|\leq \eta_0}\|\Pi_{1n}^*(\bm u)\|> \delta\right)+\dfrac{\epsilon}{2}<\epsilon
		\end{align*}
		as $n$ is large enough. Therefore, $\Pi_{1n}^*(\bm u_n)=o_p(1)$. This together with Assumption \ref{assum-Process-Mixing}, \eqref{R2nrep-Tukey} and $J_{2n}^*=J_2^*+o_p(1)$ by ergodic theorem for strictly stationary and $\alpha$-mixing process, implies that
		\begin{align}\label{R2n-Tukey}
			R_{2n}^*(\bm u_n)=\sqrt{n}\bm u_n^{\prime}J_2^*\sqrt{n}\bm u_n+o_p(n\|\bm u_n\|^2),
		\end{align}
		where $J_2^*=\sum_{k=1}^{K}E[w_tf_{t-1}(q_{t,\tau_k}(\bm\varphi_0^*))\dot{q}_{t,\tau_k}(\bm\varphi_0^*)\dot{q}_{t,\tau_k}^{\prime}(\bm\varphi_0^*)]/2$. 
		
		For $R_{3n}^*(\bm u)$, note that
		\begin{equation}\label{xi2-Tukey}
			E[\xi_{2t,\tau}(\bm u)|\mathcal{F}_{t-1}]=\int_{0}^{1}\left[F_{t-1}(q_{t,\tau}(\bm\varphi_0^*)+\nu_{t,\tau}(\bm u)s)-F_{t-1}(q_{t,\tau}(\bm\varphi_0^*)+q_{1t,\tau}(\bm u)s)\right]ds.
		\end{equation}
		Then by iterated-expectation, Taylor expansion and the Cauchy-Schwarz inequality, together with \eqref{1st-derivative-qt-Tukey}, \eqref{2nd-derivative-qt-Tukey}, $\sup_{x}f_{t-1}(x)<\infty$ by Assumption \ref{assum-ConditionalDensity}, $E(w_t)<\infty$ and $E(w_{t}|y_{t-j}|^3)<\infty$ for all $j\geq 1$ by Assumption \ref{assum-RandomWeight}, strictly stationarity and $\alpha$-mixing property of $\{y_t\}$ under Assumption \ref{assum-Process-Mixing}, $\max\{a_{00},a_{10},a_{0},a_{1}\}\leq\overline{c}<\infty$ and $b_{10},b_1\leq\rho<1$ by Assumption \ref{assum-SpaceTukey}, for any $\eta>0$, it holds that
		\begin{align*}
			E\left(\sup_{\|\bm u\|\leq \eta}\dfrac{|R_{3n}^*(\bm u)|}{n\|\bm u\|^2}\right)
			\leq & \dfrac{\eta}{n}\sum_{k=1}^{K}\sum_{t=1}^{n}E\left\{w_t\left\|\dot{q}_{t,\tau_k}(\bm\varphi_0^*)\right\|\dfrac{1}{2}\sup_{x}f_{t-1}(x)\sup_{\Phi}\left\|\ddot{q}_{t,\tau_k}(\bm\varphi^{\dagger})\right\| \right\} \\
			\leq &  C\eta\sum_{k=1}^{K} E\left\{\left\|\sqrt{w_t}\dot{q}_{t,\tau_k}(\bm\varphi_0^*)\right\|\sup_{\Phi}\left\|\sqrt{w_t}\ddot{q}_{t,\tau_k}(\bm\varphi^{\dagger})\right\| \right\} \\
			\leq & C\eta\sum_{k=1}^{K} \left[E\left(w_t\left\|\dot{q}_{t,\tau_k}(\bm\varphi_0^*)\right\|^2\right)\right]^{1/2}\left[E\left(w_t\sup_{\Phi}\left\|\ddot{q}_{t,\tau_k}(\bm\varphi^{\dagger})\right\|^2\right)\right]^{1/2}
		\end{align*}
		tends to $0$ as $\eta \to 0$. Similar to \eqref{epsdelta1-Tukey} and \eqref{epsdelta2-Tukey}, we can show that
		\begin{align}\label{R3n-Tukey}
			R_{3n}^*(\bm u_n)=o_p(n\|\bm u_n\|^2).
		\end{align}
		
		For $R_{4n}^*(\bm u)$, by Lemma \ref{lem1-Tukey}, it holds that
		\begin{align}\label{R4n-Tukey}
			R_{4n}^*(\bm u_n)=o_p(\sqrt{n}\|\bm u_n\|+n\|\bm u_n\|^2).
		\end{align}	
		
		Finally, we consider $R_{5n}^*(\bm u)$. Since $I(x\leq a)-I(x\leq b)=I(b\leq x \leq a)-I(b\geq x \geq a)$ and $\nu_{t,\tau}(\bm u)=\bm u^{\prime}\dot{q}_{t,\tau}(\bm\varphi^{\star})$ with $\bm\varphi^{\star}$ between $\bm\varphi_0^*$ and $\bm u+\bm\varphi_0^*$ by Taylor expansion, we have
		\begin{align*}
			\sup_{\|\bm u\|\leq \eta}|\xi_{t,\tau}(\bm u)| 
			\leq & \int_{0}^{1}\sup_{\|\bm u\|\leq \eta}\left|I(q_{t,\tau}(\bm\varphi_0^*)\leq y_{t}\leq q_{t,\tau}(\bm\varphi_0^*)+\nu_{t,\tau}(\bm u)s)\right|ds \\ 
			& + \int_{0}^{1}\sup_{\|\bm u\|\leq \eta}\left|I(q_{t,\tau}(\bm\varphi_0^*) \geq y_{t}\geq q_{t,\tau}(\bm\varphi_0^*)+\nu_{t,\tau}(\bm u)s)\right|ds \\
			\leq & I\left(q_{t,\tau}(\bm\varphi_0^*)\leq y_{t}\leq q_{t,\tau}(\bm\varphi_0^*)+ \eta\sup_{\bm\varphi^{\star}\in\Phi}\left\|\dot{q}_{t,\tau}(\bm\varphi^{\star})\right\|\right) \\ 
			& + I\left(q_{t,\tau}(\bm\varphi_0^*)\geq y_{t}\geq q_{t,\tau}(\bm\varphi_0^*)- \eta\sup_{\bm\varphi^{\star}\in\Phi}\left\|\dot{q}_{t,\tau}(\bm\varphi^{\star})\right\|\right).
		\end{align*}
		Then by iterated-expectation, the Cauchy-Schwarz inequality and the strict stationarity and ergodicity of $y_t$ under Assumption \ref{assum-Process-Mixing}, together with \eqref{1st-derivative-qt-Tukey}, \eqref{2nd-derivative-qt-Tukey}, $\max\{a_0^*, a_{1}^*,a_0^{\star}, a_{1}^{\star}\}\leq\overline{c}<\infty$ and $b_{1}^*,b_{1}^{\star}\leq\rho<1$ by Assumption \ref{assum-SpaceTukey}, $\sup_{x}f_{t-1}(x)<\infty$ by Assumption \ref{assum-ConditionalDensity} and $E(w_t)<\infty$ and $E(w_{t}|y_{t-j}|^3)<\infty$ for all $j\geq 1$ by Assumption \ref{assum-RandomWeight}, for any $\eta>0$, it follows that
		\begin{align*}
			E\left(\sup_{\|\bm u\|\leq \eta}\dfrac{|R_{5n}^*(\bm u)|}{n\|\bm u\|^2}\right)  
			\leq & \dfrac{1}{2n}\sum_{k=1}^{K}\sum_{t=1}^{n}E\left[w_t\sup_{\bm\varphi^{\dagger}\in\Phi}\left\|\ddot{q}_{t,\tau_k}(\bm\varphi^{\dagger})\right\|E\left(\sup_{\|\bm u\|\leq \eta}|\xi_{t,\tau_k}(\bm u)| | \mathcal{F}_{t-1}\right)\right] \\
			\leq & \eta\sup_{x}f_{t-1}(x) \sum_{k=1}^{K}E\left[w_t\sup_{\bm\varphi^{\dagger}\in\Phi}\left\|\ddot{q}_{t,\tau_k}(\bm\varphi^{\dagger})\right\|\sup_{\bm\varphi^{\star}\in\Phi}\left\|\dot{q}_{t,\tau_k}(\bm\varphi^{\star})\right\| \right] \\
			\leq & C\eta\sum_{k=1}^{K} \left[E\left(w_t\sup_{\bm\varphi^{\dagger}\in\Phi}\left\|\ddot{q}_{t,\tau_k}(\bm\varphi^{\dagger})\right\|^2\right)\right]^{1/2}\left[E\left(w_t\sup_{\bm\varphi^{\star}\in\Phi}\left\|\dot{q}_{t,\tau_k}(\bm\varphi^{\star})\right\|^2\right)\right]^{1/2} 
		\end{align*} 
		tends to $0$ as $\eta \to 0$. 
		Similar to \eqref{epsdelta1-Tukey} and \eqref{epsdelta2-Tukey}, we can show that 
		\begin{align}\label{R5n-Tukey}
			R_{5n}^*(\bm u_n)=o_p(n\|\bm u_n\|^2).
		\end{align}
		From \eqref{K2rep-Tukey}, \eqref{R2n-Tukey}, \eqref{R3n-Tukey}, \eqref{R4n-Tukey} and \eqref{R5n-Tukey}, we have
		\begin{align}\label{Kn2-Tukey}
			K_{2n}^*(\bm u_n)=\sqrt{n}\bm u_n^{\prime}J_2^*\sqrt{n}\bm u_n+o_p(\sqrt{n}\|\bm u_n\|+n\|\bm u_n\|^2).
		\end{align}
		In view of \eqref{Gnrep-Tukey}, \eqref{Kn1-Tukey} and \eqref{Kn2-Tukey}, we accomplish the proof of this lemma.
	\end{proof}

	\begin{proof}[Proof of Lemma \ref{lem3-Tukey}]
		Denote $\bm u=\bm\varphi-\bm\varphi_0^*$. Recall that $e_{t,\tau}^*=y_t-q_{t,\tau}(\bm\varphi_0^*)$, $\nu_{t,\tau}(\bm u)=q_{t,\tau}(\bm\varphi_0^*+\bm u)-q_{t,\tau}(\bm\varphi_0^*)$ and $\xi_{t,\tau}(\bm u)=\int_{0}^{1}[I(e_{t,\tau}^*\leq \nu_{t,\tau}(\bm u)s)-I(e_{t,\tau}^*\leq 0)]ds$. 
		Let $\widetilde{e}_{t,\tau}^*=y_t-\widetilde{q}_{t,\tau}(\bm\varphi_0^*)$, $\widetilde{\nu}_{t,\tau}(\bm u)=\widetilde{q}_{t,\tau}(\bm\varphi_0^*+\bm u)-\widetilde{q}_{t,\tau}(\bm\varphi_0^*)$
		and 
		$\widetilde{\xi}_{t,\tau}(\bm u)=\int_{0}^{1}[I(\widetilde{e}_{t,\tau}^*\leq \widetilde{\nu}_{t,\tau}(\bm u)s)-I(\widetilde{e}_{t,\tau}^*\leq 0)]ds$. 
		Similar to \eqref{Gnrep-Tukey}, by the Knight identity \eqref{identity} we can verify that
		\begin{align}\label{initialHnrep-Tukey}
			&n[\widetilde{L}_n^*(\bm\varphi_0^*+\bm u)-\widetilde{L}_n^*(\bm\varphi_0^*)]-n[L_n(\bm\varphi_0^*+\bm u)-L_n(\bm\varphi_0^*)] \nonumber\\
			=& \sum_{k=1}^{K}\sum_{t=1}^n w_t \{[-\widetilde{\nu}_{t,\tau_k}(\bm u)\psi_{\tau_k}(\widetilde{e}_{t,\tau_k}^*)+\widetilde{\nu}_{t,\tau_k}(\bm u)\widetilde{\xi}_{t,\tau_k}(\bm u)] \nonumber\\
			&\hspace{18mm} - [-\nu_{t,\tau_k}(\bm u)\psi_{\tau_k}(e_{t,\tau_k}^*)+\nu_{t,\tau_k}(\bm u)\xi_{t,\tau_k}(\bm u)]\} \nonumber\\
			=& \sum_{k=1}^{K}\left[\widetilde{A}_{1n,k}^*(\bm u)+\widetilde{A}_{2n,k}^*(\bm u)+\widetilde{A}_{3n,k}^*(\bm u)+\widetilde{A}_{4n,k}^*(\bm u)\right],
		\end{align}
		where $\bm u\in\Lambda^*\equiv\{\bm u\in\mathbb{R}^4: \bm u+\bm\varphi_0^* \in \Phi\}$,
		\begin{align*}
			\widetilde{A}_{1n,k}^*(\bm u)&= \sum_{t=1}^n w_t [\nu_{t,\tau_k}(\bm u)-\widetilde{\nu}_{t,\tau_k}(\bm u)]\psi_{\tau_k}(\widetilde{e}_{t,\tau_k}^*),\\ 
			\widetilde{A}_{2n,k}^*(\bm u)&=\sum_{t=1}^n w_t [\psi_{\tau_k}(e_{t,\tau_k}^*)-\psi_{\tau_k}(\widetilde{e}_{t,\tau_k}^*)]\nu_{t,\tau_k}(\bm u),\\
			\widetilde{A}_{3n,k}^*(\bm u)&= \sum_{t=1}^n w_t [\widetilde{\nu}_{t,\tau_k}(\bm u)-\nu_{t,\tau_k}(\bm u)]\widetilde{\xi}_{t,\tau_k}(\bm u) \;\;\text{and}\;\; \\ \widetilde{A}_{4n,k}^*(\bm u)&=\sum_{t=1}^n w_t [\widetilde{\xi}_{t,\tau_k}(\bm u)-\xi_{t,\tau_k}(\bm u)] \nu_{t,\tau_k}(\bm u).
		\end{align*}
		
		We first consider $\widetilde{A}_{1n,k}^*(\bm u)$. 
		Since $|\psi_{\tau}(\cdot)|\leq 1$, $\{y_t\}$ is strictly stationary and ergodic by Assumption \ref{assum-Process-Mixing} and $E(w_t)<\infty$ and $E(w_{t}|y_{t-j}|^3)<\infty$ for all $j\geq 1$ by Assumption \ref{assum-RandomWeight}, then by Taylor expansion and Lemma \ref{lem0-Tukey}(ii), we have
		\begin{align*}
			\sup_{\bm u\in\Lambda^*}\dfrac{|\widetilde{A}_{1n,k}^*(\bm u)|}{\sqrt{n}\|\bm u\|} 
			&\leq \dfrac{1}{\sqrt{n}}\sum_{t=1}^n w_t \sup_{\bm u\in\Lambda^*}\dfrac{|\nu_{t,\tau_k}(\bm u)-\widetilde{\nu}_{t,\tau_k}(\bm u)|}{\|\bm u\|}|\psi_{\tau_k}(\widetilde{e}_{t,\tau_k}^*)| \\
			&\leq \dfrac{1}{\sqrt{n}}\sum_{t=1}^n w_t \sup_{\Phi}\|\dot{q}_{t,\tau_k}(\bm\varphi^{\dagger})-\dot{\widetilde{q}}_{t,\tau_k}(\bm\varphi^{\dagger})\| \\
			&\leq \dfrac{C}{\sqrt{n}}\sum_{t=1}^n \rho^t w_t(\varsigma_{\rho}+\xi_{\rho})+ \dfrac{C}{\sqrt{n}}\sum_{t=1}^n t\rho^t w_t\varsigma_{\rho}=o_p(1),
		\end{align*}
		where $\bm\varphi^{\dagger}$ is between $\bm\varphi$ and $\bm\varphi_0^*$. Therefore, it holds that
		\begin{align}\label{Pi1tilde-Tukey}
			\widetilde{A}_{1n,k}^*(\bm u_n)=o_p(\sqrt{n}\|\bm u_n\|).
		\end{align}

		We next consider $\widetilde{A}_{2n,k}^*(\bm u)$. 
		Using $I(x < a)-I(x < b)=I(0 < x-b < a-b)-I(0> x-b > a-b)$ and $\psi_{\tau}(e_{t,\tau}^*)-\psi_{\tau}(\widetilde{e}_{t,\tau}^*)=I(y_t<\widetilde{q}_{t,\tau}(\bm\varphi_0^*))-I(y_t<q_{t,\tau}(\bm\varphi_0^*))$, we have 
		\begin{align*}
			E[|\psi_{\tau}(e_{t,\tau}^*)-\psi_{\tau}(\widetilde{e}_{t,\tau}^*)| |\mathcal{F}_{t-1}] 
			\leq & E\left[I(0< y_t-q_{t,\tau}(\bm\varphi_0^*)< |\widetilde{q}_{t,\tau}(\bm\varphi_0^*)-q_{t,\tau}(\bm\varphi_0^*)|) |\mathcal{F}_{t-1}\right] \\
			&+ E\left[I(0> y_t-q_{t,\tau}(\bm\varphi_0^*)> -|\widetilde{q}_{t,\tau}(\bm\varphi_0^*)-q_{t,\tau}(\bm\varphi_0^*)|)|\mathcal{F}_{t-1}\right] \\
			\leq & F_{t-1}\left(q_{t,\tau}(\bm\varphi_0^*)+|\widetilde{q}_{t,\tau}(\bm\varphi_0^*)-q_{t,\tau}(\bm\varphi_0^*)|\right) \\
			&-F_{t-1}\left(q_{t,\tau}(\bm\varphi_0^*)-|\widetilde{q}_{t,\tau}(\bm\varphi_0^*)-q_{t,\tau}(\bm\varphi_0^*)|\right).
		\end{align*}
		Then by iterative-expectation and Cauchy-Schwarz inequality, together with $\nu_{t,\tau}(\bm u)=\bm u^{\prime}\dot{q}_{t,\tau}(\bm\varphi^{\dagger})$ by Taylor expansion, Lemma \ref{lem0-Tukey}(i),  Assumption \ref{assum-Process-Mixing}, $\max\{a_{0}^{\dagger},a_{1}^{\dagger}\}\leq\overline{c}<\infty$ and $b_{1}^{\dagger}\leq\rho<1$ by Assumption \ref{assum-SpaceTukey}, $\sup_{x}f_{t-1}(x)<\infty$ by Assumption \ref{assum-ConditionalDensity} and $E(w_t)<\infty$ and $E(w_{t}|y_{t-j}|^3)<\infty$ for all $j\geq 1$ by Assumption \ref{assum-RandomWeight}, it holds that
		\begin{align*}
			E\sup_{\bm u\in\Lambda^*}\dfrac{|\widetilde{A}_{2n,k}^*(\bm u)|}{\sqrt{n}\|\bm u\|} 
			&\leq \dfrac{1}{\sqrt{n}}\sum_{t=1}^n E\left\{ w_t\sup_{\Phi}\|\dot{q}_{t,\tau_k}(\bm\varphi^{\dagger})\|\cdot E[|\psi_{\tau_k}(e_{t,\tau_k}^*)-\psi_{\tau_k}(\widetilde{e}_{t,\tau_k}^*)| |\mathcal{F}_{t-1}] \right\} \\
			&\leq 2C\sup_{x}f_{t-1}(x)\dfrac{1}{\sqrt{n}}\sum_{t=1}^n \rho^t E\left\{w_t\sup_{\Phi}\|\dot{q}_{t,\tau_k}(\bm\varphi^{\dagger})\|  \varsigma_{\rho} \right\} \\
			&\leq \dfrac{C}{\sqrt{n}}\sum_{t=1}^n \rho^t \cdot \left[E\left(w_t\sup_{\Phi}\|\dot{q}_{t,\tau_k}(\bm\varphi^{\dagger})\|^2\right)\right]^{1/2}\cdot \left[E(w_t\varsigma_{\rho}^2)\right]^{1/2}=o(1),
		\end{align*}
		where $\bm\varphi^{\dagger}$ is between $\bm\varphi_0^*+\bm u$ and $\bm\varphi_0^*$. 
		As a result, it follows that
		\begin{align}\label{Pi2tilde-Tukey}
			\widetilde{A}_{2n,k}^*(\bm u_n)=o_p(\sqrt{n}\|\bm u_n\|).
		\end{align}

		For $\widetilde{A}_{3n,k}^*(\bm u)$, since $|\widetilde{\xi}_{t,\tau}(\bm u)|<2$, similar to the proof of $\widetilde{A}_{1n,k}^*(\bm u)$, we can verify that
		\begin{align}\label{Pi3tilde-Tukey}
			\widetilde{A}_{3n,k}^*(\bm u_n)=o_p(\sqrt{n}\|\bm u_n\|).
		\end{align}

		Finally, we consider $\widetilde{A}_{4n,k}^*(\bm u)$. Denote $\widetilde{c}_{t,\tau}^*=I(y_t\leq\widetilde{q}_{t,\tau}(\bm\varphi_0^*))-I(y_t\leq q_{t,\tau}(\bm\varphi_0^*))$ and $\widetilde{d}_{t,\tau}^*= \int_{0}^1 \delta_{t,\tau}^*(s) ds$ with $\widetilde{\delta}_{t,\tau}^*(s)=I(y_t\leq\widetilde{q}_{t,\tau}(\bm\varphi_0^*)+\widetilde{\nu}_{t,\tau}(\bm u)s)-I(y_t\leq q_{t,\tau}(\bm\varphi_0^*)+\nu_{t,\tau}(\bm u)s)$. 
		Using $I(X\leq a)-I(X\leq b)=I(b\leq X \leq a)-I(b\geq X \geq a)$, it holds that
		\begin{align*}
			|\widetilde{c}_{t,\tau}^*| \leq & I\left(|y_t-q_{t,\tau}(\bm\varphi_0^*)| \leq |\widetilde{q}_{t,\tau}(\bm\varphi_0^*)-q_{t,\tau}(\bm\varphi_0^*)|\right) \quad \text{and} \\
			\sup_{\bm u\in\Lambda^*}|\widetilde{\delta}_{t,\tau}^*(s)| \leq & I\left(|y_t-q_{t,\tau}(\bm\varphi_0^*)-\nu_{t,\tau}(\bm u)s| \leq |\widetilde{q}_{t,\tau}(\bm\varphi_0^*)-q_{t,\tau}(\bm\varphi_0^*)|+\sup_{\Phi}|\widetilde{\nu}_{t,\tau}(\bm u)-\nu_{t,\tau}(\bm u)|s \right).
		\end{align*}
		Then by Taylor expansion, together with $\sup_{x}f_{t-1}(x)<\infty$ under Assumption \ref{assum-ConditionalDensity} and Lemma \ref{lem0-Tukey}, we have 
		\begin{align*}
			E\left(|\widetilde{c}_{t,\tau}^*| |\mathcal{F}_{t-1}\right) 
			\leq & 2\sup_{x}f_{t-1}(x) |\widetilde{q}_{t,\tau}(\bm\varphi_0^*)-q_{t,\tau}(\bm\varphi_0^*)| \leq C\rho^t \varsigma_{\rho} \quad \text{and} \\
			E\left(\sup_{\bm u\in\Lambda^*}|\widetilde{\delta}_{t,\tau}^*(s)| |\mathcal{F}_{t-1}\right) 
			\leq & 2\sup_{x}f_{t-1}(x) \left(|\widetilde{q}_{t,\tau}(\bm\varphi_0^*)-q_{t,\tau}(\bm\varphi_0^*)|+\sup_{\bm u\in\Lambda^*}|\widetilde{\nu}_{t,\tau}(\bm u)-\nu_{t,\tau}(\bm u)|\right) \\
			\leq & C\rho^t [\varsigma_{\rho} + \|\bm u\|(\varsigma_{\rho}+t\varsigma_{\rho}+\xi_{\rho})].
		\end{align*}
		These imply that
		\begin{align*}
			E\left(\sup_{\bm u\in\Lambda^*}|\widetilde{\xi}_{t,\tau}(\bm u)-\xi_{t,\tau}(\bm u)| |\mathcal{F}_{t-1}\right) \leq C\rho^t \varsigma_{\rho} + C\|\bm u\|\rho^t(\varsigma_{\rho}+t\varsigma_{\rho}+\xi_{\rho}).
		\end{align*}
		As a result, by iterative-expectation and Cauchy-Schwarz inequality, together with Assumption \ref{assum-Process-Mixing}, $\widetilde{\xi}_{t,\tau}(\bm u)-\xi_{t,\tau}(\bm u)=\widetilde{d}_{t,\tau}^*-\widetilde{c}_{t,\tau}^*$, $\nu_{t,\tau}(\bm u)=\bm u^{\prime}\dot{q}_{t,\tau}(\bm\varphi^{\dagger})$ by Taylor expansion, and $E(w_t)<\infty$ and $E(w_{t}|y_{t-j}|^3)<\infty$ for all $j\geq 1$ by Assumption \ref{assum-RandomWeight}, we have
		\begin{align*}
			E\sup_{\bm u\in\Lambda^*}\dfrac{|\widetilde{A}_{4n,k}^*(\bm u)|}{\sqrt{n}\|\bm u\|+n\|\bm u\|^2} 
			\leq & \sum_{t=1}^n E\left\{w_t E\left(\sup_{\bm u\in\Lambda^*}\dfrac{|\widetilde{\xi}_{t,\tau_k}(\bm u)-\xi_{t,\tau_k}(\bm u)|}{\sqrt{n}+n\|\bm u\|} |\mathcal{F}_{t-1}\right)\sup_{\bm u\in\Lambda^*}\dfrac{|\nu_{t,\tau_k}(\bm u)|}{\|\bm u\|} \right\} \\
			\leq & \dfrac{C}{\sqrt{n}}\sum_{t=1}^n \rho^t \cdot \left[E\left(w_t\sup_{\Phi}\|\dot{q}_{t,\tau_k}(\bm\varphi^{\dagger})\|^2\right)\right]^{1/2}\cdot \left[E(w_t\varsigma_{\rho}^2)\right]^{1/2} \\
			& + \dfrac{C}{n} \sum_{t=1}^n \rho^t \cdot \left[E\left(w_t\sup_{\Phi}\|\dot{q}_{t,\tau_k}(\bm\varphi^{\dagger})\|^2\right)\right]^{1/2}\cdot \left[E(w_t\varsigma_{\rho}^2)\right]^{1/2} \\
			& + \dfrac{C}{n} \sum_{t=1}^n t\rho^t \cdot \left[E\left(w_t\sup_{\Phi}\|\dot{q}_{t,\tau_k}(\bm\varphi^{\dagger})\|^2\right)\right]^{1/2}\cdot \left[E(w_t\varsigma_{\rho}^2)\right]^{1/2} \\
			& + \dfrac{C}{n} \sum_{t=1}^n \rho^t \cdot \left[E\left(w_t\sup_{\Phi}\|\dot{q}_{t,\tau_k}(\bm\varphi^{\dagger})\|^2\right)\right]^{1/2}\cdot \left[E(w_t\xi_{\rho}^2)\right]^{1/2} = o(1).
		\end{align*}
		Hence, it follows that
		\begin{align}\label{Pi4tilde-Tukey}
			\widetilde{A}_{4n,k}^*(\bm u_n)=o_p(\sqrt{n}\|\bm u_n\|+n\|\bm u_n\|^2).
		\end{align}
		Combining \eqref{initialHnrep-Tukey}--\eqref{Pi4tilde-Tukey}, we accomplish the proof of this lemma.
	\end{proof}

	\section{Additional Simulation Studies}
	
	\subsection{Unweighted QR estimator}
	
	In this experiment, we examine the robustness of the unweighted QR estimator when the data is heavy-tailed such that the condition $E|y_t|^3<\infty$ does not hold. Note that a simulation experiment is conducted in  Section 5.2 of the main paper to examine the performance of the self-weighted QR estimator $\widetilde{\bm\theta}_{wn}(\tau)$.
	For a direction comparison, we redo the experiment for the unweighted estimator $\widetilde{\bm\theta}_{n}(\tau)$ under the same settings.
	
	To determine whether the third-order moment of $y_t$ exists or not, we generate $\{y_t\}$ of length $10^5$ for Settings (5.2) and (5.3) with $F=F_N$ and $F=F_T$.
	By calculating the tail index of $y_t$ using Hill estimator \citep{Hill1975simple}, we conclude that possibly $E|y_t|^3=\infty$ if $F=F_T$ and $E|y_t|^3<\infty$ if $F=F_N$ for both settings; see Table \ref{tab:TailIndex} for the tail index of $\{y_t\}$.  To confirm this,  we further conduct tests for the null hypothesis that the $k$th moment of $y_t$ does not exist  \citep{Trapani2016testing} for $k=1,2$ and 3. From the $p$-values in  Table \ref{tab:TailIndex}, we confirm the above conclusion. 
	
	Tables \ref{tab.estimation.unweightedCQE.DGP1} and \ref{tab.estimation.unweightedCQE.DGP2} report the biases, empirical standard deviations (ESDs) and asymptotic standard deviations (ASDs) of the unweighted QR estimator $\widetilde{\bm\theta}_{n}(\tau)$ at quantile level $\tau=0.5\%,1\%$ or 5\%. From the results, we observe that the three main findings for the self-weighted estimator summarized in Section \ref{Sec-simulation-WCQE} of the main paper also hold true for the unweighted estimator. This indicates that the unweighted estimator is robust to heavy-tailed data without a finite third-order moment. Moreover, we can  compare Tables \ref{tab.estimation.unweightedCQE.DGP1} and \ref{tab.estimation.unweightedCQE.DGP2} to Tables \ref{tab.estimation.CQE.w2.DGP1} and \ref{tab.estimation.CQE.w2.DGP2} in the main paper, respectively. It can be observed that both estimators have similar performance for $F=F_N$, whereas the self-weighted estimator outperforms the unweighted estimator in terms of ESD and ASD for $F=F_T$. 
	Thus, when $E|y_t|^3=\infty$, although the unweighted estimator is still robust, it is less efficient than the self-weighted estimator.

	\subsection{Quantile rearrangement}\label{supp::Simulation rearrangement}
	
	To evaluate the effect of the quantile rearrangement method on prediction, we conduct a simulation experiment to compare the original quantile curve based on the pointwise quantile estimates $\{\widetilde{Q}_{\tau_k}(y_{t}|\mathcal{F}_{t-1})\}_{k=1}^K$ with the rearranged quantile curve based on the sorted quantile estimates denoted by $\{\widetilde{Q}_{\tau_k}^*(y_{t}|\mathcal{F}_{t-1})\}_{k=1}^K$.
	We consider the DGP in (5.1) with Settings (5.2) and (5.3) and $F$ being the standard normal distribution $F_N$ or Tukey-lambda distribution $F_T$, respectively. The sample size is set to $n=1000$ or 2000, and 1000 replications are generated for each sample size.	
	For evaluation, we use the $\ell_2$-loss to measure the prediction errors of $\{\widetilde{Q}_{\tau_k}(y_{t}|\mathcal{F}_{t-1})\}_{k=1}^K$ and $\{\widetilde{Q}_{\tau_k}^*(y_{t}|\mathcal{F}_{t-1})\}_{k=1}^K$. Specifically, the in-sample prediction error is defined as 
	\[ \left[\dfrac{1}{nMK}\sum_{m=1}^M\sum_{t=1}^n\sum_{k=1}^K \left|\widehat{Q}_{\tau_k}^{(m)}(y_{t}|\mathcal{F}_{t-1})-Q_{\tau_k}(y_{t}|\mathcal{F}_{t-1})\right|^2\right]^{1/2}, 
	\]
	and the out-of-sample prediction error is defined as
	\[ \left[\dfrac{1}{MK}\sum_{m=1}^M\sum_{k=1}^K \left|\widehat{Q}_{\tau_k}^{(m)}(y_{n+1}|\mathcal{F}_{t-1})-Q_{\tau_k}(y_{n+1}|\mathcal{F}_{t-1})\right|^2\right]^{1/2}, 
	\]
	where $\widehat{Q}_{\tau_k}^{(m)}(y_{t}|\mathcal{F}_{t-1})=\widetilde{Q}_{\tau_k}^{(m)}(y_{t}|\mathcal{F}_{t-1})$ or $\widetilde{Q}_{\tau_k}^{*(m)}(y_{t}|\mathcal{F}_{t-1})$ is the estimate in the $m$th replication, and $M=1000$ is the total number of replications.
	Table \ref{QtRearrangement} reports the prediction errors of estimated curves based on $\{\widetilde{Q}_{\tau_k}(y_{t}|\mathcal{F}_{t-1})\}_{k=1}^K$ and $\{\widetilde{Q}_{\tau_k}^*(y_{t}|\mathcal{F}_{t-1})\}_{k=1}^K$ for $\tau_k=0.7+0.005k$ with $k=1,\ldots,58$ (dense case) and $\tau_k=0.7+0.05k$ with $k=1,\ldots,5$ (sparse case).
	It can be observed that the rearranged quantile curve has no greater prediction errors than the original quantile curve in finite samples.

	\subsection{A Kolmogorov-Smirnov test and its finite-sample comparison with the CvM test}
	
	To test whether $\beta_1(\tau)$ is a constant or not, we can also construct the Kolmogorov-Smirnov (KS)-type test
	\[S_{n}^*=\sqrt{n}\sup_{\tau\in\mathcal{T}}|v_{n}(\tau)|,\] where $v_{n}(\tau)=R\widetilde{\bm\theta}_{wn}(\tau)-\widetilde{\beta}_1=R[\widetilde{\bm\theta}_{wn}(\tau)-\int_{\mathcal{T}}\widetilde{\bm\theta}_{wn}(\tau)d\tau]$. 
	Similar to Corollary \ref{thm-test1}, under the same regular conditions, we can show that $S_{n}^*\to_d S^*\equiv\sup_{\tau\in\mathcal{T}}|v_{0}(\tau)|$ as $n\to\infty$ under $H_0$, where $v_{0}(\tau)=R[\mathbb{G}(\tau)-\int_{\mathcal{T}}\mathbb{G}(\tau)d\tau]$ with $\mathbb{G}(\tau)$ defined in Theorem \ref{thm-WCQE-weak-convergence}. 
	Then the subsampling method in Section 3.2 of the main paper can be used to calculate the critical values of $S_{n}^*$ with $S_{k,b_n}$ replaced by $S_{k,b_n}^*=\sqrt{b_n}\sup_{\tau\in\mathcal{T}}|v_{k,b_n}(\tau)|$. 
	
	The same experiment is conducted using the same DGPs as in Section \ref{Sec-simulation-CvM}. To calculate $S_{n}$ in \eqref{CvM_constant} and $S_{n}^*$, we use a grid $\mathcal{T}_n$ with equal cell size $\delta_n=0.005$ in place of $\mathcal{T}$. 
	For the block size $b_n$ in subsampling, we consider $b_n=\lfloor cn^{1/2}\rfloor$ with $c=0.5, 1$ and 2; see also \cite{Shao2011bootstrap}.
	Tables \ref{tab_const_test_tau1} and \ref{tab_const_test_tau2} summarize the rejection rates of $S_n$ (the CvM test) and $S_{n}^*$ (the KS-type test) at $5\%$ significance level for $\mathcal{T}=[0.7,0.995]$ and $[0.8,0.995]$, respectively. It can be seen that the KS-type test has lower power than the CvM test for tail quantile intervals in finite samples. As a result, we recommend using the CvM test for testing $H_0:\ \forall\tau\in\mathcal{T}, \; R\bm\theta(\tau)=\beta_1$ in our model setting.
	
	\section{Additional results for the empirical analysis}\label{Sec-rearrangement}
	
	We have also re-calculated both the backtesting and empirical coverage results for the proposed self-weighted QR method after conducting the quantile rearrangement in \cite{Chernozhukov2010}. 
	Tables \ref{tab:CommonQuantiles}--\ref{tab:ExtremeQuantiles} report the results for the self-weighted QR method before and after quantile rearrangement.  
	We find that the $p$-values of the backtests, the empirical coverage rates and prediction errors do not change at all after the percentage points are rounded down to two decimal places for lower and upper $1\%, 2.5\%, 5\%$ conditional quantiles. 
	Moreover, the empirical coverage rates and prediction errors change very little for lower and upper $0.1\%, 0.25\%, 0.5\%$ conditional quantiles.
	As a result, almost the same results can be observed for Tables \ref{tabForecasting1}--\ref{tabForecasting2}. 
	
	Moreover, we also employ the monotone rearrangement method to ensure the monotonicity of estimated curves for $\omega(\cdot)$ and $\alpha_1(\cdot)$, respectively.
	Specifically, we sort the pointwise estimates $\{\widetilde{\omega}_{wn}(\tau_k)\}_{k=1}^K$ and $\{\widetilde{\alpha}_{1wn}(\tau_k)\}_{k=1}^K$ in Figure \ref{coef_comparison_ARCHinfty} respectively in increasing order to enforce the monotonicity. Moreover, the pointwise confidence intervals of $\omega(\cdot)$ and $\alpha_1(\cdot)$ can be rearranged accordingly by sorting the upper and lower endpoint functions. 
	Figure \ref{coef_comparison_ARCHinfty_rearrangement} illustrates the original curve estimates together with their 95\% confidence intervals, and the rearranged curve estimates together with rearranged confidence intervals for $\omega(\cdot)$ and $\alpha_1(\cdot)$. 
	It can be seen that the rearranged curves and confidence intervals for $\omega(\cdot)$ and $\alpha_1(\cdot)$ are monotonic and more smooth than the original estimated curves, and the rearranged confidence interval is shorter in length than the original interval.

	\begin{table}[htp]
		\centering
		\caption{\label{tab:TailIndex} Hill's estimator of the  tail index and $p$-values of \citeauthor{Trapani2016testing}'s tests with $k=1,2$ and 3 for $\{y_t\}$ generated from Settings (5.2) and (5.3) with $F=F_N$ and $F=F_T$.}
		\begin{tabular}{lcccccc}
			\hline
			&&\multicolumn{2}{c}{Setting (5.2)} && \multicolumn{2}{c}{Setting (5.3)}\\
			\cline{3-4}\cline{6-7}
			&& $F_N$ & $F_T$ && $F_N$ & $F_T$\\
			\hline
			Tail index && 8.449 & 1.971 && 5.078 & 2.016\\
			Trapani's test(3)&& $<0.01$ & $0.943$ && $<0.01$ & $0.992$\\
			Trapani's test(2)&& $<0.01$ & $0.082$ && $<0.01$ & $0.929$\\
			Trapani's test(1)&& $<0.01$ & $<0.01$ && $<0.01$ & $<0.01$\\
			\hline
		\end{tabular}
	\end{table}
	
	\begin{table}[htp]
		\caption{Biases, ESDs and ASDs of the unweighted QR estimator $\widetilde{\bm\theta}_{n}(\tau)$ at quantile level $\tau=0.5\%,1\%$ or $5\%$ for DGP (5.1) with Setting (5.2). ASD$_{1}$ and ASD$_{2}$ correspond to the bandwidths $\ell_{B}$ and $\ell_{HS}$, respectively. $F$ is the standard normal distribution $F_N$ or Tukey-lambda distribution $F_T$.}
		\label{tab.estimation.unweightedCQE.DGP1}
		\centering
		\begin{tabular}{lrrrrrrrrrrrrrr}
			\hline
			&&   && \multicolumn{5}{c}{$F=F_{N}$} && \multicolumn{5}{c}{$F=F_{T}$}\\
			\cline{5-9}\cline{11-15}
			&& $n$ &&True & Bias & ESD & ASD$_{\text{1}}$ & ASD$_{\text{2}}$&& True & Bias & ESD & ASD$_{\text{1}}$ & ASD$_{\text{2}}$\\ 
			\hline
			&&&&\multicolumn{11}{c}{$\tau=0.5\%$}\\
			$\omega$&& 1000&&-0.258 & -0.011 & 0.076 & 0.088 & 0.056 && -0.942 & -0.435 & 1.191 & 1.865 & 1.169 \\ 
			&& 2000&&-0.258 & -0.005 & 0.061 & 0.065 & 0.046 && -0.942 & -0.350 & 0.959 & 1.325 & 0.869 \\ 
			$\alpha_1$&& 1000&&-0.258 & -0.018 & 0.157 & 0.194 & 0.120 && -0.942 & -0.038 & 0.586 & 0.845 & 0.505 \\ 
			&& 2000&&-0.258 & -0.020 & 0.126 & 0.139 & 0.098 && -0.942 & -0.037 & 0.510 & 0.623 & 0.415 \\ 
			$\beta_1$&& 1000&&0.800 & -0.057 & 0.152 & 0.896 & 0.460 && 0.800 & -0.025 & 0.110 & 0.177 & 0.099 \\ 
			&& 2000&&0.800 & -0.045 & 0.132 & 0.197 & 0.164 && 0.800 & -0.019 & 0.096 & 0.128 & 0.079 \\ 
			\hline
			&&&&\multicolumn{11}{c}{$\tau=1\%$}\\
			$\omega$&& 1000&&-0.233 & -0.011 & 0.064 & 0.070 & 0.056 && -0.755 & -0.411 & 0.890 & 1.082 & 0.759 \\ 
			&& 2000&&-0.233 & -0.005 & 0.050 & 0.053 & 0.039 && -0.755 & -0.262 & 0.658 & 0.809 & 0.557 \\ 
			$\alpha_1$&& 1000&&-0.233 & -0.014 & 0.135 & 0.156 & 0.123 && -0.755 & -0.029 & 0.427 & 0.481 & 0.339 \\ 
			&& 2000&&-0.233 & -0.013 & 0.097 & 0.110 & 0.082 && -0.755 & -0.031 & 0.373 & 0.370 & 0.280 \\
			$\beta_1$&& 1000&&0.800 & -0.055 & 0.141 & 0.371 & 0.303 && 0.800 & -0.024 & 0.105 & 0.127 & 0.085 \\ 
			&& 2000&&0.800 & -0.040 & 0.127 & 0.201 & 0.135 && 0.800 & -0.016 & 0.085 & 0.092 & 0.067 \\ 
			\hline
			&&&&\multicolumn{11}{c}{$\tau=5\%$}\\
			$\omega$&& 1000&&-0.164 & -0.009 & 0.039 & 0.041 & 0.036 && -0.405 & -0.197 & 0.360 & 0.384 & 0.299 \\ 
			&& 2000&&-0.164 & -0.005 & 0.030 & 0.031 & 0.027 && -0.405 & -0.123 & 0.256 & 0.291 & 0.226 \\
			$\alpha_1$&& 1000&&-0.164 & -0.011 & 0.083 & 0.089 & 0.078 && -0.405 & -0.007 & 0.156 & 0.170 & 0.135 \\ 
			&& 2000&&-0.164 & -0.006 & 0.058 & 0.062 & 0.056 && -0.405 & -0.008 & 0.140 & 0.130 & 0.108 \\ 
			$\beta_1$&& 1000&&0.800 & -0.061 & 0.153 & 0.272 & 0.242 && 0.800 & -0.018 & 0.076 & 0.082 & 0.065 \\ 
			&& 2000&&0.800 & -0.035 & 0.113 & 0.107 & 0.096 && 0.800 & -0.011 & 0.060 & 0.060 & 0.049 \\ 
			\hline
		\end{tabular}
	\end{table}

	\begin{table}[htp]
		\caption{Biases, ESDs and ASDs of the unweighted QR estimator $\widetilde{\bm\theta}_{n}(\tau)$ at quantile level $\tau=0.5\%,1\%$ or $5\%$ for DGP (5.1) with Setting (5.3). ASD$_{1}$ and ASD$_{2}$ correspond to the bandwidths $\ell_{B}$ and $\ell_{HS}$, respectively. $F$ is the standard normal distribution $F_N$ or Tukey-lambda distribution $F_T$.}
		\label{tab.estimation.unweightedCQE.DGP2}
		\centering
		\begin{tabular}{lrrrrrrrrrrrrrr}
			\hline
			&&   && \multicolumn{5}{c}{$F=F_{N}$} && \multicolumn{5}{c}{$F=F_{T}$}\\
			\cline{5-9}\cline{11-15}
			&& $n$ &&True & Bias & ESD & ASD$_{\text{1}}$ & ASD$_{\text{2}}$&& True & Bias & ESD & ASD$_{\text{1}}$ & ASD$_{\text{2}}$\\ 
			\hline
			&&&&\multicolumn{11}{c}{$\tau=0.5\%$}\\
			$\omega$&& 1000&&-0.258 & -0.022 & 0.059 & 0.056 & 0.035 && -0.942 & -0.315 & 0.663 & 0.694 & 0.422 \\
			&& 2000&&-0.258 & -0.011 & 0.041 & 0.036 & 0.026 && -0.942 & -0.232 & 0.503 & 0.536 & 0.325 \\ 
			$\alpha_1$&& 1000&&-0.753 & 0.005 & 0.120 & 0.121 & 0.078 && -1.437 & 0.034 & 0.534 & 0.626 & 0.446 \\
			&& 2000&&-0.753 & 0.003 & 0.088 & 0.087 & 0.062 && -1.437 & 0.019 & 0.462 & 0.519 & 0.361 \\ 
			$\beta_1$&& 1000&&0.597 & -0.026 & 0.080 & 0.073 & 0.047 && 0.597 & -0.026 & 0.119 & 0.161 & 0.103 \\
			&& 2000&&0.597 & -0.013 & 0.055 & 0.049 & 0.037 && 0.597 & -0.016 & 0.097 & 0.123 & 0.082 \\ 
			\hline
			&&&&\multicolumn{11}{c}{$\tau=1\%$}\\
			$\omega$&& 1000&&-0.233 & -0.017 & 0.047 & 0.043 & 0.033 && -0.755 & -0.235 & 0.471 & 0.462 & 0.287 \\
			&& 2000&&-0.233 & -0.008 & 0.033 & 0.031 & 0.025 && -0.755 & -0.161 & 0.342 & 0.354 & 0.227 \\ 
			$\alpha_1$&& 1000&&-0.723 & 0.000 & 0.107 & 0.109 & 0.080 && -1.245 & 0.003 & 0.428 & 0.466 & 0.317 \\ 
			&& 2000&&-0.723 & 0.001 & 0.080 & 0.081 & 0.062 && -1.245 & 0.007 & 0.336 & 0.360 & 0.269 \\
			$\beta_1$&& 1000&&0.594 & -0.023 & 0.073 & 0.067 & 0.051 && 0.594 & -0.027 & 0.113 & 0.133 & 0.086 \\
			&& 2000&&0.594 & -0.010 & 0.050 & 0.048 & 0.037 && 0.594 & -0.014 & 0.088 & 0.097 & 0.070 \\
			\hline
			&&&&\multicolumn{11}{c}{$\tau=5\%$}\\
			$\omega$&& 1000&&-0.164 & -0.009 & 0.034 & 0.035 & 0.030 && -0.405 & -0.102 & 0.201 & 0.213 & 0.160 \\
			&& 2000&&-0.164 & -0.003 & 0.023 & 0.025 & 0.023 && -0.405 & -0.064 & 0.155 & 0.164 & 0.127 \\
			$\alpha_1$&& 1000&&-0.614 & 0.002 & 0.095 & 0.100 & 0.089 && -0.855 & 0.009 & 0.207 & 0.221 & 0.179 \\
			&& 2000&&-0.614 & -0.002 & 0.067 & 0.071 & 0.065 && -0.855 & 0.004 & 0.159 & 0.170 & 0.144 \\
			$\beta_1$&& 1000&&0.570 & -0.016 & 0.074 & 0.077 & 0.067 && 0.570 & -0.019 & 0.097 & 0.101 & 0.081 \\ 
			&& 2000&&0.570 & -0.007 & 0.052 & 0.053 & 0.049 && 0.570 & -0.012 & 0.072 & 0.076 & 0.063 \\ 
			\hline
		\end{tabular}
	\end{table}  
	
	\begin{table}
		\caption{\label{QtRearrangement} Prediction errors before and after rearrangement. `In' and `Out' represent in-sample and out-of-sample prediction errors, respectively. $F$ is the standard normal distribution $F_N$ or Tukey-lambda distribution $F_T$. }
		\centering
		\begin{tabular}{llcccccccccccc}
			\hline
			& && \multicolumn{5}{c}{Setting (5.2)} && \multicolumn{5}{c}{Setting (5.3)}\\
			\cline{4-8} \cline{10-14}
			& && \multicolumn{2}{c}{Before} && \multicolumn{2}{c}{After} &&  \multicolumn{2}{c}{Before} && \multicolumn{2}{c}{After} \\
			\cline{4-14}
			$n$& && $F_N$ & $F_T$ &&  $F_N$ & $F_T$  &&  $F_N$ & $F_T$ &&  $F_N$ & $F_T$\\
			\hline
			&&&\multicolumn{11}{c}{Dense case}\\
			1000 & In && 0.016 & 6.017 && 0.016 & 5.996 && 0.035 & 6.170 && 0.035 & 6.156 \\
			& Out && 0.016 & 1.699 && 0.015 & 1.618 && 0.039 & 1.810 && 0.039 & 1.801 \\ 
			2000 & In && 0.012 & 1.786 && 0.012 & 1.779 && 0.025 & 1.859 && 0.025 & 1.836 \\ 
			& Out && 0.012 & 1.619 && 0.012 & 1.614 && 0.025 & 0.596 && 0.025 & 0.594 \\ 
			&&&\multicolumn{11}{c}{Sparse case}\\
			1000 & In && 0.015 & 2.790 && 0.014 & 2.785 && 0.034 & 5.283 && 0.034 & 5.283 \\ 
			& Out && 0.014 & 1.046 && 0.014 & 1.046 && 0.038 & 1.415 && 0.038 & 1.415 \\ 
			2000 & In && 0.011 & 1.460 && 0.011 & 1.460 && 0.024 & 1.276 && 0.024 & 1.276 \\ 
			& Out && 0.011 & 1.280 && 0.011 & 1.280 && 0.024 & 0.516 && 0.024 & 0.516 \\ 
			\hline
		\end{tabular}				
	\end{table}
	
	\begin{table}
		\caption{\label{tab_const_test_tau1} Rejection rates of the CvM and KS-type tests at the 5\% significance level for $\mathcal{T}=[0.7,0.995]$, where $b_1$, $b_2$ and $b_3$ correspond to $\lfloor cn^{1/2}\rfloor$ with $c=0.5, 1$ and 2, respectively. $F$ is the standard normal distribution $F_N$ or Tukey-lambda distribution $F_T$.}
		\centering
		\begin{tabular}{ccrcccccccc}
			\hline
			&&&& \multicolumn{3}{c}{$F=F_N$} & & \multicolumn{3}{c}{$F=F_T$}\\
			\cline{5-7}\cline{9-11}
			&$N$&$d$&& $b_1$ & $b_2$ & $b_3$& & $b_1$ & $b_2$ & $b_3$ \\ 
			\hline
			&&   0 && 0.045 & 0.058 & 0.084 &  & 0.028 & 0.041 & 0.056 \\
			&1000&   1 && 0.101 & 0.116 & 0.140 &  & 0.098 & 0.122 & 0.167 \\ 
			CvM&&   1.6 && 0.236 & 0.278 & 0.332 &  & 0.259 & 0.325 & 0.403 \\
			&&   0 && 0.047 & 0.055 & 0.064 &  & 0.034 & 0.049 & 0.062 \\
			&2000&   1 && 0.190 & 0.214 & 0.236 &  & 0.240 & 0.274 & 0.334 \\
			&&   1.6 && 0.571 & 0.597 & 0.656 &  & 0.689 & 0.739 & 0.784 \\ 
			\hline
			&&   0 && 0.035 & 0.037 & 0.059 &  & 0.033 & 0.035 & 0.043 \\
			&1000&   1 && 0.052 & 0.075 & 0.100 &  & 0.058 & 0.057 & 0.088 \\
			KS&&   1.6 && 0.143 & 0.182 & 0.244 &  & 0.125 & 0.154 & 0.213 \\
			&&   0 && 0.034 & 0.038 & 0.052 &  & 0.034 & 0.046 & 0.061 \\
			&2000&   1 && 0.141 & 0.158 & 0.179 &  & 0.111 & 0.132 & 0.156 \\ 
			&&   1.6 && 0.465 & 0.503 & 0.558 &  & 0.444 & 0.474 & 0.530 \\ 
			\hline
		\end{tabular}
	\end{table}
	
	\begin{table}
		\caption{\label{tab_const_test_tau2} Rejection rates of the CvM and KS-type tests at the 5\% significance level for $\mathcal{T}=[0.8,0.995]$, where $b_1$, $b_2$ and $b_3$ correspond to $\lfloor cn^{1/2}\rfloor$ with $c=0.5, 1$ and 2, respectively. $F$ is the standard normal distribution $F_N$ or Tukey-lambda distribution $F_T$.}
		\centering
		\begin{tabular}{ccrcccccccc}
			\hline
			&&&& \multicolumn{3}{c}{$F=F_N$} & & \multicolumn{3}{c}{$F=F_T$}\\
			\cline{5-7}\cline{9-11}
			&$N$&$d$&& $b_1$ & $b_2$ & $b_3$& & $b_1$ & $b_2$ & $b_3$ \\ 
			\hline
			&&   0 &&  0.036 & 0.046 & 0.071 &  & 0.026 & 0.040 & 0.054 \\ 
			&1000&   1 && 0.071 & 0.093 & 0.124 &  & 0.061 & 0.073 & 0.121 \\ 
			CvM&&   1.6 && 0.169 & 0.223 & 0.284 &  & 0.147 & 0.191 & 0.274 \\
			&&   0 && 0.028 & 0.037 & 0.056 &  & 0.027 & 0.038 & 0.055 \\  
			&2000&   1 && 0.143 & 0.161 & 0.202 &  & 0.131 & 0.173 & 0.213 \\
			&&   1.6 && 0.481 & 0.558 & 0.601 &  & 0.473 & 0.530 & 0.610 \\
			\hline
			&&   0 && 0.030 & 0.036 & 0.061 &  & 0.035 & 0.037 & 0.045 \\
			&1000&   1 && 0.040 & 0.059 & 0.095 &  & 0.041 & 0.045 & 0.075 \\
			KS&&   1.6 && 0.103 & 0.137 & 0.196 &  & 0.082 & 0.083 & 0.125 \\
			&&   0 && 0.034 & 0.047 & 0.067 &  & 0.037 & 0.045 & 0.064 \\
			&2000&   1 && 0.097 & 0.119 & 0.150 &  & 0.074 & 0.093 & 0.110 \\
			&&   1.6 && 0.360 & 0.410 & 0.491 &  & 0.262 & 0.278 & 0.336 \\ 
			\hline
		\end{tabular}
	\end{table}
	
	\begin{table}
		\centering
		\caption{Empirical coverage rates (ECRs) in percentage, prediction errors (PEs) and $p$-values for correct conditional coverage (CC) and the dynamic quantile (DQ) tests for the proposed QR method at lower and upper $1\%, 2.5\%, 5\%$ conditional quantiles.}
		\label{tab:CommonQuantiles}
		\begin{tabular}{rrrrrrrrrrr}
			\hline
			&& \multicolumn{4}{c}{Before} && \multicolumn{4}{c}{After}\\
			\cline{3-6}\cline{8-11}
			$\tau$ && ECR & PE & CC & DQ &&  ECR & PE & CC & DQ \\
			\hline
			$1\%$ &&  1.26 & 0.65 & 0.74 & 0.96 && 1.26 & 0.65 & 0.74 & 0.96 \\
			$2.5\%$ &&  2.98 & 0.78 & 0.42 & 0.75 && 2.98 & 0.78 & 0.42 & 0.75 \\
			$5\%$ &&  6.12 & 1.30 & 0.42 & 0.01 &&  6.12 & 1.30 & 0.42 & 0.01 \\
			$95\%$ &&  94.51 & 0.57 & 0.11 & 0.62 && 94.51 & 0.57 & 0.11 & 0.62 \\
			$97.5\%$ &&  97.65 & 0.23 & 0.68 & 0.68 && 97.65 & 0.23 & 0.68 & 0.68 \\
			$99\%$ &&  99.06 & 0.15 & 0.93 & 1.00 && 99.06 & 0.15 & 0.93 & 1.00 \\
			\hline
		\end{tabular}
	\end{table}
	
	\begin{table}
		\centering
		\caption{Empirical coverage rates (ECRs) in percentage and prediction errors (PEs) for the proposed QR method at lower and upper $0.1\%, 0.25\%, 0.5\%$ conditional quantiles.}
		\label{tab:ExtremeQuantiles}
		\begin{tabular}{rcrrrrr}
			\hline
			&& \multicolumn{2}{c}{Before} && \multicolumn{2}{c}{After}\\
			\cline{3-4}\cline{6-7}
			$\tau$ && ECR & PE  &&  ECR & PE \\
			\hline
			$0.1\%$ &&  0.27 & 3.50 && 0.22 & 2.50 \\
			$0.25\%$ &&  0.55 & 3.80 && 0.52 & 3.48 \\
			$0.5\%$ &&  0.90 & 3.59 && 0.98 & 4.26 \\
			$99.5\%$ &&  99.42 & 0.67 && 99.42 & 0.67 \\
			$99.75\%$ &&  99.60 & 1.90 && 99.58 & 2.22 \\
			$99.9\%$ &&  99.78 & 2.50 && 99.80 & 2.00 \\
			\hline
		\end{tabular}
	\end{table}
	
	\begin{figure}[htp]
		\centering
		\includegraphics[width=6in]{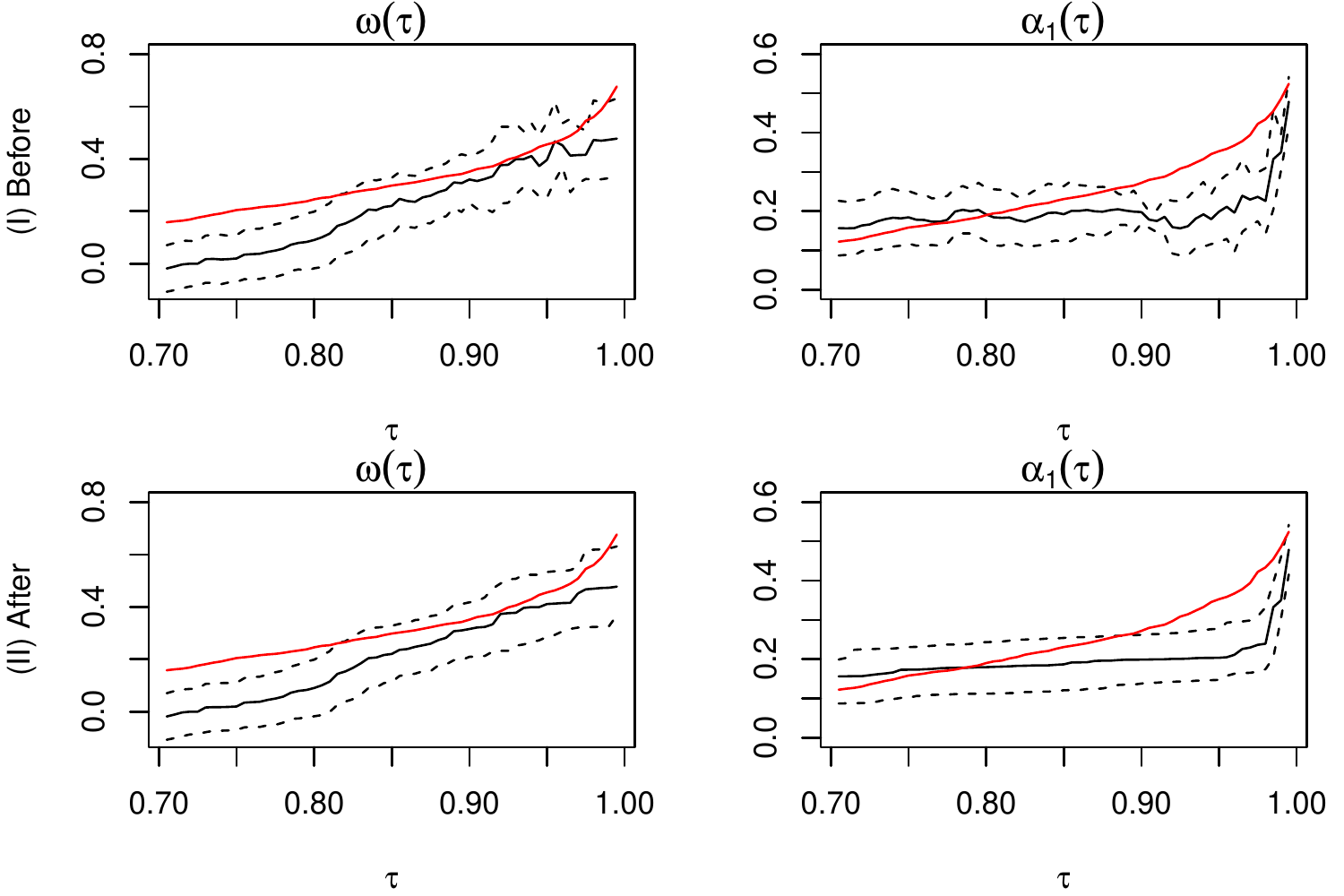}
		\caption{\label{coef_comparison_ARCHinfty_rearrangement} (I): Self-weighted QR estimates (black solid) of $\omega(\cdot)$ and $\alpha_1(\cdot)$ and 95\% confidence intervals (black dotted) for $\tau_k=k/200$ with $140\leq k \leq 199$. (II): rearranged curves (black solid) and rearranged confidence intervals (black dotted). The estimates of $\bm\theta_{\tau}=(a_0Q_{\tau}(\varepsilon_t)/(1-b_1),a_1Q_{\tau}(\varepsilon_t),b_1)$ (red solid) for the linear ARCH($\infty$) model in \eqref{lgarch11} using the FHS method are also provided for comparison.}
	\end{figure}
	
\end{document}